\algnewcommand\algorithmicparfor{\textbf{parfor}}
\algnewcommand\algorithmicpardo{\textbf{do}}
\algnewcommand\algorithmicendparfor{\textbf{end\ parfor}}
\newcommand{\dist}{\mbox{\rm dist}}
\newcommand{\rr}{\mathbb{R}}
\newtheorem{theorem}{Theorem}[section]
\newtheorem{lemma}[theorem]{Lemma}
\newtheorem{meta-theorem}[theorem]{Meta-Theorem}
\newtheorem{corollary}[theorem]{Corollary}
\newtheorem{observation}[theorem]{Observation}
\newtheorem{definition}[theorem]{Definition}
\newcommand{\eps}{\varepsilon}
\newcommand{\dilation}{\mathsf{\ell}}
\newcommand{\congestion}{\mbox{\tt c}}
\newcommand{\poly}{\operatorname{\text{{\rm poly}}}}
\newcommand{\Sparsify}{\mathsf{Sparsify}\xspace}
\newcommand{\ExpDecomp}{\mathsf{ExpDecomp}\xspace}
\newcommand{\BoundedHopSparsify}{\mathsf{LCExpanderSparsify}\xspace}
\newcommand{\BoundedHopExpDecomp}{\mathsf{LCExpanderExpDecomp}\xspace}
\newcommand{\NeighborCover}{\mathsf{NeighborhoodCover}\xspace}
\newcommand{\MinDegree}{\mathsf{MinDegree}\xspace}
\newcommand{\FDSpanner}{\mathsf{FDSpanner}\xspace}
\newcommand{\GreedyFDSpanner}{\mathsf{GreedyFDSpanner}\xspace}
\newcommand{\FDCertificate}{\mathsf{FDCertificate}\xspace}
\newcommand{\Oish}{\widetilde{O}}
\def\Vol{\mbox{\tt Vol}}
\def\InitialCong{\mbox{\tt c}}
\newcommand{\bee}{\mathcal{B}}
\newcommand{\faultdeg}{f}
\newcommand{\Althofer}{Alth\"{o}fer}
\begin{document}
\date{}

\title{Fault-Tolerant Spanners against Bounded-Degree Edge Failures: \\ Linearly More Faults, Almost For Free}

\author{Greg Bodwin\thanks{\texttt{bodwin@umich.edu}.  Supported by NSF:AF 2153680.}\\
University of Michigan
\and
Bernhard Haeupler\thanks{\texttt{bernhard.haeupler@inf.ethz.ch}. Supported by the European Research Council (ERC grant 949272).}\\
ETH Zurich \& CMU
\and
Merav Parter\thanks{\texttt{merav.parter@weizmann.ac.il}. Supported by the European Research Council (ERC grant 949083) and the Israeli Science Foundation (ISF), grant 2084/18.}\\
Weizmann Institute of Science}
%\fi
%\author{}

\maketitle

\pagenumbering{gobble}

\begin{abstract}
We study a new and stronger notion of fault-tolerant graph structures whose size bounds depend on the degree of the failing edge set, rather than the total number of faults.
For a subset of faulty edges $F \subseteq G$, the \emph{faulty-degree} $\deg(F)$ is the largest number of faults in $F$ incident to any given vertex.
For example, a matching $F$ has $\deg(F)=1$ while $|F|$ might be as large as $n/2$.

We design new fault-tolerant structures with size comparable to previous constructions, but which tolerate every fault set of small faulty-degree $\deg(F)$, rather than only fault sets of small size $|F|$.
Thus, for example, our structures can tolerate a \emph{linear} number of edge faults with almost the same size bounds currently known for handling a \textit{single} edge failure, provided that the edge faults are arranged in a matching.
Our main results are:
\begin{itemize}
\item{\textbf{New FT-Certificates:}} For every $n$-vertex graph $G$ and degree threshold $\faultdeg$, one can compute a connectivity certificate $H \subseteq G$ with $|E(H)| = \widetilde{O}(\faultdeg n)$ edges that has the following guarantee: for any edge set $F$ with faulty-degree $\deg(F)\leq \faultdeg$ and every vertex pair $u,v$, it holds that $u$ and $v$ are connected in $H \setminus F$ iff they are connected in $G \setminus F$.
This bound on $|E(H)|$ is nearly tight.
Since our certificates handle some fault sets of size up to $|F|=O(\faultdeg n)$, prior work did not imply any nontrivial upper bound for this problem, even when $\faultdeg=1$.
%shortened this -greg
%Our construction tolerates some fault sets of size up to $|F|=O(fn)$; we note that prior results provide certificates of size $\Omega(|F|\cdot n)$, which is quadratic and hence trivial already when $\faultdeg=1$.

\item{\textbf{New FT-Spanners:}} We show that every $n$-vertex graph $G$ admits a $(2k-1)$-spanner $H$ with $|E(H)| = O_k(\faultdeg^{1-1/k} n^{1+1/k})$ edges, which tolerates any fault set $F$ of faulty-degree at most $\faultdeg$.
This bound on $|E(H)|$ optimal up to its hidden dependence on $k$, and it is close to the bound of $O_k(|F|^{1/2} n^{1+1/k} + |F|n)$ that is known for the case where the \textit{total} number of faults is $|F|$ [Bodwin, Dinitz, Robelle SODA '22].
Our proof of this theorem is non-constructive, but by following a proof strategy of Dinitz and Robelle [PODC '20], we show that the runtime can be made polynomial by paying an additional $\text{polylog } n$ factor in spanner size.
%and so we complement this result by showing a polynomial-time algorithm with the following guarantee: for any $\epsilon \in (0,1)$, it computes a spanner with stretch $(1/\epsilon)^{O(1/\epsilon)}$ and $\widetilde{O}(\faultdeg \cdot n^{1+\epsilon})$ edges, which tolerates any fault set $F$ of faulty-degree $\deg(F) \le \faultdeg$.
%
%
%
%and a stretch parameter $t$, one can compute spanner $H \subseteq G$ with $\widetilde{O}(\Delta n^{1+1/t})$ edges that has the following guarantee: For any subset $F$ with $\deg(F)\leq \Delta$ and every vertex pair $u,v$, it holds that $\dist_{H \setminus F}(u,v)=O(t^2)\cdot \dist_{G \setminus F}(u,v)$.
\end{itemize}

Our techniques are based on an adaptation of the blocking set method used in previous work on fault tolerant spanners, as well as a new expander-based toolkit which translates the quality guarantees for expander routing into fault tolerance guarantees.

% (i) showing that expanders are resilient to bounded-degree faults and (ii) employing specialized procedures for expander sparsification and decomposition. We believe that the applications of expanders in the context of fault-tolerant network design might have f 
\end{abstract}

\newpage

\tableofcontents

\newpage

\pagenumbering{arabic}
\section{Introduction}

A basic problem in theoretical computer science and graph theory is to compress distance or connectivity information from an input graph $G$ into much smaller space than $G$ itself.
In its most basic form, this corresponds to computation of a sparse spanning subgraph (which captures connectivity information), or to computation of a sparse spanner (which captures approximate distances).
However, in some domains such as distributed computing and network design, it is common for a few nodes or edges of the input graph $G$ to temporarily \emph{fail} and be unusable while they await repair.
This motivates a more constrained problem, in which the goal is to find a sparse subgraph $H$ that still captures connectivity/distance information of $G$ even after any ```reasonable'' failure event that might occur.
This paper concerns the design of subgraphs of this kind.

\subsection{Connectivity Certificates}

Let us begin our discussion in the setting of connectivity.
A common formalization of connectivity preservation under failures is known as \emph{fault-tolerant connectivity certificates}:
\begin{definition} [EFT Connectivity Certificates \cite{NagamochiI92}]
Given a graph $G$, an edge-subgraph $H$ is called an $f$-edge fault tolerant (EFT) connectivity certificate if, for any nodes $u,v$ and for any set $F$ of $|F| \le f$ edges, there is a $u \leadsto v$ path in $G \setminus F$ iff there is a $u \leadsto v$ path in $H \setminus F$.
\end{definition}

This definition encodes a design choice: the idea of a ``reasonable'' failure event is formalized simply by choosing a size parameter $f$, and letting $F$ be any edge set of size $|F| \le f$.
We will refer to this as \textbf{global} fault tolerance, since it is counts faults equally no matter where they occur in the graph.
Global fault tolerance is the dominant model studied in prior work.
The tradeoff between certificate size and level of global fault tolerance for was resolved in a classic work of Nagamochi and Ibaraki \cite{NagamochiI92}:
\begin{theorem} [\cite{NagamochiI92}]\label{thm:NI}
For any positive integers $n, f$, every $n$-node graph $G$ has an $f$-EFT connectivity certificate $H$ on $|E(H)| = O(fn)$ edges, and this bound is best possible.
\end{theorem}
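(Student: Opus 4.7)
My plan is to follow the classical spanning-forest decomposition approach. Let $F_1$ be any spanning forest of $G$, and for $i \ge 2$ let $F_i$ be a spanning forest of $G \setminus (F_1 \cup \cdots \cup F_{i-1})$. Define the certificate $H := \bigcup_{i=1}^{f+1} F_i$. Since each spanning forest has at most $n-1$ edges, we immediately get $|E(H)| \le (f+1)(n-1) = O(fn)$, which matches the claimed upper bound, so all the real work lies in correctness.

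For correctness, since $H \subseteq G$ the nontrivial direction is to show that whenever $u,v$ are connected in $G \setminus F$ they are also connected in $H \setminus F$. I would reduce this to the following cut-preservation claim: for every bipartition $(S, \bar S)$ with $|E_G(S, \bar S)| \ge f+1$, we also have $|E_H(S, \bar S)| \ge f+1$. This implies fault tolerance directly: any post-failure disconnection in $H \setminus F$ is witnessed by some cut with $E_H(S, \bar S) \subseteq F$ and hence $|E_H(S, \bar S)| \le f$, so by the contrapositive of the cut claim $|E_G(S, \bar S)| \le f$, meaning $u$ and $v$ are also disconnected in $G \setminus F$. To prove the cut claim I would induct on $i$: as long as the residual graph $G \setminus (F_1 \cup \cdots \cup F_{i-1})$ still has an edge crossing $(S, \bar S)$, maximality of $F_i$ as a spanning forest of that residual graph forces $F_i$ to contain at least one such crossing edge, since otherwise adding it would not close a cycle in $F_i$. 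Starting from at least $f+1$ crossing edges in $G$, each of the $f+1$ layers therefore contributes at least one fresh crossing edge to $H$.

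For the matching lower bound, I would exhibit an $n$-vertex, $(f+1)$-edge-connected graph $G$, for example an $(f+1)$-regular expander. Any $f$-EFT certificate $H$ of such a graph must satisfy $\deg_H(v) \ge f+1$ at every vertex: otherwise, taking $F := E_H(v)$ gives $|F| \le f$ and isolates $v$ in $H \setminus F$, while $G \setminus F$ remains connected by edge-connectivity, contradicting the certificate property. Summing degrees yields $|E(H)| \ge (f+1)n/2 = \Omega(fn)$. The one subtlety I anticipate needing to pin down carefully is the induction layer count: using $f+1$ rather than $f$ spanning forests is essential, because the ``bad'' cuts we must rule out can themselves have size up to $f$ in $H$, and the contradiction requires a strict gap of one between cut sizes in $G$ and $H$. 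Beyond this counting point, no additional structural machinery seems necessary.
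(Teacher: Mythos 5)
The paper only sketches the lower bound for this theorem (any $(f+1)$-regular graph works) and cites \cite{NagamochiI92} for the upper bound, so your spanning-forest-decomposition argument is supplying the upper-bound proof rather than diverging from the paper---and it is the classical approach. Your lower bound is a close variant of the paper's: you argue per-vertex on an $(f+1)$-edge-connected graph, the paper argues per-edge on any $(f+1)$-regular graph; your edge-connectivity hypothesis is unnecessary (you only need $\deg_G(v)=f+1 > |F|$ to guarantee that $v$ keeps at least one $G$-neighbor), but harmless.

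There is one genuine gap, in how you reduce certificate correctness to the cut-preservation claim. You state the claim as ``if $|E_G(S, \bar S)| \ge f+1$ then $|E_H(S, \bar S)| \ge f+1$,'' and then from $|E_H(S, \bar S)| \le f$ you deduce $|E_G(S, \bar S)| \le f$ and immediately conclude that $u$ and $v$ are also disconnected in $G \setminus F$. That last step does not follow: a small $G$-cut need not be contained in $F$, and there could be crossing $G$-edges lying outside both $H$ and $F$ that reconnect $u$ and $v$. What you actually need is the stronger form $|E_H(S, \bar S)| \ge \min\bigl(f+1,\ |E_G(S, \bar S)|\bigr)$---equivalently, whenever $|E_H(S, \bar S)| \le f$ the certificate keeps \emph{every} crossing edge of $G$, i.e.\ $E_H(S, \bar S) = E_G(S, \bar S)$. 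Combined with $E_H(S, \bar S) \subseteq F$ this gives $E_G(S, \bar S) \subseteq F$, and disconnection transfers. Your induction already delivers the stronger form (at each layer, either the residual has a crossing edge and $F_i$ supplies a fresh one, or the residual has none and $H$ already contains all crossing $G$-edges---you should make this case split explicit rather than asserting that every one of the $f+1$ layers contributes a fresh crossing edge), so the fix is merely to state the invariant correctly.
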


The main new ideas in Nagamochi and Ibaraki's work lie on the upper bound side.
The lower bound, that $|E(H)| = \Omega(fn)$ edges are sometimes necessary, is simple.
It is achieved by letting $G$ be any $(f+1)$-regular graph.
Consider an edge $(u, v) \in E(G)$, and let the failure set $F$ be the $f$ additional edges incident to $u$ besides $(u, v)$.
Then $(u, v)$ is the only edge still incident to $u$ in $G \setminus F$, which implies that we must keep the edge $(u, v)$ in the connectivity certificate $H$.
Since this holds for an arbitrary edge, it follows that the only $f$-EFT connectivity certificate of $G$ is $G$ itself, which contains $\Theta(fn)$ edges.

\begin{figure} [h]
\begin{center}
\begin{tikzpicture}
    \node [draw, circle] (u) at (0,0) {u};
    \node [draw, circle] (v) at (2,0) {v};
    \draw (u) -- (v);
    
    \node (a) at (-1.5,1) {};
    \node (b) at (-1.5,-1) {};
    \node (c) at (-2,0) {};
    \draw[dotted] (u) -- node [midway, red] {\Huge \bf $\times$} (a);
    \draw[dotted] (u) -- node [midway, red] {\Huge \bf $\times$} (b);
    \draw[dotted] (u) -- node [midway, red] {\Huge \bf $\times$} (c);

    \node [below left = 0.8cm and 0.4cm of u, red] {\bf $F$};
\end{tikzpicture}
\end{center}
\caption{By failing all edges incident to $u$ except for $(u, v)$, we see that any connectivity certificate must keep the edge $(u, v)$.}
\end{figure}
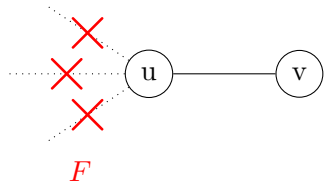

The $(f+1)$-regular graph lower bound is simple and optimal, but in a way it might also feel unsatisfying: is it really a ``reasonable'' failure event to see all $f$ edge failures incident to the same node $u$?
Although mass failures adjacent to $u$ are certainly possible -- failures can be correlated, and perhaps they all result from a common systemic problem with $u$ -- it is less reasonable to think that the rest of the graph would meanwhile remain completely failure-free.
Failures in different parts of the graph tend to be independent or positively correlated, and so when $u$ suffers failures, we may still expect to see \emph{at least} the usual number of failures occurring elsewhere in the graph at the same time.

% \gbnote{just added the following paragraph, see if you like it -- it's a little experimental}
% To use a metaphor: suppose there is currently a traffic jam in San Francisco, and many of its roads are temporarily unusable.
% Should we also expect to also see a traffic jam across the country in Boston at the same time?
% It depends on what exactly is causing the traffic jam in San Francisco.
% If it's a local event, like a construction project, then we expect to see only the usual number of road failures in Boston.
% If it's a global event, like a holiday weekend causing increased levels of travel, then we might expect to see a comparable traffic jam in Boston.
% But it is not so reasonable to think that a traffic jam in San Francisco would it \emph{less} likely for one to occur in Boston.
% Our point is that, when one node in a network suffers mass failures, we would still generally expect to see \emph{at least} the usual number of failures scattered throughout the rest of the graph.

This motivates a much stronger kind of connectivity certificate, which can tolerate edge failures that occur elsewhere in the graph alongside the $f$ failing edges incident to $u$.
We will consider a new \textbf{local} model of fault tolerance, which only limits failures-per-node rather than total number of failures:

\begin{definition} [Faulty-Degree Connectivity Certificates (New)]
Given a graph $G$, an edge-subgraph $H$ is called an $f$-faulty-degree (FD) connectivity certificate if, for any nodes $u,v$ and for any set of edges $F$ such that $\deg_F(x) \le f$ for all nodes $x$, there is a $u \leadsto v$ path in $G \setminus F$ iff there is a $u \leadsto v$ path in $H \setminus F$.
\end{definition}

An $f$-FD connectivity certificate is stronger than an $f$-EFT connectivity certificate, in the sense that every $f$-EFT connectivity certificate is also an $f$-FD connectivity certificate, but the converse is far from true.
Indeed, since an $f$-FD connectivity certificate must tolerate some failure sets of size up to $|F| = \Theta(fn)$, the previous upper bounds for $f$-EFT connectivity certificates do not imply \emph{any} nontrivial (subquadratic) upper bounds for $f$-FD connectivity certificates.
%when $f=1$, an $f$-FD connectivity certificate already tolerates some failure sets of size up $\Theta(n)$, and so the upper bound inherited from the $f$-EFT setting is $O(n^2)$.
Despite the stronger model, the $(f+1)$-regular graph still only provides the same lower bound of $\Omega(fn)$ for $f$-FD connectivity certificates.
Our main result is that this is not an accident: there is actually a nearly-matching upper bound.
%In other words, the difficulty of tolerating a failure set $F$ may in fact be best measured in a local sense, by $\max_x \deg_F(x)$, rather than in a global sense $|F|$.

\begin{theorem} [First Main Result] \label{thm:introccmain}
For all positive integers $f, n$, every $n$-node graph has an $f$-FD connectivity certificate on $\Oish(fn)$ edges.
\end{theorem}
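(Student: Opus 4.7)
My plan is to build the certificate $H$ via an expander-decomposition reduction combined with random sparsification inside expanders, all guided by a cut-based reformulation of the FD-certificate property.

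The first step is to rephrase the $\faultdeg$-FD condition as a purely combinatorial cut condition: $H \subseteq G$ is an $\faultdeg$-FD certificate iff, for every bipartition $(S, \bar S)$ of $V(G)$, either $E_H(S,\bar S) = E_G(S,\bar S)$, or some vertex has more than $\faultdeg$ of the $H$-edges crossing the cut. The reason is that a valid fault set $F$ disconnects $(S,\bar S)$ in $H$ but not in $G$ iff $E_H(S,\bar S) \subseteq F$ while $E_G(S,\bar S) \not\subseteq F$, and such an $F$ exists iff $E_H(S,\bar S)$ itself satisfies the $\faultdeg$-degree constraint (one may always choose $F := E_H(S,\bar S)$). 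This reduces the theorem to an existence question about subgraphs satisfying a combinatorial cut property.

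Given the reformulation, I would construct $H$ in two stages. First, add to $H$ every edge incident to a vertex of $G$-degree at most $T = \Oish(\faultdeg)$; this handles all cuts isolating a low-degree vertex and costs $\Oish(\faultdeg n)$ edges. Then, on the high-degree remainder, I would perform a (recursive) expander decomposition at conductance $\phi \ge 1/\mathrm{polylog}(n)$, keeping all inter-cluster edges in $H$; standard expander-decomposition bounds yield $\Oish(\faultdeg n)$ inter-cluster edges summed over all recursion levels. Inside each $\phi$-expander cluster $C$ of min-degree $d \gg \faultdeg/\phi$, every cut $(S,\bar S)$ has crossing volume at least $\phi d \cdot \min(|S|, |\bar S|)$, so the max crossing-degree in $G$ exceeds $\faultdeg$ by a polylogarithmic factor. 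I would then subsample each edge of $C$ independently with probability $p = \Theta(\faultdeg \cdot \mathrm{polylog}(n)/(\phi d))$; the sample has $\Oish(\faultdeg n/\phi) = \Oish(\faultdeg n)$ edges in expectation, and a Chernoff bound guarantees that the max crossing-degree remains above $\faultdeg$ for any single cut with high probability.

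The main obstacle I anticipate is the union bound in the sparsification step, since there are up to $2^n$ cuts to control. This is exactly where the blocking-set framework and expander-routing toolkit alluded to in the abstract should come in: one would identify the ``dangerous'' cuts whose $G$-max-degree is only modestly above $\faultdeg$, bound their number via a Karger-style cut-counting argument adapted to the expander structure, and invoke expander routing to certify that the sampled subgraph always contains enough edge-disjoint detour paths crossing each dangerous cut. Orchestrating $\phi$, $p$, and $T$ across the recursion so that both the inter-cluster budget and the within-cluster samples sum to $\Oish(\faultdeg n)$ is the other delicate piece, and is where I expect the proof to do most of its real work.
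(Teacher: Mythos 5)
Your cut-based reformulation of the $\faultdeg$-FD property is correct, and it is morally equivalent to the expander-rigidity lemma (Theorem~\ref{thm:expander-d-deg}) the paper actually uses: there, for a $\phi$-expander of min-degree $\ge 2\faultdeg/\phi$, every cut $(S,\bar S)$ still has $|\partial_{G\setminus F}(S)|\ge|\partial_G(S)|-|S|\faultdeg>0$, which is precisely your condition that some vertex retains more than $\faultdeg$ surviving crossing edges. Your overall pipeline --- peel off low-degree vertices, expander-decompose, keep inter-cluster edges, recurse until they number $\Oish(\faultdeg n)$ --- also matches Algorithm $\FDCertificate$ in Section~\ref{sec:FT-certificate}.

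The genuine gap is the intra-cluster sparsification. You propose independent edge sampling at rate $p\approx\faultdeg\,\mathrm{polylog}(n)/(\phi d)$ followed by a Chernoff-plus-cut-counting union bound, and you anticipate the union bound as the obstacle, but there are in fact two problems. First, the edge budget: $p\cdot|E(C)|$ is $\Oish(\faultdeg|C|)$ only if the cluster's maximum degree is within polylog of its minimum degree $d$, and expander decomposition gives no such guarantee. Second, Karger-style cut-counting bounds cuts by \emph{total} size, not by \emph{maximum crossing degree}; the quantity you must preserve is the latter, and the "dangerous cuts'' you propose to enumerate are not a well-controlled family under that machinery. What you actually need is that the sample remains a $\Theta(\phi)$-expander of min-degree $\Omega(\faultdeg/\phi)$ so that Theorem~\ref{thm:expander-d-deg} applies --- but then the sampling rate, the min-degree, and the concentration regime must be reconciled separately for $\faultdeg\lesssim\log n$ and $\faultdeg\gtrsim\log n$, and even so the edge budget issue persists. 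The paper avoids all of this: Theorem~\ref{thm:expander-sparsification} \emph{embeds} a $\faultdeg'$-regular $(1/\log n)$-expander into the cluster via low-congestion routing (Lemma~\ref{lem:routing-implies-expansion}) and takes the union of embedding paths, which deterministically yields an $\widetilde\Omega(\phi^2)$-expander with $O(\faultdeg' n\log n/\phi)$ edges and min-degree $\widetilde\Omega(\faultdeg'\phi)$, independent of the cluster's max degree and with no union bound over cuts. The "expander routing'' idea you gesture at in your final paragraph is indeed the right tool, but it replaces the random sampling rather than patching it.
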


This theorem is constructive, i.e., we also provide an algorithm to compute the certificate in polynomial time.
The theorem proved in the paper gives a somewhat stronger guarantee; one can recover not just connectivity, but also shortest path distances up to a $\text{polylog } n$ factor.
Finally, we complement this theorem with a lower bound: although the hidden $\text{polylog } n$ factors in this upper bound might be improvable, they cannot be removed entirely.  This stands in contrast to the standard $f$-FT connectivity certificates of Theorem \ref{thm:NI}. Hence the cost of handling a faulty set with degree $f$ (with possibly $\Theta(fn)$ edges) is at least logarithmic and at most polylogarithmic, compared to handling a total of $f$ faults.

\begin{theorem}
There are $n$-node graphs for which any $f$-FD connectivity certificate has $\Omega(fn \cdot \frac{\log n}{\log f})$ edges.
\end{theorem}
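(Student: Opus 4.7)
Set $L := \lceil \log n / \log f \rceil$, so the target is $\Omega(fnL)$. My plan is to construct an $n$-vertex graph $G$ with $\Theta(fnL)$ edges such that every edge is \emph{FD-forced}: for each edge $e=(u,v) \in E(G)$ there is an $f$-FD fault set $F_e$ for which $e$ is a bridge in $G \setminus F_e$. Any certificate that omits $e$ would then fail the connectivity query between $u$ and $v$ under $F_e$, so every FD-forced edge must appear in every valid $f$-FD certificate, yielding the claimed bound.

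For the construction, I would take $G$ to be a nearly $d$-regular graph on $n$ vertices with $d = \Theta(fL)$ and girth $\Omega(L)$, i.e.\ a graph near the Moore bound. Explicit examples include Lubotzky--Phillips--Sarnak Ramanujan graphs, while random $d$-regular graphs also realize the required girth with positive probability. Either way the total edge count is $dn/2 = \Theta(fnL) = \Theta(fn \log n / \log f)$.

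To show every edge $e=(u,v)$ is FD-forced, it suffices to exhibit a vertex partition $V = V_u \sqcup V_v$ with $u \in V_u$, $v \in V_v$, $e$ crossing the cut, and every vertex incident to at most $f$ cut-edges (where $u$ and $v$ may carry $f{+}1$, the extra one being $e$). Then $F_e := \partial(V_u) \setminus \{e\}$ is a valid FD-$f$ fault set and $e$ is the only surviving link between the two sides. I would build the partition by a BFS-style absorption: initialize $V_u := \{u\}$, and while some $x \in V_u$ has more than $f$ neighbors outside $V_u \cup \{v\}$, move the excess outside neighbors into $V_u$. The high girth ensures that the ball of radius $\approx L/2$ around $u$ is a tree, so absorbed vertices at each new level are distinct and the procedure stabilizes inside this ball without ever reaching $v$.

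The main obstacle is making the absorption argument rigorous: one must prove termination with $v \notin V_u$ and a final cut of per-vertex degree $\leq f$ everywhere. The naive singleton $V_u=\{u\}$ creates $d-1 = fL-1$ failures at $u$, a factor of $L$ over budget, so the cluster must expand over many BFS levels to disperse the failures. The parameter choice $L \approx \log n / \log f$ is exactly what makes this expansion fit within the graph, and the high girth is essential to ensure the expansion remains tree-like---so absorbed vertices are not double-counted through short cycles. Carrying out this accounting carefully, and adapting the construction at the extreme regimes (e.g.\ very small $f$, where $L > f$ and the pure high-girth instance must be replaced by a stratified variant of comparable density), is the combinatorial heart of the proof.
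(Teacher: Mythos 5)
Your construction cannot work, and the obstruction is exhibited elsewhere in this very paper. You propose a near-$d$-regular graph of high girth with $d = \Theta(fL)$ and $L = \Theta(\log n / \log f)$, concretely a Ramanujan graph. But any such graph is a $\phi$-expander with $\phi = \Theta(1)$, and \Cref{thm:expander-d-deg} shows that $\phi$-expanders of minimum degree at least $2f'/\phi$ cannot be disconnected by any $f'$-degree fault set. More directly: for any cut $(S, V\setminus S)$ in a $\phi$-expander with degree $d$, the average boundary degree on the smaller side is at least $\phi d = \Theta(fL)$, so some vertex has boundary degree $\gg f$ whenever $L \gg 1$. Consequently there is no partition $V = V_u \sqcup V_v$ with per-vertex cut degree at most $f+1$, the fault set $F_e$ you are trying to build does not exist, and no edge of $G$ is FD-forced. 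Your BFS-absorption procedure cannot stabilize for exactly this reason: in the locally tree-like region, every newly absorbed vertex again has $\approx d-1 \gg f$ outside neighbors, so $V_u$ grows without bound; and once the cluster becomes a constant fraction of the graph, expansion forces some boundary vertex to have $\Theta(d)$ cut edges. (A secondary issue: your procedure only controls cut-degrees of vertices on the $V_u$ side, but $F_e$ must have bounded degree on both sides of the cut.)

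The paper's construction goes in the opposite direction: it uses the generalized hypercube on $[f]^d$ with $d = \log n / \log f$, which has conductance only $\Theta(1/d)$ and, crucially, a product structure that groups the $\Theta(fd)$ incident edges of each vertex into $d$ coordinate-classes of exactly $f-1$ edges each. Failing all edges in one coordinate class (minus the edge $(u,v)$) produces a fault set of degree $\leq f-1$ that partitions the vertex set by that coordinate's value; this is what makes every edge forced. The lesson is that the lower-bound graph must be a bad expander with algebraically structured degree distribution, not a good expander chosen for girth. High girth plays no role here; what matters is that each vertex's neighborhood decomposes into many small "directions" that can be cut independently.
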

Our lower bound construction is still relatively simple: when $f=1$, the certificate size lower bound is $\Omega(n \log n)$, and this is achieved by analyzing the hypercube graph.
For general $f$ the construction is essentially the extension of the hypercube graph to an alphabet of size $f$.

\subsection{Fault-Tolerant Spanners}

A \emph{spanner} is a subgraph that preserves approximate pairwise distances, rather than just connectivity.
Spanners and their variants have many applications in algorithms and network design; see survey \cite{ahmed2020graph}.
In this discussion, graphs can have arbitrary positive edge weights.

\begin{definition} [Spanners \cite{PelegS:89}]
For an input graph $G$, an edge-subgraph $H$ is called a \emph{$t$-spanner} of $G$ if it satisfies $\dist_H(u, v) \le t \cdot \dist_G(u, v)$ for all nodes $u,v$.
\end{definition}

The parameter $t$ of the spanner is often called its \emph{stretch}.
The existentially optimal size-stretch tradeoff was settled in a classic paper by \Althofer{}, Das, Dobkin, Joseph, and Soares:
\begin{theorem} [\cite{AlthoferDDJS:93}]
For all positive integers $n, k$, every $n$-node graph $G$ has a $(2k-1)$-spanner on $O(n^{1+1/k})$ edges.
\end{theorem}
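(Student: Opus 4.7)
The plan is to use the classical greedy construction and analyze it via a girth argument. I would first sort the edges of $G$ in non-decreasing order of weight, then initialize $H = \emptyset$ and process the edges one at a time: for each edge $(u,v)$ with weight $w(u,v)$, add $(u,v)$ to $H$ if and only if $\dist_H(u,v) > (2k-1) \cdot w(u,v)$ at the moment it is considered.

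For the stretch analysis, the construction immediately gives $\dist_H(u,v) \le (2k-1)\cdot w(u,v)$ for every edge $(u,v) \in E(G)$, since otherwise $(u,v)$ would have been inserted. Extending this to arbitrary pairs $s,t$ is a one-line argument: take a shortest $s$-$t$ path $P$ in $G$, replace each edge on $P$ by its $(2k-1)$-approximating path in $H$, and sum.

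The sparsity analysis is the crux. I would show that $H$ has girth strictly greater than $2k$. Suppose toward contradiction that $H$ contains a cycle $C$ of length at most $2k$, and let $e = (u,v)$ be the heaviest edge on $C$. Since edges are processed in non-decreasing order of weight, every other edge of $C$ belonged to $H$ at the moment $e$ was considered. Therefore, at that moment, the path $C \setminus \{e\}$ was an entirely-in-$H$ $u$-$v$ path of at most $2k-1$ edges, each of weight at most $w(u,v)$, giving $\dist_H(u,v) \le (2k-1)\cdot w(u,v)$ and contradicting the rule under which $e$ was added.

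Finally, I would invoke the Moore-bound argument: any $n$-vertex graph with girth at least $2k+1$ has at most $\tfrac{1}{2}n^{1+1/k} + \tfrac{n}{2}$ edges. The standard proof does a BFS of depth $k$ from a vertex of minimum degree $\delta$ and observes that the first $k$ BFS layers must all be disjoint (else a short cycle appears), yielding $1+\delta+\delta(\delta-1)+\cdots+\delta(\delta-1)^{k-1} \le n$ and hence $\delta = O(n^{1/k})$; peeling this minimum-degree vertex and iterating gives the total edge bound. The only subtle point is making sure the girth-$> 2k$ property is genuinely used in both directions of a cycle (odd vs.\ even length), which is handled cleanly by splitting the BFS into full layers of radius $k$ and arguing the appropriate layers are disjoint. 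I do not expect a real obstacle here beyond bookkeeping; the heart of the argument is the swap between the greedy stopping condition and the girth lower bound.
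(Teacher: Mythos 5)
Your proposal is correct and reproduces the classical greedy-plus-girth-plus-Moore-bound argument of Alth\"ofer et al., which is exactly the proof the paper is citing (the paper states this theorem with a citation to \cite{AlthoferDDJS:93} rather than reproving it, and its technical-overview discussion of the greedy FD-spanner algorithm explicitly references this same high-girth analysis). The only minor polish worth adding is to take $e$ to be the \emph{last} edge of $C$ inserted into $H$ rather than ``the heaviest,'' which removes any ambiguity under weight ties and makes the ``all other edges of $C$ are already in $H$'' step immediate.
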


This upper bound is unconditional, and there is a matching lower bound assuming the girth conjecture \cite{erdHos1964extremal}, i.e., this tradeoff is best possible up to the hidden constant.
%Unconditionally, it is known that every graph has a $k$-spanner with at most $O(n^{1+1/k})$ edges. %\bnote{Check this please.}
%\gbnote{edited slightly to fix}
Much like connectivity certificates, spanners are often applied in settings with edge faults, in which case a notion of fault tolerance is needed.
This was first considered in the special case of Euclidean input graphs by Levcopoulos, Narasimhan, and Smid \cite{levcopoulos1998efficient} and has since been studied intensively in this setting, and also in the more general setting of doubling metrics \cite{czumaj2004fault, lukovszki1999new, NS07, solomon2014hierarchical, chan2015sparse, chan2015new, le2023optimal}.
For general input graphs, the first results on fault-tolerant spanners were obtained by Chechik, Langberg, Peleg, and Roditty \cite{ChechikLPR:10}, in the following global model of fault-tolerance identical to the one used for connectivity certificates.

\begin{definition} [Fault-Tolerant Spanners \cite{ChechikLPR:10}]
Given a graph $G$, an edge-subgraph $H$ is an $f$-edge fault-tolerant ($f$-EFT) $t$-spanner if for any set $F$ of $|F| \le f$ edges and any nodes $u,v$, we have
$\dist_{H \setminus F}(u,v) \le t \cdot \dist_{G \setminus F}(u, v).$
\end{definition}

\begin{table}[t]
\begin{center}

    \begin{tabular}{llcl}
    \toprule
        \textbf{Stretch} & \textbf{Size} & \textbf{Polytime?} & \textbf{Citation} \\
    \midrule
        \multicolumn{4}{c}{\bf Global Fault Tolerance ($f$-EFT spanners)}\\
        \midrule
        $2k-1$ & $O \left(f \cdot n^{1+1/k} \right)$ & \checkmark{} & \cite{ChechikLPR:10} \\
        $2k-1$ & $O_k \left( f^{1 - 1/k} \cdot n^{1+1/k} \right) $  & & \cite{BDPW18} \\
        $2k-1$ & $O \left( f^{1 - 1/k} \cdot n^{1+1/k} \right)$ & & \cite{BP19} \\
        $2k-1$ & $O_k \left( f^{1 - 1/k} \cdot n^{1+1/k} \right)$ & \checkmark{} & \cite{DR20} \\
        $2k-1$ & $O \left( f^{1 - 1/k} \cdot n^{1+1/k} \right)$ & \checkmark{} & \cite{BodwinDR21}\\
        $2k-1$, $k$ even & $O_k \left( f^{1/2} \cdot n^{1+1/k} + f n \right)$ & \checkmark{} & \cite{BodwinDR22} \\
         $2k-1$, $k$ odd & $O_k \left( f^{1/2 - 1/(2k)} \cdot n^{1+1/k} + f n \right)$ & \checkmark{} & \cite{BodwinDR22} \\
         $3$ & $\Omega \left( f^{1/2} \cdot n^{3/2} \right)$ & lower bound & \cite{BDPW18} \\
         $2k-1, k>2$ & $\Omega \left( f^{1/2 - 1/(2k)} \cdot n^{1+1/k} + f n \right)$ & lower bound$^*$ & \cite{BDPW18}\\
         \midrule
        \multicolumn{4}{c}{\bf Local Fault Tolerance ($f$-FD spanners)}\\
        \midrule
        $\text{polylog } n$ & $\Oish(f \cdot n)$ & \checkmark{} & \textbf{this paper}\\
%        $3$ & $\Oish\left(f \cdot n^{3/2}\right)$ & \checkmark{} & \textbf{this paper}\\
%        $ k^{O(k)} $ (unwtd only) & $\Oish_k \left( f \cdot n^{1 + 1/k} \right)$ & \checkmark{} & \textbf{this paper}\\
        $2k-1$ & $\Oish_k \left( f^{1-1/k} \cdot n^{1+1/k} \right)$ & \checkmark{} & \textbf{this paper}\\
        $2k-1$ & $O_k \left( f^{1-1/k} \cdot n^{1+1/k} \right)$ & & \textbf{this paper}\\
        $2k-1$ & $\Omega\left( f^{1-1/k} \cdot n^{1+1/k} \right)$ & lower bound$^*$  & \cite{BDPW18}\\
        $\infty$ (connectivity) & $\Omega\left(\frac{f}{\log f} \cdot n \log n \right)$ & lower bound & \textbf{this paper}\\

    \bottomrule
    \end{tabular}
    
    \caption{\label{tbl:priorftspan} Prior work on upper and lower bounds for edge fault-tolerant spanners, both in the global model ($f$-EFT spanners) and the local model ($f$-FD spanners).  The lower bound entries marked with an asterisk ($^*$) are conditional on the girth conjecture \cite{erdHos1964extremal}.  We also show constructions of FD-spanners via expander decomposition methods, not listed in this table.}
    \end{center}
\end{table}

We recap the considerable prior work on EFT spanners in Table \ref{tbl:priorftspan}.
For now, we note that the lower bound graphs from \cite{BDPW18}, which we describe in Section \ref{sec:exspanlb}, partially share the same locality property as the lower bounds for connectivity certificates.
In particular, we can consider the following model of local fault tolerance for spanners:
\begin{definition}[Faulty-Degree Spanners (New)]
Given a graph $G$, an edge-subgraph $H$ is called an $f$-faulty-degree (FD) $t$-spanner if, for any nodes $u, v$ and for any set of edges $F$ such that $\deg_F(x) \le f$ for all nodes $x$, we have $\dist_{H \setminus F}(u,v)\leq t \cdot \dist_{G \setminus F}(u,v)$.
\end{definition}

Once again, every $f$-EFT spanner is necessarily also an $f$-FD spanner, but the converse is far from true, as $f$-FD spanners must handle some fault sets of size up to $\Theta(fn)$.
The graph constructions from \cite{BDPW18} imply a lower bound of $\Omega(f^{1-1/k} n^{1+1/k})$ for $f$-FD spanners.
%, which is moderately higher than the lower bounds for $f$-EFT spanners.
Our next main result is to show that this bound is essentially tight.

\begin{theorem} [Second Main Result] \label{thm:introexspanners}
For all positive integers $f, n, k$, every $n$-node graph has an $f$-FD $(2k-1)$ spanner on $f^{1-1/k} \cdot n^{1+1/k} \cdot O(k)^k$ edges.
Assuming the girth conjecture \cite{erdHos1964extremal}, this tradeoff is best possible when $k$ is a constant.
Additionally, one can compute the spanner in polynomial time by paying an additional $\text{polylog } n$ factor in the spanner size.
\end{theorem}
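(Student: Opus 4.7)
The plan is to adapt the greedy blocking-set framework developed for EFT spanners \cite{BDPW18, BP19, BodwinDR22} to the faulty-degree setting, together with a probabilistic counting argument tuned to per-vertex (rather than global) fault budgets. The existential $(2k-1)$-spanner $H$ is defined via a faulty-degree-aware greedy rule: process the edges of $G$ in non-decreasing weight order and add $e = (u,v)$ to $H$ iff there exists a set $F_e \subseteq H$ with $e \notin F_e$ and $\deg(F_e) \le f$ such that $\dist_{(H \cup \{e\}) \setminus F_e}(u,v) > (2k-1) \cdot w(e)$. A standard exchange argument shows that $H$ is an $f$-FD $(2k-1)$-spanner: for any fault set $F$ with $\deg(F) \le f$, the shortest $u$-to-$v$ path in $G \setminus F$ cannot be stretched by more than $2k-1$ in $H \setminus F$, else its heaviest missing edge would have been added by the greedy rule.

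To bound $|E(H)|$, attach to each $e \in H$ its blocker $F_e$ and prove a structural \emph{no short blocker-covered cycle} lemma: there is no cycle $C$ of length $\le 2k$ in $H$ such that for every $e \in C$, the edges of $C \setminus \{e\}$ lie outside $F_e$. If such a $C$ existed, then for any fixed $e \in C$ the path $C \setminus \{e\}$ would be a $(2k-1)$-length alternative in $H \setminus F_e$, contradicting that $F_e$ is a blocker of $e$. I would then sample a random faulty-degree-$f$ set $F^*$ by, for each vertex $v$, independently selecting a uniform $f$-subset of its $H$-incident edges, and define $H'$ to consist of those edges $e$ with $e \notin F^*$ but $F_e \subseteq F^*$. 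The structural lemma forces $H'$ to have girth $> 2k$, so the Moore bound gives $|E(H')| \le O(k) \cdot n^{1+1/k}$. Calibrating the per-vertex sampling probabilities so that each edge survives with a factor of roughly $(f/d)^{c}$ at an endpoint of $H$-degree $d$, and optimizing across the degree levels of $H$ as in \cite{BodwinDR22}, yields the claimed bound $|E(H)| \le f^{1-1/k} \cdot n^{1+1/k} \cdot O(k)^k$. The matching lower bound is inherited from the girth-conjecture construction in \cite{BDPW18}, whose hardest fault sets already have bounded faulty-degree.

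The main obstacle lies in the counting step: in the EFT setting each blocker has at most $f$ edges globally and so $\Pr[F_e \subseteq F^*]$ factors as a clean $p^{f}$ term; here a blocker may contain up to $\Theta(fn)$ edges in total while respecting only a per-vertex cap of $f$, so the survival probability must be expressed as a product of per-vertex contributions reflecting the local incidence pattern of $F_e$. Obtaining the exponent $1 - 1/k$ (rather than a weaker $1$) requires a degree-layering of $H$ and a careful balancing of heavy versus light vertices, much like in \cite{BodwinDR22}, but now with the per-vertex cap playing the role of the global size cap. This combinatorial balancing, rather than the greedy or structural steps, is where the new work is concentrated.

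Finally, for the polynomial-time construction, I would follow the randomized template of Dinitz--Robelle \cite{DR20}: replace the existential search for a blocker by $\text{polylog}(n)$ independent samples of random faulty-degree-$f$ fault sets, adding $e$ whenever any such sample certifies its necessity. A union bound confirms the FD-spanner property with high probability while inflating the spanner size by only a $\text{polylog}(n)$ factor, matching the theorem statement.
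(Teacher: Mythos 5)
Your greedy rule, correctness argument, and blocking-set observation match the paper's, and the lower-bound construction is indeed inherited from \cite{BDPW18}. But both remaining pieces of your plan diverge from the paper and contain genuine gaps.

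\textbf{Size analysis.} You propose a random-restriction argument: sample a random $F^*$, keep edges $e$ with $e \notin F^*$ and $F_e \subseteq F^*$, observe the survivor $H'$ has girth $> 2k$, and apply the Moore bound. The girth claim is fine (for the latest edge $e^*$ of any short cycle $C \subseteq H'$, one has $F_{e^*} \cap C \subseteq F^* \cap C = \emptyset$, contradicting the blocking property). The gap is in the survival probability: you write that each edge survives with a factor ``roughly $(f/d)^c$ at an endpoint of $H$-degree $d$,'' but a valid FD-blocker $F_e$ may touch up to $\Theta(n)$ distinct vertices while respecting $\deg(F_e) \le f$, so $\Pr[F_e \subseteq F^*]$ under any per-vertex sampling scheme is a product over all of $V(F_e)$ and can be $n^{-\Theta(|V(F_e)|)}$, not a function of the degrees at the two endpoints of $e$. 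You flag this as ``where the new work is concentrated'' but no amount of degree-layering fixes the exponential dependence on $|V(F_e)|$, and there is no reduction to small blockers available. The paper sidesteps this entirely: it does not bound $|F_e|$ at all. Instead it introduces MUCk paths (monotone, unblocked, \emph{chain-unblocked}, $k$-length) and proves a \emph{dispersion lemma} (at most $O(C f k)^{k-1}$ MUCk paths share a pair of endpoints) together with counting lemmas; the degree constraint on $F_e$ is used only \emph{locally} (at most $f$ blocking edges incident to any one vertex), never globally. This is the key insight your plan is missing, and the $f^{1-1/k}$ exponent comes out of balancing the dispersion bound against the monotone-path count, not from tuning sampling probabilities.

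\textbf{Polynomial time.} Your proposal---run the greedy test by checking $\text{polylog}(n)$ independently sampled random FD fault sets and add $e$ if any sample blocks it---is not sound. The greedy algorithm must \emph{discard} an edge only when it is provably unnecessary, i.e., when no FD fault set of degree $\le f$ blocks $e$. Failure of $\text{polylog}(n)$ random samples to find a blocker gives no such guarantee; an adversarial blocker can exist while being exponentially unlikely under your sampling distribution, so the output need not be an FD spanner. This is also not Dinitz--Robelle's method: they replace the existential test by a deterministic $O(k)$-approximation algorithm for \textsc{Length-Bounded Cut}, which gives a one-sided certifiable guarantee. The paper follows that template, giving an $O(k\log n)$-approximation for the \textsc{Min Max LBC} variant via LP rounding with a polynomial-time separation oracle; when the approximate value exceeds $B f k\log n$ the true optimum provably exceeds $f$ and the edge can be safely dropped, and when it does not, an actual blocker of degree $\tilde{O}(fk)$ is produced and feeds into the blocking-set size bound.
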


We obtain the result by analyzing a certain exponential-time greedy algorithm, analogous to the one used repeatedly in previous work on EFT spanners \cite{BDPW18, BodwinDR22}.
In order to speed it up to polynomial time, we follow an approach of Dinitz and Robelle \cite{DR20} to replace a key step in the greedy construction with an approximation algorithm.
Thus costs roughly the approximation factor in spanner size.
We overview the technical aspects of this approximation algorithm more in Section \ref{sec:lcc}.

Although our results focus on spanners and connectivity certificates, we wish to emphasize here that the local (FD) model of fault tolerance is sensible for many other problems, e.g., distance oracles, flows, etc. It will also be interseted to provide analog local FD results for vertex faults. 
The conceptual message behind Theorems \ref{thm:introccmain} and \ref{thm:introexspanners} is that, despite being considerably stronger, one can still prove meaningful upper bound results in the local model.
A natural open question is therefore to investigate whether upper bounds can be achieved when considering these other problems in the local fault tolerant model, which have previously been considered only in the global fault tolerant setting.

% \begin{theorem} [Second Main Result, Part 2]
% For all positive integers $f, n, k$, every $n$-node unweighted graph has an $f$-FD $k^{O(k)}$-spanner $H \subseteq G$ with $|E(H)|=\widetilde{O}(f \cdot n^{1+1/k})$ edges.
% \end{theorem}
%
% It is an interesting open problem whether one can combine the best parts of these results, obtaining an optimal or near-optimal size/stretch tradeoff for FD spanners in polynomial time.

\subsection{Expanders and Rigid vs.\ Competitive Fault Tolerance}

An important theme in the literature on fault-tolerant graph structures is the distinction between \emph{rigid} and \emph{competitive} fault tolerance \cite{ChechikLPR:10}.
All of the previous discussion concerns the competitive notion of fault tolerance, because we compare distances or connectivity between the post-failure subgraph $H \setminus F$ and the \emph{post-failure} subgraph $G \setminus F$.
The rigid notion of fault tolerance is defined similarly, but we would compare distances or connectivity between $H \setminus F$ and the \emph{pre-failure} subgraph $G$.

Rigid fault tolerance is a stronger notion than competitive fault tolerance, and hence it is preferable when it can be achieved.
However, rigid fault tolerance is not available for all graphs.
For example, when $G$ is a tree, a single edge failure can disconnect the graph.
Thus there is no $1$-rigid-fault-tolerant subgraph of $G$, not even $G$ itself.
Despite this, rigid fault tolerance is available for certain restricted graph classes, e.g., Euclidean \cite{levcopoulos1998efficient}.
An additional contribution of this work is a refined understanding of when rigid fault tolerance is available.

\paragraph{Rigid Fault Tolerance for Connectivity.}

A key technical ingredient in our construction of connectivity certificates is a new structural lemma, showing that every \emph{expander} of high enough minimum degree admits a notion of rigid fault tolerance for connectivity (and, in fact, also for expansion).

\begin{theorem} [Expanders are Rigid to Bounded-Degree Faults] \label{thm:expander-d-deg}
Let $G$ be an $n$-vertex $\phi$-expander with minimum degree $\faultdeg' \geq 2\faultdeg/\phi$.
Then for every $\faultdeg$-degree faulty-set $F$, the graph $G \setminus F$ remains connected.
(In fact, $G \setminus F$ is itself a $\phi/2$-expander.)
\end{theorem}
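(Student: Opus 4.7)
The plan is to prove the parenthetical stronger claim that $G \setminus F$ remains a $\phi/2$-expander, since connectivity follows immediately: if $G \setminus F$ had a proper connected component $S$, the cut $(S,\bar S)$ would have zero boundary, violating positive expansion (and the hypothesis $\faultdeg' \geq 2\faultdeg/\phi$ ensures $\faultdeg' > \faultdeg$, so every vertex retains at least one incident edge, hence $\mathrm{vol}_{G\setminus F}(S) > 0$).

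To establish the expansion bound, I would fix an arbitrary cut $(S,\bar S)$ and, by the symmetry of conductance, assume $\mathrm{vol}_G(S) \leq \mathrm{vol}_G(V)/2$, so that the $\phi$-expansion of $G$ yields $|E_G(S,\bar S)| \geq \phi \cdot \mathrm{vol}_G(S)$. The key local-fault observation is that the total number of faulty edges incident to $S$ --- and hence in particular the number of faulty edges crossing the cut $(S,\bar S)$ --- is at most $\faultdeg \cdot |S|$, since each vertex in $S$ loses at most $\faultdeg$ incident edges. Using the minimum-degree bound $|S| \leq \mathrm{vol}_G(S)/\faultdeg'$, this quantity is at most $(\faultdeg/\faultdeg')\mathrm{vol}_G(S) \leq (\phi/2)\mathrm{vol}_G(S)$, where the last step is exactly where the assumption $\faultdeg' \geq 2\faultdeg/\phi$ is used.

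Combining these ingredients,
\[
|E_{G \setminus F}(S,\bar S)| \;\geq\; |E_G(S,\bar S)| - \faultdeg|S| \;\geq\; \phi\cdot \mathrm{vol}_G(S) - (\phi/2)\mathrm{vol}_G(S) \;=\; (\phi/2)\,\mathrm{vol}_G(S) \;\geq\; (\phi/2)\,\mathrm{vol}_{G \setminus F}(S),
\]
which is the desired $\phi/2$-conductance for this cut. The only minor wrinkle is that, after failures, $\bar S$ might become the side with smaller volume; but since the conductance of a cut is measured against the smaller side, and the inequality above already bounds $|E_{G\setminus F}(S,\bar S)|$ below by $(\phi/2)\,\mathrm{vol}_{G\setminus F}(S)$, it also bounds it by the (potentially smaller) $(\phi/2)\min(\mathrm{vol}_{G\setminus F}(S), \mathrm{vol}_{G\setminus F}(\bar S))$. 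I do not anticipate a significant obstacle: the entire argument is essentially a single calculation, and the numerical assumption $\faultdeg' \geq 2\faultdeg/\phi$ is calibrated precisely so that bounded-degree faults absorb exactly half of every cut's expansion, leaving $\phi/2$ behind. If the ambient definition of ``$\phi$-expander'' uses edge expansion $|\partial S|/|S|$ rather than conductance, the same argument goes through with $\mathrm{vol}_G(S) \geq \faultdeg' \cdot |S|$ replaced by the trivial identity on $|S|$, and the required hypothesis on minimum degree appears in the same calibrated way.
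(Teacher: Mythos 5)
Your proposal is correct and follows essentially the same approach as the paper's proof: fix a cut, apply the expansion of $G$ to bound $|\partial_G(S)|$ from below, observe that at most $\faultdeg|S|$ crossing edges can fail, convert $|S|$ to $\Vol_G(S)/\faultdeg'$ via the minimum degree, and use $\faultdeg' \geq 2\faultdeg/\phi$ to conclude the surviving boundary is at least $(\phi/2)\Vol_G(S) \geq (\phi/2)\Vol_{G\setminus F}(S) \geq (\phi/2)\min\{\Vol_{G\setminus F}(S),\Vol_{G\setminus F}(\bar S)\}$. The ``wrinkle'' you flag about which side is smaller after failures is handled correctly and in the same spirit as the paper's use of $\Vol_G(V') \geq \Vol_{G\setminus F}(V')$.
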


For general graphs $G$, this implies a strategy to design FD connectivity certificates.
Suppose that we compute an expander decomposition of $G$ (or rather, a slight variant of expander decomposition that also enforces a high enough minimum degree of each expander).
We then sparsify each expander in the decomposition, while maintaining its expansion properties and its minimum degree.
Theorem \ref{thm:expander-d-deg} promises that every edge with both endpoints in the same expander is \emph{automatically} protected from faults: these edge failures simply cannot disconnect the graph, so they can be ignored.
Indeed, the only bad edges are the ones whose endpoints lie in two different expanders in the decomposition.
These bad edges can be handled recursively, until very few of them remain, and the remaining edges can simply be added to the certificate.
We overview this strategy in more detail in Section \ref{sec:techocertificates}.
%The rigid fault-tolerance properties of expanders are summarized by the following lemma:

\paragraph{Rigid Fault Tolerance for Distances.}

Theorem \ref{thm:expander-d-deg} actually provides a notion of rigid fault tolerance for \emph{distances}, rather than just connectivity.
Since the post-failure graph $G \setminus F$ remains an expander, its distances can be changed only by a factor of $O(\log n)$ by the fault set $F$ -- in other words, $G \setminus F$ is an $O(\log n)$ spanner of $G$.
Plugging this fact into the recursive expander decomposition framework outlined above, this ultimately implies that the FD connectivity certificate is in fact an FD spanner with stretch $\text{polylog } n$.

It is natural to ask whether this $O(\log n)$ factor in rigid fault tolerance for distance can be improved.
We show that it can, by instead considering the recently-introduced class of length-constrained (LC) expanders \cite{HaeuplerR022}.
We will review these objects formally in Section \ref{sec:techospanners}, but briefly, an $(h,s)$-length expander is a graph where any $h$-length unit demand can be routed by a multi-commodity flow with congestion $\widetilde{O}(1/\phi)$ and over paths of length at most $hs$. 
The following theorem shows that they provide an improved notion of rigid fault tolerance for distances:

\begin{theorem}[Length-Constrained Expanders are Rigid to Bounded-Degree Faults]\label{thm:bounded-hop-expander-robust}
Let $G$ be an $n$-vertex $(h,s)$-length $\phi$-expander with minimum degree $\widetilde{\Omega}(\faultdeg \cdot n^{\epsilon}/\phi)$.
Then for every $\faultdeg$-degree faulty-set $F$, the graph $G \setminus F$ is a $(hs)^{O(1/\eps)}$-spanner of $G$.
%$\dist_{G \setminus F}(u,v)=(hs)^{O(1/\epsilon)}$ for every $(u,v)\in F$.
\end{theorem}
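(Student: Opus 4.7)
The plan is to prove the theorem by induction over $O(1/\eps)$ length-scale levels, applying the $(h,s)$-length expander routing guarantee at each level and leveraging the large min-degree hypothesis to ensure that enough routed flow avoids the failure set $F$. Specifically, I would establish that for $i = 0, 1, \ldots, \lceil 1/\eps \rceil$, every pair $u,v$ with $\dist_G(u,v) \le h \cdot n^{i\eps}$ admits a path in $G \setminus F$ of length at most $(hs)^{O(i)} \cdot \dist_G(u,v)$. Taking $i = \lceil 1/\eps \rceil$ then covers all vertex pairs (since $h \cdot n^{i\eps} \ge n$) and yields the $(hs)^{O(1/\eps)}$-spanner conclusion.

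The base case $i=0$ is where the length-constrained expander routing enters directly. Given $u,v$ with $\dist_G(u,v) \le h$, I would route a demand of magnitude $D$ from $u$ to $v$ using the $(h,s)$-length $\phi$-expander property, producing a multicommodity flow supported on paths of length at most $hs$ with per-edge congestion $\widetilde{O}(1/\phi)$. Choosing $D$ to exceed the total flow that the locally relevant portion of $F$ can absorb forces some surviving flow to arrive at $v$ along a path of $G \setminus F$, yielding the base case. The min-degree hypothesis $\widetilde{\Omega}(fn^\eps/\phi)$ is calibrated precisely to make such a demand feasible at each endpoint, analogous to how the $2\faultdeg/\phi$ bound is calibrated in Theorem~\ref{thm:expander-d-deg}.

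For the inductive step, given $u,v$ at distance $\le h \cdot n^{(i+1)\eps}$, I would partition a shortest $G$-path into roughly $n^\eps$ subpaths of length at most $h \cdot n^{i\eps}$, apply the inductive hypothesis to reroute each subpath inside $G \setminus F$, and concatenate the resulting detours. The total length of the concatenation is then at most $n^\eps \cdot (hs)^{O(i)} \cdot h n^{i\eps} = (hs)^{O(i)} \cdot h n^{(i+1)\eps}$, and absorbing one additional factor of $hs$ per level into the stretch gives the desired bound at level $i+1$.

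The principal obstacle is the base case. In Theorem~\ref{thm:expander-d-deg}, the bound $|F| \le \faultdeg n/2$ combined with min-degree $\Omega(\faultdeg/\phi)$ immediately forces the globally blocked flow to be a small fraction of the routed demand. Here, however, the per-source demand $D = \widetilde{\Omega}(\faultdeg n^\eps/\phi)$ is much smaller than the naive global blocking estimate $|F| \cdot \widetilde{O}(1/\phi) = \widetilde{O}(\faultdeg n/\phi)$, so a purely global accounting fails. The fix is to localize the argument: the flow routed out of $u$ is supported on paths of length at most $hs$ and therefore sits within a bounded-volume region, so only the failures touching that flow region can block it. The $n^\eps$ slack in the min-degree hypothesis is precisely what ensures that the surviving demand overwhelms this localized blocking, which is the technical heart of the theorem.
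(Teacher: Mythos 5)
Your proposal takes a genuinely different route from the paper, but it has a gap in the base case that I do not think can be repaired along the lines you sketch.

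The paper does not do a distance-scale induction. Instead it reduces (implicitly) to bounding $\dist_{G \setminus F}(u,v)$ only for pairs $(u,v) \in F$ (that suffices, since on any shortest path one can replace each fault edge by a detour). It then sets up a \emph{single} multicommodity routing instance in which $d = \Theta(n^{\eps}/\phi)$ units of demand are routed between the endpoints of \emph{every} fault edge simultaneously, with congestion $\congestion = O(\log n/\phi)$ and dilation $\dilation = hs$. A global averaging argument then shows that for at least a $(1 - \congestion/d) = (1 - n^{-\eps})$ fraction of the fault edges, at least one of their $d$ routing paths avoids all of $F$; these edges are ``handled.'' The argument iterates: the unhandled edges are re-routed, again most of them get a $\dilation$-length detour avoiding the (now smaller) remaining fault set, and after $k = O(1/\eps)$ rounds every fault edge is handled. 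The final stretch $\dilation^{k}$ comes from a second, different induction: an edge handled in round $i$ has a $\dilation$-length detour whose only fault edges come from earlier rounds, and those can be expanded by the inductive bound $\dilation^{i-1}$.

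Your base case, by contrast, tries to certify a detour for a single pair $(u,v)$ by routing demand $D = \widetilde{\Omega}(\faultdeg n^{\eps}/\phi)$ from $u$ alone and arguing that some of it survives $F$. You correctly flag that the naive global accounting gives blocked flow up to $|F| \cdot \congestion = \widetilde{O}(\faultdeg n/\phi) \gg D$ and propose a localization fix, but this fix is not actually available. The paths out of $u$ have length at most $hs$, but that does not localize the fault set: a ball of radius $hs$ around $u$ in a graph of min-degree $\widetilde{\Omega}(\faultdeg n^{\eps}/\phi)$ can easily touch $\Theta(n)$ vertices and hence $\Theta(\faultdeg n)$ fault edges, so the ``locally relevant portion of $F$'' is not materially smaller than $F$ itself. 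The degree bound on $F$ constrains only the first hop out of $u$ (at most $\faultdeg \congestion$ paths die on their first edge), but beyond the first hop the flow can fan out to many vertices each carrying a small amount, and at each such vertex all of the local flow can be killed by the up-to-$\faultdeg$ incident fault edges. There is no per-pair accounting that overcomes $|F| = \Omega(\faultdeg n)$; the averaging in the paper's argument works precisely because the total demand $d|F|$ and the total blockable flow $\congestion|F|$ both scale with $|F|$, so the $|F|$ cancels — this cancellation is exactly what is lost when you isolate a single source $u$.

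The distance-scale outer induction you describe (splitting a long path into $n^{\eps}$ subpaths of the previous scale and recursing) is not what the paper does, but that part is essentially fine as a wrapper once fault-edge pairs are handled. The real gap is the base case: it needs the simultaneous all-fault-pair routing plus the iterative peeling of $F$ into rounds $F_1, \ldots, F_k$, not a localized single-source argument.
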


The analogous construction now works for FD spanners: one can take a LC expander decomposition (again enforcing high minimum degree), and then sparsify each expander while maintaining its expansion properties and its minimum degree. The precise arguments are more delicate in this setting and in particular, our sparsification does not provide an LC-expander but rather a sparse subgraph that has similar routing properties as those provided by LC expanders. 
But the effect is again that edge faults whose endpoints lie in the same expander can essentially be ignored, since Theorem \ref{thm:bounded-hop-expander-robust} implies that these faults simply cannot change the pairwise distances by too much.
The only bad edges are the ones whose endpoints lie in different expanders, which are again handled recursively, until only a few bad edges remain which can be taken into the spanner.
Analyzing this construction leads to the following result:

\begin{theorem} [Second Main Result, Part 2] \label{thm:introexpanderspanners}
For all positive integers $f, n, k$, every $n$-node unweighted graph has an $f$-FD $k^{O(k)}$-spanner $H \subseteq G$ with $|E(H)|=\widetilde{O}(f \cdot n^{1+1/k})$ edges.
\end{theorem}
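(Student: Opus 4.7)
The plan is to implement the recursive LC expander-decomposition strategy sketched immediately after Theorem~\ref{thm:bounded-hop-expander-robust}, tuning parameters so that the stretch lands in the $k^{O(k)}$ regime rather than the $\mathrm{polylog}(n)$ regime used for the connectivity result. The top-level structure is recursive: at each level we compute an LC expander decomposition of the current graph, sparsify each expander cluster into the output $H$, and then recurse on the inter-cluster edges.

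Concretely, I would fix $\eps = \Theta(1/k)$ together with a length parameter $h = k^{O(1)}$ and constant slack $s$, chosen so that the rigid per-cluster stretch $(hs)^{O(1/\eps)}$ promised by Theorem~\ref{thm:bounded-hop-expander-robust} is at most $k^{O(k)}$. Compute an $(h,s)$-length $\phi$-expander decomposition with $\phi = 1/\mathrm{polylog}(n)$, and prune each piece $G_C$ (folding in light vertices, discarding a few edges) so that its minimum degree satisfies the hypothesis $\widetilde{\Omega}(f \cdot n_C^{\eps}/\phi)$ of Theorem~\ref{thm:bounded-hop-expander-robust}. Next, sparsify $G_C$ to a subgraph $H_C$ with $\widetilde{O}(f \cdot n_C^{1+\eps})$ edges that retains enough routing capacity to substitute for $G_C$ inside the proof of Theorem~\ref{thm:bounded-hop-expander-robust}; as the paper warns, $H_C$ will not itself be an LC expander, but a sparse routing surrogate. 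Applying Theorem~\ref{thm:bounded-hop-expander-robust} to $H_C$ then certifies that no $f$-degree fault set restricted to $C$ can disconnect $H_C$ or inflate its pairwise distances by more than a $k^{O(k)}$ factor relative to $G_C$, so all remaining intra-cluster edges of $G_C$ may safely be discarded.

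The residual bad edges are those in the inter-cluster set $E_{\mathrm{cross}}$, whose size is $\widetilde{O}(\phi \cdot |E|)$ by the LC expander decomposition guarantee. Recursing on $E_{\mathrm{cross}}$, the edge count shrinks by a $\mathrm{polylog}(n)$ factor per level, so after $O(\log n)$ levels what remains can simply be added to $H$. Summing the sparsified pieces across levels yields $|E(H)| = \widetilde{O}(f \cdot n^{1+\eps}) = \widetilde{O}(f \cdot n^{1+1/k})$. For stretch, any $u \leadsto v$ shortest path in $G \setminus F$ decomposes into intra-cluster subpaths (each stretched by at most $k^{O(k)}$ in $H \setminus F$ via Theorem~\ref{thm:bounded-hop-expander-robust}) joined by cross edges that are preserved at deeper recursion levels; chaining these bounds gives an overall stretch of $k^{O(k)}$, provided cross edges are not themselves rerouted by more than a constant factor per level.

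The principal technical obstacle is the sparsification step for a single LC expander cluster. Standard LC expander sparsifiers produce sparse subgraphs with routing guarantees, but here we additionally need the sparsified graph to have high minimum degree $\widetilde{\Omega}(f \cdot n_C^{\eps}/\phi)$ (so that Theorem~\ref{thm:bounded-hop-expander-robust}, or a suitable variant, is applicable to the sparsified object) \emph{and} we need the sparsification itself to be rigidly distance-preserving with respect to $G_C$, all within the edge budget $\widetilde{O}(f \cdot n_C^{1+\eps})$. Reconciling these three constraints is precisely the ``sparse LC-expander substitute'' that the paper flags as requiring a delicate construction, and it is the heart of the proof; a secondary difficulty is controlling how the per-level stretch composes across the $O(\log n)$ recursion levels without exceeding $k^{O(k)}$.
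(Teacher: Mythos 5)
Your high-level blueprint (LC expander decomposition, sparsify, invoke the rigidity theorem, collect residual edges) matches the paper's strategy, but there are two genuine gaps in the details. First, your choice of \emph{constant} length slack $s$ breaks the edge budget: the cut-set bound in Theorem~\ref{thm:lc-expander-decomp} contains a factor $n^{s^{-0.1}}$, which is polynomially large for constant $s$, so the number of inter-cluster edges would already exceed $\widetilde{O}(f\cdot n^{1+1/k})$. The paper instead takes $s=t^{20}$ (with $t=k$), making $n^{s^{-0.1}}=n^{1/t^2}$ subpolynomial; the slack must grow polynomially in $k$. Second, you assume the decomposition outputs vertex-disjoint clusters $G_C$ on which Theorem~\ref{thm:bounded-hop-expander-robust} can be applied piecewise, then propose an $O(\log n)$-level recursion on the inter-cluster edges. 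Theorem~\ref{thm:lc-expander-decomp} does not produce clusters: it returns a cut set $C$ and a \emph{single} (possibly disconnected) LC expander $G''=G'\setminus C$. The paper's actual algorithm $\FDSpanner$ is single-pass, not recursive: by running the decomposition with the artificial vertex weighting $W(u)=f\cdot n^{1/(2t)}$, the cut size becomes $O(h\phi|W|n^{s^{-0.1}}\log n)=\widetilde{O}(f\cdot n^{1+1/t})$ directly, so all of $C$ can be added to $H$ without any recursion.

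The heart of the argument, which you explicitly leave open, is the sparsification of $G''$. The paper resolves it by first applying the neighborhood cover of \cite{AwerbuchBCP98} to $G''$, obtaining overlapping pieces $G_1,\dots,G_\ell$ each of small diameter, and then, per piece, embedding a random graph $\widehat G\sim G(n_j,p)$ on $V(G_j)$ via low-congestion routing (Theorem~\ref{thm:expander-sparsification-unit-demand}). The neighborhood cover is essential because LC-expander routing only applies to $h$-length demands, so the virtual expander must be supported on a set of vertices at mutual distance $\le h$, and the cover's bounded overlap of $O(t\,n^{1/\beta})$ is what keeps the total sparsified edge count at $\widetilde{O}(f\cdot n^{1+1/t})$. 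Crucially, the sparsified graph $H_j$ is \emph{not} required to be an LC expander, nor to be distance-preserving: it only needs to $d$-route the fault edges with $d\gg$ congestion, after which Theorem~\ref{lem:router-stretch-robustness} converts that routing property into the $(\dilation)^{O(\log_{d/\congestion}n)}=t^{O(t)}$ distance bound. Finally, your worry that the per-level stretch might compound across $O(\log n)$ recursion levels is unfounded even in a recursive scheme, since each edge is ``handled'' at exactly one level and the witnessing path there lies in $H\setminus F$ directly; but in the paper's construction the issue never arises, because there is only one level.
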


Although the size-stretch tradeoff of the FD-spanner in Theorem \ref{thm:introexpanderspanners} is worse than the one in Theorem \ref{thm:introexspanners} (which is achieved using an unrelated non-expander-based toolkit), we include it because we think that the new connections between expanders, expander decomposition, and rigid fault tolerance illustrated by this theorem are worthwhile to explore, and could have applications in followup work, especially where computational aspects of the construction are emphasized.
Additionally, the non-tightness of the size-stretch tradeoff in this theorem is mostly due to a lack of understanding of the distance rigidity properties of expanders, rather than a weakness in the expander-based construction framework itself.
The following result states that the size-stretch tradeoff in this construction could improve from exponential to linear -- hence tight, up to a constant factor in the stretch and a sublinear factor in the faulty-degree -- if the distance rigidity bounds from Theorem \ref{thm:bounded-hop-expander-robust} can be improved.

\begin{theorem} [Informal]
Suppose that the distance rigidity bound in Theorem \ref{thm:bounded-hop-expander-robust} can be improved from $O(hs)^{O(1/\eps)}$ to $\ell$, for some parameter $\ell$.
Then our construction yields an $f$-FD $O(\ell \cdot \poly(s))$-spanner $H \subseteq G$ with $|E(H)|=\widetilde{O}(f \cdot n^{1+O(1/s)})$ edges.
\end{theorem}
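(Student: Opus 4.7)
The plan is to execute the same recursive length-constrained expander decomposition framework that yielded Theorem~\ref{thm:introexpanderspanners}, with the hypothetical improved rigidity bound $\ell$ substituted everywhere the current proof pays $(hs)^{O(1/\eps)}$. Set $\eps = \Theta(1/s)$ and $h = O(1)$, so that the $n^{\eps}$ overhead intrinsic to LC-expander decomposition collapses into the claimed $n^{1+O(1/s)}$ factor.

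First, I would compute an $(h,s)$-length $\phi$-expander decomposition of $G$ with $\phi = 1/\poly(\log n)$, imposed with the side constraint that each cluster has minimum cluster-degree $\widetilde{\Omega}(\faultdeg \cdot n^{\eps}/\phi)$, precisely as in the proof of Theorem~\ref{thm:introexpanderspanners}. For each cluster, apply a routing-preserving sparsification to obtain a subgraph with $\widetilde{O}(\faultdeg \cdot n^{\eps})$ edges per vertex that still satisfies the hypothesis of the assumed strengthening of Theorem~\ref{thm:bounded-hop-expander-robust}. Add all sparsified clusters together with the $\widetilde{O}(n^{1+\eps})$ inter-cluster leftover edges to the output, and recurse on the leftovers. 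The recursion has depth $O(1/\eps)=O(s)$ because each level discards an $n^{\eps}$-factor of edges, so the total output size is $\widetilde{O}(\faultdeg \cdot n^{1+\eps}) \cdot O(s) = \widetilde{O}(\faultdeg \cdot n^{1+O(1/s)})$, matching the claimed bound.

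For the stretch, take any shortest $u$-$v$ path $P$ in $G \setminus F$ and decompose it along cluster boundaries. Each inter-cluster edge sits in the output and costs no stretch, and at each scale $P$ crosses at most $\poly(s)$ clusters by the LC-expander hierarchy. Each intra-cluster segment between crossings is distorted by at most $\ell$ under the $\faultdeg$-degree fault set $F$, by the assumed improved rigidity statement. Multiplying across segments and across the $O(s)$ recursion levels gives total stretch $O(\ell \cdot \poly(s))$, as claimed.

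The main obstacle, and the only non-routine step, is the sparsification: a sparse subgraph of an LC-expander is generally not itself an LC-expander, so the assumed improved rigidity theorem cannot be black-boxed verbatim. The fix, adapted from the proof of Theorem~\ref{thm:introexpanderspanners}, is to isolate a weaker flow-routing surrogate of the LC-expander property which is preserved by the sparsification and is still strong enough to drive the rigidity argument. To go through cleanly, one needs either to phrase the hypothetical improvement to Theorem~\ref{thm:bounded-hop-expander-robust} in terms of this surrogate from the outset, or to reprove the rigidity bound with the surrogate in place of true LC-expansion; this compatibility requirement is the delicate point and is where most of the technical work sits.
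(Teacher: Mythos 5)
Your proposal correctly identifies the central subtlety of this direction: the sparsified clusters are not themselves LC-expanders, so any improvement to Theorem~\ref{thm:bounded-hop-expander-robust} must be phrased in terms of a flow-routing surrogate (as in Lemma~\ref{lem:router-stretch-robustness}) rather than LC-expansion per se. That observation matches the paper's remarks and is the genuinely delicate point. However, the construction you propose does not match Algorithm~$\FDSpanner$, which is what "our construction" refers to: the paper's algorithm is \emph{one-shot}. It applies a single LC-expander decomposition (with the vertex weighting chosen so that the cut set $C$ is already only $\widetilde{O}(f\cdot n^{1+O(1/s)})$ edges), adds $C$ directly to $H$, and then does a neighborhood cover plus sparsification on the remaining LC-expander. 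There is no recursion on inter-cluster edges, and introducing one is both unnecessary for the size bound and harmful for the stretch.

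The harm is the concrete gap in your stretch argument. You write that "multiplying across segments and across the $O(s)$ recursion levels gives total stretch $O(\ell \cdot \poly(s))$," but multiplying an $\ell$-factor distortion across $O(s)$ recursion levels gives $\ell^{O(s)}$, not $O(\ell \cdot \poly(s))$. The correct argument in the one-shot construction is additive rather than compounding: fix an edge $(u,v) \in E(G)\setminus F$ with $(u,v)\notin H$; it lies in some neighborhood-cover cluster $G_j$ and the sparsified $H_j$ contains a $u$-$v$ path of length $O(t) = \poly(s)$; each faulty edge on that path is replaced, once, by a fault-free detour of length at most $\ell$ using the assumed improved rigidity (applied via the routing properties preserved by $\BoundedHopSparsify$, as in Observation~\ref{obs:good-routing-F2} and Lemma~\ref{lem:router-stretch-robustness}); the resulting bound is $\poly(s)\cdot\ell = O(\ell\cdot\poly(s))$, with $\ell$ invoked exactly once per faulty edge and not compounded over levels. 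You should drop the recursion, invoke the one-shot decomposition, and replace the multiplicative-across-levels claim with this single-level replacement argument.
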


We think it is an interesting open problem to improve the distance rigidity bound $\ell$ in this theorem, both in its own right and to improve this particular application to FD spanners.
We refer to Section \ref{sec:techospanners} for more technical details on the construction and this possible extension.

\smallskip
\noindent \textbf{Related Work on Spanners and Expanders.} In a very recent work, Haeupler, Hershkowitz and Tan \cite{GreedyMatchingHHT23} provided a new implementation of the well-known greedy (non-FT) spanner algorithm of \cite{AlthoferDDJS:93}. In particular, they show that one can add at a time a collection of matching edges $M$ provided that each individual edge in $M$ does not close a short cycle in the current spanner. On a high-level, their size analysis approach is based on exploiting the routing properties of the recently introduced notion of length-constrained expanders \cite{HaeuplerR022}. While their paper does not address fault-tolerance aspects, their use of expanders and more specifically, expander routing, in the context of (standard) spanners highly inspired our expander-based constructions of FD spanners.

\section{Technical Overview} \label{sec:tech-overview}

\subsection{Expander-Based Constructions}\label{sec:expander-based}
% Chechik and Peleg \cite{chechik2010rigid} noted that the requirements for fault-tolerant graph structure $H \subseteq G$ can be interpreted in two different ways: \emph{rigid} vs. \emph{competitive}. \emph{Rigid} fault tolerance means that after the failing event $F$, we insist that $H \setminus F$ \emph{approximates} well that the original graph $G$. In competitive fault tolerance, we only require that $H \setminus F$ approximates well the surviving graph $G \setminus F$. While the rigid notion is clearly stronger, it cannot be obtained for general graphs, e.g., in cases where the failed edges $F$ disconnect the graph. Hence, fault-tolerant graph structures for general graphs are competitive. For instance, for a competitive fault-tolerant spanner $H$ it is required that the distances in $H \setminus F$ approximate the distances in $G \setminus F$. In a rigid fault-tolerant spanner, the distances in $H \setminus F$ should be compared to the distances in $G$. While rigid fault-tolerant spanners might no exist for some graph $G$, they do exist when $G$ is a complete Euclidean graphs. Indeed, fault-tolerant geometric spanners are rigid. 
Our expander-based constructions use a two-step approach.
We first show that, when the input graph is an expander, one can provide rigid fault-tolerant guarantees for connectivity and distances.
By employing the expander-decomposition technique, we then translate the rigid fault-tolerant guarantees for expanders into competitive fault-tolerant guarantees for general input graphs.
To the best of our knowledge, these tools have not been employed before in the context of fault-tolerant graph sparsification. 

On a conceptual level, our expander-based sparsification approach allows us to pinpoint a critical set of edges in the input graph $G$ that prevent one from providing rigid fault-tolerant guarantees for $G$. Our proof shows that the \emph{only} problematic edges for rigid fault-tolerance are the inter-expander edges, that connect different expander subgraphs in the output of the expander decomposition procedure. Those edges can be handled recursively until their number becomes bounded by $\widetilde{O}(\faultdeg \cdot n)$. This set of small number of edges are the problematic edges for obtaining rigid fault-tolerance, and by adding them to the output subgraph, we provide competitive fault-tolerance for $G$.

\subsubsection{Sparse Certificates \label{sec:techocertificates}}

We first discuss our use of expanders for sparse connectivity certificates.
Our starting observation is that any $\phi$-expander with minimum degree $\Theta(f)$ can tolerate the removal of any faulty-set $F$ with degree $f$, and maintain its expansion properties, up to a small loss in the conductance parameter.  As  formulated in Theorem \ref{thm:expander-d-deg}.

To illustrate our ideas, assume for simplicity that the minimum degree of the input graph $G$ is $\widetilde{\Omega}(\faultdeg)$. Our algorithm first employs an expander-decomposition procedure on $G$ which partitions $G$ into vertex-disjoint $\phi$-expanders $G_1,\ldots, G_k$ for $\phi=1/\log n$, and an additional set $H_0 \subseteq G$ of at most $m/2$ inter-expander edges. In the second step, the algorithm sparsifies each expander $G_i$ into a $\Theta(\phi)$-expander $H_i \subseteq G_i$ with only $\widetilde{O}(\faultdeg \cdot |V(G_i)|)$ edges. The final subgraph is then given by $H=\bigcup_{i=0}^k H_i$. 
Formally, our expander sparsification lemma can be stated as follows:

\begin{lemma}[Expander Sparsification]\label{lem:exp-sparsify}
Given a $\phi$-expander $n$-vertex $G$ with minimum degree $\widetilde{\Omega}(\faultdeg/\phi)$, one can compute a subgraph $H \subseteq G$ such that (i) $|H|=\widetilde{O}(\faultdeg \cdot n/\phi^2)$, (ii) $H$ is an $n$-vertex $\widetilde{\Omega}(\phi^2)$-expander (and $V(H)=V(G)$), and (iii) the minimum-degree of $H$ is $\Omega(\faultdeg)$. 
\end{lemma}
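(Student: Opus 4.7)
The plan is to construct $H$ via a single round of non-uniform independent random edge sampling. Specifically, for each edge $e=(u,v)\in E(G)$, include $e$ in $H$ independently with probability
\[ p_e \;=\; \min\!\left(1,\ \frac{C\log n \cdot \faultdeg}{\phi^{2}\cdot\min(\deg_G(u),\deg_G(v))}\right), \]
for a sufficiently large absolute constant $C$. This choice is guided by the Spielman--Srivastava effective-resistance sampling paradigm: in a $\phi$-expander, a standard bound gives the effective resistance $R_{uv}=\widetilde{\Theta}(1/(\phi^{2}\min(\deg(u),\deg(v))))$, so $p_e$ is proportional (up to the $\faultdeg$ factor) to $\log n\cdot R_e$.

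For property (i), linearity of expectation gives $\mathbb{E}[|E(H)|]=\sum_e p_e$. I would charge each edge $(u,v)$ to its lower-degree endpoint $v$ and observe that $v$ is charged by at most $\deg_G(v)$ edges, each contributing at most $C\log n\cdot\faultdeg/(\phi^{2}\deg_G(v))$. Summing over $v\in V$ yields $\mathbb{E}[|E(H)|]\le C\faultdeg n\log n/\phi^{2}=\widetilde{O}(\faultdeg n/\phi^{2})$, and a Chernoff bound gives sharp concentration. For property (iii), note that for any $v$, $\mathbb{E}[\deg_H(v)]=\sum_{u\sim v}p_{(u,v)}\ge \deg_G(v)\cdot C\log n\cdot\faultdeg/(\phi^{2}\deg_G(v))=C\faultdeg\log n/\phi^{2}\ge C\faultdeg\log n$. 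A multiplicative Chernoff bound and a union bound over the $n$ vertices then gives $\deg_H(v)\ge \faultdeg$ for every $v$ simultaneously with high probability; in particular $V(H)=V(G)$.

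Property (ii) is the main obstacle. I would argue that with the chosen $p_e$ the subgraph $H$ is, with high probability, a Spielman--Srivastava-style $(1\pm\tfrac12)$-spectral sparsifier of $G$, since $p_e$ dominates $C'\log n\cdot R_e$ for sufficiently large $C'$. This preserves all cut values $e_H(S,\bar S)$ and all volumes $\operatorname{vol}_H(S)$ to within constant factors of their $G$-counterparts; combined with the $\phi$-expansion of $G$, it yields $e_H(S,\bar S)\ge \Omega(\phi)\cdot\operatorname{vol}_H(\min(S,\bar S))$ for every nontrivial $S$, giving an $\Omega(\phi)$-expander, which is in particular $\widetilde{\Omega}(\phi^{2})$-expander as required. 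The two technical ingredients supporting this step are (a) a uniform cut-concentration argument, which reduces via Karger's cut-counting theorem to the inequality $p_e\cdot(\text{min-cut})=\Omega(\log n)$ (which holds here because the min-cut of a $\phi$-expander equals its minimum degree $\widetilde\Omega(\faultdeg/\phi)$), and (b) the effective-resistance bound on $R_{uv}$ cited above, which follows from the $\widetilde{O}(1/\phi^{2})$ mixing time of the lazy random walk on $G$.

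If a cleaner decoupled route is preferable, an alternative plan is to (1) apply an off-the-shelf cut sparsifier (e.g.\ Bencz\'ur--Karger or Fung--Hariharan--Harvey--Panigrahi) to obtain a subgraph $H_0\subseteq G$ with $\widetilde{O}(n/\phi^{2})$ edges preserving all cuts up to a constant factor, and then (2) augment each low-degree vertex of $H_0$ with up to $\faultdeg$ arbitrary extra edges from $E(G)\setminus E(H_0)$ to meet the min-degree condition. Properties (i) and (iii) are then immediate from $\widetilde{O}(n/\phi^{2})+\faultdeg n=\widetilde{O}(\faultdeg n/\phi^{2})$; the remaining check is that the augmenting edges do not inflate volumes enough to destroy $\widetilde{\Omega}(\phi^{2})$-expansion, which can be verified using the assumed lower bound $\widetilde{\Omega}(\faultdeg/\phi)$ on the minimum degree of $G$ (ensuring $\operatorname{vol}_{H_0}$ already dominates the $\faultdeg|S|$ contribution of the augmenting edges on the smaller side of any cut).
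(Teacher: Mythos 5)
Your plan takes a genuinely different route from the paper -- the paper embeds an $\faultdeg'$-regular $(1/\log n)$-expander $\widehat G$ into $G$ as a unit demand routed with congestion and dilation $O(\log n/\phi)$ via Lemma~\ref{lem:routing-implies-expansion}, takes $H$ to be the union of the flow paths, and derives expansion of $H$ from the embedding (Lemma~\ref{lem:from-embedding-to-expansion}); the size and min-degree bounds then fall out of the dilation and congestion accounting. Your proposal instead uses resistance-proportional random edge sampling. Unfortunately the step establishing property~(ii) has a real gap. The Spielman--Srivastava (and Bencz\'ur--Karger / Fung et al.) machinery produces \emph{weighted} sparsifiers: a sampled edge is kept with weight $1/p_e$, and it is the \emph{reweighted} graph whose cuts, volumes, and Laplacian quadratic form track $G$'s. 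Your $H$ is the unweighted sampled subgraph, and the claim that it ``preserves all cut values $e_H(S,\bar S)$ and all volumes $\operatorname{vol}_H(S)$ to within constant factors of their $G$-counterparts'' is simply false once $G$ is dense: taking $G=K_n$, any unweighted subgraph on the target $\widetilde O(\faultdeg n/\phi^2)$ edges has every cut and volume smaller than $G$'s by a factor of order $n/\faultdeg$, not a constant. The same objection sinks the alternative plan, which posits an unweighted $H_0\subseteq G$ on $\widetilde O(n/\phi^2)$ edges ``preserving all cuts up to a constant factor''; no such unweighted subgraph exists for dense $G$.

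What the lemma actually requires is that the \emph{ratio} $|\partial_H(S)|/\min(\operatorname{vol}_H(S),\operatorname{vol}_H(\bar S))$ be $\widetilde\Omega(\phi^2)$ for all $S$ simultaneously. With non-uniform sampling $p_e \sim \faultdeg\log n/(\phi^2\min(\deg u,\deg v))$ and no reweighting, both numerator and denominator shrink by large, non-uniform factors, and it is not clear the ratio survives: a boundary $\partial_G(S)$ could consist of edges joining high-degree vertices (small $p_e$) while the interior of $S$ is rich in edges joining low-degree vertices (large $p_e$), so the cut could shrink faster than the volume. You do not give an argument for why this does not happen, and neither Karger-style cut counting (which is calibrated for uniform sampling) nor the $R_{uv}=O(1/(\phi^2\min(\deg u,\deg v)))$ bound (which is correct, but controls the reweighted spectrum) closes this hole. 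The paper's flow-based embedding avoids the issue entirely, since the expansion of $H$ is inherited from the explicitly regular virtual expander $\widehat G$ rather than from $G$'s spectrum.
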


At a high level, the sparsification is based on the useful notion of graph embedding (see e.g., Section 3.16 in \cite{LeightonR99}). An embedding $\sigma$ of a graph $\widehat{G}=(V, \widehat{E})$ into a graph $G=(V,E)$ is a function that maps each edge $e'$ in $\widehat{G}$ to a path $\sigma(e')$ in $G$. We say that congestion of the embedding is at most $\congestion$ if each $G$-edge $e$ appears on at most $\congestion$ paths in $\mathcal{P}=\{\sigma(e') ~\mid~ e' \in \widehat{E}\}$. The dilation of the embedding is the length of the longest path in $\mathcal{P}$. 

We then show that one can embed an $\faultdeg$-regular $\phi$-expander $\widehat{G}=(V,\widehat{E})$ into any given $\phi$-expander $G=(V,E)$ with minimum degree $\faultdeg$. The congestion and the dilation of the embedding are bounded by $\widetilde{O}(1/\phi)$. The output subgraph $H$ is taken as the union of the paths that embed $\widehat{G}$ into $G$.

Finally, we explain the intuition for the correctness of our algorithm. Fix a $\faultdeg$-degree subset $F$ and an edge $(u,v) \notin F$. We show that $\dist_{H \setminus F}(u,v)=\widetilde{O}(1)$. The interesting case is when $(u,v) \notin H$, and hence $(u,v)$ is in some $\phi=1/\log n$ expander $G_i$, obtained by the expander-decomposition. Since $H_i=(V(G_i), E_i)$ is a $\widetilde{\Omega}(\phi^2)$-expander with minimum degree $\Omega(\faultdeg)$, by Theorem \ref{thm:expander-d-deg}, $H_i \setminus F$ is also a $\widetilde{\Omega}(\phi^2)$-expander, hence of diameter $\widetilde{O}(1)$, which provides the desired stretch guarantees. 

% Consider a $\faultdeg'$-degree faulty-set $F$ for $\faultdeg'=\widetilde{\Theta}(\faultdeg \cdot \phi)$ and $\phi=\Theta(1/\log n)$. We show that for every $(u,v)\in F$, it holds that $u$-$v$ are connected in $H \setminus F$ iff they are connected in $G \setminus F$. Since the algorithm adds all inter-cluster edges to $H$, it is sufficient to consider the case where $(u,v)$ belongs to some expander graph $G_i \subseteq G$. Since $H_i$ is a spanning subgraph (by Lemma \ref{lem:exp-sparsify}), $u$ and $v$ are connected in $H_i$. Since $H_i \setminus F$ is connected by Theorem \ref{thm:expander-robust} and as $H_i \subseteq H$, we conclude that $u$ and $v$ are connected in $H \setminus F$. 

\subsubsection{Sparse Spanners \label{sec:techospanners}}

\paragraph{Warmup: FD $3$-Spanners} As a warm-up, we show a simple construction of $\Delta$-FD $3$-spanners that can be obtained by a mild adaptation of the classic Baswana-Sen spanner algorithm \cite{BaswanaS:07}.
The following proof assumes familiarity with the Baswana-Sen algorithm, and proceeds quickly over some details used in this algorithm.
We have:
\begin{lemma}\label{lem:threespanner}
For any graph $G=(V,E)$, one can compute a $\faultdeg$-FD $3$-spanner $H \subseteq G$ with $|E(H)|=\widetilde{O}(\faultdeg n^{3/2})$.
\end{lemma}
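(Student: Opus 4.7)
The plan is to adapt the classical Baswana--Sen $3$-spanner algorithm with two modifications for $f$-FD fault tolerance: raising the sampling rate of cluster centers so that high-degree vertices have $\Omega(f)$ surviving sampled neighbors under any $f$ faults, and adding $\Theta(f)$-redundant cluster bridge edges in the bridging step. Concretely, I would sample $R \subseteq V$ by including each vertex independently with probability $p = \Theta(\log n / \sqrt{n})$ and set $R_v := R \cap N_G(v)$. I call $v$ \emph{heavy} if $|R_v| \geq c(f+1)$ for a sufficiently large constant $c$, and \emph{light} otherwise; by a Chernoff bound and union bound, w.h.p.\ every vertex of degree at least $\Delta = \Theta(f \sqrt{n})$ is heavy, so every light $v$ satisfies $\deg_G(v) < \Delta$.

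The spanner $H$ is built in three steps. (a) For every light $v$, add all $G$-edges incident to $v$, which contributes $O(n \Delta) = \widetilde{O}(f n^{3/2})$ edges. (b) For every heavy $v$ and every $r \in R_v$, add $(v, r)$; the contribution is $\sum_v |R_v| \leq 2pm = \widetilde{O}(n^{3/2})$. (c) For every heavy $v$ and every $s \in R$ such that $N_G(v) \cap N_G(s) \neq \emptyset$, add $\Theta(f)$ edges $(v, w_i)$ for distinct $w_i \in N_G(v) \cap N_G(s)$ (or all, if fewer exist); this contributes $O(n \cdot |R| \cdot f) = \widetilde{O}(f n^{3/2})$ edges. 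The total size is $\widetilde{O}(f n^{3/2})$.

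For the stretch bound, consider any $(u, v) \in E(G) \setminus F$. If either endpoint is light, step (a) places $(u,v)$ directly in $H \setminus F$. Otherwise both are heavy, and I want a length-$3$ path in $H \setminus F$. The set of $v$-surviving pivots $S_v := \{r \in R_v : (v, r) \notin F\}$ has $|S_v| \geq |R_v| - f = \Omega(f)$. For each $r_v \in S_v$ we have $(v, r_v) \in H \setminus F$ from step (b). Since $v \in N_G(u) \cap N_G(r_v)$, step (c) added $\Theta(f)$ bridges $(u, w_i)$ with $w_i \in N_G(u) \cap N_G(r_v)$; the $\leq f$ faults at $u$ kill at most $f$ bridges, so $\Omega(f)$ bridges survive. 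For each surviving bridge $(u, w)$ the middle edge $(w, r_v)$ is in $H$, by step (b) if $w$ is heavy (as then $r_v \in R_w$) or by step (a) if $w$ is light. A pigeonhole argument over the $\Omega(f)$ surviving pivots in $S_v$ and the $\Omega(f)$ surviving bridges then yields one choice of $(r_v, w)$ for which the middle edge $(w, r_v)$ also avoids $F$, giving a surviving $3$-path $u \to w \to r_v \to v$.

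The hard part of this plan is closing the pigeonhole step for the middle edge: the adversary can spend its per-vertex budget at $w$ or at $r_v$ to destroy $(w, r_v)$, and naively enumerating an adversary's blocking strategy shows that $\Omega(f)$ bridge redundancy per cluster might be insufficient if bridges are not shared across clusters. The key structural observation I would exploit is that a heavy bridge vertex $w$ automatically belongs to $|R_w| = \Omega(f)$ clusters (one for each $r \in R_w$), so its $f$-budget cannot block paths through more than a constant fraction of the clusters it participates in. Carefully accounting for this sharing — together with charging faults at $u$, $v$, $r_v$, and $w$ against disjoint components of the surviving candidates — should close the argument with the $\Omega(f)$ redundancy used in step (c).
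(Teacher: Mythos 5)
Your construction is essentially the same as the paper's adaptation of Baswana--Sen, and in one respect it is slightly cleaner: your step (b) adds $(v,r)$ for \emph{every} sampled neighbor $r\in R_v$ of a heavy $v$, which automatically places the middle edge $(w,r_v)$ of the intended $3$-path into $H$ whenever $w$ is heavy (and step (a) handles light $w$). The paper's version connects each high-degree vertex only to $\faultdeg+1$ chosen centers and is somewhat informal about why the middle edge of the path lies in $H$; your step (b) resolves this cleanly at no asymptotic cost, since $\sum_v |R_v| = \sum_{r\in R}\deg(r) = \widetilde O(n^{3/2})$ with high probability.

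The ``hard part'' in your last paragraph is a misdiagnosis, though: you do not need a pigeonhole over multiple pivots, nor the cluster-sharing observation about $w$. Fix a \emph{single} surviving pivot $r_v\in S_v$ (one exists because $|R_v|\geq \faultdeg+1$ and $v$ touches at most $\faultdeg$ faults). For the pair $(u,r_v)$, take the $3\faultdeg$ bridge vertices $w_1,\dots,w_{3\faultdeg}\in N_G(u)\cap N_G(r_v)$ from step (c). The path $u\to w_i\to r_v\to v$ fails only if $(u,w_i)\in F$ or $(w_i,r_v)\in F$: the former holds for at most $\faultdeg$ indices (charged to the fault budget at $u$) and the latter for at most $\faultdeg$ indices (charged to the budget at $r_v$). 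The budget at $w_i$ adds nothing, since any relevant faulty edge at $w_i$ is one of $(u,w_i)$ or $(w_i,r_v)$ and is therefore already counted at $u$ or $r_v$. Hence at least $\faultdeg$ of the $3\faultdeg$ paths survive. This is exactly the paper's ``$2\faultdeg$ bad out of $3\faultdeg$'' argument with the roles of $u$ and $v$ swapped; no refinement involving shared clusters is needed.
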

\begin{proof}
Initially set $H=\emptyset$ and add to $H$ all edges incident to vertices with degree at most $\faultdeg \sqrt{n}$.
We will call these \emph{low-degree vertices}, and the others are \emph{high-degree vertices}.
Let $S$ be a random sample of $O(\sqrt{n}\log n)$ vertices, which we call \emph{cluster centers}.
With high probability we sample at least $\faultdeg+1$ cluster centers adjacent to each high-degree vertex.
Connect each high-degree vertex to an arbitrary set of $(\faultdeg+1)$ of its neighboring sampled centers in $S$.
Finally, for each high-degree vertex $v$ and each cluster center $c \in S$, we choose $3 \faultdeg$ arbitrary vertices that are neighbors of both $v$ and $c$, and we add the edge connecting $v$ to these vertices.
(If there are fewer than $3 \faultdeg$ vertices that are neighbors of both $v$ and $c$, add them all.)
Overall, this procedure adds $\widetilde{O}(\faultdeg n^{3/2})$ edges, as desired.

To prove correctness, it suffices to consider an edge $(u,v) \notin F$ where both $u,v$ are high-degree, where $F$ is an arbitrary $\faultdeg$-degree set of edge faults.
Since $u$ is adjacent to at least $\faultdeg+1$ cluster centers, there exists a center $c(u) \in S$ such that the edge $(c(u),u)$ survives in $G \setminus F$.
If $(u,v) \notin H$, then we have added $\geq 3\faultdeg$ edges to $H$ connecting $v$ to neighbors in the cluster of $c(u)$.
As $c(u),v$ are incident to at most $2\faultdeg$ edges in $F$, there exists at least one such vertex $x$ such that $(x,c(u)),(x,v) \notin F$.
Thus the path $(u,c(u), x, v)$ survives in $G \setminus F$, and so we have $\dist_{H \setminus F}(u,v)\leq 3$. 
\end{proof}

\paragraph{FD-Spanners with Constant Stretch.} Unfortunately, directly extending the (FT) Baswana-Sen approach to the FD-setting, as in \cite{parter2022nearly}, seems to be quite nontrivial. The reason is that FT Baswana-Sen is based on $\faultdeg$ edge-disjoint clustering which can handle a total of $\faultdeg$ faults but cannot handle a $\faultdeg$-degree subsets $F$ of $\faultdeg \cdot n$ failed edges. 

Our approach is based on replacing the low-depth edge-disjoint trees that form the clustering in the FT Baswana-Sen algorithm by length-constrained expanders \cite{HaeuplerR022} with a sufficiently large minimum degree.  Our approach of connecting spanners to length-constrained expanders is inspired by the very recent work of \cite{GreedyMatchingHHT23}. They use expanders to provide an alternative, matching-based, implementation of the well-known greedy algorithm for (non fault-tolerant) spanners.

% \textbf{MP: add a short intro on these structures.}\gbnote{should probably cite the introducing paper here}

\paragraph{Step (I): From Expander Routing to Rigid Fault-Tolerant Properties.} Our approach is based on translating the routing quality
%\gbnote{not clear what ``routing quality'' means here, but maybe this will be covered in the to-add intro above?}
of 
LC-expanders into rigid FT properties against bounded-degree faults. Our starting observation is the following.
Consider an $(h,s)$-length $\phi$-expander with minimum degree $\widetilde{\Omega}(\faultdeg \cdot n^{\epsilon} /\phi)$. We use the fact that length-constrained expanders are good \textit{routers} for $h$-hop pairs. Specifically, given an $\faultdeg$-degree set $F$, one can route $d=\widetilde{\Omega}(n^{\epsilon}/\phi)$ units of flow over each edge $(u,v) \in F$ along paths of length $O(h \cdot s)$ and with congestion\footnote{The congestion of a path collection $\mathcal{P}$ is at most $\congestion$ if each edge $e$ appears in at most $\congestion$ paths in $\mathcal{P}$.} of $\congestion=O(\log n/\phi)$.

We use these dilation and congestion bounds to deduce that $G\setminus F$ is an $(hs)^{O(1/\epsilon)}$ spanner of $G$, as follows: 
%\gbnote{didn't understand this sentence - what is demand-per-edge in this context?}
The solution to the routing instance consists of $d$ many $u$-$v$ paths of length at most $hs$ for every $(u,v)\in F$. Since these paths might intersect $F$, we cannot provide an immediate guarantee on the $u$-$v$ distance in $G \setminus F$ for any $(u,v)\in F$. However, by using an averaging argument, one can claim that for at least $(1-\congestion/d)$ fraction of the edges in $F$, their distance (between their endpoints) in $G \setminus F$ is at most $hs$. We then apply an iterative re-routing procedure that allows us to translate the original routing paths into those that provide at least one fault-free path (among the $d$ many paths) for any $(u,v)\in F$. This re-routing procedure has $\log_{\congestion/d} n=\Theta(1/\epsilon)$ iterations. In each iteration $i$, we can provide fault-free paths of length $(hs)^i$ for $(1-\congestion/d)$ fraction of the remaining pairs in $F$. This provides the intuition for Thm. \ref{thm:bounded-hop-expander-robust}.

%This provides us with the following claim:

\paragraph{Step (II): From Rigid-FT in Expanders into Competitive-FT in General Graphs.} 
For simplicity, we present the high-level ideas when using the existential bounds of LC expander decomposition. Our actual algorithms can be implemented in polynomial time, and thus use expander decomposition with somewhat weaker bounds.

To translate the rigid-FT properties of Thm. \ref{thm:bounded-hop-expander-robust} into a construction of FD spanners, we again employ the following two steps: expander decomposition and expander sparsification. Due to the nature of length-restricted expanders, the sparsification arguments are slightly more delicate. The output of the LC expander-decomposition provides us with a $(h,s)$ LC-expander\footnote{Unlike standard expanders, LC expanders are not required to be connected, hence the output of the expander decomposition is a possibly disconnected LC expander.} $G'$ with minimum degree $\widetilde{\Omega}(\faultdeg \cdot n^{O(1/s)})$ and a set of cut edges $C=G \setminus G'$ with $\widetilde{O}(\faultdeg \cdot n^{1+1/s})$ edges. 

Our goal is then to sparsify $G'$ into a subgraph $H' \subseteq G$ with a total number of $\widetilde{O}(\faultdeg \cdot n^{1+O(1/s)})$ edges, such that $H'$ has approximately the same routing properties as $G'$ in a way that allows us to apply the robustness lemma of Thm. \ref{thm:bounded-hop-expander-robust} on $H'$. The sparsification is based on 
embedding a collection of virtual expanders $\widehat{G}_1,\ldots, \widehat{G}_\ell$ into $G'$. The output $H'$ is taken as the union of the embedding paths in $G'$. Each virtual graph $\widehat{G}_i$ is defined based on a subset $V_i \subseteq V(G)$ of pairwise distances at most $h$ in $G'$. These subsets are obtained by using the well-known tool of neighborhood-cover of \cite{AwerbuchBCP98}. 

% The key sparsification lemma is stated as follows:

% \begin{theorem}[LC Expander Sparsification]\label{lem:bounded-hopexp-sparsify}
% Given an $(h,s)$-length $\phi$-expander $n$-vertex $G$ with minimum degree $\faultdeg'=\widetilde{\Omega}(\faultdeg \cdot n^{\epsilon}/\phi)$, one can compute a subgraph $H \subseteq G$ such that (i) $|H|=\widetilde{O}(\faultdeg \cdot n^{1+\epsilon})$ and (ii) any unit-demand routing instance $I \subseteq E(H)$ can routed along paths of length $O(h \cdot s)$ and congestion of $\widetilde{O}(1/\phi)$.
% \end{theorem}

We believe that our approach for translating rigid-FT properties in expander into competitive-FT in general graphs might be also useful for other graph properties (beyond connectivity and distances) and under a wide collection of fault models. The high-level recipe is given some desired graph property $\Pi$ and a faulty model $M$ and has the following flow: (i) showing that any expander $G'$ approximates $\Pi$ under any failing event $F$ (where $F$ is determined by the faulty model $M$). (ii) showing that one can sparsify the expander into $H' \subseteq G'$ such that $H'$ is an expander that satisfies (i), and (iii) apply expander-decomposition (possibly recursively) accompanied with  expander sparsification. At the point where the number of inter-expander edges is sufficiently small, add them to the desired output subgraph $H$. We also note that while we focus here on connectivity and distances, as a byproduct of our approach, we also get that our output subgraphs (e.g., FD certificates) also approximate the dilation and congestion bounds of routing instances in $G$. 

\subsection{Blocking Set-Based Constructions}\label{sec:exist-stretch}

\subsubsection{The FD Greedy Algorithm and Analysis}

The near-optimal size/stretch tradeoff for FD spanners in Theorem \ref{thm:introexspanners} is proved by analyzing the natural FD adaptation of the greedy spanner algorithm \cite{AlthoferDDJS:93}, also analogous to the EFT greedy algorithm studied in prior work \cite{BDPW18, BodwinDR22}.
We state this algorithm formally in Section \ref{sec:exspannersetup}, but to quickly overview: the spanner $H$ is initially empty, we consider the edges $(u, v)$ of the input graph one at a time in order of increasing weight, and we add each edge $(u, v)$ to the spanner iff there exists a possible fault set $F$ (of max degree $\le f$) under which $\dist_{H \setminus F}(u, v) > (2k-1)\cdot w(u, v)$.

The proof of correctness, i.e., that the algorithm returns a correct $f$-FD $(2k-1)$-spanner of the input graph $G$, is standard.
The challenging part of the proof is to show that the output spanner $H$ does not contain too many edges.

\paragraph{The Blocking Set Method.}

The analysis of the non-fault-tolerant greedy spanner algorithm \cite{AlthoferDDJS:93} uses crucially that the output spanner has \emph{high girth}; that is, one can prove that all cycles in $H$ have $> 2k$ edges.
It is not still true that the output spanner $H$ from the FD greedy spanner algorithm must have high girth.
Still: since the FD greedy algorithm is similar in spirit to the non-fault-tolerant greedy algorithm, it is intuitive to think that its output spanner might be ``close to high girth'' in some structural sense.

A blocking set formalizes the idea of a graph being \emph{structurally close to high girth}, by asserting that its short cycles admit a particularly small or simple kind of hitting set.
The following definition of blocking set is the relevant one for this work:
\begin{definition} [FD Blocking Sets]
Let $H = (V, E)$ be a graph equipped with a total ordering of its edge set $E$.
A $\faultdeg$-fault-degree (FD) $k$-blocking set $\bee$ for $H$ is a set of pairs of the form $(e, F_e)$, such that:
\begin{itemize}
\item Each edge $e$ is the first edge of exactly one pair $(e, F_e)$, each $F_e$ is a set of edges from $E(H)$ of degree $\deg(F_e) \le \faultdeg$, and each edge in $F_e$ strictly precedes $e$ in the edge-ordering of $H$.
\item For each cycle $C$ in $H$ on $|C| \le k$ edges, letting $e$ be the latest edge in $C$ in the edge-ordering of $H$, we have that $F_e \cap C$ is nonempty.
\end{itemize}
\end{definition}

It is fairly easy to prove that the output spanner $H$ of the FD-greedy algorithm, with its edges ordered by their arrival in the algorithm, has a $\faultdeg$-FD blocking set.
The focus of the proof then shifts from bounding the number of edges in the particular output spanner from the FD-greedy algorithm to bounding the number of edges in \emph{any} graph that admits a $\faultdeg$-FD blocking set.

\paragraph{Path-Counting Methods.}

In order to explain how we limit the size of a graph with a $\faultdeg$-FD blocking set, it may be helpful to first recall the proof of the \emph{Moore bounds}, which are used to bound the maximum possible number of edges in an $n$-vertex graph $H$ of girth $>2k$.
The Moore bounds are proved using a counting argument over the \emph{edge-simple $k$-paths} of $H$.
They include two steps:
\begin{itemize}
\item A \emph{dispersion lemma}, which shows that no two edge-simple $k$-paths in $H$ can share endpoints (or else they would form a short cycle), and
\item A \emph{counting lemma} which shows a lower bound on the number of edge-simple $k$-paths in $H$, where this lower bound is increasing with the number of edges $|E(H)|$.
\end{itemize}
By comparing the upper and lower bounds on the number of edge-simple $k$-paths that respectively arise from the dispersion and counting lemmas, and rearranging terms, one gets an upper bound on $|E(H)|$.
The interested reader can refer to \cite{bodwin2023alternate}, Section 2, for a recap of this proof of the Moore bounds in full detail.

Since a graph with a blocking set is conceptually interpreted as a graph that is ``close to high girth,'' it stands to reason that our upper bound on the size of a graph with a blocking set should follow the same basic strategy as the Moore bounds.
Indeed, our strategy is to prove analogous dispersion and counting lemmas, and our final bound on $|E(H)|$ follows by comparing these bounds to each other and rearranging terms.
However, in our arguments we do not consider \emph{any} edge-simple $k$-paths in these lemmas: a graph with a $\faultdeg$-FD-blocking set can unfortunately have an unbounded number of edge-simple $k$-paths that share endpoints, and so the dispersion lemma would be impossible.
Instead, we restrict our attention to a subset of edge-simple $k$-paths satisfying a very particular set of properties, which are restrictive enough to enable a version of the dispersion lemma but not so restrictive as to break the counting lemma.
We call these MUCk paths.
This is an acronym, and the technical properties encoded by this acronym and their role in the proof are discussed more in Section \ref{sec:muck}).

\subsubsection{Speeding Up the Greedy Algorithm \label{sec:lcc}}

The main downside of the FD greedy algorithm is that, in a naive implementation, it takes exponential time.
The part that takes exponential time is that in each round, we need to check whether or not there exists a bounded-degree fault set $F$ for which $\dist_{H \setminus F}(u, v) \le (2k-1) \cdot w(u, v)$.
There are $\exp(n)$ many fault sets to consider, and it is not clear how to avoid checking each of these potential fault sets by hand.
In fact, there is some evidence that no substantial improvement on this brute force approach will be possible.
For the EFT greedy algorithm (which is identical except that it searches only over fault sets of size $|F| \le f$), the search for a valid fault set encodes an NP-hard problem called \textsc{Length-Bounded Cut (LBC)} \cite{BEHKSS06}:
\begin{mdframed}[backgroundcolor=gray!20]
\noindent \textsc{Length-Bounded Cut (LBC)}:
Given a graph $G = (V, E)$, vertices $s, t$, and an integer $k$, find the least integer $f$ for which there exists an edge set $F$ of size $|F| \le f$ with $\dist_{G \setminus F}(s, t) > k$.
\end{mdframed}

A nice paper by Dinitz and Robelle \cite{DR20} provides a method to escape this NP-hardness barrier for the EFT greedy algorithm.
Dinitz and Robelle essentially show an $O(k)$-approximation algorithm for \textsc{LBC} in unweighted graphs.
Using this approximation algorithm for the edge-test step of the greedy algorithm ultimately costs a factor of $O(k)$ in the size of the blocking set for the output spanner, which translates to an $O(k)$ factor in spanner size as the price to pay for polynomial runtime.
The other important observation in \cite{DR20} is that it suffices to solve \textsc{LBC} in an unweighted graph, even when the goal is to build a spanner of a weighted input graph.

It is natural to attempt the same method to improve the runtime of the FD greedy algorithm.
The same high-level proof strategy works, but the catch is that we need to solve the following variant:
\begin{mdframed}[backgroundcolor=gray!20]
\noindent \textsc{Min Max LBC}:
Given a graph $G = (V, E)$, vertices $s, t$, and an integer $k$, find the least integer $\faultdeg$ for which there exists an edge set $F$ of degree $\deg(F) \le \faultdeg$ with $\dist_{G \setminus F}(s, t) > k$.
\end{mdframed}

As it happens, so-called ``min max'' cut problems of this type have been studied before through an independent motivating framework, in the research area of \emph{correlation clustering}.
Correlation clustering was introduced by Bansal, Blum, and Chawla \cite{bansal2004correlation}, and it concerns the following model.
We are given a (possibly weighted) graph $G$, with some edges labeled $(+)$ and others labeled $(-)$.
We cluster the vertices of $G$, and a $(+)$ edge is considered to be satisfied if its endpoints lie in the same cluster, while a $(-)$ edge is considered to be satisfied if its edges lie in different clusters.
The general goal is to minimize the total weight of unsatisfied edges, although several different objective functions have been studied \cite{wirth2010correlation}.
Correlation clustering simultaneously generalizes several fundamental graph problems, including \textsc{Min s-t Cut}.

More recently, researchers have considered \emph{local} objective functions, which encode goals such as minimizing the maximum disagreement weight on any given vertex.
This framework was first studied by Charikar, Gupta, and Schwarz \cite{charikar2017local}, who obtained a $O(\sqrt n)$ approximation algorithm for the problem of local correlation clustering.
They also pointed out that, in the same way that correlation clustering captures \textsc{Max s-t Cut}, their algorithm implies a $O(\sqrt n)$ approximation algorithm for \textsc{Min Max s-t Cut} in undirected graphs.
This algorithm has been substantially extended in followup work \cite{kalhan2019correlation, jafarov2021local}.

A central message of the work of Charikar et al.~\cite{charikar2017local} is that graph problems of these type are fundamental, and warrant further study.
We fully agree with this message, and indeed, this paper provides a concrete application for this direction of study.
Our contribution is the following approximation algorithm, which when plugged into the framework of Dinitz and Robelle \cite{DR20}, implies our near-optimal FD spanners in polynomial time from Theorem \ref{thm:introexspanners}.

\begin{theorem}
The problem \textsc{Min Max LBC} on an $n$-vertex input graph with length parameter $k$ has a $O(k \log n)$ approximation algorithm.
\end{theorem}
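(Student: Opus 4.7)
The plan is to form a natural LP relaxation for \textsc{Min Max LBC}, solve it in polynomial time via a layered-graph separation oracle, and then apply independent randomized rounding with a scaling factor of $\alpha = \Theta(k \log n)$. Introduce a variable $x_e \in [0,1]$ for each edge, interpreted as a fractional indicator of $e \in F$, together with a scalar variable $f \geq 0$. The LP is
\[
\min\; f \quad \text{s.t.}\quad \sum_{e \in P} x_e \geq 1 \text{ for every $s$-$t$ path $P$ with } |P|\leq k, \quad \sum_{e \ni v} x_e \leq f \;\; \forall v, \quad x_e \in [0,1].
\]
Let $f^*$ denote its optimum; since any integral feasible $F$ yields an LP solution with value $\deg(F)$, we have $f^* \leq \mathrm{OPT}$. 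Although there can be up to $n^{k}$ path constraints, a separation oracle is available: build the layered graph on $k{+}1$ copies $V_0,\dots,V_k$ of $V$ where each original edge $(u,v)$ becomes $(u_i,v_{i+1})$ and $(v_i,u_{i+1})$ with weight $x_{uv}$, then test whether the shortest $s_0$-to-$\{t_i:i\leq k\}$ path has weight below $1$. Because weights are nonnegative, a minimum-weight walk on $\leq k$ edges is in fact a path, so this detects every violated path constraint, and the LP is solvable in polynomial time via the ellipsoid method (or by reformulating as a compact flow LP on the layered graph).

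For the rounding, set $\alpha := 4k\ln n$ and independently place each edge $e$ in the output set $F$ with probability $p_e := \min(1,\alpha x_e)$. For a fixed $s$-$t$ path $P$ of length $\leq k$,
\[
\Pr[P \cap F = \emptyset] \leq \prod_{e\in P}(1-p_e) \leq \exp\!\Bigl(-\alpha \sum_{e\in P}x_e\Bigr)\leq e^{-\alpha},
\]
so a union bound over the at most $n^{k}$ such paths gives $\dist_{G\setminus F}(s,t)>k$ with probability $\geq 1-n^{-k}$. Simultaneously, for each fixed vertex $v$ we have $\E[\deg_F(v)] \leq \alpha f^*$, and a Chernoff bound yields $\deg_F(v) \leq 2\alpha f^* + O(\log n)$ with probability $\geq 1 - n^{-2}$, where the additive slack absorbs the low-expectation regime. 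A union bound over vertices then yields $\deg(F) = O(\alpha f^* + \log n) = O(k\log n)\cdot\max(f^*,1)$ with high probability. Since $\mathrm{OPT}\geq 1$ whenever $f^* > 0$ (any nonempty cut needs at least one edge) and the case $f^*=0$ is trivial ($F=\emptyset$ is already feasible), we conclude $\deg(F) = O(k \log n)\cdot\mathrm{OPT}$, giving the claimed approximation.

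The main technical subtleties I foresee are (i) ensuring that the separation oracle runs in genuine polynomial time despite the walk-versus-path distinction, which is resolved cleanly by the monotonicity of nonnegative weights on the layered graph so that shortest bounded-hop walks coincide with shortest bounded-hop paths, and (ii) obtaining a clean high-probability bound on $\deg(F)$ in the regime where $\alpha f^*$ is below $\log n$ and the multiplicative Chernoff bound becomes weak, which is handled by the additive $O(\log n)$ term that is absorbed into the approximation factor since $\mathrm{OPT} \geq 1$ in any nontrivial instance. Everything else reduces to standard LP-rounding manipulations; in particular, derandomization via the method of conditional probabilities is available if a deterministic algorithm is preferred, yielding the same $O(k \log n)$ factor.
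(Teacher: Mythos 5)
Your proposal matches the paper's proof essentially step for step: the same LP relaxation (path-cut, per-vertex degree, and box constraints), the same layered-graph separation oracle, and the same independent randomized rounding at scale $\Theta(k\log n)$ analyzed via Chernoff and union bounds over the $\le n^k$ short paths and $n$ vertices. The only nit is that the displayed chain $\prod_{e\in P}(1-p_e)\le\exp(-\alpha\sum_{e\in P}x_e)$ goes the wrong way when some $p_e$ is capped at $1$ (since then $p_e<\alpha x_e$); the conclusion $\le e^{-\alpha}$ still holds because any such path is cut deterministically, which is exactly the case split the paper makes explicit before applying Hoeffding, and which you should spell out.
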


% \gbnoteinline{at the current level of depth, I'm not sure if we need the following conclusions+open questions section at all.  I vote to delete it, and maybe mention the improved expander stretch rigidity question as a one-sentence open problem somewhere else in the paper.}
% \paragraph{Conclusions and Open Question.} 

%  Over the years, the area of fault-tolerant graph structures has put concentrated efforts in optimizing the size bounds of FT graph structures as a function of the (total) number of faults. In this work, we replace the dependency in the number of faults, with a more localized measure denoted as the faulty-degree: the largest number of faults incident to a given vertex. This allows us to tolerate a factor of $\Omega(n)$ larger number of faults, with almost the same size bounds. 

% \gbnoteinline{move following open problems to discussion of tightness in sec 1.3}
% Is it possible to provide improved bounds on the rigid properties of expander w.r.t distances? I.e., show that for a $(1,t)$ length-restricted expanders with minimum degree $\Omega(\poly(\faultdeg)$, it holds that $\dist_{G \setminus F}(u,v)=O(t)$ for every $(u,v)\in F$?

% Provide improved size bounds for additional fault-tolerant graph structures and distance oracles as a function of the faulty degree.

\section{Preliminaries}

%\paragraph{Notations.} Throughout, the notation $\widehat{O}(\cdot)$ hides factors of $2^{O(\sqrt{\log n})}$. 
%

%\paragraph{Graphs.} All graphs can have multi-edges and (multiple) self-loops... \gbnote{incomplete?}
Throughout, we denote by $n$ the number of vertices in the graph input graph. As usual the notions $\widetilde{O}(\cdot)$ and $\widetilde{\Omega}(\cdot)$ hide poly-logarithmic factors in $n$.
\paragraph{Vertex Weightings.} A vertex weighting $W: V \mapsto \mathbb{Z}_{\geq 0}$ assigns each vertex $v$ a positive integer weight $W(v)$. For a multi-set of undirected edges $E'$ (potentially with (multiple) self-loops a vertices) the vertex weighting $W_{E'}$ assigns each vertex $v$ the number of edges adjacent to $v$ in $E'$. In this way any (multi-)graph $G=(V,E)$ comes with the so-called degree weighting $deg_G = W_G = W_E$ which assigns each vertex $v$ the weight $W(v) = deg_G(v)$.

Our algorithms use the notion of vertex weighting only in the context of length-constrained expanders, nevertheless to make the framework general we provide all definitions based on a given vertex weighting $W$. If no $W$ is provided then $W(u)=\deg_G(u)$ for every $u$.
\paragraph{Conductance, Volume, and Expanders.} Consider a graph $G=(V,E)$. For a vertex subset $S \subseteq V$ and vertex weighting $W$, let $\Vol_{W}(S)=\sum_{v \in S} W(v)$. We sometimes write $\Vol_{G}$ as a shorthand for $\Vol_{deg_G}$. Let $\partial_G(S)=E(S,V\setminus S)$ be the set of edges in $G$ with one endpoint in $S$ and the other endpoint in $V \setminus S$. The \emph{conductance} of a cut $S \neq \emptyset$ with respect to vertex weighting $W$ is defined by
$$\Phi_G(S,W)=\frac{|\partial_G(S)|}{\min\{\Vol_{W}(S), \Vol_{W}(V \setminus S)\}}~.$$

For $S=\emptyset$, let $\Phi(S,W)=0$. The \emph{conductance} of a graph $G$ is given by 
$\Phi(G,W)=\min_{S \subset V, S \neq \emptyset} \Phi_{G,W}(S)~.$
That is, $\Phi(G,W)$ is the minimum value of $\Phi_{G,W}(S)$ over all non-trivial cuts $S \subseteq V$. 

If no vertex-weighting is specified $W$ is assumed to be $deg_G$, i.e., the volume $\Vol_{G}(S)$ counts edges in $\partial_G(S)$ as one, edges induced by $S$ as two, and all other edges as zero and $\Phi_G(S)$ is the conductance with respect to this volume.  

\begin{definition}[$\phi$-expander (for $W$)]\label{def:expander}\cite{LiSWeighted}
Let $G=(V,E)$ be a graph, $W$ be a vertex weighting for $V$, and $\phi \geq 0$. $G$ is a $\phi$-expander for $W$ if $\Phi(G,W) \geq \phi$. We also say $W$ is $\phi$-expanding in $G$. If $G$ is $\phi$-expanding for $W = deg_G$ we simply say $G$ is a $\phi$-expander (leaving the "for $deg_G$" implied).
\end{definition}

\paragraph{Demands, Routings, Congestion and Dilation.}

A demand $D: V \times V \to \mathbb{R}_{\geq 0}$ assigns each pair of vertices $u,v$ a positive amount $D(u,v)$. A demand is integral if it only assigns integer amounts and symmetric if $D(u,v)=D(v,u)$ for all $u,v \in V$. All demands in this paper are symmetric and integral. The \emph{load} of a vertex $v \in V$ in demand $D$ is defined as $load(v,D)=\max\{\sum_{w \in V}D(v,w), \sum_{w \in V}D(w,v)\}$. Let (max) load of a demand is defined as $load(D)=\max_v load(v,D)$. We call a demand $W$-respecting (or $W$-unit) if $load(v,D)\leq W(v)$ and we say a demand $D$ is a unit demand in $G$ if it is $deg_G$-respecting. We say that a demand $D$ is supported on $V' \subseteq V$, it $D(u,v)=0$ for any pair $\langle u, v\rangle \notin V' \times V'$.

Any multi-set of edges $E'$ has an associated integral symmetric demand $D_{E'}$ with $D_{E'}(u,v)$ equal to the number of edges between $u$ and $v$ in $E'$. Similarly any integral symmetric demand maps to a multi-set of edges or a set of edges $E_D = \{(u,v) ~\mid~ D(u,v)>0\}$.

\begin{definition}[Dilation, Congestion of a Path Collection]
A routing is a collection of paths $\mathcal{P}=\{P_1,\ldots, P_k\}$, the \emph{dilation} of $\mathcal{P}$ is the maximum length of the paths in $\mathcal{P}$. The \emph{congestion} of $\mathcal{P}$ is equal to $\max_{e \in G}|\{P \in \mathcal{P} ~\mid~ e \in P\}|$, i.e., the maximum number of times any edge appears in the total over all paths. 
The \emph{quality} of $\mathcal{P}$ is defined as $congestion(\mathcal{P}) + dilation(\mathcal{P})$. %do we really need this? 
The demand $D_{\mathcal{P}}$ sets $D(u,v)$ to be the number of pathes from $u$ to $v$ in $\mathcal{P}$. (Iff $D=D_{\mathcal{P}}$ we say "$\mathcal{P}$ routes $D$ or $\mathcal{P}$ is a routing for $D$.) %if we want we can remove whichever of these two we don't use
\end{definition}

We say a demand $D$ is routable with congestion $\congestion$ and dilation $\dilation$ in $G$ if there exists a routing $\mathcal{P}$ for $D$ in $G$ with congestion congestion $\congestion$ and dilation $\dilation$. Similarly we say a graph $H=(V',E')$ with $V' \subseteq V$ or a set of edges $F \subseteq V \times V$ is $d$-routable with congestion $\congestion$ and dilation $\dilation$ in $G$ if $d \cdot D_H$ or $d \cdot D_F$ are routable. %To me simply defining routable seems much cleaner. We can then still say d \cdot D_F is routable, for example. 
A routing for a graph $H$ in $G$ is also called an embedding (of $H$ into $G$).
%\textbf{MP: citation for the below lemma.}
%not sure if we need this but I thought it might help if only in terms of notation and understanding

% \begin{lemma}\label{lem:embed-expander}
% Let $G=(V,E_G)$ be a graph and $H=(V,E_H)$ be a graph. Suppose $D_H$ is routable in $G$ (or $H$ embeds into $G$) with congestion at most $\congestion$ and suppose $H$ is a $\phi$-expander with respect to $W$. Then $G$ is a $\frac{\phi}{\congestion}$ expander with respect to $W$. 
% \end{lemma}
% \begin{proof}
% For sake of contradiction suppose there exists a (non-empty) cut $S \subseteq V$ with sparsity $\Phi_{G,W}(S) < \frac{\phi}{\congestion}$. In particular, $|\partial_G(S)| < \frac{\phi}{\congestion} \cdot \min\{\Vol_{W}(S),\Vol_{W}(V \setminus S)\}$. 
% The embedding of any edge in $\partial_H(S)$ must go through $\partial_G(S)$ and any edge in $\partial_G(S)$ occurs in at most $\congestion$ many embedding paths. This implies that $|\partial_H(S)| \leq |\partial_G(S)| \cdot \congestion$. 
% Overall $|\partial_H(S)| < \frac{\phi}{\congestion} \cdot \min\{\Vol_{W}(S),\Vol_{W}(V \setminus S)\} \cdot \congestion = \phi \cdot \min\{\Vol_{W}(S),\Vol_{W}(V \setminus S)\}$, giving the desired contradiction. 
% \end{proof}

We show in appendix that if a good routing or embedding exists than such a routing can also be found efficiently in polymomial time, up to a constant factor loss in congestion. This allows us to provide polynomial time implementation for our expander-based algorithms.

\begin{lemma}[Length-Constrained Multi-Commodity Flow Routing~\cite{HaeuplerHS23}]\label{lem:LC-MCF-Routing}

There exists a randomized algorithm which given an $n$-vertex graph $G=(V,E)$ and a demand $D$ which is routable with congestion $\congestion = \omega(\log n)$ and dilation $\dilation$ in $G$ outputs a routing for $D$ with congestion at most $1.1 \congestion$ and dilation $\dilation$ in polynomial time, with high probability.
\end{lemma}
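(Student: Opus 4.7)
The plan is to combine a polynomial-time length-constrained fractional multi-commodity flow solver with randomized rounding and a Chernoff bound. The starting point is that by hypothesis there exists an \emph{integral} routing $\mathcal{P}^*$ for $D$ in $G$ with congestion $\congestion$ and dilation $\dilation$; in particular, the LP relaxation that asks for a fractional routing of $D$ over paths of length at most $\dilation$ and minimum edge-congestion is feasible with optimum $\le \congestion$.

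First, I would solve this length-constrained multi-commodity flow LP approximately in polynomial time. The LP has exponentially many path variables, one per $(u,v,P)$ with $|P|\le\dilation$, but it admits a standard Garg--K\"onemann / Fleischer style multiplicative-weights FPTAS: maintain exponential edge lengths, repeatedly find a shortest length-$\le\dilation$ path for some commodity (which is polynomial because we only need paths of bounded hop-length, e.g., via a layered BFS/Bellman--Ford on the $\dilation$-layered product graph), route a small amount of flow along it, and update weights. For any constant $\eps>0$ this yields, in time $\poly(n,1/\eps)$, a fractional routing $f$ of $D$ with dilation $\dilation$ and edge congestion at most $(1+\eps)\congestion$.

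Next, decompose $f$: for each ordered pair $(u,v)$ with $D(u,v)>0$, write the flow from $u$ to $v$ as a convex combination $f_{u,v}=\sum_{P} \lambda_P^{u,v}\cdot\mathbbm{1}_P$ of paths $P$ of length $\le\dilation$, where $\sum_P \lambda_P^{u,v}=D(u,v)$. Split each $D(u,v)$ into $D(u,v)$ unit demands and, for each unit independently, sample one path $P$ with probability $\lambda_P^{u,v}/D(u,v)$. This produces an integral routing $\mathcal{P}$ of $D$ with dilation $\dilation$. For any edge $e$, the congestion $X_e=\sum_{P\in\mathcal{P}}\mathbbm{1}[e\in P]$ is a sum of independent $\{0,1\}$ random variables with $\E[X_e]=f(e)\le(1+\eps)\congestion$. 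Picking $\eps$ a small constant (say $\eps=0.01$), a standard Chernoff bound gives
\[
\Pr[X_e > 1.1\congestion]\;\le\;\exp(-\Omega(\congestion)).
\]
Since $\congestion=\omega(\log n)$ by assumption, this probability is $n^{-\omega(1)}$; a union bound over the $\le n^2$ edges shows that with high probability every edge has congestion at most $1.1\congestion$, as required.

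The main obstacle is the first step: the dilation constraint rules out classical max-flow/min-cost-flow solvers, so one must actually invoke a length-constrained multi-commodity flow FPTAS and verify that a shortest-path oracle respecting the hop bound $\dilation$ runs in polynomial time (this is routine via the layered graph on $\dilation+1$ copies of $V$). Once the fractional solution is in hand, the path decomposition is polynomial in the support size (which the FPTAS keeps $\poly(n)$), and the randomized rounding plus Chernoff/union-bound argument is standard; the condition $\congestion=\omega(\log n)$ is exactly what is needed to absorb the $\log n$ coming from the union bound into the tiny slack between $(1+\eps)\congestion$ and $1.1\congestion$.
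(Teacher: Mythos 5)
Your proposal is correct and follows the same overall pipeline as the paper: solve the length-constrained multi-commodity flow LP approximately in polynomial time, extract a path decomposition, independently round each unit of demand to a path, and apply Chernoff plus a union bound over edges, with the hypothesis $\congestion = \omega(\log n)$ doing exactly the work of absorbing the $\log n$ from the union bound. The one place you diverge is the LP-solving step: the paper invokes the length-constrained multi-commodity flow algorithm of Haeupler--Hershkowitz--Saranurak (their Theorems A.1 and 10.1) as a black box, whereas you hand-roll a Garg--K\"onemann/Fleischer style FPTAS with a hop-bounded shortest-path oracle over the $(\dilation+1)$-layered product graph. Both routes are valid; yours is more self-contained (essentially re-deriving the cited fractional solver for the special case needed here), while the paper's is shorter because it delegates to prior work. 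One small point worth tightening in your write-up: the paper sets $\eps$ polynomially small (namely $\eps = \frac{1}{100 m n \congestion}$) precisely so that the $(1-\eps)$-fraction of routed flow can be treated as routing \emph{all} of $D$ up to an additive $1/100$ per pair, which makes the sampling distribution normalization immediate. With your constant $\eps$, the fractional solution may only route $(1-\eps)D$, so you should state explicitly that you renormalize $\{\lambda_P^{u,v}\}_P$ to a distribution before sampling; this costs at most a $1/(1-\eps)$ factor in expected congestion and does not affect the Chernoff conclusion, but it should be said.
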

\def\APPENDPOLYROUTE{
\begin{proof}[Proof of Lemma \ref{lem:LC-MCF-Routing}]
We use the multi-commodity $(1+\eps)$-approximate length-constrained flow algorithm of \cite[Theorem A.1]{HaeuplerHS23} together with the flow-path decomposition algorithm~\cite[Theorem 10.1]{HaeuplerHS23} followed by a simple independent randomized rounding. We set $\eps=\frac{1}{100mn\congestion}$ and define the capacity of all edges in $G$ to be $\congestion$. Since our flow size is always integral and of size at most $mn\congestion$ this guarantees that the fractional flow routed by the $(1+\eps)$-approximation algorithm looses at most $1/100$ units of flow overall and therefore (fractionally) routes demand of every demand pair up to an additive $1/100$ (or multiplicative factor of $1.01$). We do not use any batching structure of the multi-commodity demands and simply make each demand pair its own batch, i.e., $\kappa \leq n^2$. The running time is polynomial in $\eps$, $\kappa$, $\dilation$, $\congestion$ and $|D|$ and therefore polynomial overall. The algorithm produces a fractional multi commodity flow in the form of an average of polynomially many integral flows. We compute the flow-path decomposition for each of these integral flows using~\cite[Theorem 10.1]{HaeuplerHS23}. Overall this gives for every demand pair $u,v$ with $D(u,v)>0$ an explicit distribution $P_{u,v}$ with polynomial support over flow paths between $u$ and $v$ of length at most $\dilation$. Without loss of generality all routing paths are simple. We now independently sample exactly $D(u,v)$ paths from $P_{u,v}$ for every $u,v$ to obtain a routing for $D$ with dilation at most $\dilation$. For any edge in $G$ the congestion of these sampled paths is at most $1.01 \congestion$ in expectation. Since every (simple) path creates at most one unit of congestion per edge if sampled and since all paths are sampled independently a multiplicative Chernoff bound guarantees that the probability of any fixed edge $e$ having congestion more than $1.1 \congestion$ is at most $\exp(-\Theta(\congestion)) = n^{-\omega(1)}$. Therefore, with high probability, this randomized rounding of the fractional multi-commodity flow has congestion at most $1.1 \congestion$. 
\end{proof}
}%\APPENDPOLYROUTE

% In particular, if $H$ is a $\phi$-expander, $deg_G = O(deg_H)$, and there exists an embedding of $H$ into $G$ with congestion $\alpha$ (i.e., $D_H$ is routable in $G$ with congestion $\alpha$) then $G$ is an $\frac{\phi}{\alpha}$-expander. Indeed this holds if and only if, up to logarithmic factors, as the next lemma states. 
The next well-known lemma summarizes the routing properties of $\phi$-expanders. 
\begin{lemma}\label{lem:routing-implies-expansion}
If a graph $G$ is a $\phi$-expander, then any unit demand in $G$ can be routed with congestion $O(\frac{\log n}{\phi})$ and dilation $O(\frac{\log n}{\phi})$. 
% On the other hand, if there exists some unit demand in $G$ which is not routable with congestion $\frac{1}{\phi'}$ and length $\frac{1}{\phi'}$ for some $\phi' < 1$ then $G$ has a $O(\phi \log n)$ sparse cut and is therefore not an $\omega(\phi \log n)$-expander. 
\end{lemma}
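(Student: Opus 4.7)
The plan is to combine two classical properties of $\phi$-expanders: small sparsity, which controls congestion via the Leighton--Rao approximate multi-commodity max-flow/min-cut theorem, and small diameter, which controls dilation. The main challenge is to obtain both bounds simultaneously in a single routing.

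First I would lower bound the sparsity of $(G, D)$. For any nontrivial cut $(S, V \setminus S)$ with $\Vol_G(S) \le \Vol_G(V)/2$, the expansion hypothesis gives $|\partial_G(S)| \ge \phi \cdot \Vol_G(S)$, whereas since $D$ is $\deg_G$-respecting, the total $D$-demand crossing the cut is at most $\sum_{u \in S} \mathrm{load}(u, D) \le \Vol_G(S)$. Thus every cut has sparsity at least $\phi$, and the Leighton--Rao theorem produces a fractional routing of $D$ in $G$ with congestion $O(\log n / \phi)$.

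For the dilation bound, a standard ball-growing argument shows that the diameter of any $\phi$-expander is $O(\log n/\phi)$: each BFS layer expands the ball volume by a factor of $1 + \Omega(\phi)$ until the ball covers at least half of $\Vol_G(V)$, using the expansion condition on the current ball. To force congestion and dilation bounds simultaneously in a single routing, I would invoke length-constrained multi-commodity flow (e.g., \cite{HaeuplerHS23}) by adding the length constraint $h = O(\log n/\phi)$ directly to the LR program; this loses only constant factors since paths of this length already suffice to realize any cut in a $\phi$-expander. An alternative is to decompose the LR flow into paths and reroute any path of length exceeding $h$ along a shortest path, which pays only constant factors in congestion because of the diameter bound. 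Finally, a standard randomized rounding analogous to the one behind Lemma \ref{lem:LC-MCF-Routing} converts the fractional routing into an integral collection of paths, losing only a constant factor in congestion and no factor in dilation.

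The main subtlety is that plain Leighton--Rao bounds congestion only, and plain shortest-path routing bounds dilation only; obtaining both for the \emph{same} routing is the delicate step, handled cleanly by the length-constrained multi-commodity flow machinery, which is the route I would take in a polished write-up.
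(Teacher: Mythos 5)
The paper does not actually prove this lemma; it is stated as a well-known fact and used as a black box, so there is no ``paper's proof'' to compare against. That said, your sparsity argument (expansion plus the $\deg_G$-respecting property give every cut sparsity $\ge \phi$, hence Leighton--Rao congestion $O(\log n/\phi)$) and your ball-growing diameter argument are both standard and correct.

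The gap is in the combining step, which you yourself flag as the delicate part, but neither of your two suggestions closes it. Rerouting each long flow path onto a shortest path is not a ``constant factor'' fix: many long paths can funnel onto the same few short edges, and nothing in the diameter bound controls the resulting congestion. The first suggestion is also not justified: adding a length bound $h$ to the LR program takes you out of the standard flow-cut duality. Length-constrained multicommodity flow is governed by \emph{moving cuts} (as in \Cref{thm:flow character}), whose gap against flow is quite different from the ordinary $O(\log n)$ flow-cut gap, and the fact that every cut can be crossed by a path of length $O(\log n/\phi)$ does not by itself imply that a $\phi$-expander is a length-constrained $\phi$-expander at scale $h$. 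What you actually need here is a self-contained argument that simultaneously caps path lengths, e.g.\ the now-standard one (as in Saranurak--Wang or Ghaffari--Kuhn--Su style expander routing): pad the demand with a $\Theta(m)$-volume demand (say the all-edges demand $D_E$) so that the total flow is $\Theta(m)$, apply LR with congestion $O(\log n/\phi)$, then observe that the average path length in the resulting flow is $O(\congestion m / \mathrm{flow}) = O(\log n/\phi)$ and discard (or shortcut) the $O(1)$ fraction of flow on paths longer than a constant multiple of this average, iterating $O(\log n)$ times on the residual demand. Alternatively, route via a random degree-biased intermediate vertex and analyze the two halves as uniformly-mixing demands routed along short BFS paths. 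Either route gives a genuine simultaneous bound; as written, your proposal asserts the conclusion of that combination step rather than proving it.
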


% \paragraph{Expander Decomposition - OLD (only kept for reference to see how terminology changed (slightly) and also to check that the new version does everything we need it to ... .} 

For a graph $G=(V,E)$ and a vertex subset $U \subseteq V$, let $G[U]$ be the graph $G$ induced on $U$. The graph $G(U)$ consists of all edges of $G[U]$ and in addition, it includes (potentially multiple) self-loops on each vertex $u \in U$, such that $\deg_{G(U)}(u)=\deg_G(u)$. I.e., for each edge $(u,v)\in U \times (V \setminus U)$, the graph $G(U)$ includes a self-loop on $u$.   %We use the stronger notion of expander decomposition implied by [KVV04] and explicitly defined and used in \cite{GoranciRST21}: 
%linked expander decompositions have been used way way before the expander hierarchy paper, including by Spielman Teng, the GRST paper furthermore critically uses a larger linkedness of roughly 1/(phi log n) and not 1, like we do here (I generally I would advocate introduce linkedness as a parameter already here if we use it later in LC-expanders anyway). 
%Btw., the distributed gossip paper with Keren also defines and needs to use the existence of 1-linked expander decompositions...

\begin{theorem} [\cite{LiSWeighted}] \label{thm:expander-decomp}
For any any parameter $\phi>0$, there is a randomized algorithm
%there exists no polytime algorithms that compute expander decompositions that are as good as they exist because the sparsest cut problem is hard to approximate, if we want algorithmic results what we can get depends a bit on how efficient we want to be, with sequential large polynomial time one can loose only a sqrt(log n) factor but if one wants close-to-linear or distributed we need to loose a much larger polylog factor (maybe log^4 n)...
$\ExpDecomp$ that given an $n$-vertex graph $G=(V,E)$ partitions $V$ into \emph{clusters} $U=(U_1,\ldots, U_k)$, such that w.h.p. the following holds:
\begin{itemize}
\item $\sum_{i} |E(U_i, V\setminus U_i)|=O(\phi m \cdot \log^{a} n)$ for some constant $a\geq 2$, and
\item Every $G(U_i)$ is a $\phi$-expander. %where $G(U_i)$ has self-loops such that $\deg_{G(U_i)}(u)=\deg_G(u)$.
\end{itemize}
\vspace{-3pt} The running time of the algorithm is $\widetilde{O}(|E(G)|/\phi^2)$.
% Consequently, if $G$ has min-degree $\Delta$, then $G[U_i]$ has min-degree of $\Omega(\phi \cdot\Delta)$. 
\end{theorem}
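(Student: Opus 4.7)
My plan is to follow the well-established recursive framework for expander decomposition developed by Spielman-Teng, Nanongkai-Saranurak, and Saranurak-Wang, with the weighted refinement from Li-Saranurak. The core primitive is a subroutine that, given a subgraph $G'$, either certifies that $G'$ is a $\phi$-expander in the sense of Definition~\ref{def:expander}, or returns a roughly balanced cut $(S, V(G') \setminus S)$ of conductance at most $O(\phi \log n)$. The standard tool is the cut-matching game of Khandekar-Rao-Vazirani: $O(\log^2 n)$ bisection rounds reduce the task to a sequence of max-flow problems which either embed a $\phi$-expander into $G'$ (a flow certificate of expansion) or produce a balanced sparse cut. To hit the $\widetilde{O}(m/\phi^2)$ runtime target, I would implement these flow computations using the local flow primitive of Saranurak-Wang (based on push-relabel with excess caps), which explores volume proportional only to the smaller side of whatever cut it ultimately returns.

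Using this primitive as a black box, the decomposition algorithm is straightforward: if the primitive returns an expander certificate, output the current piece as a cluster $U_i$; otherwise split along the returned cut, apply a \emph{trimming} step to the larger side so that the cut edges become self-loops in the sense of the $G(U_i)$ notation, and recurse on the trimmed-off vertices together with the original small side. Since balanced cuts roughly halve the smaller side's volume, the recursion tree has depth $O(\log n)$; each level produces at most $O(\phi \log n) \cdot m$ cut edges, giving the bound $O(\phi m \log^a n)$ with $a = 2$ (or slightly larger to absorb trimming and log factors from the cut-matching game). The total runtime telescopes to $\widetilde{O}(m/\phi^2)$ because the local flow primitive charges each invocation to the smaller side it explores, and each edge participates in $O(\log n)$ such charges over the course of the recursion.

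The hard part will be the trimming analysis. After splitting off a sparse cut $(S, V(G') \setminus S)$, one needs to argue that a bounded amount of additional trimming yields a residual set $U \subseteq V(G') \setminus S$ such that the self-loop-augmented graph $G(U)$ is a genuine $\phi$-expander, and that the total volume lost to trimming is within a constant factor of $|\partial_{G'}(S)|$. The standard proof uses a potential argument on the symmetric difference: iteratively move any vertex $v$ whose external degree into the removed region exceeds a $\phi/2$ fraction of its total degree, and show by an amortized charging to the original cut that the process terminates quickly. Adapting this analysis to the weighted-conductance definition used in the theorem --- where the self-loops in $G(U)$ precisely encode the lost external degree so that $\deg_{G(U)}(u) = \deg_G(u)$ --- and coupling it with the local flow computations so that each trimming step also runs in time proportional to the volume it touches, is where the real technical work lies. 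Handling these two coupled invariants (expansion preservation on one side, volume-halving on the other) in a single amortized framework is the main obstacle I would expect.
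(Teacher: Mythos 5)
The paper does not prove this theorem; it is imported as a black box from the cited work \cite{LiSWeighted} (Li--Saranurak's weighted expander decomposition), and is used downstream only through its interface. So there is no in-paper proof to compare against.

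As a reconstruction of how the cited result is actually established, your sketch hits the right landmarks: the recursive partition-or-certify framework, the cut-matching game as the certify-or-cut primitive, local/unit flow to keep the work proportional to the smaller side, and the trimming step that makes the self-loop-augmented $G(U_i)$ a genuine $\phi$-expander. Your identification of the trimming amortization as the technically delicate step is also where the real work in Saranurak--Wang and Li--Saranurak lives. Two points are glossed over in a way that would matter if you tried to fill in the details. First, the case analysis on the cut-matching outcome is incomplete: the primitive can return a \emph{balanced} sparse cut where neither side is (near-)expanding, and in that case one must recurse on both sides rather than always ``trim the larger side and recurse on the smaller.'' Your recursion as written silently assumes the large side is always prunable into an expander, which is only guaranteed in the unbalanced-cut branch of cut-matching. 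Second, the $\widetilde{O}(m/\phi^2)$ runtime requires carefully charging the flow subroutine's cost, which has its own $1/\phi$ (or $1/\phi^2$) dependence, against the volume explored; ``each edge participates in $O(\log n)$ charges'' accounts for the recursion depth but not for the $\phi$-dependence inside each level. Neither of these is a fatal error — both are resolved in the cited papers — but they are exactly the places where a sketch turns into a proof.
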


\paragraph{Length-Constrained Expanders.} 

Length-constrained expanders are generalizations of expanders that give more and separate control over the length of routing paths. 
We say a demand is $h$-length if $D(u,v)>0$ implies that $\dist_G(u,v)\leq h$.  The following description of length-constrained expanders follows the summary in \cite{HHG22}: The general notion of $(h,s)$-length $\phi$-expanders where $h$ is called \emph{length}, $s$ is called \emph{length slack}, and $\phi$ is called \emph{conductance} is defined in terms of so-called ``moving cuts'' and can be found in \cite{HaeuplerR022}. The exact details of their definition will not be important for us here.

All we will need is that up to a constant length factor and a poly-logarithmic congestion factor, $(h,s)$-hop $\phi$-expanders for unit-demands are equivalent to the notion of routing any $h$-length unit-demand via $(s \cdot h)$-length paths with congestion roughly $\frac{1}{\phi}$. This is made precise by the following theorem which is the equivalent of \Cref{lem:routing-implies-expansion}:

\begin{theorem}%\label{thm:exp-routing-length}
[Flow Characterization of Length-Constrained Expanders (Lemma 3.16 of \cite{HaeuplerR022})]\label{thm:flow character} For any $h \geq 1$, $\phi < 1$, (vertex weighting $W$), and $s \geq 1$ we have that:
\begin{enumerate}
\item If $G$ is an $(h,s)$-length $\phi$-expander (for $W$), then every $h$-hop ($W$-)unit-demand can be routed in $G$ along $(s \cdot h)$ long paths with congestion $O(\log(n)/\phi)$.
\item If $G$ is not an $(h,s)$-hop $\phi$-expander (for $W$), then some $h$-hop ($W$-)unit demand cannot be routed in $G$ along $(\frac{s}{2}\cdot h)$ long paths with congestion $1/2\phi$.
\end{enumerate}
\end{theorem}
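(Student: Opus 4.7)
The plan is to prove the two directions by quite different arguments. Direction (2) is a direct unpacking of the moving-cut definition of non-expansion, while direction (1) is a length-constrained flow-cut duality result whose rounding step is responsible for the $O(\log n)$ congestion factor. The asymmetries between the two statements (the $s$ vs.~$s/2$ length slack, and the $O(\log n / \phi)$ vs.~$1/(2\phi)$ congestion) exactly reflect the integrality/rounding gap of the length-constrained LP.

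For direction (2), the plan is as follows. Assuming $G$ is not an $(h,s)$-hop $\phi$-expander for $W$, the moving-cut definition of non-expansion supplies a fractional moving cut $C$ of total value at most (some constant times) $\phi$ together with an $h$-hop $W$-unit demand $D$ such that $C$ separates a constant fraction of $D$ at length $\tfrac{s}{2}\cdot h$. I would then argue that $D$ cannot be routed on $(sh/2)$-length paths with congestion $1/(2\phi)$ by a direct counting/duality inequality: each separated demand pair must use at least one unit of the moving cut along any short $u$-$v$ path, so summing over all demand pairs forces the aggregate flow through $C$ to exceed what the congestion budget allows. This is the ``flow $\leq$ cut'' half of LP duality and requires no rounding, which is precisely why direction (2) achieves the tight bound $1/(2\phi)$.

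For direction (1), the plan is to invoke LP duality for length-constrained multi-commodity flow. I would formulate the primal LP whose variables $x_P$ are flow amounts on paths $P$ of length at most $sh$ between pairs $(u,v)$ with $D(u,v)>0$, with demand constraints $\sum_{P:u \to v} x_P \geq D(u,v)$ and edge constraints $\sum_{P \ni e} x_P \leq c$ for some congestion value $c$. Its dual assigns non-negative weights to edges certifying infeasibility, and by construction the dual feasible region is exactly the set of (fractional) moving cuts at length scale $sh$. The expansion hypothesis guarantees every such moving cut has value at least (a constant times) $\phi$ on every $h$-hop $W$-unit demand, so LP duality produces a fractional routing of $D$ on $(sh)$-length paths with congestion $O(1/\phi)$. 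Converting this fractional routing into an integral path collection without inflating congestion is done via independent randomized path rounding and a Chernoff bound, exactly as in the Leighton--Rao rounding for standard multi-commodity flow \cite{LeightonR99}; the $O(\log n)$ factor in the final congestion bound is the price of simultaneous concentration over all $m$ edges.

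The main obstacle is ensuring that the rounding step in direction (1) does not inflate path length beyond $sh$. The key observation is that the primal LP variables are \emph{already} restricted to paths of length at most $sh$, so any rounding that samples among primal paths automatically satisfies the length bound --- unlike in standard Leighton--Rao rounding, we never reroute along shortest paths and never shorten or lengthen paths. This is the crucial adaptation that makes the length-constrained flow-cut-gap theorem go through, and it is where the delicate structure of \cite{HaeuplerR022}'s moving-cut definition pays off: the definition is calibrated so that the dual LP matches the length-capped primal exactly. Assembling the two halves then gives the theorem as stated.
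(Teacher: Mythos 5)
The paper does not prove this statement at all: it is quoted directly as Lemma~3.16 of \cite{HaeuplerR022}, and no in-paper proof is given, so there is no ``paper's own proof'' to compare your sketch against. What you have written is instead a plausible reconstruction of the argument one would find in \cite{HaeuplerR022}. The high-level structure is right: direction~(2) is weak LP duality (flow $\le$ cut) applied to the certificate supplied by the moving-cut definition of non-expansion, and direction~(1) is strong duality for the length-constrained multi-commodity flow LP followed by randomized path rounding, with the $O(\log n)$ congestion overhead coming from simultaneous concentration over all edges. Your observation that the primal LP already restricts its path variables to length $\le sh$, so that rounding cannot inflate dilation, is the correct reason the length bound survives the rounding step and is indeed the key structural point that distinguishes this setting from classical Leighton--Rao rounding.

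The one place you are handwaving in a way that would need to be filled in for a complete proof is the claim that the dual feasible region is ``exactly the set of fractional moving cuts at length scale $sh$.'' In \cite{HaeuplerR022} the moving-cut notion is defined so as to make this true only up to constant-factor slack in both the length scale and the conductance, and the $s$ versus $s/2$ and $O(\log n / \phi)$ versus $1/(2\phi)$ asymmetries in the theorem statement are exactly these losses (the constant-factor slack in one direction, the rounding gap in the other), not a clean tight duality. Making the correspondence precise requires unwinding the moving-cut definition carefully, which you correctly defer but do not actually do. Since the statement in this paper is a citation, the intended ``proof'' here is simply the citation; if you wanted to prove it yourself, your plan is a reasonable starting point but the dual-to-moving-cut identification is the step that requires real work.
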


%\begin{definition}[$(h,s)$-Length $\phi$-Expanders]
%A graph $G$ is an $(h,s)$-length $\phi$-expander if every $hs$-length cut as $(h,s)$-length sparsity at least $\phi$.
%\end{definition}

%\begin{theorem}[Length-Constrained Expander Routing, \cite{HaeuplerR022}]\label{thm:exp-routing-length}
%If $G$ is an $(h,s)$-length $\phi$-expander then every $h$-length unit demand can be routed in $G$ with congestion at most $O(\log n/\phi)$ and dilation at most $s \cdot h$. In addition, if $G$ is not a $(h,s)$-length $\phi$-expander, then some $h$-length unit-demand cannot be routed in congestion at most $1/2\phi$ and dilation at most $sh/2$. 
%\end{theorem}

\begin{theorem} [\cite{HaeuplerR022,HHG22}] \label{thm:lc-expander-decomp}
For any graph $G=(V,E)$, conductance $\phi>0$, length bound $h$, length slack $s > 2$ and vertex weighting $W$ there exists a set of edges $C \subset E$ of size at most $O(h \phi |W| n^{s^{-0.1}} \log n)$ such that $G \setminus C$ is an $(h,s)$-length $\widetilde{\Omega}(\phi)$-expander with respect to $W$.
\end{theorem}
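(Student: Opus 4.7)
The plan is to follow the iterative sparse-cut peeling strategy used for length-constrained expander decompositions in \cite{HaeuplerR022}. Maintain an edge set $C$, initially empty, and while $G \setminus C$ fails to be an $(h,s)$-length $\widetilde{\Omega}(\phi)$-expander for $W$, extract a witness of the failure and add its associated edges to $C$. To obtain such a witness, I would invoke the contrapositive of part~(2) of Theorem~\ref{thm:flow character}: if $(h,s)$-length $\widetilde{\Omega}(\phi)$-expansion fails, then some $h$-hop $W$-unit demand $D$ cannot be routed in $G \setminus C$ along $\tfrac{s}{2} \cdot h$-length paths with the target congestion $\tfrac{1}{2\phi}$. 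By the length-constrained flow-moving-cut duality of \cite{HaeuplerR022}, this non-routability certifies a sparse moving cut of value $\approx \phi \cdot \Vol_W(S)$, whose integral edge support has size at most $O(h\phi \cdot \Vol_W(S) \cdot n^{s^{-0.1}})$, where $S$ is the smaller side of the cut.

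I would then add these edges to $C$ and iterate. Correctness of termination is immediate: once no witness exists, $G \setminus C$ is the desired length-constrained expander. For the size bound, the central step is a charging argument. Each batch of cut-edges is charged to the vertices in the small side $S$ that produced it, and the charge per vertex is $O(h \phi \cdot n^{s^{-0.1}})$ times its $W$-weight. A potential argument shows that a vertex cannot lie on the small side of more than $O(\log n)$ cuts before its $W$-weight contribution to the remaining volume is exhausted (since each such event at least halves the effective remaining volume on its side). Summing across all iterations yields $|C| = O(h \phi |W| n^{s^{-0.1}} \log n)$, matching the claim.

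The three factors in the bound enter for separate reasons that are worth flagging. The $n^{s^{-0.1}}$ factor is the length-constrained flow-cut gap: larger slack $s$ brings the optimal fractional moving cut closer to the integral sparsest length-constrained cut, and this is precisely where the slack parameter ``pays off.'' The factor $h$ appears because moving cuts are measured in length-weighted fractional units, and converting to an integral edge set incurs a factor-$h$ blow-up per unit of cut value. The $\log n$ factor is the depth of the charging recursion. Throughout, I would be careful to carry the vertex weighting $W$ through all definitions of volume, demand, and moving cut, since the bound is stated in terms of $|W| = \sum_v W(v)$ rather than $|E|$ or $|V|$.

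The main obstacle is the length-constrained flow-cut gap itself with its $n^{s^{-0.1}}$ dependence on the slack. The classical Leighton--Rao and ARV-type flow-cut gaps for sparsest cut do not transfer off-the-shelf to moving cuts with length constraints, and establishing the $n^{s^{-0.1}}$ bound requires a dedicated LP relaxation of length-constrained sparsest cut together with a tailored rounding scheme that exploits the slack $s$ to shorten routing paths at a controlled loss. Assuming this gap as a black box from \cite{HaeuplerR022}, the rest is a routine iterative decomposition; without it, one would only obtain a polynomial-in-$n$ dependence with no way to tune it via $s$.
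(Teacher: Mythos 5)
This is a cited theorem: the paper quotes it from \cite{HaeuplerR022,HHG22} and gives no proof of its own. So there is no internal argument to compare your sketch against; what can be done is assess whether your reconstruction is a faithful and sound account of the cited proof strategy.

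Your sketch is broadly correct at the level it is written, and you have put your finger on the right bottleneck. The cited works do proceed by iterated cut-finding: while $G \setminus C$ fails to be an $(h,s)$-length $\widetilde\Omega(\phi)$-expander for $W$, produce a witness of failure, extract a sparse \emph{moving cut} from it via a length-constrained flow--cut theorem, convert that fractional object into an edge set, add it to $C$, and repeat. The $n^{O(1/s)}$ (here written $n^{s^{-0.1}}$) factor is exactly the length-constrained flow--cut gap, and assuming it as a black box is appropriate since it is the technical core of \cite{HaeuplerR022}. The factor $h$ does indeed come from converting a fractional moving cut (measured in length-weighted units) into an integral edge deletion set, and carrying $W$ through so that the bound is in $|W|$ rather than $|E|$ is the right discipline.

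Two points are worth being more careful about. First, your charging argument appeals to the classical ``a vertex lands on the small side at most $O(\log n)$ times because volume halves'' recursion, but the decomposition here does not produce a vertex partition into clusters that one recurses into; it removes edges from the whole graph in place, and length-constrained expanders are allowed to be disconnected. In \cite{HaeuplerR022,HHG22} the $\log n$ factor comes from a different potential: the number of rounds in a cut-matching-game-style or iterative sparsest-cut argument before the remaining graph becomes an LC expander, with each round contributing a bounded amount of cut. The end result is the same $\log n$, but the mechanism is not the volume-halving recursion you describe, and a literal transcription of the cluster-based halving argument would not typecheck here. Second, you invoke the contrapositive of part (2) of Theorem~\ref{thm:flow character} to obtain a non-routable demand, but one must then pass from ``some demand cannot be routed'' to ``there is a sparse moving cut of the right sparsity'' via the LP duality plus a rounding that exhibits the $n^{O(1/s)}$ gap; part (2) of that theorem by itself only delivers a moving cut in the sense of the definition, not one with the flow--cut gap quality you need. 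You acknowledge this, but it is worth noting that this step is where essentially all the difficulty sits, so the sketch remains genuinely incomplete without importing that machinery from \cite{HaeuplerR022}.

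In short: the paper cites rather than proves this result, your sketch is a reasonable high-level account of the cited proof with the right bottleneck flagged, but the $\log n$ charging mechanism you describe is the wrong one for this particular (non-partitioning, moving-cut) style of decomposition and would need to be replaced by the round-bounded potential argument from the cited works.
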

Our use of LC-expanders w.r.t weighting $W$ is mainly for the purpose of using the above mentioned properties of expander decomposition. That is, we enjoy the fact that $|C|$ depends on $|W|$, rather than on $|E(G)|$, to provide a cleaner algorithmic description in Sec. \ref{sec:FDspanners}.

There also exists a randomized algorithm to compute the expander decomposition guaranteed by \Cref{thm:expander-decomp} with high probability and in polynomial time~\cite{HaeuplerR022,HHG22,HHT23}.

\paragraph{Neighborhood Cover.} For a graph $G=(V,E)$ and a positive integer $r$, the $r$-neighborhood of a vertex $v$ is defined by $N_{r,G}(u)=\{ v ~\mid~ \dist_G(u,v) \leq r\}$. Given a graph $G$ and parameters $\beta,r$, an $(r,\beta)$-neighborhood cover is a collection of subgraphs $G_1,\ldots, G_k$ with the following properties: (i) the diameter of each $G_i$ is $O(\beta \cdot r)$, (ii) for every $v \in V(G)$, there is a subgraph $G_i$ that contains the entire $r$-neighborhood of $v$, i.e., $N_{r,G}(u) \subseteq G_i$ and (iii) each vertex appears in at most $\beta \cdot n^{1/\beta}$ subgraphs. 

\begin{lemma} [\cite{AwerbuchBCP98}] \label{lem:neighborhood-cover}
Given a graph $G$ and parameters $r,\beta$, there is an polynomial time algorithm $\NeighborCover$ that computes the $(r,\beta)$-neighborhood cover of $G$. 
\end{lemma}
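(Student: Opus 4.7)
The plan is to implement the classical Awerbuch-Berger-Cowen-Peleg sparse neighborhood-cover construction via iterated ball-growing. Maintain a set $U$ of ``uncovered'' vertices, initialized to $U = V(G)$. While $U \neq \emptyset$, pick an arbitrary $v \in U$ and grow the nested sequence of balls $B_j := N_{jr, G}(v)$ for $j = 1, 2, \ldots$, halting at the first index $j^\star$ for which $|B_{j^\star + 1}| \le n^{1/\beta} \cdot |B_{j^\star}|$. Since the telescoping product $\prod_{j=0}^{\beta-1} |B_{j+1}|/|B_j| \le n$, such an index always exists with $j^\star \le \beta$. Add $G_i := G[B_{j^\star}]$ as a new subgraph of the cover, and remove from $U$ every vertex $u$ whose $r$-neighborhood $N_{r, G}(u)$ is contained in $B_{j^\star}$. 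Since at least $v$ itself is always removed (because $N_{r,G}(v) = B_1 \subseteq B_{j^\star}$), the outer loop terminates after at most $n$ iterations, and each iteration amounts to running BFS to depth $(\beta + 1) r$ from $v$, giving an overall polynomial-time algorithm.

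Properties (i) and (ii) follow almost directly from the construction. For (i), every vertex in $G_i$ is at distance at most $j^\star r \le \beta r$ from $v$ in $G$, so the (weak) diameter of $G_i$ is $O(\beta r)$. For (ii), fix any $w \in V(G)$. The algorithm does not terminate until $w$ has been removed from $U$, and by the removal rule this happens only when some cluster $G_i$ produced so far satisfies $N_{r,G}(w) \subseteq V(G_i)$.

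The main obstacle is property (iii): bounding the number of clusters that contain a fixed vertex $u$ by $\beta \cdot n^{1/\beta}$. I would prove it by a charging argument that takes the ball-growing viewpoint at $u$. If $u$ belongs to a cluster centered at some $v$, then $\dist_G(u, v) \le \beta r$, so $v \in N_{\beta r, G}(u)$. Partition the candidate centers by the smallest $j \in \{1, \ldots, \beta\}$ for which $v \in N_{jr, G}(u)$, giving $\beta$ ``annuli'' around $u$. Using the stopping rule of the ball-growing procedure together with the fact that whenever a new cluster is created its center was still uncovered (so its $r$-neighborhood was \emph{not} contained in any previous cluster's ball), one shows via a telescoping argument on the sizes of the $N_{jr, G}(u)$'s that each annulus contributes at most $n^{1/\beta}$ clusters through $u$, for a total of at most $\beta \cdot n^{1/\beta}$. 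This overlap bound is the only non-trivial step; the rest of the proof is bookkeeping over the iterative construction.
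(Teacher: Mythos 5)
Your construction does not achieve property (iii), the per-vertex overlap bound $\beta \cdot n^{1/\beta}$, and no charging argument can rescue it, because the algorithm as stated produces covers with $\Theta(n)$ overlap on some graphs. Concretely, take $G = K_{\sqrt{n},\, n-\sqrt{n}}$ with small side $A$ and large side $B$, and run your procedure with $r=1$, $\beta=2$. Whenever the algorithm picks a center $v \in B$, it forms $B_1 = \{v\}\cup A$ of size $\sqrt{n}+1$; since $|B_2|=n \le \sqrt{n}\,|B_1|$, it halts at $j^\star = 1$ and outputs the cluster $\{v\}\cup A$. The only vertex whose $1$-neighborhood is contained in $\{v\}\cup A$ is $v$ itself, so only $v$ is removed from $U$. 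The loop therefore produces a separate cluster $\{v\}\cup A$ for \emph{every} $v\in B$, and each vertex of $A$ ends up in $\Theta(n)$ clusters, far exceeding $\beta n^{1/\beta} = 2\sqrt{n}$; choosing $A$-centers first does not help, as the same phenomenon then occurs with the roles of $A$ and $B$ swapped. What your greedy loop does guarantee, essentially by the charging you sketch, is only the weaker bound that the \emph{total} cluster size $\sum_i |V(G_i)|$ is $O(n^{1+1/\beta})$, i.e.\ that the \emph{average} overlap is $O(n^{1/\beta})$; a max-overlap bound is a genuinely stronger statement and does not follow.

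The actual Awerbuch--Peleg/ABCP sparse-cover construction is more involved precisely to fix this. Roughly, it works in multiple phases: in each phase, ball-growing is performed inside a shrinking \emph{residual} set of not-yet-served vertices, the ``kernels'' selected within a single phase are kept pairwise disjoint (each kernel is deleted from the residual set before the next ball is grown), and the clusters output are the $r$-expansions of these kernels. Disjointness of the kernels within a phase, combined with the ball-growing stopping rule, is the mechanism that caps how many clusters of a given phase can contain a fixed vertex, and summing over phases gives the $\beta n^{1/\beta}$ bound. Your single-pass loop keeps a vertex in $U$ until its entire $r$-ball is covered, always grows balls in the full graph $G$ rather than in a residual set, and imposes no disjointness whatsoever on the centers or kernels, so the annulus charging you gesture at in the last paragraph has nothing to latch onto; the bipartite example shows any such attempt must fail for this algorithm.
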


\smallskip
\textbf{Roadmap.} In Section \ref{sec:FT-certificate}, we provide the construction of FD certificates. In Section \ref{sec:FT-spanners}, we provide constructions of FD $O(1)$-spanners that are based on LC-expanders. Then, in Section \ref{sec:exist-FD-spanner}, we provide polynomial time algorithms for computing FD $t$-spanners with nearly optimal density (conditioned on the girth congestion). Finally, lower bounds arguments are provided in Section \ref{sec:LB}.

\section{Sparse Connectivity Certificates against Bounded-Degree Faults}\label{sec:FT-certificate}

In this section we provide an algorithm for computing sparse FD-certificates and establish Theorem \ref{thm:introccmain}. In fact, our construction will provide an FD $\widetilde{O}(1)$-spanners. The structure of this section is as follows. First, we show that expanders are rigid to bounded-degree faults. Then we present an algorithm for computing rigid FD certificates for expander graphs, and finally we present an algorithm for computing competitive FD certificates for general graphs. 

% \section{Fault-Tolerant Properties of Expanders}\label{sec:FT-expanders}
\paragraph{Rigidity of Expanders to Bounded Degree Faults.}
We show that $\phi$-expanders with minimum degree $\widetilde{\Omega}(\faultdeg/\phi)$ remain good expanders under the failing on any $\faultdeg$-degree edge set.

%Let $c(\phi)=\widetilde{O}(1/\phi)$ be the congestion bound obtained for routing unit-demands on $\phi$-expanders. 
% %$f(\faultdeg,\phi,n)=2\faultdeg \cdot 2^{O(\sqrt{\log n})}/\phi^2$.
\begin{proof}[Proof of  \Cref{thm:expander-d-deg}]
Fix a cut $(S, V \setminus S)$ in $G$ and let $\Vol_G(S)\leq \Vol_G(V \setminus S)$. Then, by the definition of expanders we have $|\partial_G(S)| \geq \phi \cdot|S|\cdot \faultdeg'$. Since each vertex in $S$ is incident to at most $\faultdeg$ edges in $F$, we have that 
$|\partial_{G\setminus F}(S)| \geq |\partial_G(S)|-|S|\faultdeg \geq (\phi \cdot|S|\cdot \faultdeg')/2~.$
Moreover, $\Vol_G(V') \geq \Vol_{G \setminus F}(V')$ for any subset $V' \subseteq V$. We therefore have that:

\begin{eqnarray*}
\Phi_{G\setminus F}(S) &=&\frac{|\partial_{G\setminus F}(S)|}{\min\{\Vol_{G\setminus F}(S), \Vol_{G\setminus F}(V \setminus S)\}} \geq 
\frac{|\partial_G(S)|-|S|\faultdeg}{\min\{\Vol_{G}(S), \Vol_{G}(V \setminus S)\}} 
\\&=&
\frac{|\partial_G(S)|}{\min\{\Vol_{G}(S), \Vol_{G}(V \setminus S)\}}-\frac{|S|\faultdeg}{\min\{\Vol_{G}(S), \Vol_{G}(V \setminus S)\}} 
\\& \geq &
\Phi_{G}(S) -\frac{|S|\faultdeg' \cdot \phi}{2\min\{\Vol_{G}(S), \Vol_{G}(V \setminus S)\}} \geq
\Phi_{G}(S) -\frac{|S|\faultdeg' \cdot \phi}{2|S|\faultdeg'}
\\&=& \Phi_{G}(S)-\phi/2 \geq \phi/2~.
\end{eqnarray*}
\end{proof}

In Appendix \ref{sec:missing}, we prove a more delicate robustness argument which holds for any graph $G$ (even not expanders) provided that $G$ admits \emph{non-trivial} routing solutions for $\faultdeg$-degree subsets of edges $F$, i.e., in which the output flow uses edges not in $F$. 

\begin{lemma}\label{lem:expander-d-deg-routing}[From Low-Congestion Routing to Fault-Tolerance]
Consider an $n$-vertex graph $G$ with minimum degree $\faultdeg'$ and let $F$ be a $\faultdeg$-degree subset of edges such that $F$ is $d$-routable in $G$ for $d=\faultdeg'/\faultdeg$ with congestion of $\InitialCong< d$. Then, $G \setminus F$ is connected.
\end{lemma}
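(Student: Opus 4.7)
The plan is to prove this by contradiction via a straightforward flow-vs-capacity argument at a hypothetical disconnecting cut. Suppose $G \setminus F$ were not connected. Then there would exist a nontrivial cut $(S, V \setminus S)$ (within some component of $G$) such that $\partial_{G \setminus F}(S) = \emptyset$, which means every $G$-edge crossing the cut lies in $F$, i.e., $\partial_G(S) \subseteq F$.

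Next I would exploit the routing hypothesis. By assumption, the demand $d \cdot D_F$ is routable in $G$ with congestion $\congestion$, so there is a path collection $\mathcal{P}$ that sends $d$ units of flow between the endpoints of every edge of $F$. Restricting attention to those $(u,v)\in F$ with $u\in S$ and $v\in V\setminus S$, each such pair contributes $d$ units that must traverse the cut, yielding a total crossing flow of at least
\[
d \cdot |\partial_G(S) \cap F| \;=\; d \cdot |\partial_G(S)|,
\]
where the equality uses $\partial_G(S)\subseteq F$. On the other hand, the only $G$-edges available to carry this crossing flow are those in $\partial_G(S)$, and the congestion bound caps each of them at $\congestion$ units. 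This gives a total crossing capacity of at most $\congestion \cdot |\partial_G(S)|$. Combining the two bounds yields $d \leq \congestion$, contradicting the hypothesis $\congestion < d$.

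There is essentially no obstacle here: the proof is just the observation that if a cut has all crossing edges faulty, the routing of those faulty edges (each demanding $d$ units) must be crammed through those same edges, which only have total capacity $\congestion \cdot |\partial_G(S)|$. The faulty-degree parameter $\faultdeg$ and the minimum-degree parameter $\faultdeg'$ do not enter the combinatorial core of the argument; they play a role only implicitly, through the hypothesis that such a routing (with $d = \faultdeg'/\faultdeg$ and $\congestion < d$) exists in the first place, which is precisely the content of Theorem~\ref{thm:expander-d-deg} / Lemma~\ref{lem:routing-implies-expansion} in the expander setting.
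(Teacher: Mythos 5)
Your proof is correct, and it takes a genuinely different --- and considerably simpler --- route than the paper's. The paper's argument (deferred to Appendix~\ref{sec:missing}) is constructive: it builds a directed \emph{dependency multigraph} on the edges of $F$, with an arc recording when a routing path for one faulty edge passes through another, and then iteratively reroutes flow until that graph is first acyclic (Observation~\ref{obs:eliminate-dcycles}) and then empty (Lemma~\ref{lem:eliminate-DAG}), thereby exhibiting an explicit $F$-free replacement path for every $(u,v)\in F$. You instead give a direct flow-versus-cut-capacity count at a hypothetical disconnecting cut $S$: since $\partial_G(S)\subseteq F$, all $d\cdot|\partial_G(S)|$ routing paths associated with the faulty cut edges must cross the cut, yet the crossing edges supply total capacity at most $\congestion\cdot|\partial_G(S)|$, forcing $d\le\congestion$ and contradicting $\congestion<d$. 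For the connectivity claim as stated, your argument is tighter and more transparent; what the paper's constructive rerouting buys is an explicit family of $F$-avoiding paths, the same flavor of argument that is reused (in a modified form) in Theorem~\ref{lem:router-stretch-robustness} to also control the \emph{length} of the replacement paths, which a bare cut-counting argument does not give. You are also right that $\faultdeg$ and $\faultdeg'$ do not enter the combinatorial core --- they only fix the values of $d$ and $\congestion$ upstream --- and the same is true of the paper's proof. One small caveat that applies equally to both: the conclusion implicitly presupposes that $G$ itself is connected (or should be read as ``$F$ does not disconnect any component of $G$''); your parenthetical ``within some component of $G$'' already acknowledges this.
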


\def\APPENDDAG{
\paragraph{Proof of Lemma \ref{lem:expander-d-deg-routing}.}
%
%Fix a $\faultdeg$-deg subset $F \subset E$ and consider it as the input to the multicommodity flow problem with a uniform demand $d=\faultdeg'/\faultdeg$ on each edge $e \in F$. Note that all demand pairs are neighbors in $G$, in addition, this is a unit-demand instance as each vertex is the source and target of at most $\faultdeg \cdot d=\faultdeg'$ units of flow. By Theorem \ref{lem:exp-routing}, this instance admits a solution of congestion $c(\phi)$. 
The solution for the routing instance $(F, d)$ provides each $(u,v)\in F$ with a multi-set of $d$ many $u$-$v$ paths $\mathcal{P}(u,v)$. Since $\InitialCong<d$, it holds that $\mathcal{P}(u,v)$ contains at least one $u$-$v$ path $P$ that does not contain $e$, for every $e=(u,v)\in F$. 
We use the collection of multi-paths $\mathcal{P}=\{\mathcal{P}(u,v)\}_{(u,v)\in F}$ to compute a directed multi-graph $D$ which captures the dependencies between the edges in $F$. Specifically, $V(D)=F$ and the multiplicity of a directed edge $(e=(u,v),e')$ equals to the number of paths in $\mathcal{P}(u,v)$ that go through $e'$. In the final multi-graph $D$, we include only edges with positive multiplicity. We refer to the digraph $D$ as the dependency graph of the output multicommodity solution. Note that the multiplicity of an edge is bounded by $\min\{\InitialCong,d\}=\InitialCong$. We start by observing the following:

\begin{observation}\label{obs:eliminate-dcycles}
The output solution $\mathcal{P}=\{\mathcal{P}(u,v)\}_{(u,v)\in F}$ with congestion $\InitialCong$ can be transformed into an alternative solution 
$\mathcal{P}_0=\{\mathcal{P}_0(u,v)\}_{(u,v)\in F}$ with congestion $c(\phi)$ whose dependency (multi) graph $D_0$ is a DAG.
\end{observation}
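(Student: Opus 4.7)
My plan is to iteratively eliminate directed cycles from $D$ via a local \emph{swap} operation that rewires pieces of the current routing while preserving both the demand pattern and the total multiset of edges used. Once no directed cycle remains, the resulting dependency graph $D_0$ is a DAG. Because each swap only rearranges edges already present in the routing and then shortcuts walks to simple paths, the per-edge congestion never increases, so I expect to be able to take $c(\phi) = \InitialCong$.

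For the core swap step I would fix a directed cycle $C = (e_1 \to e_2 \to \cdots \to e_k \to e_1)$ in the current $D$ and, for each $i$, pick a witness path $P_i \in \mathcal{P}(e_i)$ that traverses $e_{i+1}$ (indices mod $k$). Writing $e_{i+1} = (x_{i+1}, y_{i+1})$, I decompose $P_i = \alpha_i \cdot e_{i+1} \cdot \beta_i$ into the $u_i$-to-$x_{i+1}$ prefix and the $y_{i+1}$-to-$v_i$ suffix. The multiset $M = \bigcup_{i=1}^k E(P_i)$ has even degree at every vertex of $G$ except at the $2k$ endpoints $u_1, v_1, \ldots, u_k, v_k$, so a standard Eulerian-type decomposition lets me repartition $M$ into $k$ walks with the original endpoint pairings $(u_i, v_i)$, which I then shortcut into simple paths $P_1', \ldots, P_k'$.

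The key exchange claim that I plan to establish is that at least one such repartition strictly breaks the cycle, i.e., at least one $P_i'$ no longer traverses $e_{i+1}$. A concrete strategy is a greedy Euler-tour style traversal: start from $u_1$, follow arbitrary unused edges in $M$, detach a simple $u_1$-$v_1$ path as soon as $v_1$ is first reached, then restart from $u_2$, and so on. A direct case analysis (already transparent for $k=2$, where $P_1'$ can be chosen to avoid both $e_1$ and $e_2$ while $P_2'$ is just $e_2$ together with a shortcut) should show that the witness edge $e_{i+1}$ can always be forced to lie strictly inside some $P_j'$ other than $P_i'$, killing the arc $e_i \to e_{i+1}$ that appeared through this particular witness.

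For termination I would introduce the monotone potential
$$\Phi \;=\; \sum_{(u,v)\in F}\;\sum_{P \in \mathcal{P}(u,v)} |E(P)\cap F|,$$
which counts the total number of dependency incidences. Each swap strictly decreases $\Phi$ since it removes one witness contribution, and the subsequent shortcutting of walks to simple paths can only delete edges from paths (never add them); since $\Phi$ is a non-negative integer, the process halts after finitely many swaps and yields a routing whose dependency graph is a DAG. The main obstacle, and likely the source of most of the case analysis, will be to verify the exchange claim carefully: one must show that the greedy repartition genuinely eliminates a dependency arc, and that the ripple effects on \emph{other} witness paths and other cycles in $D$ cannot resurrect $\Phi$. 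I expect that handling degeneracies, where endpoints $u_i, v_i$ coincide across different $i$'s or where the chosen cycle in $D$ shares vertices or edges with additional cycles, will be the most delicate part of the argument.
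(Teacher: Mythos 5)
Your approach is considerably more elaborate than the paper's, and I believe it has a genuine gap in the termination argument. The paper does something much simpler and more surgical: it does not attempt any Eulerian repartitioning of the edge multiset. Instead, for each $i$ along the chosen directed cycle, it simply \emph{discards} the witness path $P_i \in \mathcal{P}(u_i, v_i)$ and replaces it with the trivial single-edge path $\{e_i\}$, using the demand edge $e_i \in F$ itself as its own route. This is legal because nothing forbids the routing paths from using edges of $F$. The modification is congestion-safe because each $e_i$ gains one unit of load from the newly added trivial path but also loses at least one unit from the removal of $P_{i-1}$ (which traversed $e_i$); all other edges only lose load. The trivial path $\{e_i\}$ traverses no edge of $F$ other than $e_i$ itself, so it introduces no new dependency arcs (self-loops being excluded), while removing $P_i$ deletes one copy of the arc $(e_i, e_{i+1})$. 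Thus the total arc multiplicity of $D$ strictly decreases, giving immediate termination.

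The gap in your plan is your potential function. Before shortcutting, an Eulerian repartition of $M$ preserves the edge multiset exactly, so $\Phi = \sum_{(u,v)\in F}\sum_{P \in \mathcal{P}(u,v)} |E(P)\cap F|$ is unchanged by the repartitioning itself; your claim that the swap ``strictly decreases $\Phi$ since it removes one witness contribution'' conflates removing a particular dependency arc $(e_i, e_{i+1})$ with reducing the aggregate count of $F$-edge incidences. The $F$-edge $e_{i+1}$ does not vanish under repartitioning; it merely migrates to some other walk $P_j'$, possibly creating a fresh arc $(e_j, e_{i+1})$ and a fresh cycle, while leaving $\Phi$ untouched. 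Shortcutting is the only thing that can lower $\Phi$, and there is no guarantee that shortcutting removes any $F$-edges at all. So as written, your process could in principle loop forever. Your exchange claim (that some repartition strictly breaks the chosen cycle) also remains unproven and, even if true, would not by itself prevent new cycles from appearing; and the degenerate cases you flag (shared endpoints among the $u_i, v_i$) are precisely where Eulerian parity arguments tend to misfire. All of this machinery and its attendant case analysis can be bypassed once you notice the one idea you are missing: the routing is permitted to use the faulty edge $e_i$ as the path for the demand pair $(u_i, v_i)$, which lets you delete $P_i$ outright rather than rebalance it.
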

\begin{proof}
The procedure eliminates directed cycles in an iterative manner. Consider some directed cycle $(e_1,e_2)\circ (e_2,e_3)\circ \ldots \circ (e_{\ell-1},e_{\ell}) \circ (e_{\ell},e_1)$ in $D$. Let $e_i=(u_i,v_i)$. Each edge $(e_i,e_{i+1})$ is identified with a unique path $P_i \in \mathcal{P}(u_i,v_i)$ such that $e_{i+1}\in P_i$. We then modify the routing solution as follows: For every $i \in \{1,\ldots, \ell\}$, modify $\mathcal{P}(u_i,v_i)$ by omitting $P_i$ and adding $\{e_i\}$. It is easy to see that the congestion of each edge is preserved and moreover, all edges of the directed cycle are eliminated in the dependency graph of the resulting modified solution. This step can then be repeated until all directed cycles are removed. 
\end{proof}

We next claim the following lemma which shows that the routing solution $\mathcal{P}_0=\{\mathcal{P}_0(u,v)\}_{(u,v)\in F}$ can be further modified into $\mathcal{P}^*=\{\mathcal{P}^*(u,v)\}_{(u,v)\in F}$ with an \emph{empty} dependency graph, while maintaining the same congestion bound of $c(\phi)$. 
\begin{lemma}\label{lem:eliminate-DAG}
There is a routing solution $\mathcal{P}^*=\{\mathcal{P}^*(u,v)\}_{(u,v)\in F}$ with congestion at most $\InitialCong$ and such that its corresponding dependency graph is empty.
\end{lemma}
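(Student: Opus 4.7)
I would prove the lemma by processing the vertices of the DAG $D_0$ in reverse topological order, building up $\mathcal{P}^*(e)$ inductively while maintaining the invariant that each processed collection routes the required $d$ units between its endpoints entirely through $G\setminus F$, and that the partial routing assembled so far has congestion at most $\InitialCong$.

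For the base case, each sink $e=(u,v)$ of $D_0$ has no outgoing dependency edges, which by definition of the dependency graph means that no path in $\mathcal{P}_0(e)$ traverses any other $F$-edge; so I would set $\mathcal{P}^*(e):=\mathcal{P}_0(e)$ directly (shortcutting any self-use of $e$, which is always possible since such occurrences can only appear as the length-$1$ path $u\to v$, and $\mathcal{P}_0(e)$ contains at least one path avoiding $e$ by the $\InitialCong<d$ assumption). For the inductive step, consider a non-sink $e=(u,v)$ with outgoing neighbors $e'_1,\ldots,e'_k \in F$ in $D_0$; by the topological order, $\mathcal{P}^*(e'_i)$ has already been constructed and consists of $d$ paths $u'_i\to v'_i$ entirely contained in $G\setminus F$. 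For each $P\in \mathcal{P}_0(e)$ that uses some $F$-edge $e'_i=(u'_i,v'_i)$, I would surgically replace the occurrence of $e'_i$ in $P$ by a chosen path from $\mathcal{P}^*(e'_i)$, distributing substitutions evenly (round-robin) across the $d$ available options. Iterating eliminates every $F$-edge occurrence from every path for demand $e$, producing $\mathcal{P}^*(e)$ lying entirely in $G\setminus F$.

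The main obstacle is congestion control: a naive edge-by-edge substitution can increase the congestion on non-$F$ edges, whereas the lemma requires congestion to remain bounded by $\InitialCong$. The critical leverage is that on any $F$-edge $e'_i$ the total original flow is at most $\InitialCong<d$, so spreading the rerouted flow uniformly across the $d$ substitute paths in $\mathcal{P}^*(e'_i)$ contributes on average less than one unit per substitute. To make this rigorous, I would first argue at the fractional level: the operation "delete one unit of flow on $e'_i$ and add $1/d$ units on each path of $\mathcal{P}^*(e'_i)$" is a valid flow modification whose net effect on any edge $g\in G\setminus F$ equals (old usage of $g$ by paths going through $e'_i$) minus (share redistributed through $g$); a direct accounting using the DAG acyclicity (ensuring we never reintroduce $F$-edges along substitutes) shows the maximum per-edge congestion never exceeds the original bound $\InitialCong$.

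Once the fractional routing with empty dependency graph and congestion $\le \InitialCong$ is in place, I would apply a standard integral flow-decomposition argument to recover an integral collection $\mathcal{P}^*$ with exactly $d$ paths per demand, the same congestion bound, and no path using any $F$-edge. Combining this with Observation \ref{obs:eliminate-dcycles} and the reverse-topological induction completes the construction, yielding the desired $\mathcal{P}^*$ and hence (via Lemma \ref{lem:expander-d-deg-routing}) connectivity of $G\setminus F$.
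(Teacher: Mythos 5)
Your overall plan — reverse–topological processing of the dependency DAG, substituting every $F$-edge occurrence in each $\mathcal{P}(e)$ by already-constructed fault-free paths, and appealing to acyclicity to guarantee termination — is structurally different from the paper's, and unfortunately the crucial congestion step does not go through. The substitution you describe is \emph{purely additive} on edges of $G\setminus F$: when a path $P\in\mathcal{P}_0(e)$ uses an $F$-edge $e'$ and you splice in a path $Q\in\mathcal{P}^*(e')$, every edge of $Q$ gains one unit of load, and nothing is ever removed from any edge of $G\setminus F$. Your ``direct accounting'' claims a cancellation — ``old usage of $g$ by paths going through $e'_i$ minus share redistributed through $g$'' — but the minuend is not something your construction decreases: the paths through $e'$ still traverse their prefixes and suffixes, and no existing path is deleted from any collection. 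A two-edge dependency chain with $d=2$, $\InitialCong=1$ already exhibits congestion $2$ on the edges of the substitute path in your scheme, strictly exceeding $\InitialCong$.

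The idea you are missing is the \emph{swap}. The paper's iteration does not merely replace the $e'$-segment of $P$ by a fault-free path $P'\in\mathcal{P}(u',v')$; it \emph{simultaneously} removes $P'$ from $\mathcal{P}(u',v')$ and replaces it there by the trivial one-edge path $\{e'\}$. This exchange makes congestion telescope exactly: edges of $P'$ lose one unit (no longer served by $\mathcal{P}(u',v')$) and gain one unit (now served by the rerouted $P$); the edge $e'$ loses one unit (dropped from the rerouted $P$) and gains one unit (the new trivial path). No edge's load changes, and one copy of the dependency arc $(e,e')$ vanishes. Without this compensating removal your round-robin spreading can only bound the \emph{average} overhead below one unit per substitute path, which does not prevent a specific edge from absorbing strictly more than $\InitialCong$ load. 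The fractional-flow-plus-rounding detour does not repair this: the fractional reroute has the same additive defect, and the flow decomposition at the end does not recover a collection with $d$ paths per demand and a vanishing dependency graph in the combinatorial sense the lemma requires. If you want to salvage a batched, reverse-topological variant, you would need to build the swap into it — e.g.\ when consuming a substitute $Q\in\mathcal{P}^*(e')$, remove $Q$ from $\mathcal{P}^*(e')$ and repopulate that collection with the trivial path $\{e'\}$ — which essentially reduces back to the paper's one-swap-per-step argument.
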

\begin{proof}
We employ an iterative procedure where in each iteration $i\geq 1$, given is a routing solution $\mathcal{P}_{i-1}=\{\mathcal{P}_{i-1}(u,v)\}_{(u,v)\in F}$ that corresponds to a DAG $D_{i-1}$. By Obs. \ref{obs:eliminate-dcycles}, this holds for $i=1$. The output of the $i^{th}$ iteration is a routing solution $\mathcal{P}_{i}=\{\mathcal{P}_{i}(u,v)\}_{(u,v)\in F}$ with congestion of $\InitialCong$ and a dependency graph $D_i \subset D_{i-1}$. In particular, in each iteration $i$ we eliminate one copy of some edge in $D_{i-1}$.  Hence, within a finite number of steps $\ell$, we get a routing solution $\mathcal{P}^*=\mathcal{P}_{\ell}$ with congestion $c(\phi)$ and an empty dependency graph $D_\ell$. 

We now describe the $i^{th}$ iteration. Let $e'=(u',v')$ be some sink vertex in $D_{i-1}$ and let $e=(u,v)$ be such that $(e,e') \in D_{i-1}$. Our goal is to output $\{\mathcal{P}_{i}(u,v)\}_{(u,v)\in F}$ with congestion $c(\phi)$ and $D_{i}=D_{i-1}\setminus \{(e,e')\}$. I.e., we will reduce the multiplicity of the edge $(e,e')$ by exactly one in $D_i$. 
Let $P \in \mathcal{P}_{i-1}(u,v)$ be such that $e' \in P$. Since $(e,e')\in D_{i-1}$ such a path $P$ must exist. In addition, let $P' \in \mathcal{P}_{i-1}(u',v')$ be such that $P' \cap F=\emptyset$. Since $c(\phi)<d$ and since $e'$ is a sink, such a path $P'$ exists, as well.  We use $P,P'$ to define the solution $\{\mathcal{P}_{i}(u,v)\}_{(u,v)\in F}$ as follows:
\begin{itemize}
\item Initially, let $\mathcal{P}_{i}(x,y)=\mathcal{P}_{i-1}(x,y)$ for every $(x,y)\in F$.

\item Replace $P$ from $\mathcal{P}_{i}(u,v)$ with the path $P''=P[u,u']\circ P' \circ P[v',v]$.

\item Replace $P'$ from $\mathcal{P}_{i}(u',v')$ with $\{(u',v')\}$.
\end{itemize}
Observe that the congestion bound is preserved, hence $\mathcal{P}_{i}=\{\mathcal{P}_{i}(u,v)\}_{(u,v)\in F}$ has congestion of $\InitialCong$. Note that $e'$ is still a sink in $D_{i-1}$. More specifically, we have that $D_i=D_{i-1} \setminus \{(e,e')\}$. 
%
%
%Moreover, we have that $b_{i}(u,v)=b_{i-1}(u,v)-1$ and $b_{i}(x,y)=b_{i-1}(x,y)$ for every other $(x,y)\in F \setminus \{(u,v)\}$. Therefore, we conclude that $p_i \leq p_{i-1}-1$, as required. 
Since each iteration omits reduces the multiplicity of one edge in the dependency graph, within a finite number of steps, $\ell$, we obtain a routing solution $\{\mathcal{P}_{\ell}(u,v)\}_{(u,v)\in F}$ with congestion of $\InitialCong$ and an empty dependency graph, as required. The lemma follows.
\end{proof}
\noindent We are now ready to complete the proof of Lemma \ref{lem:expander-d-deg-routing}.
%\begin{proof}[Proof of Lemma \ref{lem:expander-d-deg}]
By Lemma \ref{lem:eliminate-DAG}, there is an output solution $\mathcal{P}^*$ with congestion of $\InitialCong$ and with an empty dependency graph. Since $\InitialCong<d$, we get that every $e=(u,v)$, it holds that $\mathcal{P}^*(u,v)\in \mathcal{P}^*$ has a $u$-$v$ path $P$ such that $e=(u,v)\notin P$. Moreover, since the dependency graph is empty, it also holds that $P \cap F=\emptyset$. Therefore, every for every $(u,v) \in F$, we get an alternative $u$-$v$ path in $G \setminus F$, and therefore $G \setminus F$ is connected. 
}%\APPENDDAG
%\end{proof}

%\begin{lemma}\label{lem:edge-disjoint-expanders-robust}
%Given a $(h,s,\phi)$-expander $G$ with minimum degree $\faultdeg'=t^h \faultdeg \cdot \log n \cdot n^{O(1/t)}$, one can compute edge-disjoint $G_1,\ldots, G_{O(t)} \subseteq G$ each $G_i$ is a $(h,s,\phi)$-expander with minimum degree $\geq \faultdeg$ w.r.t routing with degree $\faultdeg$.
%\end{lemma}

%Maybe change definition of \emph{good} to be with at least $\Omega(\log n)$ good paths of length $t^2$. Then, claim that w.h.p. there is a path that includes only $O(1)$ edges from $G_1,\ldots, G_{\leq i}$.
%%
%Need to think how to make edge-disjoint without even a constant congestion.

\paragraph{Expander Sparsfication.} Our certificate computation is based on our ability to sparsify a possibly dense expanders while preserving their fault-tolerant properties w.r.t. their minimum degree. The following lemma summarizes the properties of our sparsification procedure:

\begin{theorem} [Expander Sparsification] \label{thm:expander-sparsification}
For any $\phi \in [0,1]$ and $\faultdeg' \in [\Omega(\log n),n]$, there is an algorithm $\Sparsify$ that given an $n$-vertex $\phi$-expander $G$ with min-degree $\faultdeg'$ outputs a subgraph $H \subseteq G$ such that:
\begin{itemize}
\item $H$ is an $\widetilde{\Omega}(\phi^2)$-expander.
\item $|E(H)|=O(\faultdeg' \cdot n \cdot \log n/\phi)$, and
\item the minimum degree of $H$ is $\widetilde{\Omega}(\faultdeg'\cdot \phi)$. 
\end{itemize}
\end{theorem}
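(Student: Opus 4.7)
The plan is to prove \Cref{thm:expander-sparsification} by embedding a bounded-degree regular expander into $G$ and defining $H$ as the union of the embedding paths in $G$. First I would fix $\Delta = \Theta(\faultdeg')$ and construct a $\Delta$-regular graph $\widehat{G}$ on vertex set $V(G)$ that is itself a $\phi_{\widehat{G}}$-expander with $\phi_{\widehat{G}} = \Omega(1)$ with respect to its own (constant) degree weighting; since $\faultdeg' = \Omega(\log n)$, a random $\Delta$-regular graph works with high probability. The associated demand $D_{\widehat{G}}$ has load exactly $\Delta \leq \faultdeg' \leq \deg_G(v)$ at every vertex $v$, hence it is a unit demand in $G$. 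By \Cref{lem:routing-implies-expansion}, this demand is routable in the $\phi$-expander $G$ with congestion $\congestion = O(\log n/\phi)$ and dilation $\dilation = O(\log n/\phi)$, producing the desired embedding of $\widehat{G}$ into $G$; for a polynomial-time implementation we invoke \Cref{lem:LC-MCF-Routing} at the cost of a constant factor in congestion. I then define $H \subseteq G$ to be the set of edges used by these embedding paths.

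Next I would verify the three bounds. The size bound is immediate: each of the $n\Delta/2$ edges of $\widehat{G}$ is replaced by a path of length at most $\dilation$, so $|E(H)| \leq n\Delta \dilation /2 = O(\faultdeg' \cdot n \cdot \log n/\phi)$. For the minimum-degree bound, each vertex $v$ is incident to $\Delta$ edges in $\widehat{G}$, and the $\Delta$ embedding paths starting at $v$ each use some $G$-edge incident to $v$ as their first edge; since every $G$-edge carries at most $\congestion$ embedding paths, these $\Delta$ starting edges must cover at least $\Delta/\congestion = \widetilde{\Omega}(\faultdeg' \phi)$ distinct edges of $H$ at $v$, yielding $\deg_H(v) \geq \widetilde{\Omega}(\faultdeg' \phi)$.

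The main obstacle is the expansion guarantee. The easy half is the cut lower bound: for any cut $(S, V \setminus S)$, each $\widehat{G}$-edge crossing the cut corresponds to an embedding path which must traverse some edge of $\partial_H(S)$, so the congestion bound gives $|\partial_H(S)| \geq |\partial_{\widehat{G}}(S)|/\congestion = \widetilde{\Omega}(\Delta \cdot \phi \cdot \min(|S|, |V \setminus S|))$ using the expansion of $\widehat{G}$. Translating this into a conductance bound with respect to the weighting $\deg_H$ is the delicate step, since an adversarial cut could in principle concentrate all of its volume on a few vertices with large $\deg_G$. To rule this out, I would ensure that the embedding is \emph{vertex-congestion-balanced}, i.e., that no vertex appears in more than $\widetilde{O}(\Delta \cdot \congestion)$ embedding paths, so that $\deg_H(v) \leq \widetilde{O}(\faultdeg'/\phi)$ for every $v$. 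This can be enforced by a randomized-rounding step applied to the fractional multi-commodity flow supplied by \Cref{lem:LC-MCF-Routing}, followed by a Chernoff concentration argument analogous to the one used in the proof of that lemma. Combining this uniform degree upper bound with the cut lower bound gives $\Vol_H(S) = \widetilde{O}(\Delta \cdot \congestion \cdot |S|)$ and therefore $\Phi_H(S) \geq \widetilde{\Omega}(\phi^2)$, completing the proof.
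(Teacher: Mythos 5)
Your construction (embed a $\Theta(\faultdeg')$-regular constant-conductance expander $\widehat{G}$ into $G$, set $H$ to the union of the embedding paths), and your size and min-degree arguments, match the paper's proof. The divergence is in the expansion argument: the paper proves expansion of $H$ via a flow characterization (\Cref{lem:from-embedding-to-expansion}), showing that any unit edge-demand in $H$ can be mapped to a demand on $\widehat{G}$, routed there with $\widetilde{O}(1)$ quality, and then pulled back through the embedding paths to give a routing in $H$ with congestion $\widetilde{O}(1/\phi^2)$ and dilation $\widetilde{O}(1/\phi)$; the claimed conductance then follows from the routing-to-expansion direction of the flow characterization. This argument never needs a per-vertex degree upper bound in $H$.

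Your cut-based argument, by contrast, genuinely requires the volume upper bound $\Vol_H(S) = \widetilde{O}(\faultdeg'|S|/\phi)$, i.e.\ a uniform bound $\deg_H(v) = \widetilde{O}(\faultdeg'/\phi)$, and the way you propose to obtain it does not hold up. \Cref{lem:routing-implies-expansion} and \Cref{lem:LC-MCF-Routing} control only edge congestion, not the number of paths passing through a given vertex; a vertex $v$ of large $\deg_G(v)$ can have $\Theta(\deg_G(v) \cdot \congestion)$ path-edge incidences without violating the edge-congestion bound, giving $\deg_H(v)$ far larger than $\widetilde{O}(\faultdeg'/\phi)$. Randomized rounding of the fractional flow cannot repair this: rounding concentrates around the fractional solution, so if the LP optimum already routes a lot of fractional flow through $v$, every rounding will too. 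To make your approach sound you would have to add vertex-capacity constraints to the flow LP and then argue such a flow exists in an arbitrary expander with min-degree $\faultdeg'$ — a nontrivial claim that is neither proved nor cited. This is exactly the obstruction that the paper's flow-based argument is designed to avoid, and it is the one genuine gap in your proposal; the rest of the reasoning is sound and aligned with the paper.
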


We need the following auxiliary claim. We say that a graph $H \subseteq G$ is a \emph{minimal} graph that embeds a graph $\widehat{G}$ into $G$ if $H$ is the union of the $G$-paths that embeds $\widehat{G}$ into $G$.
% with congestion $\congestion$ and dilation $\dilation$, if $E(H)$ is a union of the $|E(G)|$ paths that embeds $G$ in $H$. 
The proof of the next lemma is provided in appendix.

\begin{lemma}\label{lem:from-embedding-to-expansion}
Let $H=(V,E_H) \subseteq G$ be a \emph{minimal} graph that embeds a $\faultdeg$-regular $(1/\log n)$-expander $\widehat{G}=(V(H),\widehat{E})$ into $G$, such that the collection of embedded paths have congestion and dilation of $\widetilde{O}(1/\phi)$. Then $H$ is a $\widetilde{\Omega}(\phi^2)$-expander.
\end{lemma}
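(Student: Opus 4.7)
The natural approach is to transfer expansion from the virtual expander $\widehat{G}$ to the host subgraph $H$ by analyzing an arbitrary cut $(S, V\setminus S)$ with $|S|\leq n/2$, and using the embedding to relate both $|\partial_H(S)|$ and $\Vol_H(S)$ to the structural quantities of $\widehat{G}$. Set $\congestion,\dilation=\widetilde{O}(1/\phi)$ for the congestion and dilation of the embedding $\sigma$, and let $\mathcal{P}=\{\sigma(e')\mid e'\in\widehat{E}\}$ be the corresponding path collection, whose union is precisely $H$ by minimality.

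\textbf{Step 1: lower bound on $|\partial_H(S)|$.} Since $\widehat G$ is a $\faultdeg$-regular $(1/\log n)$-expander, we have $|\partial_{\widehat G}(S)|\ge \faultdeg|S|/\log n$. Every crossing virtual edge $(u,v)\in\partial_{\widehat G}(S)$ has its embedding path $\sigma(u,v)\subseteq H$ starting in $S$ and ending outside $S$, so it must traverse at least one edge of $\partial_H(S)$. Since each edge of $H$ lies on at most $\congestion$ embedding paths, we obtain
\[
|\partial_H(S)| \;\ge\; \frac{|\partial_{\widehat G}(S)|}{\congestion} \;\ge\; \widetilde{\Omega}\bigl(\phi\cdot \faultdeg\cdot|S|\bigr).
\]

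\textbf{Step 2: upper bound on $\Vol_H(S)$.} For each $v\in V$, let $I(v)$ denote the number of (path, edge) incidences at $v$, i.e., $I(v)=\sum_{P\in\mathcal{P}}(\text{edges of }P\text{ incident to }v)$. Since every edge of $H$ incident to $v$ is used by at least one path (by minimality), we have $\deg_H(v)\leq I(v)$, and so $\Vol_H(S)\leq \sum_{v\in S}I(v)=\sum_{P\in\mathcal{P}} 2\,|P\cap S|-(\text{endpoints of }P\text{ in }S)$. We split the paths of $\mathcal{P}$ into three types according to how many endpoints lie in $S$: (i) internal paths (both endpoints in $S$), whose number is at most $\faultdeg|S|/2$ and each contributing at most $2\dilation$; (ii) cut paths (exactly one endpoint in $S$), whose number is $|\partial_{\widehat G}(S)|\le \faultdeg|S|$ and again each contributing at most $2\dilation$; and (iii) external paths (no endpoint in $S$), each of which must enter and exit $S$ using at least two edges of $\partial_H(S)$, so their number is at most $\congestion\cdot|\partial_H(S)|/2$ and each contributes at most $2\dilation$. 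Putting this together,
\[
\Vol_H(S)\;\le\; O\bigl(\dilation\cdot \faultdeg |S| + \dilation\cdot\congestion\cdot|\partial_H(S)|\bigr) \;=\; \widetilde{O}\!\left(\frac{\faultdeg|S|}{\phi} + \frac{|\partial_H(S)|}{\phi^{2}}\right).
\]

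\textbf{Step 3: combine.} It remains to compare the bounds. In the regime $|\partial_H(S)|\le \phi\,\faultdeg|S|$, the first term of the volume dominates, giving $\Vol_H(S)=\widetilde{O}(\faultdeg|S|/\phi)$, so the Step 1 lower bound yields $\Phi_H(S)\ge \widetilde{\Omega}(\phi\faultdeg|S|)/\widetilde{O}(\faultdeg|S|/\phi)=\widetilde{\Omega}(\phi^{2})$. In the complementary regime, the second term dominates and $\Vol_H(S)=\widetilde{O}(|\partial_H(S)|/\phi^{2})$, giving $\Phi_H(S)\ge \widetilde{\Omega}(\phi^{2})$ directly. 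Taking the minimum over all cuts $S$ proves the lemma.

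\textbf{Main obstacle.} The non-trivial step is bounding $\Vol_H(S)$ when $|S|$ is small: a priori an individual vertex $v\in S$ might lie on many embedding paths and thereby have huge degree in $H$. The resolution is the observation in case (iii) above, namely that any embedding path that visits $S$ without having an endpoint there must cross $\partial_H(S)$ twice. This lets us charge the potentially large contribution of through-passing paths to $|\partial_H(S)|$ itself, at the price of a factor of $\dilation\cdot\congestion=\widetilde{O}(1/\phi^{2})$, which is exactly what produces the $\widetilde{\Omega}(\phi^{2})$ conductance bound rather than $\widetilde{\Omega}(\phi)$.
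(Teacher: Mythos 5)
Your proof is correct, and it takes a genuinely different route from the paper's. The paper proves the lemma via the flow characterization of expanders: it translates an arbitrary unit edge-demand in $H$ into a low-load demand in $\widehat G$ (mapping each $H$-edge to a $\widehat G$-edge whose embedding path covers it), routes that demand in a capacitated copy of $\widehat G$ using $\widehat G$'s expansion, pulls the routing back to $H$ along the embedding to get congestion $\widetilde O(1/\phi^2)$ and dilation $\widetilde O(1/\phi)$, and then invokes the routing-implies-expansion direction of Lemma 3.16 of \cite{HaeuplerR022} to conclude. You instead work directly with the conductance definition: you lower-bound $|\partial_H(S)|$ by pushing $\widehat G$'s cut expansion through the embedding (paying a factor $\congestion$), and you upper-bound $\Vol_H(S)$ by a path-incidence count, with the crucial trick of charging the incidences of \emph{pass-through} paths (those with neither endpoint in $S$) to $|\partial_H(S)|$ itself, which you correctly identify as the main obstacle and the source of the $\phi^2$ rather than $\phi$ in the final bound. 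Both arguments lose one factor of $1/\phi$ to the congestion/dilation of the embedding and one more to $\widehat G$'s own expansion (or, in your phrasing, to the interplay of $\dilation$ and $\congestion$ in bounding $\Vol_H(S)$). Your cut-based proof is more elementary and self-contained, avoiding the edge-demand formalism and the flow-characterization black box; the paper's routing-based proof has the advantage of being phrased in the same language as the length-constrained expander machinery used elsewhere, which makes the parallel to the LC sparsification in Section \ref{sec:bhop-sparsification} more transparent. One small remark on presentation: when you fix $|S| \le n/2$ and then lower-bound $\Phi_H(S)$ by $|\partial_H(S)|/\Vol_H(S)$, this is valid even if $\Vol_H(V\setminus S)<\Vol_H(S)$ (since that only makes the true conductance larger), but it is worth stating this explicitly since $|S|\le n/2$ does not by itself order the $H$-volumes.
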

\def\APPENDLEMEMBEXP{
\begin{proof}[Proof of Lemma \ref{lem:from-embedding-to-expansion}]
In this proof and following \cite{HaeuplerR022}, we define demand between edges for convenience. For any edge $e \in E$ and any demand $D:E \times E \to \mathbb{R}_{\geq 0}$, the load of $D$ on $e$ is the maximum between the demand sent or received by $e$ in $D$. I.e., $load(e,D)=\max\{\sum_{e' \in E}D(e,e'), \sum_{e' \in E}D(e,e')\}$. 
Let $load(D)=\max_{e \in E}load(D,e)$. The demand is called unit if $load(D)\leq 1$. By Lemma 3.16 of \cite{HaeuplerR022} (Lemma \ref{lem:routing-implies-expansion} for the edge-demand setting), it is sufficient to show that any unit-demand instance $I \subseteq E_H \times E_H$ is routable in $H$ with congestion and dilation of $O(\log n/\phi)$

Consider a unit-demand edge-instance $I_H \subseteq E_H \times E_H$ where each pair $(e,e')\in I_H$ is weighted by a demand function $D_H: E(H) \to \mathbb{R}_{\geq 0}$. We then have that 
$$load(e,D_H)=\sum_{e' \in E(H)}\max\{D_H(e,e'),D_H(e',e)\}\leq 1, \forall e \in E_H~.$$ 
%
%Our goal is to show that $I_H$ is routable in $H$ with congestion and dilation of $\widetilde{O}(1/\phi^2)$ which by Lemma \ref{lem:routing-implies-expansion} implies that $H$ is an $\widetilde{\Omega}(\phi^2)$-expander. 

We translate $I_H$ into a unit-demand instance $I_{\widehat{G}} \subseteq \widehat{E} \times \widehat{E}$, as follows. Let $\mathcal{P}=\{P(e) \subseteq H ~\mid~ e \in \widehat{E}\}$ be the collection of paths that embed $\widehat{G}$ into $H$. Since $H$ is minimal, we have that $E(H)=\bigcap_{P \in\mathcal{P}}E(P)$.  For every $e \in H$, let $M(e)=\{e' \in \widehat{E} ~\mid~ e \in P(e')\}$ be the subset of edges in $\widehat{E}$ such that $e$ appears on their corresponding paths in $\mathcal{P}$. By the definition of $H$, we have that $M(e)\neq \emptyset$ for every $e \in E_H$. For each edge $e \in I_H$, we choose one such edge $f(e)=e'$ for some arbitrary $e' \in M(e)$. 

Note that each edge $e' \in \widehat{G}$ is associated with at most $\ell=\widetilde{O}(1/\phi)$ edges in $H$. Define a routing instance $I_{\widehat{G}}=\{(f(e), f(e')) ~\mid~ (e,e')\in I_H\}$ and let $D_{\widehat{G}}(f(e),f(e'))=d_H(e,e')$. The load of every edge $e' \in \widehat{E}$ can be bounded by:
$$load(e',D_{\widehat{G}})\leq \sum_{e \in P(e')}load(e, D_H)\leq \ell.$$

Let $\widehat{G}_{\ell}=(V, \widehat{E},c)$ be a capacitated graph in which each edge $e'$ of $\widehat{G}$ has a capacity of $c(e')=\ell$. Clearly, $\widehat{G}_{\ell}$ is a $(1/\log n)$-expander, as well. Note also that $I_{\widehat{G}}$ is a unit-demand instance in $\widehat{G}_{\ell}$, and hence it is routable in $\widehat{G}_{\ell}$ with congestion and dilation $\widetilde{O}(1)$. Since each edge in $\widehat{G}_{\ell}$ has capacity of $\ell$, the output path collection $\mathcal{P}'$ of this routing instance has congestion of $\widetilde{O}(\ell)$ and dilation $\widetilde{O}(1)$. Each path $P'=[e_1,\ldots, e_k]$ in $\mathcal{P}'$ then translates into an $H$-path $P''=P(e_1)\circ \ldots \circ P(e_k)$ that contains a path between the edge pair $e,e' \in E(H)$ for every $(e,e')\in I_H$ such that $f(e)=e_1$ and $f(e')=e_k$. The length of these paths is bounded by $\widetilde{O}(1/\phi)$ and the congestion is bounded by $\widetilde{O}(\ell/\phi)=\widetilde{O}(1/\phi^2)$. %We therefore conclude by Thm. \ref{thm:exp-routing-length} that $H$ is a $\widetilde{\Omega}(\phi^2)$-expander. 
\end{proof}
}%\APPENDLEMEMBEXP

%\paragraph{Expander Sparsification.} 
We are now ready to provide the proof of Thm. \ref{thm:expander-sparsification}.
\begin{proof}[Proof of Thm. \ref{thm:expander-sparsification}]
Let $\widehat{G}=(V(G),\widehat{E})$ be an $\faultdeg'$-regular $(1/\log n)$-expander. We define a demand $D_{\widehat{E}}$ where $D_{\widehat{E}}((u,v))=1$ for every $(u,v)\in \widehat{E}$ and $D_{\widehat{E}}(u,v)=0$, otherwise. 

Since $\widehat{G}$ is $\faultdeg'$-regular, $D$ is unit-demand in $G$, and can therefore can be solved in $G$ with congestion and dilation of $O(\log n/\phi)$ by Lemma \ref{lem:routing-implies-expansion}. Moreover, the collection of routing paths $\mathcal{P}$ can be computed in polynomial time using Lemma \ref{lem:LC-MCF-Routing}.
The output subgraph $H$ consists of the union of all paths $P \in \mathcal{P}$. 

We next show that $H$ satisfies the desired properties. Since $H$ is a minimal graph for embedding $\widehat{G}$ into $G$, by Lemma \ref{lem:from-embedding-to-expansion}, we have that $H$ is a $\widetilde{\Omega}(\phi^2)$-expander. 
By construction, we also have that $|E(H)|=O(\faultdeg' \cdot n \cdot \log n/\phi)$, as $H$ consist of $\faultdeg' n$ paths, each of length at most $O(\log n/\phi)$. It remains to bound the minimum degree in $H$. Consider a vertex $u$ and its collection of $\faultdeg$ neighbors in $\widehat{G}$. Since the congestion of the paths in $\mathcal{P}$ is $O(\log n/\phi)$, we get that $u$ is incident to at least $\faultdeg=\widetilde{\Omega}(\faultdeg'\cdot\phi)$ distinct edges on the output $u$-$v$ paths of the routing instance, which implies that $u$ is incident to at least $\faultdeg$ neighbors in $H$.
\end{proof}

\paragraph{The Algorithm for $\faultdeg$-Faulty-Degree Certificates.}

We use the following procedure $\MinDegree$ that given a graph $G$ and integer $\faultdeg'$, outputs a subgraph $G' \subseteq G$ with minimum degree $\faultdeg'$ such that $|G \setminus G'|=O(\faultdeg' n)$.

\begin{lemma}\label{lem:min-deg-maker}
Given an $n$-vertex graph $G$ and integer $\faultdeg'$, there is an algorithm $\MinDegree$ that computes a subgraph $G' \subseteq G$ with minimum degree $\faultdeg'$ and such that $|G \setminus G'|=O(\faultdeg' n)$.
\end{lemma}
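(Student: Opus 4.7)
The plan is to prove \Cref{lem:min-deg-maker} by the standard iterative peeling procedure. Initialize $G' \gets G$, and then repeatedly identify any vertex $v \in V(G')$ with $\deg_{G'}(v) < \faultdeg'$ and delete $v$ together with all of its remaining incident edges from $G'$. Terminate when every vertex of $G'$ has degree at least $\faultdeg'$ (possibly after $G'$ has become empty). The algorithm halts after at most $n$ iterations, since each iteration removes one vertex, and each iteration can be implemented in linear time by maintaining a degree counter, so the total running time is polynomial.

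For the size bound, the key observation is that whenever a vertex $v$ is removed, the number of edges deleted from $G'$ at that step is exactly the current degree of $v$ in $G'$, which by the peeling rule is strictly less than $\faultdeg'$. Summing over the at most $n$ deletion steps, the total number of edges ever removed — which is precisely $|E(G) \setminus E(G')|$ — is at most $(\faultdeg'-1) \cdot n = O(\faultdeg' \cdot n)$. By the termination condition, every vertex surviving in $G'$ has degree at least $\faultdeg'$ in $G'$, so $G'$ satisfies the required minimum-degree guarantee on its vertex set.

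There is no real obstacle here: the only subtle point is interpreting the minimum-degree condition as a property of the vertices still present in $G'$ (i.e., those with at least one surviving incident edge), since vertices removed by the peeling are accounted for entirely through the $O(\faultdeg' n)$ edges added to $G \setminus G'$. Equivalently, one can think of $G'$ as the vertex-induced subgraph on the set of vertices that were never peeled off, in which case $V(G') \subseteq V(G)$ and every $v \in V(G')$ satisfies $\deg_{G'}(v) \geq \faultdeg'$. In either formulation, the bookkeeping argument above yields both the minimum-degree property and the $O(\faultdeg' n)$ bound on removed edges in a single pass, completing the lemma.
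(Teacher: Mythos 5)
Your proof is correct and follows essentially the same iterative peeling argument as the paper: delete low-degree vertices one at a time and charge the at most $\faultdeg'$ removed edges per deletion against the $n$ vertices. The only cosmetic difference is that you peel vertices of degree strictly less than $\faultdeg'$ while the paper peels those of degree at most $\faultdeg'$; both variants give the $O(\faultdeg' n)$ bound and the required minimum-degree guarantee.
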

\begin{proof}
Letting $G'=G$, the algorithm iteratively omits a vertex with degree at most $\faultdeg'$ in the remaining graph $G'$ until no such vertex exists. The algorithm omits at most $\faultdeg' n$ edges from $G$ which proves the desired bound. 
\end{proof}

We are now ready to present Algorithm $\FDCertificate$ which computes the desired $\faultdeg$-FD certificate $H$ with $\widetilde{O}(\faultdeg \cdot n)$ edges. The algorithm has $\ell=O(\log n)$ iterations. In each iteration $i$, the algorithm gets as input a subgraph $G_i \subseteq G$ with $\Omega(\faultdeg' n)$ edges, where initially $G_1=G$. The  output of the iteration is given by a subgraph $H_i$ and a subgraph $G_{i+1}\subseteq G_i$ to be provided as input to the next iteration $i+1$. The final output subgraph $H$ is then defined by $H=\bigcup_{i=1}^\ell H_i$.

We now focus on iteration $i \geq 1$. The algorithm starts by applying Procedure $\MinDegree$ of Lemma \ref{lem:min-deg-maker} on $G_i$ with degree threshold $\faultdeg'=\widetilde{O}(\faultdeg)$. Let $G'_i$ be the output subgraph of this procedure. Next, the algorithm applies expander decomposition on $G'_i$ which partitions $G$ into vertex-disjoint $\phi$-expanders $G_{i,1}, \ldots, G_{i,k_i}$ for $\phi=1/(\log n)^{a+1}$ and a collection of at most $m/2$ inter-expander edges, where $a$ is the constant in Theorem \ref{thm:expander-decomp}. 

On each sufficiently dense $\phi$-expander $G_{i,j}$ (i.e.,  (with $|G_{i,j}|\geq 2\faultdeg' \cdot |V(G_{i,j}|)$), the algorithm applies Procedure $\Sparsify$ of Thm. \ref{thm:expander-sparsification} to obtain a $\widetilde{\Omega}(\phi^2)$-expander $H_{i,j}\subseteq G_{i,j}$. The output of the $i^{th}$ phase is given by a subgraph $H_i=\bigcup_{i=1}^{k_i}H_{i,j}$ and the next graph $G_{i+1}$ is the defined by the union of all inter-expander edges. As $|G_{i+1}|\leq |G_i|/2$ the process terminates within $\ell$ iterations. See a detailed description below.

\begin{mdframed}[hidealllines=false,backgroundcolor=gray!30]
\center \textbf{Algorithm $\FDCertificate$}
\begin{flushleft}
Input: A graph $G$, integer $\faultdeg$. \\
Output: An $\faultdeg$-FD certificate $H \subseteq G$ with $\widetilde{O}(\faultdeg \cdot n)$ edges.
\end{flushleft}
\begin{itemize}
\item $G_1 \gets G$ and $\phi=1/\log^{a+1} n$ (see Theorem \ref{thm:expander-decomp}).
\item $\faultdeg'\gets \lceil \faultdeg /\phi^5 \rceil$ and $\faultdeg''\gets \lfloor\faultdeg'\cdot \phi \rfloor$.
%\item $H \gets G \setminus G'$.
\item For $i=\{1,\ldots, \ell=2\log n\}$ do:
\begin{enumerate}
\item $G'_i=\MinDegree(G_i, \faultdeg')$.
\item $(U_{i,1},\ldots, U_{i,k_i})=\ExpDecomp(G'_i, \phi)$.
\item For every $j \in \{1,\ldots, k_i\}$ do:
\begin{itemize}
\item Let $G_{i,j}=G'_i[U_{i,j}]$.
\item If $|G_{i,j}|\leq 2\faultdeg' \cdot |U_{i,j}|$: $H_{i,j} \gets G_{i,j}$.
\item Otherwise: $H_{i,j} \gets \Sparsify(G_{i,j}, \faultdeg'')$.
\end{itemize}
\item $G_{i+1} \gets \bigcup_{i} G \setminus \bigcup_{j=1}^{k_i} G_{i,j}$.
\item $H_i\gets (G_i \setminus G'_i) \cup \bigcup_{j=1}^{k_i} H_{i,j}$.
\item If $|G_{i+1}|\leq \faultdeg' n$, set $H_{i+1}=G_{i+1}$, $\ell=(i+1)$ and QUIT.
\end{enumerate}
\item $H= \bigcup_{i=1}^{\ell} H_i$.
\end{itemize}
\end{mdframed}

% \begin{mdframed}[hidealllines=false,backgroundcolor=gray!30]
% \center \textbf{Algorithm $\FDCertificate$}
% \begin{flushleft}
% Input: A graph $G$ with minimum degree $\faultdeg'=\widetilde{\Omega}(\faultdeg)$. \\
% Output: An $\faultdeg$-FD certificate $H \subseteq G$ with $\widetilde{O}(\faultdeg \cdot n /\phi)$ edges.
% \end{flushleft}
% \begin{itemize}
% \item Let $G' \gets G$ and $H \gets \emptyset$.
% \item While $|G'| \geq \widetilde{O}(\faultdeg \cdot n)$ do:
% \begin{enumerate}
% \item $(U_1,\ldots, U_k)=\ExpDecomp(G', \phi=(1/\log n))$.
% \item $H_0=\bigcup_{i} E(U_i, V\setminus U_i)$.
% \item For every $i \in \{1,\ldots, k\}$ do:
% \begin{enumerate}
% \item $H_i \gets \Sparsify(G'[U_i], \faultdeg'=\Theta(\faultdeg\cdot c(\phi))$.
% \item $H=H \cup \bigcup_{i=1}^k H_i$.
% \end{enumerate}
% \item $G'' =\MinDegree(H_0,\faultdeg)$.
% \item $H = H \cup (G' \setminus G'')$.
% \item $G' \gets G''$. 
% \end{enumerate}
% \item $H = H \cup G'$.
% \end{itemize}
% \end{mdframed}

\begin{lemma}\label{lem:reduce-half}
$H$ is an $\faultdeg$-FD $\widetilde{O}(1)$-spanner (and hence also an $\faultdeg$-FD certificate) with $|E(H)|=\widetilde{O}(\faultdeg \cdot n)$. 
\end{lemma}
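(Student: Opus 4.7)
The plan is to separately establish the size bound $|E(H)| = \widetilde{O}(\faultdeg n)$ and the stretch bound $\dist_{H \setminus F}(u,v) \leq \widetilde{O}(1) \cdot \dist_{G \setminus F}(u,v)$ for every $\faultdeg$-degree fault set $F$ and pair $u,v$. By the triangle inequality applied along a shortest $u$-$v$ path in $G \setminus F$, the stretch statement reduces to showing that $\dist_{H \setminus F}(u,v) = \widetilde{O}(1)$ for every surviving fault-free edge $(u,v) \in G \setminus F$; this edge-version will be the heart of the argument.

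For the size bound, I will bound $|E(H_i)|$ per iteration and multiply by the number of iterations $\ell = O(\log n)$. Each iteration contributes three types of edges: the low-degree leftovers $G_i \setminus G'_i$, bounded by $O(\faultdeg' n) = \widetilde{O}(\faultdeg n)$ via Lemma \ref{lem:min-deg-maker}; the subgraphs of small clusters, each contributing at most $2\faultdeg' |U_{i,j}|$ edges by the case split; and the sparsified expanders $H_{i,j}$, each of size at most $O(\faultdeg'' |U_{i,j}| \log n / \phi)$ by Theorem \ref{thm:expander-sparsification}, which with $\phi = 1/\log^{a+1} n$ and $\faultdeg'', \faultdeg' = \widetilde{O}(\faultdeg)$ is also $\widetilde{O}(\faultdeg |U_{i,j}|)$. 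Summing over clusters gives $\widetilde{O}(\faultdeg n)$ edges per iteration. The inter-cluster guarantee of Theorem \ref{thm:expander-decomp} combined with our choice of $\phi$ forces $|G_{i+1}| \le |G_i|/2$, so the loop hits the QUIT condition within $O(\log n)$ iterations, at which point the residual $G_{\ell+1}$ of at most $\faultdeg' n = \widetilde{O}(\faultdeg n)$ edges is appended to $H$.

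For the stretch, fix a surviving edge $(u,v) \in G \setminus F$. If $(u,v)$ remains in the residual graph $G_i$ across all iterations, it eventually lands in $G_{\ell+1} \subseteq H$ and we are done; otherwise let $i$ be the first iteration that drops $(u,v)$ from the residual graph. The easy subcases are $(u,v) \in G_i \setminus G'_i$ (placed into $H_i$) and $(u,v) \in G_{i,j}$ for a small cluster (the whole $G_{i,j}$ is kept in $H$). The key case is $(u,v) \in G_{i,j}$ for a sparsified expander: here I invoke Theorem \ref{thm:expander-sparsification} to assert that $H_{i,j}$ is an $\widetilde{\Omega}(\phi^2)$-expander on $V(G_{i,j})$ with minimum degree $\widetilde{\Omega}(\faultdeg'' \phi) = \widetilde{\Omega}(\faultdeg/\phi^3)$, which for $\phi = 1/\log^{a+1} n$ comfortably exceeds $2\faultdeg$ divided by $H_{i,j}$'s expansion $\widetilde{\Omega}(\phi^2)$. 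Theorem \ref{thm:expander-d-deg} then guarantees that $H_{i,j} \setminus F$ is still an $\widetilde{\Omega}(\phi^2)$-expander, hence has diameter $O(\log n / \phi^2) = \widetilde{O}(1)$, and so $\dist_{H \setminus F}(u,v) \le \dist_{H_{i,j} \setminus F}(u,v) = \widetilde{O}(1)$. The main obstacle is verifying this parameter balance: the inflated thresholds $\faultdeg' = \lceil\faultdeg/\phi^5\rceil$ and $\faultdeg'' = \lfloor\faultdeg' \phi\rfloor$ are chosen precisely so that two successive losses (a factor $\phi$ from sparsification and a factor $\phi$ from the residual expansion) still leave enough minimum degree above the $2\faultdeg/\phi^2$ threshold required by the rigidity theorem; once this balance is pinned down, the rest of the stretch argument is direct case-tracing.
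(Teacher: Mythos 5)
Your proof follows the same structure as the paper's --- the same per-iteration edge accounting, the same case split on where $(u,v)$ leaves the residual graph, the same chain through Theorem~\ref{thm:expander-sparsification} followed by Theorem~\ref{thm:expander-d-deg} with the parameter cascade $\faultdeg'\to\faultdeg''\to\widetilde{\Omega}(\faultdeg''\phi)$, and the arithmetic you check at the end is correct.

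However, there is a genuine gap: you apply Theorem~\ref{thm:expander-sparsification} to $G_{i,j}$ with degree parameter $\faultdeg''$, but never verify the hypothesis of that theorem that $G_{i,j}$ has minimum degree at least $\faultdeg''$. This is not automatic. A vertex $v\in U_{i,j}$ satisfies $\deg_{G'_i}(v)\geq\faultdeg'$, but many or most of those edges could leave the cluster, so a priori $\deg_{G_{i,j}}(v)$ could be far below $\faultdeg''$. The paper closes this hole in its ``Min-Degrees'' paragraph: the self-loop--augmented graph $\tilde G = G'_i(U_{i,j})$ produced by $\ExpDecomp$ is a $\phi$-expander, and for the singleton cut $\{v\}$ the boundary $|\partial_{\tilde G}(\{v\})|$ equals $\deg_{G_{i,j}}(v)$ (self-loops do not cross the cut) while $\Vol_{\tilde G}(\{v\})=\deg_{G'_i}(v)\geq\faultdeg'$; expansion then forces $\deg_{G_{i,j}}(v)\geq\phi\cdot\faultdeg'\geq\faultdeg''$. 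Your remark about ``a factor $\phi$ from the residual expansion'' suggests you sense a degree drop between $G'_i$ and $G_{i,j}$, but you do not supply an argument for it, and this expansion-on-singleton-cuts step is the one substantive verification in the stretch proof that your write-up omits.
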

\begin{proof}
We start with the size bound by showing that $|H_i|=\widetilde{O}(\faultdeg \cdot n)$. By Lemma \ref{lem:min-deg-maker}, we have $|G_i \setminus G'_i|=\widetilde{O}(\faultdeg n)$. By Thm. \ref{thm:expander-sparsification}, $|H_{i,j}|=\widetilde{O}(\faultdeg \cdot |V(U_{i,j})|)$. The size bounds follows that $H_{i,1},\ldots, H_{i,k_i}$ are vertex-disjoint.  We next claim that the algorithm quits within $2\log n$ iterations. This holds as by the stopping criteria, and by the fact that $|G_{i+1}|\leq |G_i|/2$ (as guaranteed by Thm. \ref{thm:expander-sparsification}).

\smallskip
\noindent \textbf{Min-Degrees.} We next claim that the minimum-degree of any $\phi$-expander $G_{i,j}=G'_i[U_{i,j}]$ is at least $\faultdeg''$. % when provided that $|G_{i,j}|\geq 2\faultdeg' |U_{i,j}|$. 
By the properties of expander-decomposition, we have that the graph $\tilde{G}=G'_i(U_{i,j})$ is also a $\phi$-expander. Recall that the graph $\tilde{G}$ consists of all edges of $G_{i,j}$, and in addition, each vertex $v$ has a self-loop for every edge $(u,v)\in G'_i \setminus G_{i,j}$. 
By adding these self-loops, we have that 
the degree of $v$ in $\tilde{G}$ is the same as its degree in $G'_i$. Since the minimum degree of $G'_i$ is at least $f'$, we have that $\min\{\Vol_{\tilde{G}}(\{v\}),\Vol_{\tilde{G}}(V \setminus \{v\})\}\geq f'$ for every $v \in U_{i,j}$. By the expansion of $\tilde{G}$, for every $v \in U_{i,j}$, it then holds that
$\deg_{G_{i,j}}(v)\geq \phi \cdot f' \geq f''$.
% By considering the cut $(\{v\}, V \setminus \{v\})$ in $\tilde{G}$
% % Since, $\Vol_{\tilde{G}}(v)\leq \Vol_{\tilde{G}}(V \setminus \{v\})$.
% By the expansion of $\tilde{G}$, we get that $\deg_{G_{i,j}}(v)\geq \faultdeg' \cdot \phi$.

% $$\deg_{G_{i,j}}(v)=|\partial_{\tilde{G}}(\{v\})|\geq \deg_{\tilde{G}}(v)\cdot \phi=\deg_{G'_i}(v)\cdot \phi \geq \faultdeg' \cdot \phi \geq \faultdeg''~.$$

\smallskip
\noindent \textbf{Stretch.} Fix an $\faultdeg$-degree set $F$ and $(u,v) \in E \setminus F$. The interesting case is when $(u,v)\notin H$, and therefore in such a case there exist indices $i,j$ such that $(u,v)\in G_{i,j}\setminus H_{i,j}$. By Theorem \ref{thm:expander-sparsification}, $H_{i,j}$ is a $\widetilde{\Omega}(\phi^2)$-expander with minimum degree $\tilde{\faultdeg}=\widetilde{\Omega}(\faultdeg'' \cdot \phi)=\widetilde{\Omega}(\faultdeg' \cdot \phi^2)$.
Since $\tilde{\faultdeg}=\widetilde{\Omega}(\faultdeg/\phi^3)$, by Theorem \ref{thm:expander-d-deg}, we have that $H_{i,j} \setminus F$ is also $\widetilde{\Omega}(\phi^2)$ expander. Concluding that $\dist_{H_{i,j} \setminus F}(u,v)=\widetilde{O}(1/\phi^2)=\widetilde{O}(1)$.
\end{proof}

\section{Sparse Spanners against Bounded-Degree Faults}\label{sec:FT-spanners}

\subsection{Robustness of Length-Constrained Expanders against Bounded Degree Faults}\label{sec:robust-LCEx}

In this section, we show that dense LC-expanders $G$ are robust to bounded-degree faults $F$, in the sense that $G\setminus F$ is a $O(1)$-spanner for $G$. We show the following (a generalization of Thm. \ref{thm:bounded-hop-expander-robust}):

\begin{theorem}\label{lem:expander-stretch-robustness}
Consider an $(h,s)$-length $\phi$-expander $G$ for vertex weighting $W$ where $W(u)=\widetilde{\Omega}(\faultdeg \cdot n^{\epsilon}/\phi)$, for every $u \in V(G)$. Then, for every $\faultdeg$-degree subset of edges $F$, it holds that $\dist_{G \setminus F}(u,v)\leq (h \cdot s)^{O(1/\epsilon)}$ for every $(u,v)\in F$. 
\end{theorem}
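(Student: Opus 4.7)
The plan is to reduce the claim to the flow characterization of length-constrained expanders (Theorem~\ref{thm:flow character}) and then iteratively reroute paths that happen to traverse faulty edges.

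First, I would set up a single routing instance on the full faulty set at once. Define a symmetric integral demand $D$ by $D(u,v) = d := \widetilde{\Theta}(n^\epsilon/\phi)$ for each $(u,v)\in F$, and $D(u,v)=0$ otherwise. Every vertex $u$ has at most $\faultdeg$ incident faulty edges, so its load under $D$ is at most $\faultdeg\cdot d$, which is at most $W(u)$ by the theorem's hypothesis (choosing the hidden constant in $d$ to absorb the $\mathrm{polylog}$ factor in $W$). Since each $(u,v)\in F$ is an edge of $G$, the demand is trivially $h$-length for every $h\ge 1$. Theorem~\ref{thm:flow character} then produces a routing $\mathcal{P}$ for $D$ with congestion $\congestion = O(\log n/\phi)$ and dilation $hs$; in particular, every $(u,v)\in F$ is served by a multiset of $d$ many $u$-$v$ paths in $G$, each of length at most $hs$.

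Next, I would iteratively identify the faulty edges that admit a short walk in $G\setminus F$. Let $R_0=\emptyset$ and recursively define $R_i\subseteq F$ to be the set of pairs $(u,v)\in F$ for which at least one of the $d$ paths serving $(u,v)$ has all of its faulty edges lying inside $R_{i-1}$. I claim two properties: (a) every $(u,v)\in R_i$ satisfies $\dist_{G\setminus F}(u,v)\le (hs)^i$, and (b) $|F\setminus R_i|\le (\congestion/d)^i\,|F|$. Property (a) is proved by induction on $i$: a serving path $P$ for $(u,v)\in R_i$ has at most $hs$ edges, and by definition every faulty edge on $P$ lies in $R_{i-1}$, so by the inductive hypothesis each such edge may be replaced by a walk of length $\le (hs)^{i-1}$ in $G\setminus F$, yielding a $u$-$v$ walk of length $\le hs\cdot (hs)^{i-1}=(hs)^i$. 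Property (b) is a counting argument: each edge in $F\setminus R_{i-1}$ contributes congestion at most $\congestion$ to $\mathcal{P}$, so at most $\congestion\cdot |F\setminus R_{i-1}|$ serving paths pass through $F\setminus R_{i-1}$; for every $(u,v)\in F\setminus R_i$ all $d$ of its serving paths must pass through $F\setminus R_{i-1}$, which forces $d\cdot |F\setminus R_i|\le \congestion\cdot |F\setminus R_{i-1}|$.

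Finally, I would stop the recursion as soon as $R_i=F$. Since $\congestion/d\le \widetilde{O}(n^{-\epsilon})$ and $|F|\le n^2$, the bound $(\congestion/d)^i\,|F|<1$ already holds for $i=\Theta(1/\epsilon)$, giving $F\setminus R_i=\emptyset$ and hence $\dist_{G\setminus F}(u,v)\le (hs)^{O(1/\epsilon)}$ for every $(u,v)\in F$, as required. The main obstacle I anticipate is bookkeeping: one must verify carefully that the $W$-respecting hypothesis absorbs all the hidden $\mathrm{polylog}$ factors coming from both $d$ and $\congestion$, and must accept that the walks assembled in step~(a) are not simple (they may revisit vertices across concatenated pieces). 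Fortunately only a distance upper bound is required, so non-simplicity is harmless and the telescoping of $(hs)^i$ goes through cleanly.
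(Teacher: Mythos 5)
Your proof is correct and matches the paper's approach: it instantiates the routing guarantee of Corollary~\ref{lem:high-deg-expander-routing} via the flow characterization and then runs exactly the iterative re-routing argument of Theorem~\ref{lem:router-stretch-robustness}. The only cosmetic difference is that you fix a single routing $\mathcal{P}$ and recursively grow a ``settled'' set $R_i$, whereas the paper re-solves the routing on the shrinking unsettled set $F'_i = F \setminus R_{i-1}$ at each iteration; the two bookkeeping schemes yield identical bounds.
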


To prove the theorem, we use the routing properties of LC-expanders, and in particular the following corollary implied by Theorem \ref{thm:flow character}:

\begin{corollary}[Routing on Dense LC Expanders]\label{lem:high-deg-expander-routing}
Let $G$ be an $n$-vertex $O(h,s)$-length $\phi$-expander for vertex weighting $W$ where $W(u)=\widetilde{\Omega}(\faultdeg \cdot n^{\epsilon}/\phi)$ for every $u \in V(G)$. Then every $\faultdeg$-degree subset of edges $F$ is $\Theta(n^{\epsilon}/\phi)$-routable in $G$ with dilation $h\cdot s$ and congestion $O(\log n/\phi)$.
\end{corollary}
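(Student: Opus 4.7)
The plan is to derive this corollary as a straightforward consequence of the flow characterization of length-constrained expanders (Theorem \ref{thm:flow character}, part 1). The key observation is that the statement is asserting the routability of a specific demand, so the work reduces to verifying that this demand satisfies the hypotheses of the flow characterization.

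Concretely, consider the integral symmetric demand $D := d \cdot D_F$ where $d = \Theta(n^\epsilon/\phi)$ is chosen so that $d \cdot \faultdeg \leq W(u)$ for every $u \in V(G)$ (this is possible by the hypothesis $W(u) = \widetilde{\Omega}(\faultdeg \cdot n^\epsilon/\phi)$, absorbing the appropriate polylog factor into the $\widetilde{\Omega}$). I first verify the two hypotheses of Theorem \ref{thm:flow character}(1):
\begin{itemize}
\item \emph{$h$-length demand.} Every pair $(u,v)$ with $D(u,v) > 0$ corresponds to an edge of $F \subseteq E(G)$, so $\dist_G(u,v) = 1 \leq h$, meaning $D$ is $h$-length.
\item \emph{$W$-respecting.} Because $F$ is $\faultdeg$-degree, each vertex $v$ has $\deg_F(v) \leq \faultdeg$, so
\[
\mathrm{load}(v, D) \;=\; d \cdot \deg_F(v) \;\leq\; d \cdot \faultdeg \;\leq\; W(v),
\]
by our choice of $d$. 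Hence $D$ is $W$-respecting, i.e., a $W$-unit demand.
\end{itemize}

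With these two properties in hand, Theorem \ref{thm:flow character}(1) immediately yields a routing $\mathcal{P}$ of $D$ in $G$ along paths of length at most $sh$ with congestion $O(\log n /\phi)$. Since $D = d \cdot D_F$ is by definition what it means for $F$ to be $d$-routable, this is precisely the conclusion of the corollary.

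I expect no real obstacle in this argument: the corollary is a packaging of the flow characterization specialized to the demand induced by a bounded-degree edge set, combined with a degree-counting inequality to confirm the demand is $W$-unit. The only mild care needed is to track the polylogarithmic slack in $W(u) = \widetilde{\Omega}(\faultdeg \cdot n^\epsilon/\phi)$ so that the chosen $d = \Theta(n^\epsilon/\phi)$ genuinely yields a $W$-respecting demand; this is handled by hiding the relevant $\mathrm{polylog}\,n$ factor inside the $\widetilde{\Omega}$ assumption on $W$.
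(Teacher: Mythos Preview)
Your proposal is correct and takes essentially the same approach as the paper, which simply states that the corollary is implied by Theorem~\ref{thm:flow character} without spelling out the details. Your verification that $d\cdot D_F$ is an $h$-length $W$-unit demand is exactly the intended unpacking.
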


% We use Lemma \ref{lem:high-deg-expander-routing} to prove the following robustness properties for length-restricted expanders: %Let $c(\phi)=O(\log n/\phi)$.
%\begin{theorem}\label{lem:expander-stretch-robustness}
%Consider a (bounded-hop?) $\phi$-expander $G$ with minimum-degree of $\faultdeg' \geq \faultdeg\cdot n^{1/t+\epsilon}$. Then, for every $\faultdeg$-deg faulty-set $F$, it holds that $\dist_{G \setminus F}(u,v)=O(t)$ for every $(u,v)\in F$. This in particular implies, that omitting all edges in $F$, increasing an $s,t$ distance in $G \setminus F$ by an $O(t)$-factor. $\faultdeg'=(\faultdeg \cdot \phi/(n^{\epsilon}\cdot \log n))$
%\end{theorem}

By Cor. \ref{lem:high-deg-expander-routing}, it is sufficient to show the following key lemma that translates routing properties of edge sets $F \subseteq G$ into fault-tolerant properties in $G \setminus F$. Recall that a given set of edges $F$ is $d$-routable with congestion $\congestion$ and dilation $\dilation$, if one can solve a routing instance in which $d$ units of flow are required to be sent over each edge $e \in F$ along paths with congestion $\congestion$ and dilation $\dilation$. 

%\begin{theorem}\label{lem:router-stretch-robustness}
%Let $G$ be an $n$-vertex graph and let $F \subseteq E(G)$ be such that every $F' \subseteq F$ is $d$-routable with congestion $\congestion$ and dilation $k$ for $d=2n^{1/t}\cdot \congestion$. Then, for every $(u,v)\in F$, it holds that $\dist_{G \setminus F}(u,v)\leq k^t$. 
%\end{theorem}

\begin{theorem}\label{lem:router-stretch-robustness}
Let $G$ be an $n$-vertex graph and let $F \subseteq E(G)$ be such that $F$ is $d$-routable with congestion $\congestion \leq d$ and dilation $\dilation$. Then, for every $(u,v)\in F$, it holds that $\dist_{G \setminus F}(u,v)\leq \dilation^{k}$ where $k=2\log n/ \log (d/\congestion)$. 
\end{theorem}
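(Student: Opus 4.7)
The plan is to establish the bound by an iterative rerouting argument, following the sketch in Section~\ref{sec:techospanners}. Let $\mathcal{P}=\{P_1^e,\ldots,P_d^e\}_{e\in F}$ denote the given $d$-routing of $F$, so that each $P_i^e$ is a path of length at most $\dilation$ between the endpoints of $e$, and every edge of $G$ (in particular every edge of $F$) appears on at most $\congestion$ paths of $\mathcal{P}$. I will inductively construct subsets $F_1,F_2,\ldots\subseteq F$ such that the $F_i$'s are pairwise disjoint and, for every $e=(u,v)\in F_i$, one has $\dist_{G\setminus F}(u,v)\le \dilation^i$. Writing $B_i=F\setminus(F_1\cup\cdots\cup F_i)$, I will show $|B_i|\le (\congestion/d)^i|F|$, and the theorem will follow by choosing $k$ so that $(\congestion/d)^k|F|<1$.

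\noindent\textbf{Base step ($i=1$).} For each $e\in F$, let $X_e$ be the number of paths $P_j^e$ containing at least one edge of $F$. Double counting using the congestion bound gives
\[
\sum_{e\in F} X_e \;\le\; \sum_{e\in F}\sum_{j=1}^d |P_j^e\cap F| \;=\; \sum_{e'\in F}|\{(e,j):e'\in P_j^e\}|\;\le\;|F|\cdot \congestion.
\]
By Markov's inequality, the number of $e\in F$ with $X_e\ge d$ is at most $(\congestion/d)|F|$; for all other $e$ some $P_j^e$ lies entirely in $G\setminus F$, giving a $(G\setminus F)$-path of length $\le \dilation$. Collect these into $F_1$; then $|B_1|\le (\congestion/d)|F|$.

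\noindent\textbf{Inductive step.} Suppose $F_1,\ldots,F_{i-1}$ have been constructed, and consider $e\in B_{i-1}$. For each $j\in[d]$, let $Y_j^e=|P_j^e\cap B_{i-1}|$ count the edges of $P_j^e$ that are still \emph{unfixed}. The same double counting, now restricted to $B_{i-1}$, yields $\sum_{e\in B_{i-1}}\sum_{j=1}^d Y_j^e\le |B_{i-1}|\cdot \congestion$. By Markov, for at least a $(1-\congestion/d)$ fraction of $e\in B_{i-1}$ we have $\sum_j Y_j^e<d$; since the $Y_j^e$ are nonnegative integers and there are $d$ of them, by pigeonhole at least one is $0$. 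Fix such an index $j^\ast$: every $F$-edge $e'$ appearing on $P_{j^\ast}^e$ already lies in some $F_{i'}$ with $i'<i$, so by induction its endpoints are joined by a path of length $\le \dilation^{i-1}$ in $G\setminus F$. Splicing in these substitute paths edge-by-edge along $P_{j^\ast}^e$ (which itself has length $\le \dilation$) produces a walk in $G\setminus F$ between the endpoints of $e$ of length at most $\dilation\cdot \dilation^{i-1}=\dilation^i$. Add all such $e$ to $F_i$; the surviving set satisfies $|B_i|\le (\congestion/d)|B_{i-1}|\le (\congestion/d)^i|F|$.

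\noindent\textbf{Termination and conclusion.} Since $|F|\le n^2$, taking $k=\lceil 2\log n/\log(d/\congestion)\rceil$ forces $|B_k|<1$, i.e., $B_k=\emptyset$. Hence every $e\in F$ lies in some $F_i$ with $i\le k$, and its endpoints are at distance at most $\dilation^i\le \dilation^k$ in $G\setminus F$, proving the theorem. The main subtlety I anticipate is the pigeonhole step in the inductive part: it is crucial that the $Y_j^e$'s are integers and that there are exactly $d$ of them, so that a strict-inequality average forces at least one to be zero; writing the counting argument against the current bad set $B_{i-1}$ (rather than against all of $F$) is what makes the fraction shrink geometrically each round.
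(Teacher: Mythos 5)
Your proof is correct and follows essentially the same route as the paper's: both iteratively peel off a layer $F_i$ of the fault set via a Markov/averaging argument over the routing paths' intersections with the current unfixed set, then obtain a length-$\dilation^i$ detour for each peeled edge by splicing previously-constructed detours into a surviving short routing path. The only cosmetic differences are that you reuse the original $d$-routing restricted to $B_{i-1}$ while the paper re-invokes routability on the residual demand each round (both are valid and yield the same bounds), and you spell out the integer-pigeonhole step and the ceiling on $k$ a bit more carefully than the paper does.
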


\begin{proof}
Set $k=2\log_{d/\congestion} n$. 
We partition $F$ into at most $k$ subsets $F_1,\ldots, F_k$, such that for every $(u,v)\in F_i$ it holds that
$\dist_{G \setminus F'_i}(u,v)\leq \dilation$ where $F'_1=F$ and $F'_i=F \setminus \bigcup_{j\leq i-1} F_j$ for $i \in \{2,\ldots, k\}$. This is done in $k$ iterations. 

In iteration $i \in \{1,\ldots, k\}$, we define a routing instance where the routing pairs are the edges in $F'_i$ and with uniform demand $D(u,v)=d$ for every $(u,v)\in F'_i$ and $D(u,v)=0$ otherwise.  By solving this routing instance in $G$, we get the routing paths $\mathcal{P}_i=\{\mathcal{P}(u,v) ~\mid~ (u,v)\in F'_i\}$ where each $\mathcal{P}(u,v)$ is a multi-set of $d$ paths connecting $u$ and $v$ in $G$. The set $F_i$ is then defined by: 
$$F_i=\{ (u,v)\in F'_i ~\mid~ \exists P \in \mathcal{P}(u,v) \mbox{~such that~} P \cap F'_i=\emptyset\}.$$ 
That is, $F_i$ consists of all edges $(u,v) \in F'_i$ whose $\mathcal{P}(u,v)$ set contains at least one path that does not intersect $F'_i$.  We next claim that $F_1,\ldots, F_k$ is a partitioning of $F$. 

We show that $|F_i|\geq (1-\congestion/d)|F'_i|$ for every $i \in \{1,\ldots, k\}$, which establishes the claim. This is can be shown by a simple averaging argument: The number of paths in $\mathcal{P}_i$ that intersects $F'_i$ is at most $\congestion \cdot |F'_i|$. Hence $|F'_i \setminus F_i|\leq \congestion \cdot |F'_i|/ d$. Consequently, $F'_{k+1}=\emptyset$ and $\bigcup_{i=1}^k F_i=F$.

Finally, we show by induction on $i \in \{1,\ldots, k\}$ that $\dist_{G \setminus F}(u,v)\leq \dilation^i$. The base case follows by the definition of $F_1$, as $F'_1=F$. Assume that the claim holds up to $(i-1)$ and consider $i$. Fix $(u,v)\in F_i$ and consider some $u$-$v$ shortest path $P$ in $G \setminus F'_i$. By the definition of $F_i$, we have that $|P|\leq \dilation$. Since $P \cap F \subseteq \bigcup_{j\leq i-1} F_j$, we can replace each edge $(x,y)\in \bigcup_{j\leq i-1} F_j$ with a fault-free $x$-$y$ path $P_{x,y}\subseteq G \setminus F$. By induction assumption we have that $|P_{x,y}|\leq \dilation^{i-1}$. Altogether, we get that $\dist_{G \setminus F}(u,v)\leq |P|\cdot \dilation^{i-1}\leq \dilation^i$.
\end{proof}

\noindent Theorem \ref{thm:bounded-hop-expander-robust} then follows by Lemma \ref{lem:router-stretch-robustness} and Lemma \ref{thm:flow character}.

\subsection{Length-Constrained Expander Sparsification}\label{sec:bhop-sparsification}

Our goal in this section is to sparsify an expander $G$ into $H \subset G$ while maintaining the congestion and dilation parameters of routing instances. Our key sparsification result shows:

\begin{theorem}[Key Sparsification Lemma] \label{thm:expander-sparsification-unit-demand}
Consider an $n$-vertex graph $G=(V,E)$ with and a subset $V' \subseteq V$ satisfying that any demand $D$ supported on $V'$ with $load(D)\leq \faultdeg'$, for some $\faultdeg'=\Omega(\log n)$, can be routed with dilation $\dilation_G$ and congestion $\congestion_G$. Then, there is a randomized polynomial-time procedure $\BoundedHopSparsify$ that computes a subgraph $H \subseteq G$ which satisfies the following properties w.h.p.:
\begin{itemize}
\item $\min_{v \in V'}\deg_{H}(v)=\Omega(\faultdeg'/\congestion_G)$.
\item The diameter of $H$ is $O(\log_{\faultdeg'} n \cdot \dilation_G)$.
\item Any demand function $D$ supported on $V'$ such that $load(D)\leq \Theta(\faultdeg')$ is routable in $H$ with congestion $\widetilde{O}(\congestion_G)$ and dilation $O(\dilation_G \cdot \log_{\faultdeg'}n)$.

\item $|E(H)|\leq \faultdeg' \cdot |V'|\cdot \dilation_G$.
\end{itemize}
\end{theorem}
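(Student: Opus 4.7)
The plan is to closely follow the blueprint used in the proof of Theorem \ref{thm:expander-sparsification}: construct an auxiliary regular expander on $V'$, embed it into $G$ using the routing hypothesis, and let $H$ be the union of the embedding paths. Concretely, build a virtual $\Theta(\faultdeg')$-regular graph $\widehat{G}=(V', \widehat{E})$ with conductance $\Omega(1)$ (e.g., a random regular graph or an explicit Ramanujan graph on $|V'|$ vertices). Form the edge-demand $D_{\widehat{E}}$ that places unit demand on each edge of $\widehat{G}$; by regularity, $load(D_{\widehat{E}})\leq \faultdeg'$, so the hypothesis on $G$ guarantees that $D_{\widehat{E}}$ is routable in $G$ with dilation $\dilation_G$ and congestion $\congestion_G$. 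Applying Lemma \ref{lem:LC-MCF-Routing} realizes such a routing in polynomial time up to constant-factor loss in congestion, and we take $H$ to be the union of these routing paths.

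The size and min-degree bounds follow immediately: since $|\widehat{E}|=O(\faultdeg'|V'|)$ and each routing path has at most $\dilation_G$ edges, $|E(H)|\leq \faultdeg'\cdot |V'|\cdot \dilation_G$; and for each $v\in V'$, the $\faultdeg'$ virtual edges at $v$ start with $\faultdeg'$ path-prefixes at $v$, so by congestion $\congestion_G$ at least $\Omega(\faultdeg'/\congestion_G)$ distinct $G$-edges of $H$ are incident to $v$. For the diameter bound, use that a $\Theta(\faultdeg')$-regular $\Omega(1)$-expander has diameter $O(\log_{\faultdeg'} n)$; each virtual edge along a diameter-realizing path of $\widehat{G}$ maps to a path of length $\leq \dilation_G$ in $H$, so the diameter of $H$ is $O(\log_{\faultdeg'} n \cdot \dilation_G)$.

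The routing claim is the main content. Given a demand $D$ on $V'$ with $load(D)\leq \Theta(\faultdeg')$, first route $D$ inside $\widehat{G}$. Since $\widehat{G}$ is a $\Theta(\faultdeg')$-regular $\Omega(1)$-expander, scaling $D$ by $O(1/\faultdeg')$ produces a unit demand in $\widehat{G}$, which by Lemma \ref{lem:routing-implies-expansion} (and a Valiant-style two-phase routing through random intermediate vertices to exploit the small diameter of $\widehat{G}$) is routable with congestion $\widetilde{O}(1)$ and dilation $O(\log_{\faultdeg'} n)$. Then substitute each virtual edge along these $\widehat{G}$-routing paths by its embedding path in $H$. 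The resulting $H$-paths have length $O(\dilation_G \cdot \log_{\faultdeg'} n)$, and the congestion at any $G$-edge $e\in E(H)$ is bounded by the product of (congestion in $\widehat{G}$) and (congestion of the embedding at $e$), i.e., $\widetilde{O}(1)\cdot \congestion_G = \widetilde{O}(\congestion_G)$, as required.

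The main obstacle I anticipate is the tight dilation control in the virtual routing step: a generic routing in $\widehat{G}$ would yield dilation $O(\log n)$ rather than $O(\log_{\faultdeg'} n)$, which is the improvement the theorem claims. Exploiting the short $O(\log_{\faultdeg'} n)$ diameter of a regular expander while keeping congestion polylogarithmic requires the randomized two-phase decomposition, and care must be taken that all steps admit polynomial-time implementations (standard expander constructions for $\widehat{G}$, Lemma \ref{lem:LC-MCF-Routing} for the embedding, and Valiant routing inside $\widehat{G}$). Secondary care is needed to verify that the same $H$ supports all such demands $D$ of load $\Theta(\faultdeg')$ simultaneously (in the sense of Theorem \ref{thm:flow character})—but since the embedding of $\widehat{G}$ into $H$ is built once and used black-box, this follows directly from the routing property of $\widehat{G}$ as a $\Theta(\faultdeg')$-regular expander.
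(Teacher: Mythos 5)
Your approach matches the paper's almost exactly: embed a $\Theta(\faultdeg')$-regular constant-hop router $\widehat{G}$ on $V'$ into $G$ via the routing hypothesis, take $H$ to be the union of the embedding paths, and route arbitrary demands by composing routing in $\widehat{G}$ with the embedding. The only substantive difference is the choice of $\widehat{G}$: you propose a random regular or Ramanujan graph, whereas the paper uses $G(|V'|,p)$ with $p=\Theta(\faultdeg'/|V'|)$ and proves (Lemma \ref{lem:gnprouting}) that such a graph supports every unit demand with dilation $O(\log_{\faultdeg'}n)$ and congestion $O(\log n)$, via precisely the Valiant-style two-phase mix-and-unmix argument you sketch; this is the step you correctly identify as the main obstacle.

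One imprecision to tighten: a $\Theta(\faultdeg')$-regular graph with conductance $\Omega(1)$ does \emph{not} in general have diameter or mixing time $O(\log_{\faultdeg'}n)$ --- conductance alone only gives $O(\log n)$, which is not enough here. What you actually need (and what drives both the diameter bound and the $O(\log_{\faultdeg'}n)$-step mixing behind the two-phase routing) is the near-Ramanujan spectral bound $\lambda_2(\widehat{G}) = O(1/\sqrt{\faultdeg'})$. The families you name (random $\Theta(\faultdeg')$-regular graphs, explicit Ramanujan graphs) do satisfy this, as does $G(n,p)$ with $np=\faultdeg'=\omega(\log n)$, so the proof goes through once you state this stronger property in place of ``conductance $\Omega(1)$.'' Relatedly, Lemma \ref{lem:routing-implies-expansion} only gives dilation $O(\log n)$ for $\phi=\Omega(1)$ and cannot be cited for the $O(\log_{\faultdeg'}n)$ bound; you should rely on the Valiant-routing argument in full (as the paper does in Lemma \ref{lem:gnprouting}) rather than on that lemma.
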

In the following we describe Algorithm $\BoundedHopSparsify$ and prove Thm. \ref{thm:expander-sparsification-unit-demand}.

\paragraph{Algorithm $\BoundedHopSparsify$.} Algorithm $\BoundedHopSparsify$ that receives as input a graph $G=(V,E)$, a subset $V' \subseteq V$ and an integer parameter $\faultdeg'$. The algorithm computes $H \subseteq G$ that preserves the routing quality of $G$ on any demand $D$ supported on $V'$ with $load(D)\leq \faultdeg'$.

Our approach is based on embedding a ``good constant-hop expander" $\widehat{G}$ on the vertex set $V'$ in the graph $G$, and letting $H$ be the union of the paths that embeds $\widehat{G}$ in $G$. Here we take $\widehat{G}=(V',\widehat{E})$ to be the random graph $G(n,p)$ with $p=\Theta(f'/n)$ because random graphs with polynomially large degrees are excellent routers allowing any unit demand to be routed over constant length paths with low congestion. More precisely it is true that with high probability any unit-demand in $\widehat{G}$ is be routable with dilation $\dilation_{\widehat{G}}=O(\log_{\faultdeg'} n)$ and congestion $c_{\widehat{G}}= O(\log n)$. We prove this in \Cref{lem:gnprouting}. Also note by setting the constant hidden in the probability $p$ to be small enough, w.h.p., all degrees in $\widehat{G}$ are in $[\faultdeg'/a, \faultdeg']$ for some constant $a\geq 1$.

% Let $\widehat{G}=(V', \widehat{E})$ be an $O(1)$-expander where all vertex degrees are in $\Theta(\faultdeg')$ such that any unit-demand routing instance in $\widehat{G}$ is routable with dilation $\dilation_{\widehat{G}}=O(\log_{\faultdeg'} n)$ and congestion $c_{\widehat{G}}=\widetilde{O}(1)$. 
% An example for such $\widehat{G}$ is a $G(n,p)$ graph with $p=\Theta(\faultdeg'\log n/n)$. 
% The desired graph $\widehat{G}$ can be obtained by taking the $k=\Theta(\log \faultdeg')$ power graph of a regular constant-degree expander. \textbf{MP: Bernhard, am I right that the degrees are roughly $\Theta(\faultdeg)$?}

Next, define a routing instance in $G$ with the following demands values: 
 $D(u,v)=1$ for every $(u,v)\in \widehat{E}$ and $D(u,v)=0$, otherwise. Since $D$ is supported on $V'$ and $load(D)\leq \faultdeg'$, by the properties of $G$, the demand is routable in $G$ by a collection of paths 
$\mathcal{P}=\{P(x,y) ~\mid~ (x,y)\in \widehat{E}\}$ with dilation at most $\dilation_G$ and congestion $\congestion_G$. Moreover, these paths can be computed in polynomial time using Lemma \ref{lem:LC-MCF-Routing}.
The output subgraph $H \subseteq G$ is then given by $H=\bigcup_{P \in \mathcal{P}}E(P)$. 

\smallskip 
\noindent\textbf{Size and Diameter.}
The size bound is immediate as $|E(H)|\leq \faultdeg' \cdot |V'| \cdot \dilation_G$. We next consider the diameter bound. The diameter of $\widehat{G}$ is $O(\log_{\faultdeg'} n)$, and as each edge in $\widehat{G}$ translates into a $H$-path of length $\dilation_G$, we have that the diameter of $H$ is $O(\log_{\faultdeg'} n \cdot \dilation_G)$. 

\smallskip 
\noindent\textbf{Minimum Degree.} Since each $G$-edge appears on $\congestion_G$ paths in $\mathcal{P}$, and since each $u$ is the source vertex of $\Theta(\faultdeg')$ paths in $\mathcal{P}$, we have that $u$ is incident to $\Omega(\faultdeg'/\congestion_G)$ edges in $H$. Note that some vertices in $V(H)\setminus V'$ might have arbitrarily small or large degrees.

\smallskip 
\noindent\textbf{Routing.} Fix a demand $D$ supported on $V'$ with $load(D)\leq \faultdeg'/a$ and recall that the minimum degree of $\widehat{G}$ is at least $\faultdeg'/a$. Define $I_H=\{\langle u,v \rangle ~\mid~ D(u,v)\neq \emptyset\}$, and as $D$ is supported on $V'$, we have that all pairs $\langle u,v \rangle \in I_H$ are in $V' \times V'$. 

Since $D$ is a unit-demand instance in $\widehat{G}$, it can be routable in $\widehat{G}$ with dilation at most $\dilation_{\widehat{G}}=O(\log_{\faultdeg'} n)$ and congestion $\congestion_{\widehat{G}}=\widetilde{O}(1)$. 
Let $\mathcal{Q}=\{\mathcal{Q}(u,v) ~\mid~ \langle u,v \rangle \in I_H\}$ be the collection of output paths in $\widehat{G}$, where $\mathcal{Q}(u,v)$ is a multi-set of $D(u,v)$ paths connecting $u$ and $v$ in $\widehat{G}$. Since each edge $(x,y)$ in $\widehat{G}$ translates into a path $P(x,y)$ in $H$, the routing solution in $\widehat{G}$ can be translated into a routing solution in $H$, as follows. Letting $Q=[u=x_1,\ldots, x_k=v] \in \mathcal{Q}(u,v)$, then let
$$g(Q)=P(x_1,x_2) \circ \ldots \circ P(x_{k-1},x_k)~.$$
Then, define $\mathcal{P}'(u,v)=\{g(Q) ~\mid~ Q \in \mathcal{Q}(u,v)\}$ and $\mathcal{P}'=\bigcup_{\langle u,v \rangle \in I_H}\mathcal{P}'(u,v)$. We finally note that path collection $\mathcal{P}'$ has dilation $\dilation_G \cdot \dilation_{\widehat{G}}$ and congestion is at most $\congestion_{\widehat{G}}\cdot \congestion_G$. The lemma follows.

\subsection{Expander-Based Computation of FD-Spanners}\label{sec:FDspanners}

We are now ready to present Alg. $\FDSpanner$. The algorithm computes the FD spanner $H$ by taking the following steps. First, it computes a subgraph $G' \subseteq G$ with minimum degree $\faultdeg'=\widetilde{\Theta}(\faultdeg \cdot n^{1/t})$ using Lemma \ref{lem:min-deg-maker}. The edges in $G \setminus G'$ are added to $H$. Then, it applies the LC-expander decomposition of Theorem \ref{thm:lc-expander-decomp} with respect to expansion parameter $\phi=1/\log n$, $h=O(t), s=t^{20}$ and vertex weighting $W(u)=\faultdeg \cdot n^{1/(2t)}$ for every $u\in V$. The output of this expander decomposition is a set $C$ of cut edges such that $G''=G' \setminus C$ is a $(h,s)$-length $\phi$-expander w.r.t $W$. All cut edges $C$ are then added to $H$. Next, the algorithm applies the neighborhood-cover procedure of 
Lemma \ref{lem:neighborhood-cover} on $G''$ with radius parameter $r=O(t)$ and overlap of $\beta=O(t n^{1/r})$. 
Formally, let $(G_{1},\ldots, G_{\ell})=\NeighborCover(G'',r=1,\beta=3t)$. For every $j \in \{1,\ldots, \ell\}$, it applies the expander-sparsification procedure of Thm. \ref{thm:expander-sparsification-unit-demand} to compute a subgraph $H_j$ that preserves the routing quality of $G''$ with respect to pairs in $V[G_{j}]$. Each $H_j$ is then added to $H$. This completes the description, and the pseudocode appears below. 
% \textbf{MP: I modified the algorithm to avoid the recursion. After seeing the formal expander decomp lemma in perlim, I'll omit the parts $U_1,\ldots, U_k$ and treat them as one $(h,s)$-hop expander graph.}

\begin{mdframed}[hidealllines=false,backgroundcolor=gray!30]
\center \textbf{Algorithm $\FDSpanner$}
\begin{flushleft}
Input: A graph $G$, integer $\faultdeg$, a stretch parameter $t$. \\
Output: An $\faultdeg$-FD $t^{O(t)}$-spanner $H \subseteq G$ with $\widetilde{O}(\faultdeg \cdot n^{1+1/t})$ edges.
\end{flushleft}
\begin{itemize}
\item $G' \gets \MinDegree(G, \faultdeg')$ for $\faultdeg'=\widetilde{\Theta}(\faultdeg \cdot n^{1/t})$.
%\item $\phi \gets 1/(2t \cdot \log n \cdot n^{1/10t})$, $W(u)=\faultdeg'$ for every $u \in V$.
\item $\phi \gets 1/\log n$, $W(u)=\faultdeg \cdot n^{1/(2t)}$ for every $u \in V$.
\item $C=\BoundedHopExpDecomp(G', \phi, h=3t, s=t^{20},W)$.
\item $G''\gets G' \setminus C$ and $H_0\gets C$.
% \item For every $i \in \{1,\ldots, k\}$ do:
% \begin{enumerate}
\item $(G_{1},\ldots, G_{\ell})=\NeighborCover(G'',r=1,\beta=5t)$.
\item For every $j \in \{1,\ldots, \ell\}$ do:
%\item $V'_{i,j}=\{ v \in V(G_{i,j}) ~\mid~ N_{\beta/2,G'}(v)\subseteq V(G_{i,j})\}$.
%%\item \textbf{MP: $V'_{i,j}$ should be a set of vertices at pairwise distances (weighted clique) $O(t)$ and such that $G_{i,j}$ has few such subsets.}
\begin{itemize}
\item $H_{j} \gets \BoundedHopSparsify(G'',V(G_{j}),\faultdeg \cdot n^{1/(2t)})$. 
\end{itemize}
\item $H \gets (G \setminus G') \cup \bigcup_{j=0}^{\ell} H_{j}$.
\end{itemize}
\end{mdframed}

%\begin{mdframed}[hidealllines=false,backgroundcolor=gray!30]
%\center \textbf{Algorithm $\FDSpanner$}
%\begin{flushleft}
%Input: A graph $G$ with minimum degree $\widetilde{\Omega}(\faultdeg \cdot n^{1/t})$ and a stretch parameter $t$. \\
%Output: An $\faultdeg$-FD $O(t)^t$-spanner $H \subseteq G$ with $\widetilde{O}(\faultdeg \cdot n^{1+1/t})$ edges.
%\end{flushleft}
%\begin{itemize}
%\item Let $G' \gets G$ and $H \gets \emptyset$.
%\item While $|G'| \geq \widetilde{O}(\faultdeg \cdot n^{1+1/t})$ do:
%\begin{enumerate}
%\item $\phi \gets 1/(2t \cdot \log n \cdot n^{1/10t})$.
%\item $(U_1,\ldots, U_k)=\BoundedHopExpDecomp(G', \phi, h=3t, s=10t)$.
%\item $H_0=\bigcup_{i} E(U_i, V\setminus U_i)$.
%\item For every $i \in \{1,\ldots, k\}$ do:
%\begin{enumerate}
%\item $(G_{i,1},\ldots, G_{i,\ell_i})=\NeighborCover(G'[U_i],r=1,\beta=3t)$.
%\item For every $j \in \{1,\ldots, \ell_i\}$ do:
%%\item $V'_{i,j}=\{ v \in V(G_{i,j}) ~\mid~ N_{\beta/2,G'}(v)\subseteq V(G_{i,j})\}$.
%%%\item \textbf{MP: $V'_{i,j}$ should be a set of vertices at pairwise distances (weighted clique) $O(t)$ and such that $G_{i,j}$ has few such subsets.}
%\begin{itemize}
%\item $H_{i,j} \gets \BoundedHopSparsify(G'[U_i],V(G_{i,j}))$. 
%\end{itemize}
%\item $H_i=\bigcup_j H_{i,j}$.
%\end{enumerate}
%\item $G'' \gets \MinDegree(H_0,\faultdeg \cdot n^{1/t}\cdot \log^2 n)$.
%\item $H \gets H \cup  (G' \setminus G'') \cup \bigcup_{i=1}^k H_{i}$.
%\item $G' \gets G''$.
%\end{enumerate}
%\item $H \gets H \cup G'$.
%\end{itemize}
%\end{mdframed}

\smallskip \noindent\textbf{Size Analysis.} By Lemma \ref{thm:lc-expander-decomp}, we have that $|H_0|=\widetilde{O}(h \phi \cdot n \cdot \faultdeg \cdot n^{1/(2t)} \cdot n^{1/t^2})=\widetilde{O}(\faultdeg \cdot n^{1+1/t})$.

By Theorem \ref{thm:expander-sparsification-unit-demand}, $|H_{j}|=\widetilde{O}(\faultdeg \cdot n^{1/(2t)} (h\cdot s)\cdot |V(G_{j})|)$.
In addition, by Lemma \ref{lem:neighborhood-cover}, each vertex appears in $O(t \cdot n^{1/(5t)})$ subgraphs $G_i$. 
Therefore, as $t=O(\log n)$, we have that $\sum_{j} |H_{j}|=\widetilde{O}(\faultdeg \cdot n^{1+1/t})$. Finally, by Lemma \ref{lem:min-deg-maker}, we have that $|G\setminus G'|=\widetilde{O}(\faultdeg \cdot n^{1/t})$. %\textbf{MP: still need to check the dependency in $n^{1/t}$ more carefully in this section.}
%There are $1/\phi=\widetilde{O}(n^{1/(10t)})$ global iterations. We show that each iteration adds $\widetilde{O}(\faultdeg \cdot n^{1+1/(2t)})$ edges to $H$. By Lemma \ref{lem:neighborhood-cover}, each vertex appears in $O(t \cdot n^{1/(3t)})$ subgraphs $G_i$. By Theorem \ref{thm:expander-sparsification-unit-demand}, $|H_{i,j}|=\widetilde{O}(\faultdeg \cdot t^2 \cdot |V(G_{i,j})|)$, therefore $\sum_{i,j} |H_{i,j}|=\widetilde{O}(\faultdeg \cdot n^{1+1/(3t)})$. In addition, we have that the contribution due to procedure $\MinDegree$ is bounded by $\widetilde{O}(1/\phi \cdot \faultdeg \cdot n^{1/t})$. %\textbf{MP: still need to check the dependency in $n^{1/t}$ more carefully in this section.}
%

\smallskip 
\noindent \textbf{The Stretch Argument.} Let $F$ be an $\faultdeg$-degree faulty set. It is sufficient to show that for every $(u,v) \in E(G) \setminus F$, it holds that $\dist_{H \setminus F}(u,v)=t^{O(t)}$. Fix $(u,v) \in E(G) \setminus F$. The interseting case is clearly when $(u,v)\notin H$. This in particular implies that $(u,v)\in G''$, and thus by the properties of the neighborhood cover of Lemma \ref{lem:neighborhood-cover}, there exists $j$ such that $(u,v)\in G_j$.
Since $G''$ is a $\phi$-expander w.r.t $W$ and $W(u)=\faultdeg \cdot n^{1/(2t)}$ for every $u$, we have:

\begin{corollary}
\label{lem:obs-expander-stretch}
For every $j \in \{1,\ldots, \ell\}$, any demand $D$ supported on $V[G_j]$ with $load(D)\leq \faultdeg \cdot n^{1/(2t)}$ can be routed in $G''$ along paths of dilation at most $O(t^{21})$ and congestion $O(\log n/\phi)$.
\end{corollary}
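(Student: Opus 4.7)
The plan is to combine three facts that have already been set up: the diameter property of the neighborhood cover, the fact that $W(u) = \faultdeg \cdot n^{1/(2t)}$ uniformly, and the flow characterization of length-constrained expanders (Theorem~\ref{thm:flow character}). The core observation is that any demand $D$ supported on $V(G_j)$ is automatically (i) short-hop and (ii) $W$-respecting, so the $(h,s)$-length expansion of $G''$ directly yields the desired routing.

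First, I would invoke Lemma~\ref{lem:neighborhood-cover} with the parameters $r=1$, $\beta=5t$ used in the algorithm. This gives that the diameter of each subgraph $G_j$ is $O(\beta r) = O(t)$. After arranging the hidden constant (or equivalently, just noting that $h=3t$ was chosen large enough to dominate this $O(t)$ term), it follows that for every pair $u,v \in V(G_j)$ we have $\dist_{G''}(u,v) \le h$. Consequently, any demand $D$ supported on $V(G_j)$ is an $h$-hop demand in $G''$.

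Next, I would verify the load condition. Since $W(u) = \faultdeg \cdot n^{1/(2t)}$ for every $u$ and the hypothesis is $load(D) \le \faultdeg \cdot n^{1/(2t)}$, we have $load(v,D) \le W(v)$ for every $v$. Thus $D$ is a $W$-respecting (i.e.\ $W$-unit) demand. Combined with the previous step, $D$ is therefore an $h$-hop $W$-unit demand in $G''$.

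Finally, recall that by construction $G''$ is an $(h,s)$-length $\widetilde{\Omega}(\phi)$-expander for $W$ with $h = 3t$ and $s = t^{20}$. Applying part~1 of Theorem~\ref{thm:flow character} to the $h$-hop $W$-unit demand $D$ yields a routing with dilation at most $s \cdot h = 3t \cdot t^{20} = O(t^{21})$ and congestion $O(\log n/\phi)$, which is exactly the stated conclusion. The argument is essentially bookkeeping: the only step that needs mild care is verifying that the constants in the diameter bound of the neighborhood cover fit within $h = 3t$, and this can be done either by absorbing the constant into the $O(t)$ specification of $h$ or by choosing $\beta$ appropriately small.
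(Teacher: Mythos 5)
Your proof is correct and matches the approach the paper implicitly takes; in fact it is more careful than the paper's own treatment, which simply asserts the corollary "since $G''$ is a $\phi$-expander w.r.t.\ $W$" without spelling out why demands supported on $V(G_j)$ are $h$-length. Your argument correctly identifies the three ingredients needed: (i) the diameter of each $G_j$ from Lemma~\ref{lem:neighborhood-cover} is $O(\beta r) = O(t)$, so every pair in $V(G_j)$ lies at $G''$-distance $O(t) \le h$, making $D$ an $h$-length demand; (ii) since $W$ is uniform equal to $\faultdeg \cdot n^{1/(2t)}$, the hypothesis $load(D) \le \faultdeg \cdot n^{1/(2t)}$ means $D$ is $W$-respecting; and (iii) part~1 of Theorem~\ref{thm:flow character} then yields the routing with dilation $h s = 3t \cdot t^{20} = O(t^{21})$ and congestion $O(\log n / \phi)$.

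Two small caveats, both of which are already present in the paper itself and are not your fault: first, Theorem~\ref{thm:lc-expander-decomp} only guarantees $G''$ is an $(h,s)$-length $\widetilde{\Omega}(\phi)$-expander rather than a $\phi$-expander, so strictly speaking the congestion should be stated as $\widetilde{O}(1/\phi)$ rather than $O(\log n/\phi)$; the paper glosses over this polylog slack. Second, there is a discrepancy in the paper between $\beta = 3t$ (running text) and $\beta = 5t$ (pseudocode), and in either case the implicit constant in the $O(\beta r)$ diameter bound must fit inside $h = 3t$; you correctly flag this as a constants-chasing issue to be resolved by adjusting $h$ or $\beta$ by a constant factor. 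Neither issue changes the substance of the argument.
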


%\begin{observation}\label{lem:obs-expander-stretch}
% The minimum degree of $G''=G'[U_i]$ is $\Omega(\faultdeg' \cdot \phi)$. In addition, fix $V'=V[G_{i,j}]$. Then any instance $I \subseteq V' \times V'$ with demands $D: V \times V \to \mathbb{R}_{\geq 0}$ and $load(D)\leq \faultdeg'$ can be routed in $G''$ along paths of dilation at most $O(t^2)$ and congestion $O(\log n/\phi)$.
% \end{observation}

Our goal is to show that $\dist_{H\setminus F}(x,y)\leq t^{O(t)}$ for every $(x,y) \in F$. To prove the claim, we show that (i) $H_j$ is a good router for the edges in $F_j=F \cap G_j$ for every $j \in \{1,\ldots, \ell\}$, and consequently that (ii) $H$ is a good router for $F$.

%Consider the iteration in which $(x,y)$ belongs to some cluster $G[U_i]$. Assume the interesting case where $(x,y)\notin H_{i,j}$. By Lemma \ref{thm:expander-sparsification-unit-demand}, the minimum degree of $H_{i,j}$ is $\widetilde{\Omega}(\faultdeg \cdot n^{1/(2t)})$. By Lemma \ref{lem:FT-robust-stretch}, it then holds that $\dist_{H_{i,j}\setminus F}(u,v)=O(t^2)$. \textbf{MP: still need to recheck the stretch bound and provide a more detailed proof}
\begin{corollary}\label{obs:good-routing-F}
$F_j=F \cap G_j$ is $d$-routable in $H_j$ with dilation $O(t^{22})$ and congestion $\congestion=O(\log n/\phi)$ for $d=n^{1/(3t)}\cdot \congestion$. 
\end{corollary}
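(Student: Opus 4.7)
The plan is to instantiate Theorem \ref{thm:expander-sparsification-unit-demand} on the subgraph $H_j$, using the routing guarantees of Corollary \ref{lem:obs-expander-stretch} as the hypothesis. First I would verify that $H_j$ was produced by $\BoundedHopSparsify$ on the input $(G'', V(G_j), \faultdeg \cdot n^{1/(2t)})$, and recall that this input is valid: Corollary \ref{lem:obs-expander-stretch} states that any demand on $V(G_j)$ with load at most $\faultdeg \cdot n^{1/(2t)}$ is routable in $G''$ with dilation $\dilation_{G''}=O(t^{21})$ and congestion $\congestion_{G''}=O(\log n/\phi)$. Hence Theorem \ref{thm:expander-sparsification-unit-demand} guarantees that every demand supported on $V(G_j)$ with load $\Theta(\faultdeg \cdot n^{1/(2t)})$ is routable in $H_j$ with congestion $\widetilde{O}(\log n/\phi)$ and dilation $O(\dilation_{G''} \cdot \log_{\faultdeg \cdot n^{1/(2t)}} n) = O(t^{21} \cdot t) = O(t^{22})$, using the bound $\log_{n^{1/(2t)}} n = 2t$.

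Next I would exhibit the specific demand that realizes the $d$-routing of $F_j$. Since $F_j \subseteq E(G_j)$, every edge in $F_j$ has both endpoints in $V(G_j)$, so the demand
\[
D(u,v) \;=\; \begin{cases} d & \text{if } (u,v) \in F_j \\ 0 & \text{otherwise} \end{cases}
\]
is supported on $V(G_j)$. Because $F$ is a $\faultdeg$-degree edge set, each vertex is incident to at most $\faultdeg$ edges of $F_j$, so $\mathrm{load}(D) \leq \faultdeg \cdot d$.

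Then the key numerical check is that $\faultdeg \cdot d \leq \Theta(\faultdeg \cdot n^{1/(2t)})$, i.e.\ $d \leq \Theta(n^{1/(2t)})$. Plugging in $d = n^{1/(3t)} \cdot \congestion$ with $\congestion = O(\log n/\phi) = \text{polylog}(n)$, the inequality becomes $n^{1/(3t)} \cdot \text{polylog}(n) \leq \Theta(n^{1/(2t)})$, which holds because the gap $n^{1/(2t)-1/(3t)} = n^{1/(6t)}$ dominates the polylogarithmic factor for the range of $t$ under consideration. Hence the demand satisfies the load condition of Theorem \ref{thm:expander-sparsification-unit-demand}, and applying that theorem produces a routing of $D$ in $H_j$ with dilation $O(t^{22})$ and congestion $\widetilde{O}(\log n/\phi) = \widetilde{O}(\congestion)$. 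This is exactly a $d$-routing of $F_j$ with the claimed parameters.

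The main obstacle, if any, is the numerical slack in the load check: one must make sure that the parameter choices $\faultdeg' = \widetilde{\Theta}(\faultdeg \cdot n^{1/t})$, weighting $W(u) = \faultdeg \cdot n^{1/(2t)}$, and target flow per edge $d = n^{1/(3t)} \cdot \congestion$ are consistent so that the polylogarithmic factor in $\congestion$ does not swallow the polynomial gap $n^{1/(6t)}$. Once this is verified (which requires $t$ to be small enough, consistent with the other places in the paper where $t$ is assumed at most $O(\log n / \log\log n)$), the rest is a direct chaining of the already-proven lemmas with no new ideas required.
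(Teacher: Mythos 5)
Your proof is correct and takes essentially the same route as the paper's: you apply Corollary~\ref{lem:obs-expander-stretch} and Theorem~\ref{thm:expander-sparsification-unit-demand} to deduce routability of bounded-load demands supported on $V(G_j)$ in $H_j$, exhibit the demand $D(u,v)=d$ for $(u,v)\in F_j$, and check $\mathrm{load}(D)\le \faultdeg\cdot d \le \Theta(\faultdeg\cdot n^{1/(2t)})$. You are in fact slightly more careful than the paper's one-line inequality ``$\mathrm{load}(D)\le d\cdot \faultdeg\le k$'': you spell out that the gap $n^{1/(2t)-1/(3t)}=n^{1/(6t)}$ must absorb the $\mathrm{polylog}(n)$ factor hidden in $\congestion$, which requires $t$ to be small enough (consistent with the paper's assumption $t=O(\log n)$ used elsewhere in the size analysis). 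The dilation bookkeeping $O(\dilation_{G''}\cdot\log_{\faultdeg\cdot n^{1/(2t)}} n)=O(t^{21}\cdot t)=O(t^{22})$ also matches.
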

\begin{proof}
% By Obs. \ref{lem:obs-expander-stretch}, the minimum degree of $G''$ is $\faultdeg''=\Omega(\faultdeg' \cdot \phi)$. 
By Corollary \ref{lem:obs-expander-stretch} and Theorem \ref{thm:expander-sparsification-unit-demand}, we have that any demand $D$ supported on $V(G_j)$ of load at most $k=\Theta(\faultdeg \cdot n^{1/(2t)})$ is routable in $H_j$ with congestion of $\widetilde{O}(1/\phi)$ and dilation $O(t^{22})$. In our setting, the demands are $D(u,v)=d$ for every $(u,v) \in F_j$ and $D(u,v)=0$, otherwise. Hence, $D$ is supported on $V(G_j)$ and $load(D)\leq d \cdot \faultdeg \leq k$. Therefore, $D$ is routable in $H_{j}$ with congestion $\widetilde{O}(1/\phi)$ and dilation $O(t^{22})$. 
\end{proof}

\begin{observation}\label{obs:good-routing-F2}
$F$ is $d'$-routable in $H$ with dilation $O(t^{22})$ and congestion $\congestion'=\widetilde{O}(n^{1/(5t)})$ for $d'=n^{1/(20t)}\cdot \congestion'$. 
\end{observation}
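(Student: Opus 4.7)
The plan is to combine the per-cluster routings provided by Corollary~\ref{obs:good-routing-F} into a single routing for $F$ in $H$, using the neighborhood cover to assign each faulty edge to one cluster where it can be routed.

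\textbf{Step 1 (Assignment via neighborhood cover).} For each $(x,y)\in F$, note that $(x,y)$ is in particular an edge of $G''$, so $y\in N_{1,G''}(x)$. By the neighborhood cover property (Lemma~\ref{lem:neighborhood-cover}) applied with $r=1$, there exists an index $j(x,y)$ such that $N_{1,G''}(x)\subseteq V(G_{j(x,y)})$; in particular $\{x,y\}\subseteq V(G_{j(x,y)})$ and hence $(x,y)\in F_{j(x,y)}$. Fix any such assignment (say, the smallest valid $j$) and let $F'_j=\{(x,y)\in F:j(x,y)=j\}\subseteq F_j$; then $\{F'_j\}_j$ partitions $F$.

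\textbf{Step 2 (Per-cluster routing).} By Corollary~\ref{obs:good-routing-F}, each $F_j$ is $d$-routable in $H_j$ with dilation $O(t^{22})$ and congestion $\congestion=O(\log n/\phi)$. Restricting the resulting path collection to the sub-demand $F'_j\subseteq F_j$ yields a $d$-routing of $F'_j$ in $H_j$ with dilation $O(t^{22})$ and congestion at most $\congestion$ on every edge of $H_j$. Since $d\ge d'$, we may thin each pair's paths from $d$ to $d'$ copies without increasing dilation or congestion.

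\textbf{Step 3 (Combining).} Take the union of these per-cluster routings over all $j$. Because the $F'_j$ partition $F$ and each $H_j\subseteq H$, the result is a $d'$-routing of $F$ in $H$ with dilation $O(t^{22})$. For any edge $e\in H$, its congestion in the combined routing is bounded by
\[
\mathrm{cong}(e)\;\le\;\sum_{j:\,e\in H_j}\congestion \;=\;\congestion\cdot\bigl|\{j:e\in H_j\}\bigr|.
\]

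\textbf{Step 4 (The main obstacle: bounding $|\{j:e\in H_j\}|$).} It remains to show that $|\{j:e\in H_j\}|\le \widetilde{O}(n^{1/(5t)})$, which matches the $\beta n^{1/\beta}$ overlap of the neighborhood cover for $\beta=5t$. The most natural route is to argue that whenever $e\in H_j$, the endpoints of $e$ must lie in (or very close to) $V(G_j)$: then $|\{j:e\in H_j\}|$ is dominated by the number of subgraphs $G_j$ containing a fixed vertex, which Lemma~\ref{lem:neighborhood-cover} caps at $\beta\cdot n^{1/\beta}=O(t\,n^{1/(5t)})$. I expect this step to be the main difficulty, because $\BoundedHopSparsify$ constructs $H_j$ as the union of embedding paths of $G''$-length up to $\dilation_G=O(t^{21})$, and such paths can in principle use edges whose endpoints fall outside $V(G_j)$; so one must either tweak the sparsification so that $H_j$'s edges stay incident to $V(G_j)$, or replace the crude $\beta n^{1/\beta}$ bound by a slightly-amortized version that still pays only a $\text{polylog}(n)$ overhead.

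Granting this bound, Step 3 yields $\mathrm{cong}(e)\le O(\log n/\phi)\cdot\widetilde{O}(n^{1/(5t)}) = \widetilde{O}(n^{1/(5t)}) = \congestion'$, as required. Since $d=n^{1/(3t)}\congestion \ge n^{1/(20t)}\congestion'=d'$, the resulting routing of $F$ in $H$ satisfies all three properties claimed in the observation.
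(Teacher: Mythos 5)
Your Steps 1--3 reproduce the paper's argument almost exactly: the paper's entire proof is two sentences that (i) invoke Corollary~\ref{obs:good-routing-F} to route each $F_j$ in $H_j$, (ii) note $\bigcup_j F_j = F$, and (iii) declare ``therefore $F$ is $d$-routable in $H$ with congestion $\congestion' = \congestion \cdot q$'' where $q = O(t n^{1/(5t)})$ is the neighborhood-cover overlap. Your Step~1 (explicitly assigning each $(x,y)\in F$ to a single cluster via the $r=1$ cover property, so the $F'_j$ partition $F$) is a clean refinement the paper leaves implicit, and Steps~2--3 are the union argument the paper compresses into ``therefore.''

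The valuable part of your proposal is Step~4, and you are right that it is not a formality. The paper passes directly from ``each vertex lies in at most $q$ of the $V(G_j)$'' to ``the combined congestion is $\congestion \cdot q$,'' which requires $\bigl|\{j : e \in E(H_j)\}\bigr| \le q$ for every edge $e$ of $H$. That does \emph{not} follow from the stated vertex-overlap bound: $H_j = \BoundedHopSparsify(G'', V(G_j), \cdot)$ is the union of embedding paths in the \emph{full} graph $G''$, of length up to $\dilation_G = O(hs) = O(t^{21})$, and nothing in Theorem~\ref{thm:expander-sparsification-unit-demand} as stated keeps those paths incident to $V(G_j)$. An edge $e$ whose endpoints lie in few of the $V(G_j)$'s could still appear in many $H_j$'s because it is reached from $V(G_j)$ by a long embedding path. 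Your two proposed fixes --- constraining $\BoundedHopSparsify$ so that the routing stays inside $G_j$ (which would require the routing premise of Theorem~\ref{thm:expander-sparsification-unit-demand} to hold with $G_j$ in place of $G''$), or replacing the crisp $q$ bound by an amortized bound derived from $\sum_j |E(H_j)| = \widetilde{O}(\faultdeg n^{1+1/t})$ --- are exactly the right kind of repair. As it stands your proof is incomplete on this point, but so is the paper's one-line ``therefore,'' and identifying this is a correct and useful observation rather than a misunderstanding on your part.
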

\begin{proof}
By Corollary \ref{obs:good-routing-F}, each $F_j$ is $d$-routable in $H_j$ with dilation $O(t^{22})$ and congestion $\congestion=O(\log n/\phi)$ for $d=n^{1/(3t)}\cdot \congestion$. 

By Lemma \ref{lem:neighborhood-cover}, each vertex appears in at most $q=O(t n^{1/(5t)})$ sets of $G_{1},\ldots, G_{\ell}$, and that $\bigcup_j F_{j}=F$. We therefore get that $F$ is $d$-routable in $H$ with with congestion of 
$\congestion'=\congestion \cdot q$ and dilation $O(t^{22})$. The claim follows as $d'\leq d$.
\end{proof}

By combining Cor. \ref{obs:good-routing-F2} with Lemma \ref{lem:router-stretch-robustness}, we get that $\dist_{H\setminus F}(x,y)=t^{O(t)}$. Since $\dist_{H}(u,v)\leq \dist_{H_{j}}(u,v)=O(t)$, it holds that $\dist_{H \setminus F}(u,v)=t^{O(t)}$.

\section{Improved Existential Upper Bounds for FD Spanners}\label{sec:exist-FD-spanner}

In this section, we provide near-tight bounds for FD spanners for any fixed stretch value. In contrast to the previous constructions, the approach provided in this section is structural and does not use expanders. 
\begin{theorem} \label{thm:extremalfdspan}
For all positive integers $n, k, \faultdeg$, every $n$-node undirected weighted graph has a $\faultdeg$-FD $(2k-1)$-spanner on at most $\faultdeg^{1 - 1/k} n^{1+1/k} \cdot O(k)^{k}$ edges.
\end{theorem}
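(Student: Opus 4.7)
The plan is to analyze the natural FD adaptation of the greedy spanner algorithm of \cite{AlthoferDDJS:93}, in the same spirit as the EFT greedy analyses of \cite{BDPW18, BodwinDR22}. Initialize $H=\emptyset$; scan the edges of $G$ in nondecreasing weight order; admit $e=(u,v)$ to $H$ iff there exists an edge set $F_e$ with $\deg(F_e)\le\faultdeg$, consisting of edges currently in $H$, such that $\dist_{H\setminus F_e}(u,v)>(2k-1)\cdot w(u,v)$. Correctness --- that the output is a $\faultdeg$-FD $(2k-1)$-spanner --- is standard: by the triangle inequality inside $G\setminus F$ it suffices to verify the spanner bound on each surviving edge, and for every rejected edge the greedy test (applied with witness $F\cap H$ at the moment of testing, promoted to general $F$ by induction on edge weight so that later insertions can only shorten distances) supplies the required short path in $H\setminus F$.

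For the size bound, for each admitted $e$ I would record its witness $F_e$ and show that the pairs $\{(e,F_e)\}$ form a $\faultdeg$-FD blocking set on $H$ under the greedy ordering. The only nontrivial condition is that for every short cycle $C\subseteq H$, letting $e=(u,v)$ be the latest edge of $C$, the set $F_e$ must hit $C$. Indeed, if $F_e\cap C=\emptyset$ then the remaining edges of $C$ form a $u$-$v$ path in $H\setminus F_e$ whose weights are all at most $w(e)$ (since $e$ is latest), yielding a total weight strictly less than $(2k-1)w(e)$, which contradicts the fact that $F_e$ witnessed the admission of $e$. So it suffices to bound the edge count of any $n$-vertex graph $H$ admitting a $\faultdeg$-FD blocking set.

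This remaining extremal step I would prove by mimicking the Moore bound via a path-counting argument, but over a carefully restricted class of edge-simple $k$-paths --- the MUCk paths. First, a \emph{dispersion lemma}: no two distinct MUCk paths share both endpoints, because if they did, their union would contain a short cycle whose latest edge $e$ would by the blocking-set property have $F_e$ intersect the cycle, and the defining restrictions on MUCk paths (controlling how each path is allowed to cross an edge of a later witness $F_{\cdot}$) would force one of the two paths to violate the MUCk conditions. Second, a \emph{counting lemma}: after pruning vertices whose degree falls below a threshold $\tau$, every surviving vertex should originate roughly $\tau\cdot(\tau/\faultdeg)^{k-1}$ MUCk paths, because each of the $k$ edge-extensions sees $\tau$ candidate neighbors but only a set of size controlled by $\faultdeg$ of them can be forbidden by the MUCk restriction at that step. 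Comparing the endpoint budget $\binom{n}{2}$ against this count, optimizing $\tau\approx\faultdeg^{1-1/k}n^{1/k}$, and adding back the edges lost to pruning delivers $|E(H)|\le\faultdeg^{1-1/k}n^{1+1/k}\cdot O(k)^k$. The main obstacle, and the place where the FD setting differs essentially from the EFT setting, lies in designing the MUCk restriction so that the dispersion lemma remains intact while the counting lemma only loses a $\faultdeg$-factor (and not a $\faultdeg n$-factor) per step; pinning down precisely which properties the acronym must encode, and verifying that their combined cost is only the claimed $\faultdeg^{-1/k}\cdot O(k)^k$, is where I expect the bulk of the technical work.
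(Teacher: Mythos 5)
Your high-level roadmap matches the paper exactly: run the FD-greedy algorithm, extract a $\faultdeg$-FD $2k$-blocking set, reduce to a Moore-type extremal bound on graphs with such a blocking set, and count MUCk paths via a dispersion lemma and a counting lemma. That much is correct and closely mirrors the paper.

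The gap lies in how you split the $\faultdeg^{k-1}$ factor between the two lemmas, and in particular your stated dispersion lemma is too strong to be provable. You claim no two MUCk paths can share both endpoints. But take two MUCk paths $\pi,\pi'$ from $s$ to $t$, let $C$ be a short cycle in their union, and let $e$ be its latest edge, say $e\in\pi$. The blocking set guarantees some $f\in F_e\cap C$. If $f\in\pi$ then $\pi$ is blocked and you win, but if $f\in\pi'$ then neither path directly violates the ``unblocked'' property ($\pi'$ does not contain $e$, and $\pi$ does not contain $f$). So uniqueness per endpoint pair fails; this is exactly why the paper introduces the ``chain-unblocked'' property with a tunable parameter $C$, and even with that property the dispersion lemma only gives $O(C\faultdeg k)^{k-1}$ MUCk paths per endpoint pair, not $1$.

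Relatedly, your counting claim of $\tau\cdot(\tau/\faultdeg)^{k-1}$ MUCk paths per vertex is not what your own rationale supports. If each of the $k$ extensions has $\tau$ candidates of which ``only a set of size controlled by $\faultdeg$'' is forbidden, then each step has $\Theta(\tau)$ surviving choices (assuming $\tau\gg\faultdeg$), giving $\Theta(\tau)^k$ paths, not $\tau^k/\faultdeg^{k-1}$. That is indeed what the paper proves: the Full Monotone Counting Lemma gives $\approx kn(d/k)^k$ monotone $k$-paths, and the blocked and chain-blocked paths are subtracted and shown to be a lower-order term after tuning $C\approx\Omega(k)^k$ and imposing $d\gtrsim k^{\Theta(k)}\faultdeg$. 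The $\faultdeg^{k-1}$ then enters solely through the dispersion upper bound, and the final arithmetic $kn(d/k)^k\lesssim n^2\,O(C\faultdeg k)^{k-1}$ yields the claimed density. Your overall edge bound comes out the same only because you shifted the $\faultdeg^{k-1}$ from one inequality to the other; as stated, neither of your two lemmas is provable in the form given, and you flag this yourself in the final sentence --- but it is the crux of the proof, not a routine verification.

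One further small discrepancy: the paper does not use a hard degree threshold $\tau$ and then ``add back the pruned edges''; it first proves a degree-cleaning lemma reducing to a graph in which max and average degree coincide up to constants (by deleting low-degree vertices and splitting high-degree ones, both of which preserve the blocking set), and then applies a probabilistic counting argument (random edge subsampling) to obtain the Full Monotone Counting Lemma directly in terms of the average degree. This is cosmetically different from your plan but not a gap.
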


By the lower bound in the following section, this bound is optimal for constant $k$, but may be suboptimal in its $O(k)^{k}$ dependence.
We refer back to Section \ref{sec:exist-stretch} for a technical overview of the proof strategy used in this section, although some of it will be repeated here.

\subsection{Proof Setup \label{sec:exspannersetup}}

We construct our spanners using the following greedy algorithm, which is the natural adaptation to our setting of the standard greedy algorithm for non-faulty spanners \cite{AlthoferDDJS:93}, or more specifically, the greedy algorithm used in previous work on fault tolerant spanners \cite{BDPW18, BP19, BodwinDR22}.

\begin{mdframed}[hidealllines=false,backgroundcolor=gray!30]
\center \textbf{Algorithm $\GreedyFDSpanner$}
\begin{flushleft}
Input: A graph $G$ and a positive integer stretch parameter $k$. \\
Output: An $\faultdeg$-FD $(2k-1)$-spanner $H \subseteq G$ with at most $\faultdeg^{1-1/k} \cdot n^{1+1/k} \cdot O(k)^k$ edges.
\end{flushleft}
\begin{itemize}
\item Let $H \gets (V, \emptyset, w)$ be the initially-empty spanner
\item For each edge $(u, v) \in E$ in order of nondecreasing $w(u, v)$

\begin{itemize}
\item If there exists an edge set $F_{(u, v)}$ of max degree $\faultdeg$ such that $\dist_{H \setminus F}(u, v) > (2k-1) \cdot w(u, v)$
\begin{itemize}
\item Add $(u, v)$ to $H$
\end{itemize}
\end{itemize}

\item Return $H$
\end{itemize}
\end{mdframed}

We note that a naive implementation of this greedy algorithm runs in exponential time, since we would need to search over all possible choices of fault set $F_{(u, v)}$ in each round.
We are not currently aware of a speedup in this step.
The proof of correctness, that $H$ is indeed a $\faultdeg$-FD $(2k-1)$-spanner of the input graph $G$, is entirely standard.
We give this next.

\begin{lemma} [Correctness of the FD-Greedy Algorithm] \label{lem:greedyspannercorrect}
The output graph $H$ from Algorithm $\GreedyFDSpanner$ is a $\faultdeg$-FD $2k-1$ spanner of the input graph $G$.
\end{lemma}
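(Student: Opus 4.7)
The plan is to mimic the classical correctness proof for the greedy spanner algorithm of \cite{AlthoferDDJS:93}, adapted to the FD fault model. The decision rule of the algorithm essentially bakes in the spanner inequality for each edge of $G$ at the moment it is processed; the only work is to confirm that this edgewise guarantee survives the subsequent addition of edges and then to lift it to arbitrary vertex pairs.

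First, I would fix an arbitrary fault set $F$ with $\deg_F(x) \le \faultdeg$ for all $x$ and reduce to the case of a single edge $(u,v) \in E(G) \setminus F$. If $(u,v)$ was added to $H$, then $(u,v) \in H \setminus F$ so $\dist_{H\setminus F}(u,v) \le w(u,v) \le (2k-1)w(u,v)$ trivially. Otherwise $(u,v)$ was rejected, meaning that at the moment of processing, the partial spanner $H^{\star}$ already satisfied $\dist_{H^{\star}\setminus F'}(u,v) \le (2k-1)w(u,v)$ for \emph{every} candidate fault set $F'$ of degree $\le \faultdeg$. Instantiating this with $F' = F$ and using the monotonicity observation $H^{\star} \subseteq H$ (the algorithm only ever adds edges), I get $\dist_{H\setminus F}(u,v) \le \dist_{H^{\star}\setminus F}(u,v) \le (2k-1)w(u,v)$.

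To finish, I would lift the edgewise bound to arbitrary $u,v$. Take a shortest $u$-$v$ path $P = (u = x_0, x_1, \ldots, x_\ell = v)$ in $G \setminus F$. Each edge $(x_{i-1}, x_i) \in E(G) \setminus F$, so the previous case gives $\dist_{H\setminus F}(x_{i-1}, x_i) \le (2k-1)w(x_{i-1}, x_i)$. Concatenating these short paths and applying the triangle inequality in $H \setminus F$ yields
\[
\dist_{H\setminus F}(u,v) \;\le\; \sum_{i=1}^{\ell} \dist_{H\setminus F}(x_{i-1}, x_i) \;\le\; (2k-1)\sum_{i=1}^{\ell} w(x_{i-1}, x_i) \;=\; (2k-1)\dist_{G\setminus F}(u,v).
\]

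I expect no real obstacle here; the argument is routine and parallels the EFT-greedy correctness proofs of \cite{BDPW18, BodwinDR22}. The only subtlety worth flagging explicitly is the monotonicity step $H^{\star} \subseteq H$, which is what allows the greedy's ``local-in-time'' rejection criterion to imply a guarantee against the \emph{final} spanner $H$. All of the genuine difficulty in Theorem \ref{thm:extremalfdspan} lies in the size bound via the FD blocking set and path-counting machinery, not in this correctness lemma.
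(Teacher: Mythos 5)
Your proof is correct and follows essentially the same route as the paper: reduce to a single edge $(u,v) \in E(G)\setminus F$, split on whether the greedy algorithm kept or rejected $(u,v)$, and use monotonicity of distances under edge addition to carry the at-the-moment-of-processing guarantee to the final spanner $H$. The only cosmetic difference is that you write out the standard ``it suffices to check edges'' reduction (concatenating the per-edge bounds along a shortest path in $G \setminus F$), which the paper handles by citing \cite{AlthoferDDJS:93}.
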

\begin{proof}
Let $F$ be any set of faulty edges of degree $\faultdeg$, and so our goal is to show that $H \setminus F$ is a $(2k-1)$-spanner of $G \setminus F$.
As is well known (e.g.\ \cite{AlthoferDDJS:93}), it suffices to prove that for each edge $(u, v) \in E(G \setminus F)$, we have
$$\dist_{H \setminus F}(u, v) \le (2k-1) \cdot w(u, v).$$
Indeed, this inequality follows straightforwardly from the greedy algorithm.
If the greedy algorithm adds $(u, v)$ to $H$, then we have $\dist_{H \setminus F}(u, v) \le w(u, v)$.
If not, then by construction we have $\dist_{H \setminus F}(u, v) \le (2k-1) \cdot w(u, v)$ at the moment $(u, v)$ is considered by the greedy algorithm, and any further edges added to $H$ after this point can only decrease $\dist_{H \setminus F}(u, v)$.
\end{proof}

The goal of the rest of the proof is to control the number of edges $|E(H)|$ in the output spanner.
This is the more involved part of the proof.
Our bound on $|E(H)|$ will use the blocking set method, as in \cite{BP19, BodwinDR22, BDN23}.
The relevant kind of blocking set for this problem will be the following:
\begin{definition} [FD Blocking Sets]
Let $H = (V, E)$ be a graph equipped with a total ordering of its edge set $E$.
A $\faultdeg$-fault-degree (FD) $k$-blocking set $\bee$ for $H$ is a set of pairs of the form $(e, F_e)$, such that:
\begin{itemize}
\item Each edge $e$ is the first edge of exactly one pair $(e, F_e)$, each $F_e$ is a set of edges from $E(H)$ of maximum degree $\faultdeg$, and each edge in $F_e$ strictly precedes $e$ in the edge-ordering of $H$.
\item For each cycle $C$ in $H$ on $|C| \le k$ edges, letting $e$ be the latest edge in $C$ in the edge-ordering of $H$, we have that $F_e \cap C$ is nonempty.
\end{itemize}
\end{definition}

\begin{lemma} [Blocking Set Exists] \label{lem:blockingsetexists}
The output graph $H$ from the greedy algorithm, with $E(H)$ ordered by the order its edges were added in the greedy algorithm, has a $\faultdeg$-FD $2k$-blocking set $\bee$.
\end{lemma}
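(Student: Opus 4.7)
The plan is to exhibit the blocking set directly using the fault-set witnesses that the greedy algorithm produces at each step. For every edge $e=(u,v)$ that the algorithm adds to $H$, the algorithm's acceptance rule certifies the existence of some fault set $F_{(u,v)}$ of max degree at most $\faultdeg$ with $\dist_{H' \setminus F_{(u,v)}}(u,v) > (2k-1)\cdot w(u,v)$, where $H'$ denotes the state of the spanner just before $(u,v)$ is considered. Replacing $F_{(u,v)}$ by $F_{(u,v)}\cap H'$ preserves this distance inequality (edges outside $H'$ contribute nothing to paths in $H'$) and ensures that $F_e := F_{(u,v)}\cap H'$ consists only of edges of the final $H$ that strictly precede $e$ in the greedy ordering. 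Taking $\bee = \{(e,F_e) : e \in E(H)\}$ gives a set satisfying the two syntactic requirements in the definition of an FD blocking set.

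It remains to verify the cycle-hitting property. Fix a cycle $C$ in $H$ with $|C|\le 2k$, and let $e=(u,v)$ be the last edge of $C$ in the greedy order. I would argue by contradiction: suppose $F_e \cap C = \emptyset$. Then every edge of $C\setminus\{e\}$ lies in $H'\setminus F_e$, and these edges form a $u$--$v$ path of length at most $|C|-1 \le 2k-1$. Since the greedy algorithm processes edges in nondecreasing weight order and every edge of $C\setminus\{e\}$ was added strictly before $e$, each such edge has weight at most $w(u,v)$. Summing gives $\dist_{H'\setminus F_e}(u,v) \le (2k-1)\cdot w(u,v)$, contradicting the defining property of $F_e$. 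Hence $F_e \cap C$ is nonempty, as required.

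I do not anticipate any real obstacle here; the lemma is essentially a bookkeeping restatement of the greedy rule. The only mild subtlety is making sure that the witness fault set $F_{(u,v)}$ is interpreted as a subset of $E(H)$ with edges preceding $e$, which is handled by intersecting with the current spanner as above, and observing that this intersection does not weaken the distance lower bound that certified $e$'s inclusion. The weight-monotonicity of the greedy order is what lets us upper-bound the non-$e$ portion of the cycle by $(2k-1)\cdot w(u,v)$; this would fail without the nondecreasing weight ordering, so it is worth flagging this step explicitly in the write-up.
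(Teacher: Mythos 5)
Your proof is correct and follows the same construction as the paper's (which is given only as a sketch): take $F_e$ to be the fault-set witness used when the greedy algorithm accepted $e$, and observe that if a short cycle $C$ with last edge $e$ avoided $F_e$, then $C\setminus\{e\}$ would have been a short $u$--$v$ path in $H'\setminus F_e$, contradicting the acceptance criterion. Your intersection $F_e := F_{(u,v)}\cap H'$ is a sensible bookkeeping refinement (needed to satisfy the definition's requirement that $F_e\subseteq E(H)$ with all edges preceding $e$) that the paper's sketch elides, and your explicit invocation of weight-monotonicity to bound the path cost by $(2k-1)\cdot w(u,v)$ is the same step the paper uses implicitly.
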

\begin{proof} [Proof Sketch]
Each time we add an edge $e$ in the greedy algorithm, we do so because of a possible fault set $F_e$.
(There might be many possible fault sets forcing us to add $e$, in which case we fix $F_e$ to be any one of them.)
We then add $(e, F_e)$ as a pair to the blocking set $\bee$.

Now letting $\bee$ be the final blocking set at the end of the greedy algorithm, it is immediate by construction that each edge $e \in E(H)$ is the first edge of exactly one pair in $\bee$, and that each $F_e$ has maximum degree $\faultdeg$.
For the last property, let $C \subseteq E(H)$ be a cycle on $\le 2k$ edges, and let $(u, v)$ be the latest edge in $C$ considered by the greedy algorithm.
When we add $(u, v)$, we must include at least one edge in $C$ in the associated fault set, since otherwise there is a $u \leadsto v$ path through $C$ of length $\le (2k-1) \cdot w(u, v)$.
Thus $C \cap F_e$ is nonempty.
\end{proof}

The focus of the proof now shifts.
Instead of controlling the number of edges in the specific output graph $H$ from Algorithm $\GreedyFDSpanner$, instead we will control the number of edges in \emph{any} $n$-node graph that admits a $\faultdeg$-FD $2k$-blocking set.
That is, our goal is now to prove the following lemma:
\begin{lemma} [Blocking Set Size Bounds] \label{lem:blockingsetsizebound}
For any positive integers $n, k, \faultdeg$, any $n$-node graph $H$ with a $\faultdeg$-FD $2k$-blocking set $\bee$ satisfies
$$|E(H)| \le \faultdeg^{1-1/k} n^{1+1/k} \cdot O(k)^k.$$
\end{lemma}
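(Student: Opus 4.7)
The plan is to bound $|E(H)|$ by the \emph{blocking set method}, in close analogy with the proof of the Moore bounds for high-girth graphs. The Moore bound argument for girth $>2k$ counts edge-simple $k$-paths in two ways: a dispersion lemma (no two such paths share endpoints, lest one obtain a short cycle) and a counting lemma (at least $|E(H)|\cdot (\bar d)^{k-1}$ such paths exist, where $\bar d=2|E(H)|/n$). Here the graph $H$ need not be high-girth, but the $\faultdeg$-FD $2k$-blocking set $\bee$ promises that every short cycle is ``hit'' by $F_{e^*}$ for its latest edge $e^*$. The goal is to use this hitting property as a proxy for high girth, with a controlled loss depending on $\faultdeg$ and $k$.

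First I would fix the family of paths the proof counts: the MUCk paths. These are edge-simple sequences $P=(e_1,\ldots,e_k)$ in $H$ whose edges appear in a prescribed order with respect to the blocking-set ordering (monotonicity), that avoid the blocking sets $F_{e_i}$ of their own edges in a specific coordinated way, and that are built by a canonical greedy extension rule (so a prefix determines the admissible extensions). The precise list of properties is chosen so that both the dispersion and counting steps below survive; I would treat the exact list as a design variable, to be filled in after attempting each step.

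Next, the dispersion lemma: two MUCk paths $P,P'$ with a common endpoint pair $(u,v)$ would yield a closed edge-set $C=P\cup P'$ of length at most $2k$ containing a cycle. Applying the blocking-set property to the latest edge $e^*\in C$ forces $F_{e^*}\cap C\ne\emptyset$, but the monotonicity and blocking-set-avoidance encoded in MUCk status can be set up so that $F_{e^*}$ is disjoint from the edges of $P$ and $P'$ that could lie in $C$, yielding a contradiction. The upshot is an upper bound on the number of MUCk paths of the form $n^2\cdot O(k)^{O(k)}$. Then the counting lemma: starting from any edge $e_1$ and extending one edge at a time in the MUCk order, at each step the current endpoint has $H$-degree on average $\bar d=2|E(H)|/n$, and the MUCk conditions forbid only $O(k)$ blocking sets each of degree at most $\faultdeg$, so at least $\Theta(\bar d/(k\faultdeg))$ valid extensions remain. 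This gives a lower bound of $|E(H)|\cdot (\bar d/(Ck\faultdeg))^{k-1}$ on the number of MUCk paths. Equating and solving,
\[
|E(H)|\cdot\left(\frac{2|E(H)|}{n\cdot Ck\faultdeg}\right)^{k-1} \;\le\; n^2\cdot O(k)^{O(k)},
\]
and rearranging yields $|E(H)|\le \faultdeg^{1-1/k}\cdot n^{1+1/k}\cdot O(k)^k$, as desired. The boundary regime $\bar d = O(k\faultdeg)$, where the greedy extension step degenerates, gives $|E(H)|=O(k\faultdeg n)$, which is already subsumed by the target bound.

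The main obstacle I expect is in calibrating the MUCk definition. The definition must be restrictive enough that an endpoint collision between two MUCk paths creates a structural conflict with the blocking set (so dispersion succeeds), but permissive enough that a canonical greedy extension succeeds with only a loss of $O(k\faultdeg)$ per step (so counting succeeds). These pressures pull in opposite directions, and getting them to meet precisely—controlling exactly which prior edges, which blocking sets $F_{e_i}$, and which orderings a MUCk path must respect—is where the real technical work lies. In particular, the $O(k)^k$ slack in the final bound likely arises from absorbing mild relaxations in the dispersion step (allowing a few MUCk paths per endpoint pair, rather than one) needed to keep the counting lemma tight.
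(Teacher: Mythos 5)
You have the right high-level framework (MUCk paths, a dispersion step, a counting step, rearranging to bound $|E(H)|$), but the specific mechanism you sketch has two compensating errors, and it defers exactly the part where the real work lies.

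\textbf{Where the $\faultdeg$-dependence actually lives.} You place the $\faultdeg$ loss in the counting step and claim a dispersion bound of $n^2\cdot O(k)^{O(k)}$ with no $\faultdeg$. This is backwards relative to the paper, and not just as a matter of taste: the paper's dispersion lemma bounds the number of MUCk paths between a fixed $(s,t)$ by $O(C\faultdeg k)^{k-1}$, and the $\faultdeg$ factor there is essential. The proof is not a contradiction as you sketch (``$F_{e^*}$ is disjoint from the edges of $P,P'$ that could lie in $C$'') but a counting argument: it bounds the set $A$ of distinct last edges of MUCk $(s,t)$-paths by $O(C\faultdeg k)$, by taking the latest $\Theta(C\faultdeg k)$ and earliest $\Theta(C\faultdeg k)$ edges of $A$, arguing via the blocking-set property and the fault-degree bound that some early edge $a'$ survives outside $F_a$ for $>C\faultdeg k$ late edges $a$, and showing this violates the chain-unblocked property of a path ending in $a'$. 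Then it inducts on path length. Without the $\faultdeg$ in the dispersion bound, no amount of slack in the counting step can recover $\faultdeg^{1-1/k}$, since the lower bound construction shows that dependence is real.

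\textbf{The counting arithmetic.} Your greedy-extension step says that forbidding $O(k)$ blocking sets, each of degree $\le \faultdeg$, leaves ``at least $\Theta(\bar d/(k\faultdeg))$ valid extensions.'' Forbidding $O(k\faultdeg)$ edges out of $\bar d$ leaves $\bar d - O(k\faultdeg)$ extensions, which is $\Omega(\bar d)$ when $\bar d\gg k\faultdeg$ — not $\bar d/(k\faultdeg)$. Your division error happens to re-introduce a $\faultdeg^{(k-1)/k}$ factor into the final bound and thus lands at the right answer, but this is two wrongs making a right. If you fix the arithmetic, the $\faultdeg$-dependence disappears from the counting side and you would derive $|E(H)|\le n^{1+1/k}\cdot O(k)^{O(1)}$, which is false. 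Also, a pure greedy prefix-extension cannot directly establish the MUCk conditions: the chain-unblocked property constrains an edge $(u,v)\in\pi$ in terms of edges $e$ incident to nodes that \emph{strictly follow} $(u,v)$ on $\pi$, i.e.\ in terms of the as-yet-unchosen suffix. The paper instead counts monotone $k$-paths from below (hiker lemma $+$ random edge-subsampling, after a degree-cleaning reduction to the near-regular case) and \emph{separately} upper-bounds blocked and chain-blocked $k$-paths, showing they are dominated once $C\ge\Omega(k)^k$ and $d\ge\Omega(k)^k\faultdeg$; the constraint $d\ge\Omega(k)^k\faultdeg$ is precisely where the $O(k)^k$ slack in the theorem originates.

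\textbf{The deferred definition.} You acknowledge treating the MUCk properties as ``a design variable to be filled in.'' That is where the entire difficulty is. The specific chain-unblocked property (bounding, for each $e\in\pi$, the number of edges $e'$ with $e\in F_{e'}$ that are incident to a node strictly after $e$ on $\pi$, by $C\faultdeg k$) is the nonobvious ingredient that simultaneously lets the dispersion argument close (the late edges $a$ count against a single witness $a'$) and lets the chain-blocked path count stay at $kn\cdot O(d)^k/C$. Without identifying this property, neither lemma as you outline it goes through, so this is a genuine gap rather than a cosmetic one.

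You also omit the degree-cleaning step, which is needed because the upper bounds on blocked/chain-blocked paths are stated in terms of the maximum degree while the monotone-path lower bound is in terms of the average degree; the paper reduces to the case where the two coincide up to constants by deleting low-degree nodes and splitting high-degree ones, which preserves the blocking-set structure.
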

In the following arguments, $H$ is the input graph from this lemma, which thus has a blocking set as described.

\subsection{MUCk Paths and Dispersion Lemma \label{sec:muck}}

Our bound on $|E(H)|$ will be a counting argument over a special type of path that we will call a MUCk path, defined as follows:

\begin{definition} [MUCk Paths]
A path $\pi$ in $H$ is called a MUCk path if it satisfies the following properties:
\begin{itemize}
\item (Monotonic) The edges in $\pi$ occur in strictly increasing order in the edge-ordering of $H$.

\item (Unblocked) $\pi$ does not contain two edges $e, f$ with $e \in F_{f}$.

\item (Chain-Unblocked) For each edge $(u, v) \in \pi$, there are at most $C \faultdeg k$ edges $e$ with the property that $(u, v) \in F_e$, and also $e$ is incident on a node $x$ that strictly follows $(u, v)$ on $\pi$ (in particular $x \notin \{u, v\}$).
Here $C \ge 1$ is a parameter that we will set later.

\item ($k$-Path) $\pi$ has exactly $k$ edges.  (The monotonic property implies that these edges are all distinct.)
\end{itemize}
\end{definition}

In order to help motivate this definition, let us recall the discussion in Section \ref{sec:exist-stretch}.
Analogizing the Moore bounds, which limit the maximum possible number of edges in a high-girth graph, our proof will include two steps:
\begin{itemize}
\item A \emph{dispersion lemma}, which shows that not too many MUCk paths can share endpoints (or else they violate the existence of the FD blocking set), and
\item A \emph{counting lemma}, which shows a lower bound on the number of MUCk paths in $H$, where the bound scales with the number of edges in $H$.
\end{itemize}
Our edge bound will follow by comparing the upper and lower bounds on the number of MUCk paths that respectively arise from the dispersion and counting lemmas, and rearranging terms.

The main challenge in this proof method is to find a precise combination of properties that are specific enough to enable a dispersion lemma, but also general enough that many such paths are guaranteed to exist, so that the counting lemma applies. 
The MUCk definition is a cocktail of properties from prior work that does exactly that: monotonicity appears in \cite{bodwin2023alternate}, a version of the chain-unblocked property is used in \cite{BDN23}, and the unblocked property has been used repeatedly, e.g., \cite{BodwinDR22}.
Rather than saying more about the exact role played by each specific property, we will next show them in action by stating and proving our \emph{dispersion lemma}.
The properties can be best understood as outlining a maximally general class of paths over which the following dispersion lemma proof strategy applies.
The lemma shows that the MUCk paths must be ``dispersed'' around the graph $H$, rather than having too many of them concentrated on the same pair of endpoints.

\begin{lemma} [Dispersion Lemma]
For any two nodes $s, t$, $H$ has at most $O\left(C \faultdeg k\right)^{k-1}$ MUCk paths with endpoints $(s, t)$.
\end{lemma}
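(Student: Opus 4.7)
The plan is to use a blocking-set-driven cycle argument. Given two MUCk paths $\pi_1, \pi_2$ with common endpoints $s, t$, I would first extract a short cycle from their edge symmetric difference: since both are simple $s$-$t$ paths, $\pi_1 \triangle \pi_2$ decomposes into a disjoint union of simple cycles in $H$, each of length at most $|\pi_1| + |\pi_2| = 2k$. Pick any cycle $C_0$ in this decomposition. The blocking-set property applied to $C_0$ yields that if $e$ is the latest edge of $C_0$ in the edge ordering, then $F_e \cap C_0 \ne \emptyset$. Take WLOG $e \in \pi_1 \setminus \pi_2$; then the unblocked property of $\pi_1$ forbids any $f \in \pi_1$ from lying in $F_e$, so the blocker must be some edge $g \in \pi_2 \cap C_0$. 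This yields a ``witness pair'' $(e, g)$ with $e \in \pi_1$, $g \in \pi_2$, and $g \in F_e$, which is the main structural handle.

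To turn this into the counting bound, I would fix a canonical MUCk path $\pi^{\star}$ from $s$ to $t$ and encode every other such MUCk path $\pi$ via an iterative cycle-peeling procedure. At each step $i$, extract a cycle $C_i$ from the remaining symmetric difference between $\pi^{\star}$ and a ``current'' reference path, read off the witness pair $(e_i, g_i)$ as above, then swap along $C_i$ to reduce the residual difference. Since $|\pi^{\star} \triangle \pi| \le 2k$ and each swap removes at least two edges, the iteration terminates in at most $k-1$ steps. At each step the edge $e_i$ should be determined uniquely by an invariant (say, as the latest unhandled edge of $\pi^{\star}$), so only the choice of $g_i \in \pi$ represents genuine branching. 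The chain-unblocked property of $\pi$ then bounds the number of candidate $g_i$'s by $O(C \faultdeg k)$, provided each cycle is oriented so that $e_i$ is incident to a vertex strictly after $g_i$ on $\pi$. Multiplying $O(C \faultdeg k)$ across $\le k-1$ steps gives the claimed dispersion bound.

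The main obstacle I expect is aligning the direction in which chain-unblocked is stated with the direction coming out of the cycle argument. Chain-unblocked constrains, for each edge $g \in \pi$, the number of ``outside'' edges $e$ with $g \in F_e$ that are incident to a node strictly later than $g$ on $\pi$; to invoke it when bounding choices of $g_i$ given $e_i$, I would need a careful orientation argument showing that $e_i$ indeed sits on the ``later'' side of $g_i$ with respect to the traversal of $\pi$. This should follow in principle from monotonicity of $\pi$ together with $g_i$ strictly preceding $e_i$ in the edge ordering (because $g_i \in F_{e_i}$), but making it rigorous, and ruling out the naive bound of $O(C\faultdeg k^2)$ that would come from summing chain-unblocked over all $g \in \pi$, is the key technical piece. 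Secondary concerns include verifying that cycle-peeling preserves simplicity of the current reference path and that the same witness pair is not re-used across iterations.
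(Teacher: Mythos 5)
Your approach (a cycle-peeling encoding against a fixed reference MUCk path $\pi^\star$) is genuinely different from the paper's. The paper proves the lemma by induction on path length: it looks at the set $A$ of edges that serve as the \emph{last} edge of some MUC $j$-path from $s$ to $t$, and shows $|A| = O(C\faultdeg k)$ by a pigeonhole argument. Assuming $|A|$ is large, it splits $A$ into early and late edges, finds an early edge $a'$ whose MUC $j$-path $\pi'$ avoids most $F_a$'s for $a$ late, and then shows that each of the $> C\faultdeg k$ qualifying late edges $a$ must ``downstream block'' some edge of $\pi'$ (since the blocking edge in the cycle $\pi' \cup \pi^a$ can only be in the first $j-1$ edges of $\pi'$), contradicting chain-unblockedness. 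Once $|A| = O(C\faultdeg k)$, the bound $|A| \cdot O(C\faultdeg k)^{j-2}$ closes the induction.

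The gap you flag in your final paragraph is real and I do not believe it can be repaired the way you hope. The chain-unblocked property fixes an edge $g = (u,v) \in \pi$ and bounds the number of edges $e$ with $g \in F_e$ and $e$ incident to a node \emph{strictly after} $g$ on $\pi$; in other words it bounds, for a fixed blocker $g$, how many ``blocked'' edges $e$ sit downstream. In your encoding, at each step you have a determined edge $e_i$ and you want to bound the number of candidate blockers $g_i \in \pi$ with $g_i \in F_{e_i}$. That is the reverse direction, and nothing in the definition controls it: $F_{e_i}$ can have $\Theta(\faultdeg n)$ edges, and monotonicity only tells you that each candidate $g_i$ precedes $e_i$ in the global edge order — it gives no bound on how many such edges could appear on some MUCk path from $s$ to $t$. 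The paper avoids this trap precisely by fixing the \emph{path} $\pi'$ first and accumulating many distinct edges $a$ that all downstream-block it, which is the direction chain-unblockedness does constrain. Two secondary but also likely fatal issues: (i) the ``swap along $C_i$'' operation on two simple $s$-$t$ paths need not yield a simple path, and certainly need not yield an MUCk or even monotone path, so the iterative reference-path update is not well-defined; and (ii) you never establish that the code $(g_1,\dots,g_m)$ determines $\pi$ — knowing a handful of edges from $\pi$ does not reconstruct all $k$ of them, so injectivity of the encoding is unproven. Without a fix for the chain-unblocked direction, I don't see how to rescue the cycle-peeling route; you would be better off with the paper's last-edge pigeonhole induction, which sidesteps all three problems.
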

\begin{proof}
For the sake of an inductive proof, we will prove the following stronger claim: for each integer $1 \le j \le k$, $H$ has at most $O\left(C \faultdeg k\right)^{j-1}$ MUC $j$-paths with endpoints $(s, t)$.
The induction is on $j$, and the base case $j=1$ holds since $H$ has at most one $1$-path (edge) between $s$ and $t$.

For the inductive step, suppose $j \ge 2$, and let $A$ be the set of edges that are used as the last edge of at least one MUC $j$-path with endpoints $(s, t)$ (in particular, all edges in $A$ are incident on $t$).
If we have $|A| \le O(C \faultdeg k)$, then the rest of the proof is a straightforward counting argument: every MUC $j$-path can be expressed by choosing an edge $a \in A$ and a MUC $j-1$ path from $s$ to the non-$t$ endpoint of $a$, and so by applying the inductive hypothesis, the number of such paths is bounded by
$$|A| \cdot O\left(C \faultdeg k\right)^{j-2} \le O\left(C \faultdeg k\right)^{j-1}.$$

~\\

\begin{center}
\begin{tikzpicture}[node distance=2cm, auto, >=stealth,thick,
   every node/.style={scale=.8}, % Node style
   every edge/.style={draw, very thick}]

    \node [circle, draw, minimum size=6pt] (s) {s};
    \node [right = 8cm of s, circle, draw, minimum size=6pt] (t) {t};

    % Additional points for edges
    \node[above left = of t, inner sep=0pt, circle, draw, minimum size=6pt, fill=black] (a) {};
    \node[left = of t, inner sep=0pt, circle, draw, minimum size=6pt, fill=black] (b) {};
    \node[below left = of t, inner sep=0pt, circle, draw, minimum size=6pt, fill=black] (c) {};

    % Edges
    \draw [thick] (t) -- (a) node [midway, above right] {$a_1$};
    \draw [thick] (t) -- (b) node [midway, above] {$a_2$};
    \draw [thick] (t) -- (c) node [midway, above left] {$a_3$};
    
    % ovals
    \path[black, ultra thick, fill=gray] (s) to[bend left=10] node [midway, below, white, rotate=15] {$O(C \faultdeg k)^{j-2}$} (a) to[bend left=10] (s) -- cycle;
    \path[fill=gray] (s) to[bend left=10] node [midway, below, white] {$O(C \faultdeg k)^{j-2}$} (b) to[bend left=10] (s) -- cycle;
    \path[fill=gray] (s) to[bend left=10] node [midway, below, white, rotate=-15] {$O(C \faultdeg k)^{j-2}$} (c) to[bend left=10] (s) -- cycle;
    
    % labels
    
    \node [above right = 2cm of s, draw=black] {MUC $(j-1)$-paths};
    \node [above = 1.4cm of t, draw=black] {edges from $A$};

\end{tikzpicture}
\end{center}

So now it only remains to prove that $|A| \le O\left(C \faultdeg k \right)$.
Assume for contradiction that $|A| \ge 6C\faultdeg k$, let $A_{\text{late}} \subseteq A$ be the latest $4C\faultdeg k$ edges of $A$ in the ordering, and let $A_{\text{early}} \subseteq A$ be the earliest $2C\faultdeg k$ edges of $A$ in the ordering.
For each edge $a \in A_{\text{late}}$, by the fault-degree constraint there are at most $\faultdeg$ edges in $F_a$ incident to $t$, and therefore there are at most $\faultdeg$ edges in $F_a \cap A_{\text{early}}$.
Thus (somewhat conservatively) there are at least $C \faultdeg k$ edges in $A_{\text{early}}$ that are \emph{not} also in $F_a$.
We can therefore count that there exists an edge $a' \in A_{\text{early}}$ with the property that there are more than $C \faultdeg k$ edges $a \in A_{\text{late}}$ with $a' \notin F_a$.%%recheck

Let us analyze this particular edge $a'$.
Let $\pi'$ be a MUC $j$-path with endpoints $(s, t)$ that uses $a'$ as its last edge.
Let $a \in A_{\text{late}}$ be one of the edges described above with $a' \notin F_a$, and let $\pi^a$ be a MUC $j$-path with endpoints $(s, t)$ that uses $a$ as its last edge.
Notice that $\pi' \cup \pi^a$ contains a cycle $C$ on $|C| \le 2k$ edges, and by monotonicity, the latest edge in $C$ is $a$.
Thus, by definition of FD blocking sets, there exists an edge $e \in C \cap F_a$.

Let us now consider the placement of $e$ in the paths $\pi' \cup \pi^a$.
We know that $e \notin \pi^a$, since $\pi^a$ is unblocked.
We know that $e$ is not the last edge of $\pi'$, since the last edge of $\pi'$ is $a'$ and we have been careful to consider an edge $a' \notin F_a$.
Thus the only remaining possibility is that $e$ is one of the first $j-1$ edges in $\pi'$.
That means that, for the edge $e \in \pi'$, the edge $a$ counts towards the chain-blocked property: $a$ is incident on a node ($t$) that strictly follows $e$ along $\pi'$, and we have $e \in F_a$.
Since we require that $\pi'$ is chain-unblocked, there can be at most $C \faultdeg k$ edges $a$ with this property.
However, we have shown that our analysis applies to at least $>C \faultdeg k$ different edges $a \in A_{\text{late}}$.
This completes the contradiction and the proof.
\end{proof}

\subsection{Counting Lemma}

Our next step is to prove lower bounds on the number of MUCk paths in $H$.
Our strategy is to first count the number of monotonic $k$-paths, and then separately count the number of blocked and chain-blocked $k$-paths, and argue that most monotonic $k$-paths must in fact be MUCk paths (in the relevant parameter regime).

We start with the monotone counting lemma.
Only the ``full'' monotone counting lemma in the following sequence is ultimately used; the previous two lemmas are used to help bootstrap the proof.

\begin{lemma} [Weak Monotone Counting Lemma] \label{lem:warmupwc}
If $|E(H)| \ge kn/2$, then $H$ contains a monotonic $k$-path.
\end{lemma}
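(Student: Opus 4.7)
The plan is to use a standard monotone-path labeling argument, of the kind usually used to prove existence of long increasing sequences in edge-ordered graphs. For each vertex $v \in V(H)$, define $\ell(v)$ to be the number of edges in a longest monotonic path in $H$ ending at $v$ (counting the trivial $0$-edge path at $v$ itself). Our task is to show that some $\ell(v) \ge k$, since any path realizing this value is a monotonic $k$-path.

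I will compute $\ell$ incrementally. Process the edges of $H$ in the same order used by the edge-ordering, starting from the empty labeling $\ell(v)=0$. When edge $(u,v)$ is processed, the longest monotonic path ending at $u$ (respectively $v$) is either unchanged, or else it uses $(u,v)$ as its last edge, in which case it is obtained by appending $(u,v)$ to the best path previously ending at $v$ (respectively $u$). Thus, letting $a = \ell(u)$ and $b = \ell(v)$ denote the labels just before processing $(u,v)$, the updated values are
\[
\ell'(u) \;=\; \max(a,\, b+1), \qquad \ell'(v) \;=\; \max(b,\, a+1).
\]
The key calculation is that $\ell'(u)+\ell'(v)-a-b \ge 2$ in every case: if $a=b$, both labels strictly increase by $1$; if $a<b$, then $\ell'(u) = b+1$ and $\ell'(v) \ge b$, giving an increase of $b+1-a \ge 2$; the case $a>b$ is symmetric.

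Summing over all edges, the potential $\Phi := \sum_{v \in V} \ell(v)$ increases by at least $2$ each time an edge is processed. Since $\Phi = 0$ initially, at the end we have $\Phi \ge 2|E(H)| \ge kn$. By averaging, some vertex $v$ satisfies $\ell(v) \ge k$, and the monotone path realizing this value is the desired monotonic $k$-path in $H$. There is no real obstacle here; the only subtle point is the per-edge increment lower bound, which is immediate from the case analysis above. (Note that the monotone condition already forces the edges along such a path to be distinct, so no additional edge-simplicity argument is required.)
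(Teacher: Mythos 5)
Your proof is correct, and it is essentially the same charging argument as the paper's, just presented in a different guise. The paper proves the bound via the ``hiker lemma'': place a hiker at each vertex, process edges in increasing order, and when edge $(u,v)$ arrives the two hikers at $u$ and $v$ swap places. Each hiker's trajectory is monotone by construction, and the total distance walked is exactly $2|E(H)|$, so some hiker walks $\ge k$ edges. You instead track the DP quantity $\ell(v) = $ length of the longest monotone path ending at $v$ (restricted to edges processed so far), show that each edge raises $\sum_v \ell(v)$ by at least $2$, and average. The two are tightly related: the distance walked by the hiker currently at $v$ is exactly one witness monotone path ending at $v$, so your $\ell(v)$ dominates that hiker's distance, and the ``$+2$ per edge'' potential bookkeeping is identical. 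Your phrasing is the classical Graham--Kleitman argument (and slightly more general, giving an inequality where the hikers give an equality); the paper's hiker reformulation is simply a more vivid way to say the same thing. Both are fine, and your case analysis of the update $\ell'(u) = \max(a, b+1)$, $\ell'(v) = \max(b, a+1)$ is airtight.
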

\begin{proof}
This follows from a folklore theorem in graph theory sometimes called the ``hiker lemma'' (see \cite{bodwin2023alternate} for context).
We generate $n$ different monotonic paths by the following process.
Start by placing a hiker at each node of $H$.
Then, consider the edges in $H$ in increasing order.
Each time we consider an edge $(u, v)$, we ask the two hikers who currently stand at $u, v$ to walk across the edge, switching places with each other.

Once all edges have been considered, our $n$ hikers have traversed $2|E(H)| \ge kn$ edges in total.
Additionally, the path walked by each individual hiker is monotonic.
Thus, there exists a hiker who walked a monotonic path of length $\ge k$.
\end{proof}

\begin{lemma} [Medium Monotone Counting Lemma]
If $|E(H)| \ge kn$, then $H$ contains at least $kn/2$ monotonic $k$-paths.
\end{lemma}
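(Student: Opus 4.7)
The plan is to extend the hiker argument of Lemma \ref{lem:warmupwc} into a quantitative count of monotonic $k$-paths. I re-run the same process: place one hiker at each vertex, scan the edges in increasing order of the edge-ordering of $H$, and swap the two hikers at the endpoints of each edge as it is processed. Writing $\ell_u$ for the total number of edges traversed by the hiker that started at $u$, a handshake identity gives $\sum_u \ell_u = 2|E(H)| \ge 2kn$.

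By construction each hiker's walk is monotonic, so a walk of length $\ell_u$ contains $\max(0, \ell_u - k + 1)$ length-$k$ monotonic subwalks, each of which is a monotonic $k$-path in $H$. To avoid double-counting, I plan to verify that each monotonic $k$-path $(v_0, e_1, v_1, \ldots, e_k, v_k)$ occurs as a subwalk of at most one hiker: at the time $e_1$ is processed, the hiker who ends up at $v_1$ is uniquely determined, and for that same hiker to cross $e_2$ immediately afterward no other edge incident on $v_1$ may be processed between $e_1$ and $e_2$; iterating forces this hiker to be the only one traversing $e_1, \ldots, e_k$ consecutively. Hence the number of monotonic $k$-paths in $H$ is at least $\sum_u \max(0, \ell_u - k + 1)$.

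It remains to lower-bound this sum subject to $\ell_u \ge 0$ and $\sum_u \ell_u \ge 2kn$. The function $\ell \mapsto \max(0, \ell - k + 1)$ has its kink at $\ell = k - 1$ and unit slope beyond, so the sum is minimized by inflating as many $\ell_u$ as possible up to $k - 1$. The $n$ hikers can together absorb at most $n(k-1)$ units of walk length at zero cost, leaving at least $2kn - n(k-1) = kn + n$ units that each contribute one monotonic $k$-subwalk. Thus $H$ contains at least $kn + n \ge kn/2$ monotonic $k$-paths.

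The only step beyond routine is the uniqueness claim mapping monotonic $k$-paths to hiker subwalks, and this is an immediate consequence of the deterministic swap rule, so I do not anticipate any real obstacle.
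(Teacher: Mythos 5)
Your proof is correct, but it takes a genuinely different route from the paper. The paper's proof is a two-line bootstrap from the Weak Monotone Counting Lemma: if $H$ had fewer than $kn/2$ monotonic $k$-paths, deleting one edge from each would leave at least $kn - kn/2 = kn/2$ edges and no monotonic $k$-paths (monotonicity survives edge deletion since the inherited ordering is a restriction), contradicting the weak lemma. Your argument instead re-derives the count directly from the hiker process: you set up an injection from consecutive length-$k$ subwalks of hiker walks into monotonic $k$-paths — the traversal direction of the first edge identifies a unique hiker, and edge-simplicity of each hiker's walk separates subwalks at different offsets — and then lower-bound $\sum_u \max(0,\ell_u - k + 1) \ge \sum_u (\ell_u - (k-1)) = 2|E(H)| - n(k-1) \ge kn + n$ via the pointwise bound $\max(0,x) \ge x$. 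This is a clean and slightly sharper bound ($kn + n$ rather than $kn/2$). The paper's route buys brevity and modularity (it reuses the weak lemma as a black box); yours buys a better constant and is self-contained, at the cost of needing the uniqueness argument, which you correctly identify as the only nontrivial step. One thing worth stating explicitly in your write-up: hiker walks, and hence their consecutive subwalks, are edge-simple but not necessarily vertex-simple (a monotone triangle already gives a hiker a closed walk). Your proof therefore implicitly reads ``monotonic $k$-path'' as an edge-simple monotone walk, which is indeed how the paper uses the term throughout (cf.\ the remark in the MUCk definition that monotonicity only forces distinct \emph{edges}, and the ``edge-simple'' qualifiers in the later counting lemmas) — but this is worth flagging, since ``path'' unqualified usually suggests vertex-simple.
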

\begin{proof}
By the weak counting lemma, we would need to delete at least $kn/2$ edges from $H$ in order to destroy all monotonic $k$-paths.
This implies that $H$ has at least $kn/2$ monotonic $k$-paths to begin with, since otherwise we could easily delete any one edge per path to destroy them all.
\end{proof}

\begin{lemma} [Full Monotone Counting Lemma] \label{lem:fullcount}
Let $d$ be the average degree in $H$.
If $d \ge 2k$, then $H$ contains at least $kn \cdot \Omega(d/k)^k$ monotonic $k$-paths.
\end{lemma}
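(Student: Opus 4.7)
The plan is to derive the full counting lemma from the Medium Monotone Counting Lemma by a simple random-subsampling argument. First dispose of the easy range $2k \le d \le 8k$: here $(d/k)^k = O(1)^k$ and $|E(H)| = dn/2 \ge kn$, so the Medium Lemma applied to $H$ directly yields at least $kn/2 \ge kn \cdot \Omega(d/k)^k$ monotonic $k$-paths. For the remainder of the proof, assume $d \ge 8k$.

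Set $p = 8k/d \in (0,1]$ and form a random subgraph $H' \subseteq H$ by keeping each edge independently with probability $p$, equipping $H'$ with the edge-ordering inherited from $H$. Then $\mathbb{E}[|E(H')|] = pdn/2 = 4kn$, and a standard Chernoff lower-tail bound shows $\Pr[|E(H')| \ge kn] \ge 1/2$, since the target $kn$ is a constant fraction of the mean. In every outcome where $|E(H')| \ge kn$, the Medium Monotone Counting Lemma applied to $H'$ guarantees at least $kn/2$ monotonic $k$-paths in $H'$. Hence the expected number of monotonic $k$-paths in $H'$ is at least $kn/4$.

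On the other hand, any fixed monotonic $k$-path of $H$ is a monotonic $k$-path of $H'$ precisely when all $k$ of its edges are sampled (note that a path monotonic in the $H'$-ordering is automatically monotonic in the $H$-ordering, since one ordering is the restriction of the other). Writing $N$ for the number of monotonic $k$-paths in $H$, linearity of expectation gives $N p^k \ge kn/4$, so
\[
N \;\ge\; \frac{kn}{4}\left(\frac{d}{8k}\right)^{k} \;=\; kn \cdot \Omega(d/k)^{k},
\]
as claimed.

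The one point that requires care is balancing two constraints on $p$: we need $p \le 1$ for the sampling to make sense (forcing the case split at $d = 8k$), and we need $\mathbb{E}[|E(H')|]$ comfortably above $kn$ so that $|E(H')| \ge kn$ holds with constant probability (fixing $p = \Theta(k/d)$). Once these are pinned down, the rest is an expectation exchange and an appeal to the Medium Monotone Counting Lemma, so there is no significant obstacle beyond picking the right constants.
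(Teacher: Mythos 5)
Your proof is correct and takes essentially the same route as the paper: both derive the full counting lemma from the medium one by random edge subsampling and comparing upper and lower bounds on the expected number of surviving monotone $k$-paths. The only difference is the sampling model --- the paper samples a uniformly random subset of \emph{exactly} $kn$ edges (so the Medium Lemma applies to $H'$ deterministically and no case split or concentration bound is needed), whereas your independent Bernoulli sampling forces the $d \ge 8k$ case split and a Chernoff bound to ensure $|E(H')| \ge kn$ with constant probability; both versions go through.
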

\begin{proof}
Let $H'$ be a random edge-subgraph of $H$ on exactly $kn$ edges.
Let $x$ be the number of monotonic $k$-paths in $H$, and let $x'$ be the number of monotonic $k$-paths that survive in $H'$.
On one hand, by the medium counting lemma, we have $x' \ge kn/2$ (deterministically).
On the other hand, for any monotonic $k$-path $\pi$ in $H$, the probability that $\pi$ survives in $H'$ is:
\begin{align*}
& \left(\frac{kn}{|E(H)|} \right) \left( \frac{kn-1}{|E(H)|-1} \right) \dots \left(\frac{kn-(k-1)}{|E(H)| - (k-1)} \right)\\
&\le \left( \frac{kn}{|E(H)|} \right)^k\\
&= \Theta\left( \frac{k}{d} \right)^k.
\end{align*}
Thus we have
\begin{align*}
\frac{kn}{2} \le \mathbb{E}[x'] \le x \cdot \Theta\left( \frac{k}{d} \right)^k.
\end{align*}
Rearranging terms, we get our desired inequality of
\begin{align*}
x \ge kn \cdot \Theta\left( \frac{d}{k} \right)^k. \tag*{\qedhere}
\end{align*}
\end{proof}

Finally, we count the number of blocked or chain-blocked $k$-paths in $H$, with the plan to argue that these are much fewer than the number of monotonic $k$-paths.
The following arguments require a stronger assumption that the \emph{max} degree is bounded; in the next part we will show how this assumption can be justified.
\begin{lemma} [Chain-Blocked Path Upper Bound] \label{lem:chainub}
Let $d$ be a parameter and suppose that the maximum node degree in $H$ is $O(d)$.
Then $H$ has at most $kn \cdot O(d)^k / C$ $k$-paths that are edge-simple and chain-blocked.
\end{lemma}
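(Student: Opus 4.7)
The plan is to bound the number of edge-simple chain-blocked $k$-paths via a structured counting argument that sums over ``witness positions.'' Concretely, for each chain-blocked edge-simple $k$-path $\pi = v_0 e_1 v_1 \dots e_k v_k$, the chain-blocked property guarantees a witness position $i \in \{1, \ldots, k-1\}$ such that more than $C\faultdeg k$ edges $e \in E(H)$ satisfy $e_i \in F_e$ and $e$ is incident on some $v_j$ with $j > i$. I will double count pairs $(i, \pi)$, so upper bounding the number of such pairs (summed over $i$) upper bounds the number of chain-blocked paths.

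To count $(i, \pi)$ pairs, I fix $i \in \{1, \ldots, k-1\}$ and build the path in the order: suffix first, then $v_{i-1}$, then prefix. The suffix $v_i, v_{i+1}, \ldots, v_k$ is specified in at most $n \cdot O(d)^{k-i}$ ways (pick $v_i$, then each subsequent vertex is a neighbor of the previous in $H$, and max degree is $O(d)$). Once the suffix is fixed, I bound the number of choices of $v_{i-1}$ (which determines the witness edge $e_i = (v_{i-1}, v_i)$) that make $\pi$ chain-blocked at position $i$. The prefix $v_{i-2}, \ldots, v_0$ is then filled in in at most $O(d)^{i-1}$ ways.

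The key step is the bound on $v_{i-1}$. With the suffix fixed, let $S$ be the set of edges of $H$ incident on some $v_j$ with $j > i$; since each such $v_j$ has degree at most $O(d)$, we have $|S| \leq (k-i) \cdot O(d)$. For $v_{i-1}$ to give a chain-blocked path, more than $C\faultdeg k$ edges $e \in S$ must satisfy $(v_{i-1}, v_i) \in F_e$. I double count pairs $(v_{i-1}, e)$ with $e \in S$ and $(v_{i-1}, v_i) \in F_e$: for each fixed $e$, the max-degree-$\faultdeg$ property of $F_e$ implies that at most $\faultdeg$ edges of $F_e$ are incident on $v_i$, hence at most $\faultdeg$ choices of $v_{i-1}$ produce such a pair. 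Summing over $e \in S$ gives a total of at most $|S|\cdot\faultdeg \leq O(k d \faultdeg)$ pairs. Since each chain-blocking $v_{i-1}$ contributes more than $C\faultdeg k$ pairs, the number of chain-blocking choices of $v_{i-1}$ is at most $O(kd\faultdeg)/(C\faultdeg k) = O(d)/C$.

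Multiplying everything together yields at most
\[
k \cdot n \cdot O(d)^{k-i} \cdot \frac{O(d)}{C} \cdot O(d)^{i-1} \;=\; kn \cdot O(d)^k / C
\]
edge-simple chain-blocked $k$-paths, as required. The main obstacle is that $|F_e|$ itself has no useful pointwise bound, so one cannot simply fix a witness edge $e$ up front. The suffix-first ordering circumvents this: once the suffix is fixed, the relevant witness edges are constrained to the $O(kd)$-sized set $S$, and the max-degree of $F_e$ can be used locally (at $v_i$) to convert the chain-blocked count into a saving of a factor $C$ on $v_{i-1}$.
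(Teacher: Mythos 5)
Your proof is correct and takes essentially the same approach as the paper: both bound the count by fixing the witness index $i$, determining the suffix of the path first so that the set $S$ of candidate witness edges is pinned down to size $O(kd)$, and then using a double-count of pairs $(v_{i-1}, e)$ with the $\deg(F_e) \le \faultdeg$ constraint at $v_i$ to shave a factor of $C$ off the choices for the critical vertex. The only cosmetic differences are that you frame the argument as an explicit double count over $(i,\pi)$ and start the suffix from $v_i$ rather than $v_k$, but the structure and arithmetic are identical to the paper's.
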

\begin{proof}
We can overcount our chain-blocked paths as follows:
\begin{itemize}
\item First, choose its ending node $v_k$ ($n$ choices).

\item Then, choose an index $1 \le i \le k$ ($k$ choices) such that the $i^{th}$ edge causes $\pi$ to be chain-blocked.
That is, we commit to selecting the $i^{th}$ edge $e_i \in \pi$ in such a way that there exist $>C \faultdeg k$ edges $e$ incident to nodes in $\pi$ that strictly follow $e_i$, with $e_i \in F_e$.

\item Now build $\pi$ iteratively.
If currently we have determined a suffix $(v_j, \dots, v_k)$ of $\pi$, then we choose the previous node $v_{j-1}$ as follows.
\begin{itemize}
\item If $j-1 \ne i$, then we choose any of the $O(d)$ neighbors of $v_j$ besides $v_{j+1}$ to serve as $v_{j-1}$.

\item If $j-1=i$, then our choice is restricted, since we need this edge to cause $\pi$ to be chain-blocked.
Let us count the number of neighbors of $v_j$ that we may choose in this step.
There are $O(dk)$ edges $e$ incident to any of the previously-selected nodes $(v_{j+1}, \dots, v_k)$.
By the fault-degree constraint, each of these edges has at most $\faultdeg$ edges incident to $v_j$ in its set $F_e$.
We thus count that there are $O(d/C)$ edges incident to $v_j$ that are in $>C \faultdeg k$ such sets $F_e$.
We must choose one such edge in order for this edge to be chain-blocked, and so there are $O(d/C)$ choices of edge at this step.
\end{itemize}
\end{itemize}
Putting it together, we have $n \cdot k \cdot O(d)^{k-1} \cdot O(d/C) = kn \cdot O(d)^k / C$ edge-simple chain-blocked $k$-paths.
\end{proof}

\begin{lemma} [Blocked Path Upper Bound] \label{lem:blockub}
Let $d$ be a parameter and suppose that the maximum node degree in $H$ is $O(d)$.
Then $H$ has at most $k^2 n \cdot O(d)^{k-1} \cdot \faultdeg$ $k$-paths that are edge-simple and blocked.
\end{lemma}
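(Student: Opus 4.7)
The plan is to overcount edge-simple blocked $k$-paths by parameterizing each one by its ``blocking pair'' $(i,j)$ with $i<j$ and $e_i \in F_{e_j}$, and then to construct the path vertex by vertex in an order that lets us apply the fault-degree bound $\deg(F_{e_j})\le \faultdeg$ directly.

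First I would choose the endpoint $v_k$ (a factor of $n$) and the pair $(i,j)$ with $1 \le i < j \le k$ (a factor of $O(k^2)$). The crucial move is to build the path \emph{backwards} from $v_k$, selecting $v_{k-1}, v_{k-2},\ldots, v_0$ in this order, rather than forwards. This ensures that at the point of the construction when we must select the edge $e_i$, the edge $e_j$ has already been fixed, so its associated fault set $F_{e_j}$ is fully determined.

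For every one of the $k$ backward steps \emph{except} the step that chooses $v_{i-1}$ from $v_i$, we have at most $O(d)$ candidate neighbors, using only the max-degree assumption; these contribute a total factor of $O(d)^{k-1}$. At the one remaining step, where we choose $v_{i-1}$ (and hence $e_i$), we require $e_i \in F_{e_j}$ on top of $e_i$ being incident to the already-fixed $v_i$. The number of such edges is bounded by $\deg_{F_{e_j}}(v_i) \le \faultdeg$, because $F_{e_j}$ has fault-degree at most $\faultdeg$ by the definition of the blocking set. Multiplying all the factors together yields $n \cdot O(k^2)\cdot O(d)^{k-1} \cdot \faultdeg$, matching the claimed bound.

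The main conceptual obstacle, and the reason the analogous counts for other properties (such as chain-blocked paths in Lemma~\ref{lem:chainub}) look different, is the choice of construction order. A forward construction would, at step $j$, require us to count edges $e_j$ incident to $v_{j-1}$ whose fault sets $F_{e_j}$ contain a \emph{previously fixed} $e_i$; no useful bound of this ``reverse incidence'' type is available from the blocking-set definition, since the fault-degree constraint bounds how many edges lie in a single $F_e$ near a vertex, not how many $F_e$'s can contain a given edge. Reversing the direction of construction so that $e_j$ is committed to before $e_i$ is precisely what reroutes the $\faultdeg$-bound into exactly the right place.
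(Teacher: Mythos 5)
Your core idea — construct the path in an order that determines the blocker edge before the blocked edge, so the fault-degree bound on its fault set yields only $\faultdeg$ choices at the constrained step — is exactly the paper's argument, just run from the opposite end of the path. As written, though, the count only covers half the blocked paths. A blocked edge-simple $k$-path contains two edges $e_a, e_b \in \pi$ with $e_a \in F_{e_b}$, but nothing forces the blocked edge $e_a$ to appear \emph{before} the blocker $e_b$ along $\pi$: the edge-ordering of $H$, which governs membership in $F_{e_b}$, is unrelated to the positions of edges in an arbitrary edge-simple path. You parameterize by pairs $(i,j)$ with $i<j$ and $e_i\in F_{e_j}$, and the backward construction correctly handles that case, but paths whose only blocking pair has the blocker earlier in the path (i.e., $j<i$) are never counted. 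Worse, your closing paragraph suggests the forward direction is hopeless (``no useful bound of this reverse-incidence type is available''), which could lead you to believe the other case simply cannot be handled.

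It can: when the blocker precedes the blocked edge along $\pi$, a forward construction works by precisely the symmetric reasoning. Build $v_0, v_1, \ldots, v_k$ in order; at step $j$ you commit to the blocker $e_j$ and hence to $F_{e_j}$; when you later reach step $i>j$ you must pick an edge $e_i\in F_{e_j}$ incident to the current node, and the fault-degree constraint on $F_{e_j}$ at that node again gives at most $\faultdeg$ choices. The paper handles both orientations explicitly (``the case where $i>i'$ is symmetric, but we build $\pi$ from its last node towards the beginning''; its indexing puts the blocker at $i$ rather than $j$, so its forward/backward roles are the mirror image of yours, but the two-case structure is the same, and each case costs $n\cdot O(k^2)\cdot O(d)^{k-1}\cdot\faultdeg$). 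Incidentally, if you were counting only \emph{monotonic} blocked paths, your restriction $i<j$ would be genuinely without loss, since $e_i\in F_{e_j}$ forces $e_i$ to precede $e_j$ in the global edge-ordering and monotonicity would then force $i<j$; but Lemma~\ref{lem:blockub} is stated for all edge-simple blocked paths, so both orientations must be accounted for.
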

\begin{proof}
Recall that a $k$-path $\pi$ is blocked if it contains edges $e, e' \in \pi$ with $e' \in F_e$.
We may overcount our blocked $k$-paths as follows:
\begin{itemize}
\item Choose distinct indices $1 \le i, i' \le k$ where $e, e'$ will occur ($O(k^2)$ choices).
In the following we will assume w.l.o.g.\ that $i < i'$; the case where $i > i'$ is symmetric, but we build $\pi$ from its last node towards the beginning instead of from its first node towards the end.

\item Choose the first node $v_1$ of $\pi$ ($n$ choices).

\item Now we build $\pi$ iteratively.
If currently we have determined a prefix $(v_1, \dots, v_j)$, then we choose the following node $v_{j+1}$ as follows.
\begin{itemize}
\item If $j+1 \ne i'$, then we choose any of the $O(d)$ neighbors of $v_j$ besides $v_{j-1}$ to serve as $v_{j+1}$.
\item If $j+1=i'$, then our choice is restricted, since we need this edge to lie in $F_e$ (since $i' > i$, note that we have already determined the edge $e$).
By the fault-degree constraint, there are at most $\faultdeg$ edges incident on $v_j$ in $F_e$, and hence there are at most $\faultdeg$ choices of edges at this step.
\end{itemize}
\end{itemize}
Putting it together, we have $O(k^2) \cdot n \cdot O(d)^{k-1} \cdot \faultdeg$ blocked $k$-paths.
\end{proof}

\subsection{Proof Wrapup}

We will first need to apply a standard ``cleaning'' step to $H$, in order to reduce to the case where the max degree and average degree are roughly equal, so that all of the above counting lemmas may be applied simultaneously.
This step is standard and has appeared in some form in a lot of prior work on spanners, but we will outline the proof anyways.

\begin{lemma} [Degree-Cleaning Lemma]
Let $H$ be a graph with $n$ nodes and a $\faultdeg$-FD $2k$-blocking set with average degree $d$.
Then there exists a graph $H'$ with $\Theta(n)$ nodes and a $\faultdeg$-FD $2k$-blocking set in which all nodes have degree $\Theta(d)$.
\end{lemma}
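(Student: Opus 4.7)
The plan is to apply two standard graph surgery operations to $H$: first a vertex-splitting step to cap the maximum degree, then an iterative low-degree pruning step to impose a minimum degree. The main task is then to verify that the blocking set structure survives both operations.

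First, perform vertex splitting. For each $v \in V(H)$ with $\deg_H(v) > d$, replace $v$ by $\lceil \deg_H(v)/d \rceil$ copies and arbitrarily partition the edges incident to $v$ among these copies so that every copy receives at most $d$ edges; call the resulting graph $H_1$. The number of extra vertices created is at most $\sum_v \deg_H(v)/d = 2|E(H)|/d = n$, so $|V(H_1)| \leq 2n$; the edge set is preserved under a natural bijection, so $|E(H_1)| = |E(H)| = nd/2$, whence the average degree of $H_1$ is at least $d/2$; and by construction the maximum degree of $H_1$ is at most $d$. Inherit the edge ordering, and define $F_e^{(1)}$ to be the image in $E(H_1)$ of $F_e$ under the bijection. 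The fault-degree constraint $\deg(F_e^{(1)}) \leq \faultdeg$ holds because at any copy $v^i$ the edges of $F_e^{(1)}$ incident to $v^i$ form a subset of the edges of $F_e$ incident to $v$ in $H$.

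The cycle-blocking property is where splitting is most delicate: a simple cycle in $H_1$ may project under the bijection to a non-simple closed walk in $H$, so the blocking property for simple cycles in $H$ does not pull back directly. To handle this, I rely on the fact---implicit in the proof of Lemma \ref{lem:blockingsetexists}---that the greedy algorithm's blocking set actually satisfies the stronger \emph{closed-walk} property: for every closed walk $W$ of length $\leq 2k$ in $H$ with latest edge $e^*$, $F_{e^*} \cap W$ is nonempty. Indeed, deleting $e^*$ from $W$ leaves a walk of length $\leq 2k-1$ between the endpoints of $e^*$ with each edge of weight $\leq w(e^*)$, and the fact that $\GreedyFDSpanner$ chose to add $e^*$ forces $F_{e^*}$ to intersect any such walk. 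Under this closed-walk interpretation, the blocking set transfers to $H_1$ verbatim, since simple cycles in $H_1$ bijectively correspond to closed walks in $H$ of the same length.

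Second, iteratively delete any vertex whose current degree is below $d/8$, and call the final graph $H'$. At most $2n$ deletions can occur, each removing fewer than $d/8$ edges, so at most $nd/4$ edges are deleted in total and $|E(H')| \geq nd/2 - nd/4 = nd/4$. Combined with the max-degree bound $d$ inherited from $H_1$, this gives $|V(H')| \geq 2|E(H')|/d \geq n/2$. Hence $|V(H')| = \Theta(n)$, all degrees lie in $[d/8, d] = \Theta(d)$, and the blocking set on $H_1$ restricts to $H'$ by setting $F_e' = F_e^{(1)} \cap E(H')$ for each surviving $e$: maximum fault-degrees only decrease, and every short closed walk in $H'$ is already a short closed walk in $H_1$, so the witnessing blocking edge already lies in $E(H')$.

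The only subtle point in the entire argument is justifying the closed-walk strengthening of the blocking set definition used during the splitting phase; once this is granted, everything else is routine edge and vertex counting.
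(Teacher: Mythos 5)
Your proof is essentially sound and, importantly, you have correctly spotted a subtlety that the paper's own proof glosses over: the paper asserts that splitting ``cannot create new $2k$-cycles,'' but this is not literally true. A simple cycle in the split graph may pass through \emph{both} copies of a split vertex, in which case it projects under the edge bijection to an edge-simple \emph{closed walk} in $H$ that visits the original vertex twice, not to a simple cycle. Your observation that one therefore needs a ``closed-walk'' version of the blocking property is exactly the right diagnosis, and it is a point the paper should have made explicit.

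The one genuine gap is in how you justify the closed-walk property. You derive it from the specific structure of Algorithm $\GreedyFDSpanner$, i.e., from the fact that $F_{e^*}$ was chosen as a length-bounded cut between the endpoints of $e^*$. This argument is correct, but it proves a narrower statement than the lemma as written: the Degree-Cleaning Lemma is stated for \emph{any} graph $H$ admitting a $\faultdeg$-FD $2k$-blocking set, not only for the output of the greedy algorithm. (It happens to be invoked only on the greedy output, so your version suffices for the paper's application, but it does not prove the lemma as stated.) The fix is to notice that the closed-walk property for edge-simple closed walks is actually a consequence of the blocking-set definition alone, with no reference to the greedy algorithm: an edge-simple closed walk $W$ of length $\le 2k$ decomposes (Eulerian-style) into edge-disjoint simple cycles $C_1,\dots,C_m$, each of length $\le 2k$; the latest edge $e^*$ of $W$ lies in exactly one of these, say $C_1$, and since $C_1 \subseteq W$ the edge $e^*$ is also the latest edge of $C_1$; the blocking property for $C_1$ then gives $F_{e^*} \cap C_1 \neq \emptyset$, hence $F_{e^*} \cap W \neq \emptyset$. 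Substituting this for your greedy-algorithm argument makes the proof match the stated generality of the lemma, and also removes the need to modify or re-interpret the blocking-set definition.

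On the remaining differences: you perform the split first and the low-degree pruning second, whereas the paper prunes first and splits second; both orders work, and your arithmetic (at most $2n$ deletions, each removing fewer than $d/8$ edges, hence at least $nd/4$ edges and $n/2$ vertices survive, with all degrees in $[d/8, d]$) is correct. Your check that the fault-degree bound $\deg(F_e^{(1)}) \le \faultdeg$ survives splitting, and that a blocking witness for any short cycle in $H'$ already lies among the surviving edges after pruning, are both accurate. Overall this is a careful proof that improves on the paper's in precision; the only change needed is to prove the closed-walk property from the blocking-set axioms rather than from the greedy algorithm.
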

\begin{proof}
Let $H'$ be a graph obtained from $H$ by the following process.
Fix $d$ as the initial average degree in $H$.
While there exists a node of degree $< d/4$, delete that node and all its incident edges from $H$.
Then, while there exists a node of degree $>d$, split that node into two copies, with its edges divided equitably between the copies.
Define $H'$ as the final graph at the end of this process.

It is immediate from the construction that all surviving nodes in $H'$ have degree $\Theta(d)$; in particular this means that the average degree changes by at most a constant factor from $H$ to $H'$.
Additionally, notice that we delete at most $< nd/4 \le |E(H)|/2$ edges from $H$ to $H'$, and so the number of edges changes by at most a constant factor from $H$ to $H'$.
Since the average and total degrees both change by a constant factor, it follows that the number of nodes also changes by only a constant factor from $H$ to $H'$.
Thus $H'$ has $\Theta(n)$ nodes, as desired.

The last detail is to argue that $H'$ still has a $\faultdeg$-FD $2k$-blocking set.
This follows from the fact that we create $H'$ from $H$ only by the operations of deleting and splitting nodes.
These operations cannot create new $2k$-cycles, and they can only reduce the fault degree of the blocking set.
Thus, the remaining part of the $\faultdeg$-FD $2k$-blocking set for $H$ is still a valid $\faultdeg$-FD $2k$-blocking set for $H'$.
\end{proof}

Since our previous lemmas apply to \emph{any} graph with a $\faultdeg$-FD $2k$-blocking set, we can apply them to $H'$.
On one hand, by the dispersion lemma, $H'$ has $n^2 \cdot O(C \faultdeg k)^{k-1}$ total MUCk paths.
On the other hand, by Lemmas \ref{lem:fullcount}, \ref{lem:chainub}, and \ref{lem:blockub}, the number of MUCk paths in $H'$ is at least
$$\underbrace{\left(kn \cdot \Omega\left(\frac{d}{k}\right)^k\right)}_{\text{l.b. on monotone $k$-paths}} - \underbrace{\left(\frac{kn \cdot O(d)^k}{C}\right)}_{\text{u.b. on chain-blocked $k$-paths}} - \underbrace{k^2 n \faultdeg \cdot O(d)^{k-1}}_{\text{u.b. on blocked $k$-paths}}.$$

If we set $C \ge \Omega(k)^k$ with a large enough implicit constant, then the monotone $k$-paths dominate the chain-blocked $k$-paths.
If we have $d \ge \Omega(k)^k \cdot \faultdeg$ with a large enough implicit constant, then the monotone $k$-paths dominate the blocked $k$-paths.
So, under both these assumptions, we have
$$kn \cdot \Omega\left(\frac{d}{k}\right)^k \le \left(\text{\# MUCk paths} \right) \le n^2 \cdot O(C \faultdeg k)^{k-1}$$
and so
$$\frac{d}{k} \le n^{1/k} \cdot O(C \faultdeg k)^{(k-1)/k}$$
and so
$$|E(H)| = \Theta(nd) \le n^{1+1/k} \cdot \faultdeg^{1-1/k} \cdot k^{O(k)}$$
where the last step uses our setting of $C$.
Finally, in the case where $d \le O(k)^k \cdot \faultdeg$, then we immediately have $|E(H)| = \Theta(nd) \le n \faultdeg \cdot O(k)^k$, and so the bound still holds.
In either case, Theorem \ref{thm:extremalfdspan} is proved.

\subsection{The \textsc{Min Max Length-Bounded Cut} Problem}

Our next goal is to develop a tweak on the FD-greedy algorithm that runs in polynomial time.
Before explaining this, we will need to develop a subroutine to be used in the algorithm.
We will discuss the following problem:
\begin{mdframed}[backgroundcolor=gray!20]
\noindent \textsc{Min Max Length-Bounded Cut (LBC)}:
Given an $n$-node graph $G = (V, E)$, an integer $k$, and nodes $u, v$, find the least integer $\faultdeg$ such that there exists an edge set $F$ of degree $\deg(F) \le \faultdeg$ with $\dist_{G \setminus F}(u, v) > k$.
\end{mdframed}

For motivation, notice that for the FD-greedy algorithm in the setting where the input graph $G$ is unweighted, the test to keep or discard an edge $(u, v)$ is precisely \textsc{Min Max LBC}.
This problem is likely hard: it is a tweak on the NP-hard problem \textsc{Length-Bounded Cut (LBC)} \cite{BEHKSS06}, in which the goal is to minimize $|F|$ subject to the same constraint.
So, we will develop an approximation algorithm.
This high-level approach to speeding up the greedy algorithm was first used by Dinitz and Robelle \cite{DR20}, although the min-max degree objective introduces some new challenges that mean our approximation algorithm works quite differently from theirs.

\begin{theorem} \label{thm:approxlbc}
There is an $O(k \log n)$ approximation algorithm for \textsc{Min Max LBC} that runs in polynomial time.
Specifically, on an instance of \textsc{Min Max LBC} with solution $\faultdeg^*$, the algorithm returns a value $\widehat{\faultdeg}$ satisfying
$\faultdeg^* \le \widehat{\faultdeg} \le B \faultdeg^* \cdot k \log n,$ where $B>0$ is some absolute constant.
\end{theorem}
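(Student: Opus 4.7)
The plan is to solve \textsc{Min Max LBC} via LP relaxation followed by independent randomized rounding, using a hop-layered graph to handle the length constraint. The natural LP has variables $x_e \in \mathbb{R}_{\ge 0}$ for each edge $e$ and a scalar $t$; it minimizes $t$ subject to the degree constraints $\sum_{e \ni v} x_e \le t$ for every $v$ and the path-covering constraints $\sum_{e \in P} x_e \ge 1$ for every $s$-$t$ walk $P$ of hop-length at most $k$. Setting $x_e = 1$ on the edges of the optimal integer cut gives a feasible solution of value $\faultdeg^*$, so the LP optimum $t^*$ satisfies $t^* \le \faultdeg^*$. Although the LP has exponentially many path constraints, it admits a polynomial-time separation oracle: build the layered graph $G'$ on $V \times \{0,1,\ldots,k\}$ with an edge $((a,i),(b,i+1))$ of weight $x_{(a,b)}$ for each $\{a,b\} \in E$ and each $0 \le i < k$, and then a single shortest-path computation from $(s,0)$ in $G'$ identifies the minimum $x$-weight $s$-$t$ walk of hop-length $\le k$, which is the most-violated path constraint (if any). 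Hence the LP is solvable in polynomial time.

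For the rounding, I set $\lambda := C k \log n$ for a sufficiently large constant $C$ and include each edge $e$ in the output set $F$ independently with probability $p_e := \min\{1, \lambda x^*_e\}$. For correctness, fix any $s$-$t$ walk $P$ with at most $k$ edges. If some $x^*_e \ge 1/\lambda$ on $e \in P$, then $e \in F$ deterministically; otherwise $\Pr[\text{no edge of } P \text{ is in } F] \le \exp\!\left(-\lambda \sum_{e \in P} x^*_e\right) \le e^{-\lambda}$. Since the number of $s$-$t$ walks of length $\le k$ in an $n$-vertex graph is at most $n^k$ (each such walk is determined by a sequence of at most $k-1$ intermediate vertices), a union bound over all such walks bounds the total failure probability by $n^k \cdot e^{-\lambda} = n^{-(C-1)k}$, which is negligible for $C$ large enough. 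Thus with high probability $F$ cuts every short $s$-$t$ walk, and so $\dist_{G \setminus F}(s,t) > k$.

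For the max-degree guarantee, fix $v \in V$ and note $\mathbb{E}[\deg_F(v)] \le \sum_{e \ni v} \lambda x^*_e \le \lambda t^*$. A multiplicative Chernoff bound gives $\Pr[\deg_F(v) > \max\{2 \lambda t^*,\, 10 \log n\}] \le n^{-c}$ for any desired constant $c$ by tuning hidden constants. Union bounding over the $n$ vertices yields $\max_v \deg_F(v) \le \max\{2 \lambda t^*,\, 10 \log n\}$ w.h.p. Since $t^* \le \faultdeg^*$ and $\faultdeg^* \ge 1$ in any nontrivial instance, this quantity is at most $O(k \log n) \cdot \faultdeg^*$, giving the claimed ratio $\widehat{\faultdeg} \le O(k \log n) \cdot \faultdeg^*$.

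The main obstacle I anticipate is that the LBC LP is known to have an integrality gap that can be as large as $k$, so the LP value $t^*$ can be much smaller than $\faultdeg^*$. In the regime where $\lambda t^* \ll \log n$, the expected per-vertex degree is too small for sharp multiplicative Chernoff concentration, which is exactly why the degree bound is stated as the \emph{maximum} of a multiplicative $2 \lambda t^*$ term and an additive $10 \log n$ term: the additive slack is harmless because $\faultdeg^* \ge 1$ absorbs $10 \log n$ into $O(k \log n)\cdot \faultdeg^*$. Beyond this, the argument is the textbook LP-plus-randomized-rounding recipe for length-bounded cut problems; the $k$ in the final approximation ratio is precisely the exponent appearing in the $n^k$ walk-count that drives the union bound against $\lambda = C k \log n$.
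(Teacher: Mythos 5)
Your proposal is correct and matches the paper's approach almost exactly: the same LP relaxation with per-vertex degree constraints and short-path covering constraints, the same hop-layered graph separation oracle, the same independent rounding with probability $\min\{1,\lambda x^*_e\}$ for $\lambda=\Theta(k\log n)$, a union bound over the $\le n^k$ short $s$-$t$ paths for cut correctness, and Chernoff plus a vertex union bound for the degree guarantee. The only cosmetic differences are that the paper invokes Hoeffding where you use the $\prod(1-p_e)\le\exp(-\sum p_e)$ bound, and the paper states the Chernoff threshold directly as $2A\faultdeg^*k\log n$ (implicitly using $\faultdeg^*\ge 1$) rather than your explicit $\max\{2\lambda t^*,\,10\log n\}$ form; both handle the small-expectation regime correctly.
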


We will use an LP-rounding algorithm.
The following LP encodes a fractional relaxation of \textsc{Min Max LBC}:

\begin{mdframed}[hidealllines=false,backgroundcolor=gray!30]
\center \textbf{LP for Fractional \textsc{Min Max LBC}}

\flushleft
Choose $c \in \rr^{E(G)}, \faultdeg \in \rr$

To minimize $\faultdeg$

Subject to:
\begin{itemize}
\item For all nodes $v$, $\sum \limits_{e \text{ incident to } v} c_e \le \faultdeg$ \texttt{// degree constraints}

\item For all simple $u \leadsto v$ paths $\pi$ of length $|\pi| \le k$, $\sum \limits_{e \in \pi} c_e \ge 1$ \texttt{// path cut constraints}

\item For all edges $e$, $0 \le c_e \le 1$. \texttt{// cut value constraints}
\end{itemize}
\end{mdframed}

This LP could have exponentially many constraints -- roughly $n^k$ -- since it has one constraint per short $u \leadsto v$ path.
However, we next point out that it has a separation oracle, and thus can be efficiently solved.
\begin{lemma}
There is a polynomial-time separation oracle for the above LP.
\end{lemma}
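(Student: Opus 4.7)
The plan is to design an oracle that, given a candidate point $(c, \faultdeg)$, either certifies feasibility or produces a violated constraint of the LP. The degree constraints (one per vertex) and the cut-value constraints $0 \le c_e \le 1$ are only polynomially many, so the oracle can just iterate over all of them in polynomial time. Thus the only interesting case is the family of path-cut constraints, one per simple $u \leadsto v$ path of length at most $k$.

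For the path-cut constraints, the natural idea is to interpret the current values $c_e$ as nonnegative edge weights and then look for the minimum-weight $u$-$v$ path that uses at most $k$ edges. If this minimum is at least $1$, every simple short $u$-$v$ path satisfies its constraint; if it is strictly less than $1$, the path itself is an explicit violated constraint which the oracle returns. So the task reduces to solving a hop-constrained shortest-path problem.

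This problem is standard and admits an $O(km)$-time dynamic program: define $d_j(w)$ as the minimum $\sum_e c_e$ over $u$-$w$ walks using exactly $j$ edges, with $d_0(u)=0$ and $d_0(w)=\infty$ otherwise, and use the recurrence $d_j(w) = \min_{(x,w)\in E}\bigl(d_{j-1}(x)+c_{(x,w)}\bigr)$. Then $\min_{j\le k} d_j(v)$ is the minimum weight of any $u$-$v$ \emph{walk} of length at most $k$. Because all weights $c_e$ are nonnegative, any walk that revisits a vertex contains a cycle whose removal can only decrease both its length (in edges) and its total weight; so the minimum-weight walk coincides with the minimum-weight simple path, and this DP correctly answers the separation question. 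Moreover a violating path, when one exists, can be read off from the DP by standard parent-pointer backtracking, which is what the oracle should return.

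The only mild subtlety worth double-checking is that the edge weights can be zero: there is nothing to force the ``shortest walk = shortest simple path'' argument to produce a \emph{strictly} simple witness, but since any cycle removal preserves the $\le 1$ violation, we can still extract a simple $u \leadsto v$ subpath of length $\le k$ whose $c$-weight remains below $1$, which is a valid violated constraint of the stated form. There is no real obstacle here; the separation oracle runs in $O(km)$ time per call, plus the polynomial-time direct checks of the degree and box constraints.
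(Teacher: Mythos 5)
Your proof is correct and, modulo presentation, takes essentially the same approach as the paper: the paper builds a $(k{+}1)$-layer product graph with zero-weight skip edges and runs a shortest-path computation from $u_0$ to $v_k$, which is exactly the layered-graph encoding of your hop-indexed dynamic program $d_j(w)$ (with the skip edges playing the role of the final $\min_{j\le k}$). Both handle the walk-versus-simple-path issue the same way, by observing that nonnegative weights let one extract a simple violating subpath.
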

\begin{proof}
Given a setting of $c, \faultdeg$, it is trivial to check the degree constraints and the path cut constraints in $\text{poly } n$ time.
To check the path constraints, let $G'$ be the weighted graph with the same edge set as $G$ and with the variables $c_e$ interpreted as edge weights.
Our goal is to check whether or not there exists a $u \leadsto v$ path $\pi$ in $G'$ that uses $|\pi| \le k$ hops, and which has total weight $w(\pi) < 1$.

To check this, we design an auxiliary graph $G''$ by creating $k+1$ independent copies $G'_0, G'_1, \dots, G'_k$ of the graph $G'$.
We interpret these as $k+1$ layers, and for a node $x$, we write $x_i$ to mean the copy of $x$ in $G'_i$.
For each edge $(x, y) \in E(G')$, for each $0 \le i \le k-1$ we add directed edges $(x_i, y_{i+1})$ and $(y_i, x_{i+1})$ to $E(G'')$, both with the same weight as $(x, y)$.
Additionally, for each node $x$ we add all directed edges of the form $(x_i, x_{i+1})$ with weight $0$.

We now claim that a given choice of $c, \faultdeg$ satisfies the path cut constraints iff $\dist_{G''}(u_0, v_k) \ge 1$, which is easy to test in polynomial time.
Indeed, notice that any $u_0 \leadsto v_k$ path $\pi$ in $G''$ corresponds to a $u \leadsto v$ path in $G$ on $\le k$ hops, and that its length $w(\pi)$ is the sum of cut variables $c_e$ along $\pi$.
\end{proof}

Now let $c, \faultdeg_{LP}$ be an optimal solution to this LP.
Notice that the IP in which each $c_e \in \{0, 1\}$ encodes \textsc{Min Max LBC} exactly.
Hence, letting $\faultdeg^*$ be the true solution to the (non-relaxed) \textsc{LBC} problem, we have
$$\faultdeg_{LP} \le \faultdeg^*.$$

We will next use the entries of $c$ to randomly generate an edge set $F$ (this step is equivalent to rounding the entries of $c$ to $\{0, 1\}$).
Specifically: each edge $e$ is included in $F$ independently with probability $\min\{A \cdot c_e \cdot k \log n, 1\}$, where $A>0$ is a large enough absolute constant that will control the probability of correctness.
In the following analysis, for each edge $e$ we will let $X_e$ be the Bernoulli random variable indicating whether or not $e \in F$.

We can bound the degree of $F$ as follows:
\begin{lemma}
With high probability, $\deg(F) \le O\left( \faultdeg^* \cdot k \log n\right)$.
\end{lemma}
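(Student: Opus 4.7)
The plan is to control $\deg_F(v)$ for each vertex $v$ separately by a concentration bound, and then to union-bound over all $n$ vertices. First I would write $\deg_F(v) = \sum_{e \ni v} X_e$ as a sum of independent $\{0,1\}$ random variables, and bound its expectation using the degree constraints of the LP: since $\mathbb{E}[X_e] \le A \cdot c_e \cdot k \log n$, we have
\[
\mathbb{E}[\deg_F(v)] \;\le\; A k \log n \cdot \sum_{e \ni v} c_e \;\le\; A k \log n \cdot \faultdeg_{LP} \;\le\; A k \log n \cdot \faultdeg^*,
\]
using the LP degree constraint at $v$ and the fact that $\faultdeg_{LP} \le \faultdeg^*$ established just before the lemma.

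Next I would apply a standard multiplicative Chernoff bound to each $\deg_F(v)$. Assuming the instance is nontrivial so that $\faultdeg^* \ge 1$, the expectation satisfies $\mathbb{E}[\deg_F(v)] \ge 0$ and the upper bound above is $\Omega(\log n)$ for the vertices that matter; for vertices with tiny expectation we can use the additive form $\Pr[\deg_F(v) \ge \mathbb{E}[\deg_F(v)] + t] \le \exp(-t^2/(2(\mathbb{E}[\deg_F(v)]+t/3)))$ with $t = \Theta(\log n)$. Either way, by choosing the absolute constant $A$ large enough we obtain
\[
\Pr\!\left[\deg_F(v) > 2A \cdot \faultdeg^* \cdot k \log n \,+\, O(\log n)\right] \;\le\; n^{-3}.
\]
A union bound over the $n$ vertices of $G$ then yields the desired high-probability bound $\deg(F) = \max_v \deg_F(v) \le O(\faultdeg^* \cdot k \log n)$.

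The step I expect to be routine rather than an obstacle is the concentration itself; the only subtlety worth double-checking is the edge-case where $\faultdeg^*$ is very small (in particular, $\faultdeg^* = 0$, where the bound should be interpreted as $F = \emptyset$ suffices, and any $c_e = 0$ rounding is trivially correct) and the case where $A c_e k \log n > 1$ so that the probability is truncated at $1$ --- this truncation only decreases the true $\mathbb{E}[X_e]$ relative to $A c_e k \log n$, so the expectation bound above remains valid.
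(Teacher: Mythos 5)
Your plan is correct and follows essentially the same route as the paper: bound $\mathbb{E}[\deg_F(v)]$ via the LP degree constraint and $\faultdeg_{LP}\le\faultdeg^*$, apply a multiplicative Chernoff bound per vertex, and union-bound over all $n$ vertices. You are a bit more careful than the paper about the low-expectation and truncation ($A c_e k\log n > 1$) corner cases, which is sound, but the underlying argument is identical.
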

\begin{proof}
Fix a node $v$.
By the degree constraints, we have
$$\sum \limits_{e \text{ incident to } v} c_e \le \faultdeg_{LP} \le \faultdeg^*.$$
It follows that
$$\mathbb{E}[\deg_F(v)] \le \sum \limits_{e \text{ incident to } v} A \cdot c_e \cdot k \log n \le A \faultdeg^* \cdot k \log n.$$
By the Chernoff bounds, we then have
$$\Pr\left[ \deg_F(v) \ge 2 A \faultdeg^* \cdot k \log n\right] \le \exp(-\Theta(A \faultdeg^* k \log n)) \le \frac{1}{n^{\Theta(A \faultdeg^* k)}}.$$
By a union bound over the $n$ nodes in the graph, we thus have
$$\Pr\left[\deg(F) \le O_A\left(\faultdeg^* \cdot k \log n\right)\right] \ge 1 - \frac{1}{n^{\Theta(A \faultdeg^* k)-1}}.$$
By choice of large enough constant $AA$, this means $\deg(F) \le O_A(\faultdeg^* \cdot k \log n)$ with high probability, completing the proof.
\end{proof}

Then we prove correctness:
\begin{lemma}
With high probability, $\dist_{G \setminus F}(u, v) > k$.
\end{lemma}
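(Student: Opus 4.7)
The plan is to show, via a union bound over all short $u \leadsto v$ paths, that with high probability every such path contains at least one edge in $F$.

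Fix a simple $u \leadsto v$ path $\pi$ in $G$ with $|\pi| \le k$. Since $c, \faultdeg_{LP}$ is a feasible LP solution, the path cut constraint for $\pi$ gives $\sum_{e \in \pi} c_e \ge 1$. The event ``$\pi$ survives $F$'' is the event that $X_e = 0$ for every $e \in \pi$. If some edge $e \in \pi$ satisfies $A \cdot c_e \cdot k \log n \ge 1$, then $e \in F$ deterministically and $\pi$ is cut. Otherwise, using independence of the $X_e$'s and the inequality $1 - x \le e^{-x}$,
\[
\Pr[\pi \text{ survives}] \;=\; \prod_{e \in \pi} \bigl(1 - A c_e k \log n\bigr) \;\le\; \exp\!\left(-A k \log n \cdot \sum_{e \in \pi} c_e\right) \;\le\; n^{-Ak}.
\]

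Next I would union bound over all simple $u \leadsto v$ paths of length at most $k$ in $G$. A crude count bounds their number by $n^k$ (choose an ordered sequence of at most $k$ intermediate vertices). Hence
\[
\Pr\bigl[\,\exists \text{ surviving } u \leadsto v \text{ path of length } \le k\,\bigr] \;\le\; n^k \cdot n^{-Ak} \;=\; n^{-(A-1)k}.
\]
Choosing the absolute constant $A$ large enough (the same $A$ already used in the degree bound above; we may take $A$ as large as needed in both lemmas simultaneously), this probability is at most $n^{-\Omega(1)}$, so with high probability $\dist_{G \setminus F}(u, v) > k$, as required.

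The main (minor) obstacle is simply making the constants in $A$ consistent between this lemma and the preceding degree-bound lemma, so that both the conclusions $\deg(F) \le O(\faultdeg^* k \log n)$ and $\dist_{G \setminus F}(u,v) > k$ hold simultaneously with high probability; this is handled by fixing $A$ large enough at the outset. Combining the two lemmas with a final union bound, the returned value $\widehat{\faultdeg} := \deg(F)$ is a feasible degree bound (so $\widehat{\faultdeg} \ge \faultdeg^*$) and satisfies $\widehat{\faultdeg} \le B \faultdeg^* k \log n$ for some absolute constant $B$, establishing Theorem~\ref{thm:approxlbc}.
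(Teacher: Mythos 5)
Your proof is correct and follows the same overall structure as the paper's: bound the survival probability of a single short $u \leadsto v$ path, then union bound over the at most $n^k$ such paths. The one place you diverge is the concentration step. The paper bounds $\Pr[\sum_{e \in \pi} X_e < 1]$ via Hoeffding's inequality, obtaining $\exp(-\Theta(A^2 k \log^2 n))$; you instead observe that surviving means every $X_e=0$, compute the probability directly as $\prod_{e\in\pi}(1 - Ac_e k\log n)$, and apply $1-x \le e^{-x}$ together with $\sum_{e\in\pi} c_e \ge 1$ to get $n^{-Ak}$. Your route is more elementary (no tail inequality needed) and in fact exploits the exact structure of the event (``none of the edges is sampled'') rather than treating it as a generic lower-tail deviation, so it is arguably the cleaner argument; both yield bounds far below the $n^{-k}$ threshold needed for the union bound. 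The closing paragraph about choosing $A$ consistently across the two lemmas and combining them into Theorem~\ref{thm:approxlbc} is fine but goes beyond what this particular lemma asserts.
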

\begin{proof}
Let $\pi$ be a $u \leadsto v$ path in $G$ with $|\pi| \le k$ edges.
We will measure the probability that $F$ contains at least one edge in $\pi$.
First, by the path cut constraints, we have
$$\sum \limits_{e \in \pi} c_e \ge 1.$$
If any edge $e \in \pi$ has fractional cut value $c_e$ satisfying $A \cdot c_e \cdot k \log n \ge 1$, then this edge is included in $F$ deterministically.
Otherwise, we have
$$\mathbb{E}\left[\sum \limits_{e \in \pi} X_e\right] = \sum \limits_{e \in \pi} A \cdot c_e \cdot k \log n \ge A \cdot k \log n.$$
By Hoeffding's Inequality, we can bound
$$\Pr\left[ F \text{ does not contain an edge in } \pi \right] = \Pr\left[\sum \limits_{e \in \pi} X_e < 1\right] \le \exp\left(- \Theta\left(\frac{A^2 k^2 \log^2 n}{k}\right)\right) \le \frac{1}{n^{Ak}}$$
where the last step follows by choice of large enough constant $A$.
There are at most $n^k$ possible $u \leadsto v$ paths in $G$ with $\le k$ edges.
By a union bound, the probability that there is \emph{any} such path that does not intersect $F$ is at most $\frac{1}{n^{(A-1)k}}$.
So, with high probability, no $u \leadsto v$ path on $\le k$ edges survives in $G \setminus F$.
\end{proof}

\subsection{Improvement to Polynomial Time}

We now describe how to tweak the FD-greedy algorithm to run in polynomial time, at mild cost in spanner size.
This part very directly follows the approach of Dinitz and Robelle \cite{DR20}.
We simply run the previous FD-greedy algorithm, but we use our approximation algorithm for \textsc{Min Max LBC} to test edges:

\begin{mdframed}[hidealllines=false,backgroundcolor=gray!30]
\center \textbf{Algorithm Approximate $\GreedyFDSpanner$}
\begin{flushleft}
Input: A graph $G$ and a positive integer stretch parameter $k$. \\
Output: An $\faultdeg$-FD $(2k-1)$-spanner $H \subseteq G$ with at most $\faultdeg^{1-1/k} \cdot n^{1+1/k} \cdot O(k)^k$ edges.
\end{flushleft}
\begin{itemize}
\item Let $H \gets (V, \emptyset, w)$ be the initially-empty spanner
\item For each edge $(u, v) \in E$ in order of nondecreasing $w(u, v)$

\begin{itemize}
\item If \textsc{ApproxMinMaxLBC}$(u, v, 2k-1) \le B \faultdeg k \log n$ \texttt{// from Theorem \ref{thm:approxlbc}} 
\begin{itemize}
\item Add $(u, v)$ to $H$
\end{itemize}
\end{itemize}

\item Return $H$
\end{itemize}
\end{mdframed}

This algorithm now runs in polynomial time, but due to the changes we need to re-establish its correctness and size bound.
This mostly follows by rehashing our early arguments used for the (non-approximate) $\GreedyFDSpanner$.
\begin{lemma}
The output spanner $H$ of the Approximate $\GreedyFDSpanner$ algorithm is a $\faultdeg$-FD $(2k-1)$-spanner of $G$.
\end{lemma}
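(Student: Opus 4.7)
The plan is to mirror the proof of Lemma \ref{lem:greedyspannercorrect}: it suffices to show that $\dist_{H \setminus F}(u,v) \le (2k-1)\cdot w(u,v)$ for every $\faultdeg$-degree fault set $F$ and every edge $(u,v) \in E(G) \setminus F$, since a standard edge-by-edge replacement on any shortest $u \leadsto v$ path in $G \setminus F$ promotes this to the full spanner property. When $(u,v) \in H$ the bound is immediate. The real work is the case $(u,v) \notin H$, and here (unlike the exact greedy) I cannot directly appeal to the existence of a witness fault set $F_{(u,v)}$; instead, I have to mine the correctness guarantee out of the approximation factor in Theorem \ref{thm:approxlbc}.

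For that case, I would argue as follows. Since $(u,v) \notin H$, at the moment the algorithm considered $(u,v)$, with current spanner $H^{(uv)} \subseteq H$, the approximation routine returned $\widehat{\faultdeg} = \textsc{ApproxMinMaxLBC}(u,v,2k-1) > B \faultdeg k \log n$, where (following Dinitz--Robelle) the call is on the unweighted subgraph of $H^{(uv)}$ induced by the edges of weight at most $w(u,v)$. Applying Theorem \ref{thm:approxlbc}, the true optimum $\faultdeg^*$ of that \textsc{Min Max LBC} instance satisfies $B \faultdeg^* k \log n \ge \widehat{\faultdeg} > B \faultdeg k \log n$, so $\faultdeg^* > \faultdeg$. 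By the definition of \textsc{Min Max LBC}, no fault set of max degree $\le \faultdeg$ can cut $u$ from $v$ to hop-distance greater than $2k-1$ in this subgraph; in particular $F$ cannot. So there is a $u \leadsto v$ path in $H^{(uv)} \setminus F$ on at most $2k-1$ edges, each of weight $\le w(u,v)$, hence of total weighted length $\le (2k-1) w(u,v)$. Since $H^{(uv)} \subseteq H$, the path survives in $H \setminus F$ and delivers the required bound.

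The only non-routine point is the translation between the weighted stretch criterion and the unweighted hop-length cut problem fed to \textsc{ApproxMinMaxLBC}, which will be the main subtlety of the proof. The direction we need is fortunately the easy one: if no $\faultdeg$-degree fault set can eliminate every $(\le 2k-1)$-hop path in the weight-restricted subgraph, then $F$ leaves at least one such hop-bounded path intact, and because every edge on it has weight $\le w(u,v)$, its weighted length is automatically $\le (2k-1)w(u,v)$. The converse can fail -- a weighted-short path might use more than $2k-1$ hops -- but this is harmless for correctness and only causes the algorithm to conservatively retain some edges it could have discarded, which is absorbed into the extra $\text{polylog}\,n$ factor in the size bound analyzed separately via the blocking-set argument (Lemmas \ref{lem:blockingsetexists} and \ref{lem:blockingsetsizebound}) applied at the inflated fault-degree parameter $B\faultdeg k \log n$.
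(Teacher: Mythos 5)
Your proof is correct and follows essentially the same route as the paper's: use Theorem~\ref{thm:approxlbc} to conclude that when $(u,v)$ is discarded the true \textsc{Min Max LBC} optimum exceeds $\faultdeg$, hence the given fault set $F$ cannot destroy all $\le(2k-1)$-hop paths, and the weight-ordering of the greedy loop converts the surviving hop-bounded path into the weighted stretch bound. You are in fact somewhat more careful than the paper's write-up: you correctly locate the surviving path in $H^{(uv)}\setminus F\subseteq H\setminus F$ and observe that all edges of $H^{(uv)}$ already have weight $\le w(u,v)$ by the processing order (the paper's version writes the conclusion as a bound on $\dist_{G\setminus F}$, which is a slip for $\dist_{H\setminus F}$), and you explicitly note that the one-sidedness of the hop/weight translation is exactly the harmless direction.
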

\begin{proof}
This is nearly the same argument as Lemma \ref{lem:greedyspannercorrect}, with a couple minor tweaks.
As before, let $F$ be an arbitrary set of faulty edges with $\deg(F) \le \faultdeg$, and it suffices to verify the spanner stretch inequality for an arbitrary edge $(u, v) \in G \setminus F$.
If $(u, v)$ is added to the spanner $H$ by the greedy algorithm, then we immediately have $\dist_H(u, v) = w(u, v)$.
Otherwise, if $(u, v)$ is discarded from $H$ by the algorithm, then it must be that $\textsc{ApproxMinMaxLBC}(u, v, 2k-1) > B \faultdeg k \log n$.
By Theorem \ref{thm:approxlbc}, this means that at the time $(u, v)$ is considered, there is no fault set of degree $\deg(F) \le \faultdeg$ that intersects all $u \leadsto v$ paths on $\le 2k-1$ edges.
Hence at least one such path $\pi$ survives in $G \setminus F$.
Since the greedy algorithm considers edges in increasing order of weight, every edge in $\pi$ has weight $\le w(u, v)$.
We therefore have
$$\dist_{G \setminus F}(u, v) \le |\pi| \cdot w(u, v) \le (2k-1) \cdot w(u, v)$$
and so the spanner inequality holds.
\end{proof}

\begin{lemma}
The output spanner $H$ of the Approximate $\GreedyFDSpanner$ algorithm has size
$$|E(H)| \le \Oish\left(\faultdeg^{1-1/k} n^{1+1/k}\right) \cdot O(k)^k.$$
\end{lemma}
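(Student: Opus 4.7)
The plan is to reduce the bound on $|E(H)|$ to the combinatorial blocking-set bound of Lemma \ref{lem:blockingsetsizebound} that powered the analysis of the exact $\GreedyFDSpanner$, but to invoke it with the inflated faulty-degree parameter $\faultdeg' := B\faultdeg k \log n$ in order to absorb the approximation factor from Theorem \ref{thm:approxlbc}.

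First I would show that $H$ admits a $(B\faultdeg k \log n)$-FD $2k$-blocking set, exactly as in Lemma \ref{lem:blockingsetexists}. Whenever Approximate $\GreedyFDSpanner$ inserts $(u,v)$, the test $\widehat{\faultdeg} := \textsc{ApproxMinMaxLBC}(u,v,2k-1) \le B \faultdeg k \log n$ is passed, and since the approximation guarantee of Theorem \ref{thm:approxlbc} asserts $\faultdeg^* \le \widehat{\faultdeg}$, there must exist a witness fault set $F_{(u,v)}$ of degree at most $B\faultdeg k \log n$ such that $\dist_{H\setminus F_{(u,v)}}(u,v) > 2k-1$ in the current spanner. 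I would order $E(H)$ by insertion time and pair each added edge with its witness; then for every combinatorial cycle $C \subseteq H$ on at most $2k$ edges whose latest edge is $e$, the complementary path $C \setminus \{e\}$ is a $\le (2k-1)$-hop walk between the endpoints of $e$ that already existed when $e$ was considered, so it must be hit by $F_e$. All remaining properties of an FD blocking set (each $F_e$ consisting of earlier edges, bounded fault-degree) are immediate.

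With the blocking set in hand, I would directly apply Lemma \ref{lem:blockingsetsizebound} with parameter $\faultdeg' = B\faultdeg k \log n$ in place of $\faultdeg$ to obtain
\[
|E(H)| \;\le\; (B\faultdeg k \log n)^{1-1/k} \cdot n^{1+1/k} \cdot O(k)^k \;=\; \Oish\!\left(\faultdeg^{1-1/k} n^{1+1/k}\right) \cdot O(k)^k,
\]
where the factors $(\log n)^{1-1/k}$ and $k^{1-1/k}$ are absorbed into $\Oish$ and $O(k)^k$ respectively.

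The step I expect to require the most care is the mismatch between the unweighted setting of Theorem \ref{thm:approxlbc} and the possibly weighted input graph $G$. Following Dinitz and Robelle \cite{DR20}, I would run \textsc{ApproxMinMaxLBC} on the unweighted version of the current spanner $H$, restricted to edges of weight at most $w(u,v)$ present at the moment $(u,v)$ is considered. Because the greedy algorithm processes edges in nondecreasing weight order, every edge in $H$ at that moment has weight at most $w(u,v)$, and so any $u \leadsto v$ walk on at most $2k-1$ such edges has total weight at most $(2k-1) w(u,v)$; thus the unweighted hop-bounded test coincides with the weighted stretch-bounded test needed for the cycle-closure step. Since the blocking-set definition is itself purely combinatorial in the cycle length, the argument above goes through unchanged and the size bound follows.
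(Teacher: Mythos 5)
Your proposal is correct and follows the same route as the paper: extract a witness cut of degree $O(\faultdeg k \log n)$ for each added edge from the guarantee $\faultdeg^* \le \widehat{\faultdeg}$ of Theorem \ref{thm:approxlbc}, conclude (exactly as in Lemma \ref{lem:blockingsetexists}) that $H$ has an $\Oish(\faultdeg k)$-FD $2k$-blocking set, and plug the inflated parameter into Lemma \ref{lem:blockingsetsizebound}. Your final paragraph on the weighted/unweighted mismatch is extra care beyond what the paper spells out for this lemma (the blocking-set count is purely combinatorial in hop length), but it is sound and consistent with the Dinitz--Robelle framing.
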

\begin{proof}
From Theorem \ref{thm:approxlbc}, the algorithm $\textsc{ApproxMinMaxLBC}$ overestimates the true value of $\textsc{MinMaxLBC}$ on a given input instance.
Thus, each time an edge $(u, v)$ is added to the spanner $H$, there exists a length-bounded cut $F$ of degree $\deg(F) \le \Oish_B(\faultdeg k)$.
Exactly as in Lemma \ref{lem:blockingsetexists}, the output spanner $H$ therefore has a $\Oish(\faultdeg k)$-FD blocking set, and the edge bound then follows by plugging this into Lemma \ref{lem:blockingsetsizebound}.
\end{proof}

\section{Existential Lower Bounds for FD Spanners and Connectivity Certificates}\label{sec:LB}

\subsection{FD Spanner Lower Bounds \label{sec:exspanlb}}

We first show that the upper bound in Theorem \ref{thm:extremalfdspan} is optimal, up to its dependence on $k$:
\begin{theorem}
Assuming the girth conjecture \cite{erdHos1964extremal}, for all $k, \faultdeg$, there are $n$-node undirected unweighted graphs in which every $\faultdeg$-FD $(2k-1)$-spanner has at least $\Omega(\faultdeg^{1-1/k} n^{1+1/k})$ edges.
\end{theorem}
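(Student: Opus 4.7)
The plan is to establish this lower bound via a blow-up construction on the extremal girth graphs promised by the girth conjecture. Set $m = \lfloor n/\faultdeg \rfloor$ and let $G_0$ be an $m$-vertex graph of girth strictly greater than $2k$ with $\Omega(m^{1+1/k})$ edges (guaranteed by the girth conjecture). I would construct the graph $G$ by replacing each vertex $u \in V(G_0)$ with a cluster $C_u$ of $\faultdeg$ fresh vertices, and each edge $(u,v) \in E(G_0)$ with the complete bipartite graph $K_{\faultdeg,\faultdeg}$ between $C_u$ and $C_v$. Then $G$ has $\faultdeg \cdot m \le n$ vertices (pad with isolated vertices if needed to reach exactly $n$) and
\[
|E(G)| = \faultdeg^2 \cdot |E(G_0)| = \Omega\bigl(\faultdeg^2 \cdot (n/\faultdeg)^{1+1/k}\bigr) = \Omega\bigl(\faultdeg^{1-1/k}\, n^{1+1/k}\bigr).
\]
The remainder of the proof aims to show that every $\faultdeg$-FD $(2k-1)$-spanner of $G$ must contain \emph{all} of $E(G)$, which immediately yields the claimed bound.

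Fix an arbitrary edge $e = (x,y) \in E(G)$ with $x \in C_u$ and $y \in C_v$, and define the fault set $F := E_G(C_u, C_v) \setminus \{e\}$, i.e., all $\faultdeg^2 - 1$ edges of the $K_{\faultdeg,\faultdeg}$-block between $C_u$ and $C_v$ other than $e$ itself. A direct count gives $\deg_F(w) = \faultdeg$ for $w \in (C_u \cup C_v) \setminus \{x,y\}$, $\deg_F(x) = \deg_F(y) = \faultdeg - 1$ (the spared edge $e$ removes exactly one incident $F$-edge from each endpoint), and $\deg_F(w) = 0$ otherwise, so $\deg(F) = \faultdeg$ and $F$ is a valid $\faultdeg$-FD fault set. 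By construction, $e$ is the only edge between $C_u$ and $C_v$ in $G \setminus F$, so $\dist_{G \setminus F}(x,y) = 1$.

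Now I would derive a contradiction assuming some $\faultdeg$-FD $(2k-1)$-spanner $H \subseteq G$ omits $e$. In $H \setminus F$ there is no edge whatsoever between $C_u$ and $C_v$, so any $x \leadsto y$ walk in $H \setminus F$ must leave $C_u$ through some cluster $C_{v'}$ with $v' \ne v$; projecting cluster-by-cluster yields a walk in $G_0$ from $u$ to $v$ that avoids the edge $(u,v)$. The shortest such walk in $G_0$ has length at least $2k$, because otherwise the shortest path from $u$ to $v$ in $G_0 \setminus \{(u,v)\}$ (of length $\le 2k-1$) would close with $(u,v)$ into a cycle of length $\le 2k$ in $G_0$, contradicting girth $>2k$. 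Consequently $\dist_{H \setminus F}(x,y) \ge 2k > (2k-1) \cdot \dist_{G \setminus F}(x,y)$, violating the $(2k-1)$-spanner guarantee for the pair $(x,y)$ under the fault set $F$. Hence $e \in H$, and since $e$ was arbitrary, $|E(H)| \ge |E(G)| = \Omega(\faultdeg^{1-1/k} n^{1+1/k})$.

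The only delicate step is the calibration of $\deg(F)$: the construction is tight because the spared edge $e$ contributes exactly one unit of ``free'' degree at each of $x$ and $y$, so $\deg(F)$ saturates the budget $\faultdeg$ with nothing to spare. This tight saturation is precisely what lets the FD model force the $\Omega(\faultdeg^{1-1/k})$ factor rather than the weaker $\Omega(\faultdeg^{1/2-1/(2k)})$ factor of the best EFT lower bound: a single cluster-cut is enough to wipe out every short alternative route within a $K_{\faultdeg,\faultdeg}$-block, whereas the total-size budget $|F| \le \faultdeg$ of the EFT model can never remove a whole block at once.
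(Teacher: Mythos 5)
Your proposal is correct and matches the paper's argument essentially step for step: both take a girth-conjecture graph, blow each vertex up into a cluster of $\faultdeg$ copies, replace each edge with a $K_{\faultdeg,\faultdeg}$ biclique, and then for a given edge $e$ in a block choose the fault set $F$ to be the remaining $\faultdeg^2 - 1$ edges of that block, checking that $\deg(F) = \faultdeg$ and invoking girth $>2k$ to force the detour length past $2k-1$. The only cosmetic difference is bookkeeping (you fix the final vertex count at $n$ by taking $m = \lfloor n/\faultdeg \rfloor$ base vertices and padding, whereas the paper starts with $n$ base vertices and lets the final count be $N = n\faultdeg$); the closing paragraph contrasting FD with EFT is extra commentary not in the paper but does not affect the proof.
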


The proof of this theorem is essentially from \cite{BDPW18}, where the following construction is used for a lower bound against vertex fault tolerant spanners.

\paragraph{The Construction.}

We will analyze the following graphs:

\begin{itemize}
\item The girth conjecture \cite{erdHos1964extremal} states that for all positive integers $k$, there exist $n$-node graphs on $\Theta(n^{1+1/k})$ edges with girth $>2k$.
We begin by letting $\Gamma = (V, E)$ be one such graph.

\item Replace each node $v \in V$ with $\faultdeg$ copies, $v_1, \dots, v_{\faultdeg}$.
Each edge $(u, v) \in V$ is replaced by $\faultdeg^2$ edges, arranged in a biclique on the vertex sets $\{u_1, \dots, u_{\faultdeg}\} \times \{v_1, \dots, v_{\faultdeg}\}$.
Let $G$ be the final graph after these replacements.
\end{itemize}

The graph $G$ has $N := n \faultdeg$ nodes, and its number of edges is
$$|E(G)| = \Theta\left( \faultdeg^2 \cdot n^{1+1/k} \right) = \Theta\left( \faultdeg^{1-1/k} \cdot N^{1+1/k} \right).$$
Thus, it suffices to argue that the only $\faultdeg$-FD $(2k-1)$ spanner of $G$ is $G$ itself.

\paragraph{Analysis.}
Let $H$ be a $\faultdeg$-FD $(2k-1)$ spanner of $G$, and let $(u_i, v_j) \in E(G)$ (so our goal is to argue that $(u_i, v_j) \in E(H)$ as well).
Let $F$ be the fault set that contains every edge connecting any copy of $u$ to any copy of $v$, except for the edge $(u_i, v_j)$.
That is:
$$F := \{(u_a, v_b) \ \mid \ a \ne i \text{ or } b \ne j\}.$$
Notice that the degree of $F$ on $u_i, v_j$ is $\faultdeg-1$, and for each copy of $u, v$ besides $u_a, v_b$, the degree of $F$ on this node is $\faultdeg$.
Thus $F$ is a $\faultdeg$-FD fault set, which means that by definition of $\faultdeg$-FD spanners, we require
$$\dist_{H \setminus F}(u_i, v_j)\le (2k-1) \cdot \dist_{G \setminus F}(u_i, v_j) = 2k-1.$$
This inequality clearly holds if $(u_i, v_j) \in E(H)$ (in which case we have $\dist_{H \setminus F}(u_i, v_j)=1$).
On the other hand, suppose for contradiction that $(u_i, v_j) \notin E(H)$.
In particular this means there is \emph{no} edge connecting any copy of $u$ to any copy of $v$ remaining in $H \setminus F$.
Thus, any $u_i \leadsto v_j$ path in $H \setminus F$ corresponds to a $u \leadsto v$ path $\pi$ in $\Gamma$ that does not include the edge $(u, v)$.
Since $\Gamma$ has girth $>2k$, and $\pi \cup \{(u, v)\}$ forms a cycle in $\Gamma$, there must be $>2k-1$ edges in $\pi$.
So any $u_i \leadsto v_j$ path in $H \setminus F$ has length $>2k-1$, violating the above spanner inequality, and completing the contradiction.

We have now proved that every edge $(u_i, v_j) \in E(G)$ must lie in $E(H)$ as well, which means that $G=H$, i.e., $G$ is the only $\faultdeg$-FD $(2k-1)$-spanner of itself, which completes the proof.

\subsection{FD Connectivity Certificate Lower Bounds}

The naive lower bound for a $\faultdeg$-FD connectivity certificate is $\Omega(n\faultdeg)$ edges, from any $\faultdeg$-regular graph.
Here we point out that there is a minor improvement to this lower bound available:

\begin{theorem}
For all $\faultdeg$, there are $n$-node undirected unweighted graphs in which every $\faultdeg$-FD connectivity certificate has $\Omega(\faultdeg n \cdot \frac{\log n}{\log \faultdeg})$ edges.
\end{theorem}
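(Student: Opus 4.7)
The plan is to analyze the Hamming graph $G = H(d, f+1)$, with vertex set $V = \{0,1,\ldots,f\}^d$ in which two vertices are adjacent iff they differ in exactly one coordinate. For $n = (f+1)^d$ this graph has $n$ vertices and $dfn/2 = \Theta(fn \log n / \log f)$ edges; arbitrary $n$ is handled by taking disjoint copies and padding with isolated vertices, which affects the count only by a constant factor. Note that $f=1$ specializes to the hypercube suggested in the excerpt, so this really is the extension to alphabet of size $f+1$ that the paper mentions. I will prove the lower bound by showing that the only $f$-FD connectivity certificate of this $G$ is $G$ itself, so $|E(H)| \geq |E(G)| = \Omega(fn \log n / \log f)$.

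\textbf{Per-edge criticality.} To argue that every $e \in E(G)$ must appear in every $f$-FD certificate $H$, I would fix $e=(u,v)$ differing only in some coordinate $i$, say $u_i = 0$ and $v_i = 1$, and split $V = A \sqcup B$ with $A = \{w : w_i = 0\}$ and $B = V \setminus A$. The key structural observation is that every $A$-$B$ edge of $G$ is a direction-$i$ edge, and that the cut is \emph{asymmetric}: each $w \in A$ has exactly $f$ edges into $B$ (one for each nonzero value in coordinate $i$), while each $w \in B$ has exactly one edge into $A$ (the vertex obtained by resetting coordinate $i$ to $0$). I would then take $F$ to consist of all $A$-$B$ edges \emph{other} than $e$. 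A direct count gives $\deg_F(u) = f-1$, $\deg_F(w) = f$ for $w \in A \setminus \{u\}$, $\deg_F(v) = 0$, and $\deg_F(w) = 1$ for $w \in B \setminus \{v\}$, so $\deg(F) \le f$ and $e \notin F$, as needed.

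\textbf{Wrap-up.} By construction $(u,v)$ is the unique surviving $A$-$B$ edge in $G \setminus F$, so it is a bridge in this subgraph. If $e \notin E(H)$, then $H \setminus F$ would contain no $A$-$B$ edges and thus separate $u \in A$ from $v \in B$, even though $G \setminus F$ still connects them through the surviving edge $e$. This contradicts the definition of an $f$-FD connectivity certificate, hence $e \in E(H)$. Applying this to every $e \in E(G)$ yields $|E(H)| = |E(G)| = \Omega(fn \log n / \log f)$.

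The argument presents no genuine obstacle once the right graph and cut are chosen; the main conceptual point to get right is the asymmetry of the chosen coordinate cut. A naive symmetric cut would place $\Theta(d)$ units of fault-degree on each endpoint and therefore exceed the budget $f$; the Hamming graph is exactly the construction in which a fixed-coordinate cut puts $f$ units of degree on the ``small'' side $A$ but only $1$ unit per vertex on the ``large'' side $B$, which is what lets us cut every crossing edge except $e$ while staying within degree $f$. Extending from the clean case $n=(f+1)^d$ to general $n$ via disjoint copies is routine.
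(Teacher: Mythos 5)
Your proof is correct and uses essentially the same construction as the paper: the Hamming graph on a small alphabet, with criticality of each edge established via a fault set built from direction-$i$ edges. The one difference is in how you choose $F$. You take $F$ to be the one-sided coordinate cut (all $A$-$B$ edges except $e$, where $A=\{w:w_i=0\}$), which requires the asymmetry observation you emphasize, and gives $\deg(F)\le f$ on alphabet $\{0,\ldots,f\}$. The paper instead takes $F$ to be \emph{all} direction-$i$ edges (i.e.\ every edge joining two tuples that agree off coordinate $i$) except $e$, on alphabet $[f]$. That choice is not a cut at all; it simply deletes direction $i$, splitting $G\setminus F$ into $f$ parallel slabs indexed by the $i$-th coordinate, with $e$ the lone bridge. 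Each node then touches at most $f-1$ direction-$i$ edges, so the degree bound $\deg(F)\le f-1$ falls out immediately without any asymmetric-cut reasoning, and the disconnection argument is ``reachable nodes from $u$ in $H\setminus F$ all share $u$'s $i$-th coordinate,'' which is arguably a hair cleaner. Both routes give the same $\Theta(fn\log n/\log f)$ bound and both are valid; your variant is a legitimate, if slightly more intricate, way to set up the fault set.
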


\paragraph{The Construction.}

Let $d$ be a parameter of the construction.
Our lower bound graph $G$ is defined as follows:
\begin{itemize}
\item The nodes of $G$ are $[\faultdeg]^d$, i.e., $d$-tuples of integers between $1$ and $\faultdeg$.

\item Two nodes of $G$ are joined by an edge iff their $d$-tuples are equal on all but one coordinate.
\end{itemize}

This graph has $n := \faultdeg^d$ nodes, and each node has degree
$$(\faultdeg-1)d = (\faultdeg-1) \cdot \frac{\log n}{\log \faultdeg}.$$
Thus it suffices to argue that $G$ is the only $\faultdeg$-FD connectivity certificate of itself.

\paragraph{The Analysis.}
Let $(u, v) \in E(G)$, where the $d$-tuples associated to nodes $u, v$ agree on all but the $i^{th}$ coordinate.
Let $H$ be a $\faultdeg$-FD connectivity certificate of $G$, and suppose for contradiction that $(u, v) \notin E(H)$.
Let $F$ be the fault set that contains every edge in $G$ that connects two nodes whose $d$-tuples agree on all but the $i^{th}$ coordinate, except that we also omit $(u, v)$ itself from $F$.
Notice that the degree of $F$ on any node $x$ is at most $\faultdeg-1$, since there are only $\faultdeg-1$ other nodes whose $d$-tuples agree with $x$ on all but the $i^{th}$ coordinate.

We have that $u, v$ are connected in $G \setminus F$, since $(u, v) \in E(G) \setminus F$.
However, we claim that $u, v$ are disconnected in $H \setminus F$.
This follows by noticing that all nodes reachable from $u$ in $H \setminus F$ must have the same $i^{th}$ coordinate as $u$, but $v$ has a different $i^{th}$ coordinate from $u$.
This completes the contradiction.

We have shown that, for all edges $(u, v) \in E(G)$, we have $(u, v) \in E(H)$.
Thus $H=G$, and so $G$ is the only $\faultdeg$-FD connectivity certificate of itself, and so the theorem follows.

% \section*{Acknowledgments}

% We are grateful to Euiwoong Lee and Konstantin Makarychev for helpful pointers to prior work.

\bibliographystyle{alpha}
\bibliography{refs}

\appendix

\section{Missing Proofs}\label{sec:missing}

\APPENDDAG

\APPENDPOLYROUTE
\APPENDLEMEMBEXP

\section{$G(n,p)$ is a good constant-hop router}

\begin{lemma}\label{lem:gnprouting}
For any $n$, $p = \omega(\frac{\log n}{n})$ let $G$ be a randomly sampled $G(n,p)$ graph. With high probability, any unit demand $D$ in $G$ can be routed with dilation $\dilation = O(\log_{np} n)$ and congestion $\congestion = O(\log n)$.
\end{lemma}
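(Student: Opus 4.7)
The plan is to combine three standard facts about $G(n,p)$ with $p = \omega(\log n / n)$, followed by a randomized routing and a Chernoff concentration. Let $d := np$ and $\ell := C \log_d n$ for a sufficiently large constant $C$.

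\textbf{Step 1 (structural facts).} First I would establish, by standard Chernoff/union-bound arguments, that $G \sim G(n,p)$ satisfies whp:
(a) every degree lies in $[d/2, 2d]$;
(b) the diameter is at most $\ell/2$ (BFS expands by a factor of $\Theta(d)$ per layer until saturation);
(c) the normalized adjacency matrix has second eigenvalue $O(1/\sqrt d)$ (F\"uredi-Koml\'os), so a lazy random walk mixes to within $n^{-10}$ total variation distance of its near-uniform stationary distribution in $\ell/2$ steps.
We condition on these events for the remainder of the argument.

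\textbf{Step 2 (routing scheme).} For each ordered pair $(u,v)$ with $D(u,v) > 0$, I would route the $D(u,v)$ units via independently sampled walks of the following Valiant-style form: perform a lazy random walk of length $\ell/2$ from $u$, arriving at some vertex $w$; then concatenate with a fixed shortest $w$-to-$v$ path, which by (b) has length at most $\ell/2$. The resulting path has length at most $\ell = O(\log_{np} n)$, giving the claimed dilation. (Alternatively, one can run random walks of length $\ell/2$ from both endpoints and glue via a single short connector; this variant gives cleaner symmetric congestion bounds, at the cost of discarding a constant fraction of walks.)

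\textbf{Step 3 (congestion).} For any fixed edge $e$, the probability that a single $\ell/2$-step random walk started at a given vertex $u$ traverses $e$ is at most $O(\ell / (nd))$: by (c), the walk's vertex distribution after $t$ steps is within $o(1/n)$ of the stationary distribution (proportional to $\deg$), so the probability it sits at a given endpoint of $e$ at step $t$ is $O(1/n)$ and then traverses $e$ with probability $O(1/d)$; summing over $t \leq \ell/2$ gives $O(\ell / (nd))$. The total demand is $\sum_{u,v} D(u,v) \leq \sum_v \mathrm{load}(v,D) \leq \sum_v \deg(v) = O(nd)$, so each edge's expected load is $O(nd) \cdot O(\ell/(nd)) = O(\ell) = O(\log_{np} n)$. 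Because the walks sampled across the $O(nd)$ units are mutually independent and each contributes at most $1$ to the load of $e$ (walks can be made simple by short-cutting), a Chernoff bound plus a union bound over the $\Theta(n^2 p)$ edges of $G$ yields congestion $O(\log n)$ whp. The shortest-path suffixes of Step 2 add at most $O(1)$ extra load per demand per edge and can be absorbed by symmetry.

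\textbf{Main obstacle.} The most delicate point is making the congestion calculation in Step 3 rigorous despite the two-sided randomness (over $G$ and over the walks). The clean way is to condition on the high-probability structural events (a)--(c) and then treat $G$ as deterministic, so that the walk-traversal probabilities are bona fide probabilities over the walks alone and the independence across demand units is genuine. A secondary subtlety is that the stationary distribution of the random walk is proportional to $\deg$, not exactly uniform, which is what the deterministic shortest-path suffix in Step 2 is designed to handle; verifying that this hybrid scheme still distributes load evenly per edge is where most of the calculation sits.
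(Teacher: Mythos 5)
Your overall strategy (mix via random walks, then concatenate, then Chernoff over a conditioned-on graph) is the right one and matches the paper in spirit, but your main scheme in Step~2 is genuinely different from the paper's and has a gap that the parenthetical remark at the end of Step~3 does not close. You route each demand unit by a random walk of length $\ell/2$ from $u$ to a roughly stationary-distributed $w$, and then a \emph{fixed} shortest $w\leadsto v$ path. The congestion of the random-walk prefix is controlled exactly as you say. But the congestion of the fixed shortest-path suffixes is not, and ``add at most $O(1)$ extra load per demand per edge and can be absorbed by symmetry'' is not a proof. Concretely: the expected load an arbitrary edge $e$ receives from the suffixes is $\sum_v \mathrm{load}(v,D)\cdot \Pr_w[\,e\in \mathrm{SP}(w,v)\,]$, and while an averaging argument shows this is $O(\ell)$ \emph{on average} over $e$, nothing in your argument rules out a particular edge lying on the fixed shortest path trees of disproportionately many destinations and receiving $\omega(\log n)$ expected load. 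Establishing a per-edge bound here requires a separate argument about the balance of the shortest-path trees in $G(n,p)$, which you flag as a ``secondary subtlety'' but do not carry out.

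The paper's proof avoids this entirely by making \emph{both} legs random walks: it releases $\deg(v)\cdot 2m$ pebbles from each $v$, lets each do a $t$-step walk, and then routes each demand unit $(u,v)$ by sampling an intermediate $x$ proportional to degree, gluing a sampled $u\leadsto x$ walk with a sampled $x\leadsto v$ walk from this pebble collection. Because each leg is (close to) a random walk started from stationarity, the probability any fixed edge is traversed at any fixed step is $\Theta(1/m)$, the total demand is $O(m)$, and the expected per-edge congestion is $O(t)$; Chernoff on the independent demand units then gives $O(\log n)$ whp. Your ``alternative'' in Step~2 (two random walks glued by a short connector, discarding a constant fraction) gestures at this but introduces its own unaddressed issues about the connector and the discard. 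To fix your writeup with minimal change, replace the fixed-shortest-path suffix by a second independent $\ell/2$-step random walk from $v$, sample the intermediate vertex for each demand unit from the (common, near-stationary) endpoint distribution, and then both legs admit the same per-step per-edge traversal bound, after which your Step~3 Chernoff argument goes through verbatim.
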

\begin{proof}
Define $\eps > 0$ such that $np=n^{\eps}$. It is well known that for any sufficiently large $t=\Theta(\frac{1}{\eps})$ a $t$-step random walk in $G(n,p)$ from any starting vertex suffices to polynomially close to its stationary distribution. This means no matter what node one starts from a $t$-step random walk will, up to polynomially small deviations, end at every node with probability proportional to its degree. Degrees in $G(n,p)$ with $p = \omega(\frac{\log n}{n})$ and average degree $np = \omega(\log n)$ are furthermore highly concentrated.

We will show that these facts imply that, with high probability, any unit demand $D$ in $G$ can be routed with dilation at most $2t$ and congestion at most $O(\log n)$ as claimed. 

In particular, suppose each node $v$ starts with $deg(v) \cdot 2m$ many pebbles and each pebble (simultaneously) performs a random walk for $t$ steps. Let $P$ be the routing (i.e., collection of paths) traced out by these pebbles. Because of the above mixing property it is the case that, with high probability and up to $(1 + o(1))$-factors, this routing $P$ ends up again with exactly $deg(v) \cdot 2m = deg(v) \cdot sum_u deg(u)$ many pebbles at node $v$ of which $deg(u)$ many started from node $u$. This means $P$ integrally routes exactly the demand $m \cdot D_{mix}$ where $D_{mix}(u,v) = \frac{deg(v) \cdot deg(u)}{m}$ is the so-called uniform mixing demand. The dilation of this routing is at most $t$ and the expected congestion of the routing is at most $mt$. We can obliviously route any unit demand $D$ in $G$ by using $P$ to mix and unmix. In particular, for each $u,v$ we construct $D(u,v)$ many routing paths by independently sampling an intermediate node $x$ in $G$ (sampling nodes proportional to their degree) and then independently sampling a path from $u$ to $x$ and from $x$ to $v$ from $P$ and concatenating these two paths into a routing path from $u$ to $v$. The dilation of this routing is at most $2t$ and the expected congestion for this routing is at most $2t$ on any edge $e$. After making routing paths simple the multiplicative Chernoff bound guarantees that any edge has congestion at most $O(\log n)$ with high probability. 
\end{proof}

\end{document}